\newcommand{\nabb}{\mbox{$\nabla \mkern-13mu /$\,}}
\newcommand{\pa}{\partial}
\newtheorem{lemma}{Lemma}[section]
\newtheorem{theorem}{Theorem}[section]
\newtheorem{proposition}{Proposition}[section]
\newtheorem{corollary}{Corollary}[section]
\title{A proof of the uniform boundedness of solutions to the wave equation
on slowly rotating Kerr
backgrounds}
\author{Mihalis Dafermos\thanks{University of Cambridge,
Department of Pure Mathematics and Mathematical Statistics,
Wilberforce Road, Cambridge CB3 0WB United Kingdom}
\and
 Igor Rodnianski\thanks{Princeton University,
Department of Mathematics, Fine Hall, Washington Road,
Princeton, NJ 08544 United States}
}
\begin{document}
\maketitle
\begin{abstract}
We consider Kerr 
spacetimes with parameters $a$ and $M$ such that $|a|\ll M$,
Kerr-Newman spacetimes with parameters $|Q|\ll M$, $|a|\ll M$, and more generally,
stationary axisymmetric black hole
exterior spacetimes $(\mathcal{M},g)$ which are sufficiently close to a
Schwarzschild metric with parameter $M>0$,
with appropriate geometric assumptions on the plane spanned by the
Killing fields. We show uniform boundedness
on the exterior for
sufficiently regular solutions to the wave equation $\Box_g\psi=0$, i.e.~we show
that solutions $\psi$ arising from smooth 
initial data $(\uppsi, \uppsi')$
prescribed on an arbitrary Cauchy surface $\Sigma$  satisfy 
$|\psi|\le C\, {\bf Q}_1(\uppsi,\uppsi')$ in the domain of outer communications.
In particular, the bound  holds up to and including the event horizon.
Here, ${\bf Q}_1(\uppsi,\uppsi')$ is a norm on initial data and $C$ depends only
on the parameters of the nearby Schwarzschild metric.
No unphysical restrictions are imposed on the behaviour of $\psi$ near the bifurcation surface
of the event horizon.
The norm ${\bf Q}_1$ is finite if
$\uppsi\in H^2_{\rm loc}(\Sigma)$,
$\uppsi'\in H^1_{\rm loc}(\Sigma)$ and $\uppsi$ is well-behaved at spatial 
infinity, in particular, it is sufficient to assume 
$\nabla\uppsi$ is supported away from spatial infinity. 
The pointwise estimate derives in fact from
the uniform boundedness of a positive definite energy flux.
Note that in view of the very general assumptions,
the separability properties of the wave equation on the Kerr background are not used.
\end{abstract}
\tableofcontents
\section{Introduction}
The Kerr family, discovered in 1963~\cite{kerr}, 
comprises perhaps the most important
family of exact solutions to
the Einstein vacuum equations
\begin{equation}
\label{Eeq}
R_{\mu\nu}=0,
\end{equation}
the governing equations of general relativity.
For parameter values $0\le|a|<M$ (here $M$ denotes the mass and $a$ angular
momentum per unit mass), 
the Kerr solutions represent black hole spacetimes:
i.e.~asymptotically flat spacetimes which possess  a region which cannot
communicate with future null infinity. The celebrated  Schwarzschild family sits as
the one-parameter subfamily of Kerr corresponding to
$a=0$.  Much of current theoretical astrophysics is based on the hypothesis that
isolated systems described by Kerr metrics are ubiquitous
in the observable universe.

Despite the centrality of the Kerr family to the general relativistic world picture, 
the most basic questions about
the behaviour of linear waves on Kerr backgrounds have remained
to this day unanswered. 
This behaviour is in turn intimately connected to the stability properties of the
Kerr metrics themselves as solutions of $(\ref{Eeq})$, 
and thus, with the very physical tenability of the notion of black hole.
In particular, even
the question of the uniform boundedness (pointwise, or in energy)
of solutions $\psi$ to the linear
wave equation
\begin{equation}
\label{waveee}
\Box_g\psi =0
\end{equation}
in the domain of outer communications
has not been previously resolved, except for the Schwarzschild subfamily.

The main theorems of this paper give the resolution of the boundedness problem
for  $(\ref{waveee})$, 
for the case $|a|\ll M$. Solutions to $(\ref{waveee})$
arising from regular initial data remain uniformly bounded in the domain of outer communications.
The bound is quantitative, i.e.~it is
computable in terms of the initial supremum and initial energy-type quantities
on initial data. 

In fact, the results of this paper apply to a much more general setting than
the specific Kerr metric: Boundedness
is proven for solutions of $(\ref{waveee})$ on the exterior region
of any stationary axisymmetric spacetime sufficiently 
close to a Schwarzschild spacetime with mass $M>0$.
Thus, the methods may be of relevance in the ultimate goal of this analysis: 
understanding
the dynamics of the Einstein equations $(\ref{Eeq})$
in a neighborhood of a Kerr metric.

We first give a statement of the main results for the special case of Kerr and the related
Kerr-Newman family (this is a family of solutions to the coupled Einstein-Maxwell system).

\subsection{Statement of the theorem for Kerr and Kerr-Newman}
We refer the reader to~\cite{brandon, he:lssst}.
Let $(\mathcal{M},g)$ denote the Kerr solution with parameters 
\[
0\le |a|<M
\]
or more generally the Kerr-Newman solution with parameters $(a,Q, M)$, with 
\[
0\le \sqrt{a^2+Q^2}<M,
\]
and let $\mathcal{D}$ denote the closure of a domain
of outer communications. (The parameter $Q$ is known as the charge.)
Let $\Sigma$ be a Cauchy hypersurface\footnote{For definiteness, our
``Kerr solution'' or ``Kerr-Newman'' solution 
is the Cauchy development of a complete asymptotically
flat spacelike hypersurface with two asymptotically flat ends. This is a globally
hyperbolic subdomain of the maximal analytic Kerr-Newman described in~\cite{he:lssst}.}
in $(\mathcal{M},g)$ crossing the event horizon to the future of the 
sphere of bifurcation, and such that $\Sigma\cap \mathcal{D}$ 
coincides with a constant-$t$ hypersurface,
for large $r$, where $t$ and $r$ denote here the standard Boyer-Lindquist 
coordinates on ${\rm int}(\mathcal{D})$. Recall that in such coordinates,
the stationary Killing field $T$ is given by $T= \frac{\partial}{\partial t}$.  
The Kerr-Newman solutions are moreover axisymmetric.
The Penrose diagram, say along the axis of symmetry (where the axisymmetric
Killing field vanishes), is depicted below:
\[
\begin{picture}(0,0)%
\includegraphics{kerr.pstex}%
\end{picture}%
\setlength{\unitlength}{2763sp}%
\begingroup\makeatletter\ifx\SetFigFont\undefined%
\gdef\SetFigFont#1#2#3#4#5{%
  \reset@font\fontsize{#1}{#2pt}%
  \fontfamily{#3}\fontseries{#4}\fontshape{#5}%
  \selectfont}%
\fi\endgroup%
\begin{picture}(4730,4224)(3036,-7123)
\put(6076,-4411){\makebox(0,0)[lb]{\smash{{\SetFigFont{8}{9.6}{\rmdefault}{\mddefault}{\updefault}{\color[rgb]{0,0,0}$\mathcal{D}$}%
}}}}
\put(5701,-4336){\rotatebox{45.0}{\makebox(0,0)[lb]{\smash{{\SetFigFont{8}{9.6}{\rmdefault}{\mddefault}{\updefault}{\color[rgb]{0,0,0}$\mathcal{H}_{A}^+$}%
}}}}}
\put(7351,-5011){\makebox(0,0)[lb]{\smash{{\SetFigFont{8}{9.6}{\rmdefault}{\mddefault}{\updefault}{\color[rgb]{0,0,0}$i_0$}%
}}}}
\put(4501,-3436){\rotatebox{45.0}{\makebox(0,0)[lb]{\smash{{\SetFigFont{8}{9.6}{\rmdefault}{\mddefault}{\updefault}{\color[rgb]{0,0,0}$\mathcal{CH}^+$}%
}}}}}
\put(5476,-3061){\rotatebox{315.0}{\makebox(0,0)[lb]{\smash{{\SetFigFont{8}{9.6}{\rmdefault}{\mddefault}{\updefault}{\color[rgb]{0,0,0}$\mathcal{CH}^+$}%
}}}}}
\put(5851,-4711){\makebox(0,0)[lb]{\smash{{\SetFigFont{8}{9.6}{\rmdefault}{\mddefault}{\updefault}{\color[rgb]{0,0,0}$\Sigma$}%
}}}}
\put(6676,-4336){\makebox(0,0)[lb]{\smash{{\SetFigFont{8}{9.6}{\rmdefault}{\mddefault}{\updefault}{\color[rgb]{0,0,0}$\mathcal{I}_{A}^+$}%
}}}}
\put(3151,-4486){\makebox(0,0)[lb]{\smash{{\SetFigFont{8}{9.6}{\rmdefault}{\mddefault}{\updefault}{\color[rgb]{0,0,0}$\mathcal{I}_{B}^+$}%
}}}}
\put(6826,-5686){\makebox(0,0)[lb]{\smash{{\SetFigFont{8}{9.6}{\rmdefault}{\mddefault}{\updefault}{\color[rgb]{0,0,0}$\mathcal{I}_{A}^-$}%
}}}}
\put(3226,-5611){\makebox(0,0)[lb]{\smash{{\SetFigFont{8}{9.6}{\rmdefault}{\mddefault}{\updefault}{\color[rgb]{0,0,0}$\mathcal{I}_{B}^-$}%
}}}}
\put(4501,-4186){\rotatebox{315.0}{\makebox(0,0)[lb]{\smash{{\SetFigFont{8}{9.6}{\rmdefault}{\mddefault}{\updefault}{\color[rgb]{0,0,0}$\mathcal{H}_{B}^+$}%
}}}}}
\end{picture}%

\]
Note that $\Sigma\cap \mathcal{D}$ is a past Cauchy
hypersurface for $J^+(\Sigma)\cap \mathcal{D}$.\footnote{Here,
$J^+$ denotes causal future, not to be confused with currents $J_\mu$ to 
be defined later.}
 We have that
$J^+(\Sigma)\cap\mathcal{D}$ is foliated by $\Sigma_\tau$ for $\tau\ge 0$,
where $\Sigma_\tau$ is the
future translation of $\Sigma\cap\mathcal{D}$ by the flow generated by
the stationary Killing field $T=\frac{\partial}{\partial t}$ for time $\tau$.
Let $n_\Sigma$ denote the unit future normal of $\Sigma_\tau$. 
Let $n_\mathcal{H}$ denote a translation invariant null generator for $\mathcal{H}^+$,
and give $\mathcal{H}^+\cap \mathcal{D}$ the induced volume from
$g$ and $n_{\mathcal{H}}$.
Let
 $T_{\mu\nu}(\psi)$ denote the standard energy momentum tensor
associated to  a solution $\psi$ of the wave equation $(\ref{waveee})$
\[
T_{\mu\nu}=\partial_\mu\psi\partial_\nu\psi-\frac12
g_{\mu\nu}\nabla^\alpha\psi\nabla_\alpha\psi,
\]
define $J^{n_\Sigma}_\mu(\psi)$ by
\[
J^{n_\Sigma}_\mu(\psi)\doteq
T_{\mu\nu}(\psi) n_\Sigma^\mu
\]
and $J^T_\mu(\psi)$ by
\[
J^T_\mu(\psi)\doteq
T_{\mu\nu}(\psi) T^\mu.
\]
Note that the former current is positive definite when contracted with 
a future-timelike vector field, but is not conserved, whereas the latter current
is conserved, but not positive definite when so contracted.
\begin{theorem}
\label{prwto}
Let $(\mathcal{M},g)$, $\mathcal{D}$, $\Sigma_\tau$ be as above.
There exists a universal positive constant $\epsilon>0$,
and a constant $C$ depending on $M$ and the choice of $\Sigma_0$
such that if 
\begin{equation}
\label{smallness}
0\le a <\epsilon\, M, \qquad 0\le Q<\epsilon \, M,
\end{equation}
 then the following statement holds.
Let    $\psi$ be a solution of $(\ref{waveee})$
on $(\mathcal{M},g)$
such that $\int_{\Sigma_0} J^{n_\Sigma}_\mu(\psi) n^\mu_\Sigma <\infty$. 
Then   
\begin{equation}
\label{prwtoprwto}
\int_{\Sigma_\tau} 
J^{n_\Sigma}_\mu(\psi) n^\mu_\Sigma
\le C \int_{\Sigma_0} J^{n_\Sigma}_\mu(\psi) n^\mu_\Sigma, 
\end{equation}
\begin{equation}
\label{stovorizovta}
\left|\int_{\mathcal{H}^+\cap J^+(\Sigma_0)} 
J^{T}_\mu(\psi) n^\mu_{\mathcal{H}}\right|  \le
C\int_{\Sigma_0}J^{n_\Sigma}_\mu(\psi) n^\mu_\Sigma,
\end{equation}
\begin{equation}
\label{stoapeiro}
\int_{\mathcal{I}^+}J^{T}_\mu(\psi) n^\mu_{\mathcal{I}}  \le C\int_{
\Sigma_0}J^{n_\Sigma}_\mu(\psi) n^\mu_\Sigma.
\end{equation}
Here the integrals are with respect to the induced volume forms. The 
integral on the left hand side of $(\ref{stoapeiro})$ can be defined via a limiting procedure.
\end{theorem}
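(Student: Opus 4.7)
The central obstacle is that on Kerr--Newman with $a\ne 0$ the stationary Killing field $T$ is spacelike in the ergoregion, so the conserved $T$-flux through $\Sigma_\tau$ is not positive definite; moreover $T$ is not tangent to the null generators of $\mathcal{H}^+$, so the $T$-flux on the horizon is unsigned. My plan is a vector-field multiplier argument combining $T$ with a red-shift multiplier $N$ of the type the authors developed in their earlier Schwarzschild work, exploiting two quantitative facts: for $|a|,|Q|\ll M$ the ergoregion is contained in an arbitrarily small neighborhood of $\mathcal{H}^+$, and the red-shift bulk term provides a coercive spacetime estimate in a neighborhood of the horizon.

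First I would construct a timelike, $T$- and $\Phi$-invariant vector field $N$, defined near $\mathcal{H}^+$, whose deformation current satisfies a pointwise lower bound of the form $K^N\doteq\nabla^\mu J^N_\mu \ge c_0\, T_{\mu\nu}n_\Sigma^\mu n_\Sigma^\nu$ in a neighborhood $\mathcal{V}$ of the horizon. This is a perturbation of the Schwarzschild red-shift construction, available for sufficiently small $a$ and $Q$ by smooth dependence of the horizon geometry on parameters, and $\mathcal{V}$ may be chosen to contain the entire ergoregion. I would extend $N$ smoothly to $\mathcal{D}$ with support bounded in $r$, and take the multiplier $V=T+\delta N$ with $\delta$ small enough that $V$ is uniformly future-timelike---so that $\int_{\Sigma_\tau}J^V_\mu n_\Sigma^\mu$ is equivalent to $\int_{\Sigma_\tau}J^{n_\Sigma}_\mu n_\Sigma^\mu$---yet large enough that the coercivity of $K^N$ inside $\mathcal{V}$ absorbs its unsigned contribution in the annular cutoff region where $\nabla N\ne 0$.

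The divergence theorem for $J^V$ in $J^+(\Sigma_0)\cap J^-(\Sigma_\tau)$, combined with $K^T=0$ (since $T$ is Killing) so that $K^V=\delta K^N$, yields
\[
\int_{\Sigma_\tau}J^V_\mu n_\Sigma^\mu + \int_{\mathcal{H}^+\cap J^-(\Sigma_\tau)}J^V_\mu n_{\mathcal{H}}^\mu + \int_{\mathcal{I}^+\cap J^-(\Sigma_\tau)}J^V_\mu n_{\mathcal{I}}^\mu + \delta\int_{J^+(\Sigma_0)\cap J^-(\Sigma_\tau)}K^N = \int_{\Sigma_0}J^V_\mu n_\Sigma^\mu.
\]
The $\delta N$-contribution to the horizon flux is non-negative, while the unsigned $T$-contribution I would handle by writing $T=\xi-\omega_H\Phi$, where $\xi$ is the null generator of $\mathcal{H}^+$, and bounding the resulting $\Phi$ cross-term via Cauchy--Schwarz against $(\Phi\psi)^2$. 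The latter is controlled by running the same multiplier argument for the commuted quantity $\Phi\psi$, which again satisfies $\Box_g(\Phi\psi)=0$ because $[\Box_g,\Phi]=0$, and which has finite initial energy precisely by the $H^2$ assumption on data.

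Collecting terms, the coercive interior contribution of $K^N$ in $\mathcal{V}$ absorbs the remaining error terms and yields $(\ref{prwtoprwto})$; the horizon bound $(\ref{stovorizovta})$ and the null infinity bound $(\ref{stoapeiro})$ then follow from the same identity, with the latter defined by the indicated limiting procedure over a sequence of timelike hypersurfaces approaching $\mathcal{I}^+$. The main technical obstacle I anticipate is the simultaneous calibration of $\epsilon$, $\delta$, and the neighborhood $\mathcal{V}$: the ergoregion has thickness $O(a)$ about the horizon, $\delta$ must be small to keep $V$ timelike but large enough to dominate the ergoregion energy defect, and $\mathcal{V}$ must contain the ergoregion yet remain small enough that the pointwise coercivity of $K^N$ is not lost. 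It is this quantitative balance, rather than integrated local energy decay or any mode decomposition, that presumably drives the argument and explains why the proof survives in the general axisymmetric stationary perturbative setting promised in the abstract.
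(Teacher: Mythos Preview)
Your proposal contains a genuine gap. The claim that ``the coercivity of $K^N$ inside $\mathcal{V}$ absorbs its unsigned contribution in the annular cutoff region'' cannot work. In the region $r\le r^-_Y$ near the horizon you have $K^N\ge b\,{\bf q}$, while in the cutoff annulus $r^-_Y\le r\le r^+_Y$ you only have $|K^N|\le B\,{\bf q}$. These two spacetime integrals live over \emph{disjoint} spatial regions, and both scale linearly in $\delta$, so no choice of $\delta$ makes one dominate the other. Without an independent mechanism relating the energy in the annulus to the energy near the horizon, the $N$-identity alone yields only
\[
\int_{\Sigma(\tau)}{\bf q}_e(\psi)\le \int_{\Sigma(0)}{\bf q}_e(\psi)+eB\int_0^\tau\int_{\Sigma(\bar\tau)}{\bf q}_e(\psi)\,d\bar\tau,
\]
which by Gronwall gives exponential growth in $\tau$, not uniform boundedness. (The paper records exactly this in Proposition~\ref{willneed}, and uses it only for bounded time steps.) Your proposal to commute with $\Phi$ and bound $(\Phi\psi)^2$ on the horizon by ``running the same multiplier argument for $\Phi\psi$'' is circular: $\Phi\psi$ satisfies the same wave equation and its $N$-identity has the identical unclosed structure.

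The paper's actual argument is precisely what you rule out in your last sentence: it \emph{does} use a frequency (mode) decomposition and it \emph{does} use an integrated-local-energy-decay ($X$-type) multiplier. After a time cutoff, $\psi$ is split as $\psi_\flat+\psi_\sharp$ according to whether $\omega^2\lesssim\omega_0^2 k^2$ or not. For the ``non-superradiant'' piece $\psi_\sharp$, one shows the $J^T$-flux on $\mathcal{H}^+$ is nonnegative up to a small error (Proposition~\ref{notquitecons}), so conservation of $J^T$ controls energy in the annulus, and combining with $N_e$ closes. For the ``superradiant'' piece $\psi_\flat$, an $X$-multiplier yields a genuine integrated decay estimate (Propositions~\ref{lemma1}--\ref{forlow}); this works because for $\omega_0$ small the superradiant frequencies do not see trapping. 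The whole thing is closed via a bootstrap. In Schwarzschild your scheme would work---the paper notes this in Section~\ref{further1}---because there $J^T$ is already positive definite away from the horizon and supplies the missing control of the annulus; the superradiant ergoregion is exactly what breaks it in Kerr.
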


\begin{theorem}
\label{deutero}
Under the assumptions
of the previous theorem, 
the following holds.
Let $\psi$ be a solution of the wave equation $(\ref{waveee})$
on $(\mathcal{M},g)$ such that 
\[
{\bf Q}_1\doteq \sqrt{\sup_{\Sigma_0}|\psi|^2+\int_{\Sigma_0} \left(J_\mu^{n_0}(\psi)
+J_\mu^{n_0}(n_\Sigma\psi)\right) n^\mu}<\infty.
\]
Then 
\[
|\psi| \le C\,  {\bf Q}_1
\]
in ${\mathcal{D}}\cap J^+(\Sigma_0)$.
\end{theorem}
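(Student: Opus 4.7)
The plan is to deduce the pointwise bound from the energy boundedness of Theorem~\ref{prwto} applied to $\psi$, $T\psi$ and $\Phi\psi$, combined with local elliptic regularity on each $\Sigma_\tau$ and three-dimensional Sobolev embedding.

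First I would commute $\Box_g$ with the Killing fields $T$ and $\Phi$. Since $\Phi$ is tangent to the axisymmetric $\Sigma_0$ and $T$ decomposes as $T=a\,n_\Sigma+X$ with $X$ tangent to $\Sigma_0$, the initial energies of $T\psi$ and $\Phi\psi$ on $\Sigma_0$ are dominated by the $\psi$- and $n_\Sigma\psi$-energies appearing in ${\bf Q}_1^2$. Applying Theorem~\ref{prwto} to $\psi$, $T\psi$ and $\Phi\psi$ yields the uniform-in-$\tau$ bound
\[
\int_{\Sigma_\tau}\bigl(J^{n_\Sigma}_\mu(\psi)+J^{n_\Sigma}_\mu(T\psi)+J^{n_\Sigma}_\mu(\Phi\psi)\bigr)n^\mu_\Sigma\le C\,{\bf Q}_1^2.
\]

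Next, on any compact piece $K\subset\Sigma_\tau$ staying at bounded $r$ (possibly touching $\mathcal{H}^+$, where $n_\Sigma$ is nondegenerate), I would view $\Box_g\psi=0$ as a uniformly elliptic equation for $\psi$ on $\Sigma_\tau$ with right-hand side built from $T\psi$, $T^2\psi$ (whose $L^2$-norm on $\Sigma_\tau$ is controlled by $J^{n_\Sigma}(T\psi)$), and $\Phi\psi$. Local elliptic regularity combined with three-dimensional Sobolev embedding $H^2\hookrightarrow L^\infty$ then gives $\|\psi\|_{L^\infty(K)}\le C({\bf Q}_1+\|\psi\|_{L^2(K')})$. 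The remaining $L^2(K')$ piece is controlled by using $\Sigma_\tau=\phi^T_\tau(\Sigma_0)$: writing $\psi(p)=\psi(\phi^T_{-\tau}(p))+\int_0^\tau (T\psi)(\phi^T_{-s}(p))\,ds$, the first term is dominated by $\sup_{\Sigma_0}|\psi|\le{\bf Q}_1$, while the second is absorbed by a short bootstrap exploiting the uniform energy of $T\psi$ together with the inclusion of $\sup_{\Sigma_0}|\psi|$ in ${\bf Q}_1$. For $p$ at very large $r$ the spacelike Sobolev argument loses, and I would instead integrate along the ingoing null hypersurface $C_v$ through $p$: the $J^T$-flux on $C_v$ is bounded by ${\bf Q}_1^2$ and has the schematic form $\int_{C_v}r^2|\partial_u\psi|^2\,du\,d\omega$, while $\int r^{-2}\,du$ along $C_v$ converges; a weighted Cauchy-Schwarz in $u$ combined with spherical Sobolev on $S^2$ (using commutation with $\Phi$ together with an approximate Schwarzschild rotation, whose commutator error is $O(\epsilon)$ by the smallness assumption and thus absorbable) traces $|\psi|(p)$ back to $C_v\cap\Sigma_0$, where it is controlled by $\sup_{\Sigma_0}|\psi|\le{\bf Q}_1$.

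The hardest step will be closing the $\tau$-uniform $L^2(K')$ bound on $\psi$: the energy controls only $\nabla\psi$, not $\psi$ itself, and a naive time-integration of $T\psi$ along the $T$-flow grows with $\tau$. The remedy crucially uses both the presence of $\sup_{\Sigma_0}|\psi|$ in ${\bf Q}_1$ and the Killing-commuted energy estimate for $T\psi$, whose coupling permits a contraction-type absorption of the time integral rather than an additive accumulation.
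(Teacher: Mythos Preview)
Your proposal has a genuine gap at the horizon. The claim that $\Box_g\psi=0$ yields a \emph{uniformly} elliptic equation on $\Sigma_\tau$ up to $\mathcal{H}^+$ is false. In regular coordinates $(t^*,r,\theta,\phi)$ adapted to the foliation, the spatial principal part of $\Box_g$ is $g^{ij}\partial_i\partial_j$, and one computes (already for Schwarzschild) $g^{rr}=1-2M/r$, which vanishes on $\mathcal{H}^+$; equivalently, the conormal $dr$ to the horizon is null. Thus $g^{ij}\partial_i\partial_j$ is only degenerate elliptic there, and standard $H^2$ regularity from $T\psi,T^2\psi\in L^2$ gives no control of the transverse second derivative $\partial_r^2\psi$ at $r=2M$. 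This is precisely why the paper's Proposition~\ref{elliptprop2} is stated only for $r\ge r_0>2M$ with a constant $B(r_0)$ that blows up as $r_0\to 2M$. The paper closes this gap by commuting with the transverse vector field $\hat{Y}=\frac{1}{1-\mu}\partial_u$: the commutator produces a top-order term $\frac{2}{r}\hat{Y}(\hat{Y}\psi)$ which, when paired with the multiplier $Y$, has a favourable sign---a manifestation of the redshift (positive surface gravity). This is the missing mechanism in your argument.

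A second, lesser issue: your plan to control $\|\psi\|_{L^2(K')}$ by integrating $T\psi$ along the flow does, as you note, grow in $\tau$, and the vague ``contraction-type absorption'' you invoke does not close without additional input. The paper avoids this entirely by using a Sobolev inequality on $\Sigma_\tau$ of the form $\sup\Psi^2\le B\int(|\nabla^2_\Sigma\Psi|^2+|\nabla_\Sigma\Psi|^2)$ with no zeroth-order term on the right; the value of $\psi$ at spatial infinity is conserved and bounded by $\sup_{\Sigma_0}|\psi|$. Finally, commuting with the single Killing field $\Phi$ is insufficient for spherical Sobolev (it vanishes on the axis), and commuting with approximate Schwarzschild rotations produces $O(\epsilon_{\rm close})$ second-order errors that cannot be absorbed without already having the $H^2$ estimate you seek; the $\hat{Y}$-commutation route sidesteps this as well, via the elliptic estimate on spheres (Proposition~\ref{elliptprop1}).
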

The hypothesis of Theorem~\ref{prwto} can be re-expressed as
the statement that local energy as measured by
a local observer be finite, i.e.~that $\nabla^{\Sigma_0}\psi|_{\Sigma_0}$, 
$n_\Sigma\psi|_{\Sigma_0}$
be in $L^2_{\rm loc}$, together with the global assumption that
\[
\int_{\Sigma_0}J^T_\mu(\psi) n^\mu_\Sigma<\infty.
\]
The latter in turn is certainly satisfied if $\nabla\psi$ vanishes in a neighborhood of $i_0$.

Similarly, the hypothesis of Theorem~\ref{deutero}  is satisfied for
$\nabla^{\Sigma_0}\psi$, $n_\Sigma\psi|_{\Sigma_0}$ in $H^1_{\rm loc}$,
if $\nabla\psi$ vanishes in a neighborhood of $i_0$.

Finally, note that given an arbitrary Cauchy surface $\tilde{\Sigma}$ for Kerr,
sufficiently well behaved at $i_0$,
it follows that the right hand side of $(\ref{prwtoprwto})$ is bounded by
\[
C(\Sigma_0,\tilde{\Sigma})
\int_{\tilde{\Sigma}\cap (J^-(\Sigma_0)\cup  J^+(\Sigma_0))}
 J_\mu^{n_{\tilde\Sigma}}(\psi) n^\mu_{\tilde\Sigma},
\]
thus the above regularity assumptions could be imposed on an arbitrary Cauchy surface.
There are no unphysical
restrictions on the support of the solution in a neighborhood of $\mathcal{H}^+\cap
\mathcal{H}^-$.

\subsection{Statement for general stationary axisymmetric perturbations of Schwarzschild}
The results of Theorems~\ref{prwto} and~\ref{deutero} remain true
when the Kerr or Kerr-Newman metric
is replaced by an arbitrary stationary axisymmetric 
black hole exterior metric suitably close to Schwarzschild, and with suitable
assumptions on the geometry of the Killing fields. In particular, in addition 
to smallness, it is required
that--as in the Kerr solution--the null generator of the horizon is in the span
of the Killing fields. The precise assumptions are outlined in Section~\ref{ta3n}.

\subsection{Dispersion and the redshift vs.~superradiance}
The elusiveness of the results of Theorems~\ref{prwto} and~\ref{deutero}
stems from the well-known phenomenon of 
\emph{superradiance}. This is  related to    the fact that the Killing field $T$ (with respect
to which the Kerr solution is stationary) is not everywhere-timelike in the domain of outer communications.
In particular, there is a region of spacetime where $T$ is spacelike, the so-called
\emph{ergoregion}. The boundary of this region is called the
\emph{ergosphere}.

The presence of the ergoregion means that the energy current $J^T$ is not positive
definite when integrated over spacelike hypersurfaces. Thus, the conservation of
$J^T$ does not imply \emph{a priori} bounds on an $L^2$-based quantity. 
In particular, the local energy of the solution can be greater than the initial total energy,
even if the energy is initially supported where $J^T$ is positive definite. 
A test-particle version of this
fact, where a particle coming in from infinity splits into one of negative energy entering
the black hole and one of greater positive energy returning to infinity,
is known as the \emph{Penrose process}. The pioneering study by Christodoulou~\cite{chrthe}
of the ``black hole transformations'' obtainable via a Penrose process 
led to a subject known as ``black hole thermodynamics''.

In the physics literature, where discussion of these issues
is inextricably linked to the separability of $(\ref{waveee})$
and decomposition of $\psi$ into modes,   
the problem of the ergoregion appears as a formidable and perhaps
intractable obstacle. It turns out, however,
that there are other physical mechanisms at play which have an important role
but are not necessarily well reflected from the point of view of separability. In particular,  the tendency
of waves to eventually  disperse (true in any asymptotically flat spacetime)
coupled with the powerful red-shift effect at the horizon.
Indeed, these properties, which depend only loosely on the stationarity, 
tend to make solutions not only stay bounded but decay to a constant in time, even if the
local energy increases for a short time.

Unfortunately, the dispersive properties of waves on black hole backgrounds
are severely complicated by the presence of trapped null geodesics. (The presence of these
can easily be inferred by a continuity argument in view of the fact that there exist
both null geodesics crossing the horizon and going to null infinity.)
It is only very recently that the role of trapping has been sufficiently well understood 
in the special case of the Schwarzschild family to allow for the first proofs of 
decay for general solutions of $(\ref{waveee})$ on such backgrounds.
See the results described in Section~\ref{prevsch}.

In the case of Kerr, the techniques introduced
for controlling trapping on Schwarzschild cannot be readily
perturbed. This has to do with the fact that these techniques seem to exploit the special property
that the trapping concentrates asymptotically 
on a set of codimension $1$ in physical space, the so-called photon sphere. 
In contrast, in Kerr the codimensionality of the space of trapped geodesics
can only be properly understood in phase space. This indicates that controlling trapping
requires a far more delicate analysis.

It would appear from the above that the problem of superradiance could
in principle be overcome, but at the expense of a
very delicate analysis of trapping. 
A closer look, however, reveals that the situtation is considerably more
favourable. At a heuristic level, the reason for this is the following remark:
If one could separate out the ``superradiant'' part of the solution from the
``non-superradiant'' part, then one only has to exploit dispersion for
the superradiant part. This latter problem turns out to be much easier than
understanding dispersion for the whole solution.

To decompose the solution, we must first cut off the solution $\psi$  in the ``time''-interval of
interest
to obtain $\psi_{\hbox{\Rightscissors}}$ and then
decompose
into two pieces 
\[
\psi_{\hbox{\Rightscissors}}= \psi_{\flat}+\psi_{\sharp}
\]
where $\psi_{\flat}$ is to be supported in frequency space (real frequencies $\omega$ and 
integer $k$ here
defined with respect to coordinates $t$ and $\phi$) only
in the range $\omega^2 \precsim   \omega_0^2\, k^2$,
whereas $\psi_{\sharp}$ is to be supported in frequency space only in the range
$\omega^2 \succsim \omega_0^2\, k^2$. For spacetimes sufficiently close to Schwarzschild,
for a suitable choice of the parameter $\omega_0$,
one can view
$\psi_{\sharp}$ as essentially non-superradiant, 
and
$\psi_{\flat}$ as the superradiant part. If one can show boundedness for $\psi_{\sharp}$
and dispersion for $\psi_{\flat}$, then one will have proven
the uniform boundedness of the sum $\psi$. For spacetimes sufficiently
close to Schwarzschild, one can choose $\omega_0$ sufficiently small so that
trapping essentially does not occur for $\psi_{\flat}$, and the dispersive
mechanism of Schwarzschild is stable. This relies on the stability of the red-shift effect
for considerations close to the horizon.
In complete contrast to the standard picture, it is the superradiant part of the
solution which
would be the better behaved one.

In practice, the analysis is of course not as simple as what has been portrayed above,
and here again, 
the stabilising effect of the red-shift acting near the horizon plays an important role.
In view of the cutoffs in time,
the equations satisfied by $\psi_\flat$ and $\psi_{\sharp}$ are coupled. 
Moreover, the statement that
$\psi_{\sharp}$ is non-superradiant while $\psi_{\flat}$ is dispersive must also
be understood modulo error terms. 
It turns out that to control these error terms, one of necessity must have at their
disposal an energy quantity which does not degenerate on the horizon, that
is to say, the $L^2$-based quantity for which one shows uniform boundedness
must be the one of Theorem~\ref{prwto}, and not a quantity analogous
to $J^T$ in Schwarzschild. In particular, one must understand the
red-shift mechanism even for the ``non-superradiant'' part $\psi_{\sharp}$, for which
one does not understand dispersion.
Such stable estimates at the horizon (corresponding
to the energy measured by local observers) exploiting the red-shift effect
were first attained for Schwarzschild in our previous~\cite{dr3}.
It is interesting to note, however, that in~\cite{dr3}, understanding of the red-shift mechanism was
always coupled with understanding dispersion, i.e.~controlling the trapping
phenomenon. In particular, one had to appeal to an understanding
of dispersion even to obtain the result of Theorems~\ref{prwto} for Schwarzschild.
In this paper, we show
how understanding the red-shift can be decoupled from understanding dispersion
in the non-superradiant case. In addition, we show that the red-shift effect
allows us to commute the wave equation with a vector field transverse to the horizon,
yielding a new route to higher order estimates and pointwise estimates.
 An extra side-benefit
of our results is thus a new, simpler and more robust proof 
of Theorems~\ref{prwto} and~\ref{deutero} even for the case of Schwarzschild.
See Section~\ref{further1}.

\subsection{Previous results}
We review in detail previous work on this and related problems.
Results of the type of Theorems~\ref{prwto} and~\ref{deutero}
for static perturbations of Minkowski space
pose little difficulty. (Indeed, the analogue
of Theorem~\ref{prwto} is immediate, and Theorem~\ref{deutero}
can be proven with the help of Sobolev inequalities after commuting the equation with
the static Killing field.) Thus, we shall pass directly to the black hole case.

\subsubsection{Schwarzschild}
\label{prevsch}
The analogue of Theorem~\ref{deutero} for Schwarzschild is a celebrated result of 
Kay and Wald~\cite{kw:lss},
building on previous work of Wald~\cite{drimos}
where the theorem had been proven for the restricted
class of data whose support was assumed not to contain the bifurcation sphere 
$\mathcal{H}^+\cap \mathcal{H}^-$. In view of the positive definiteness
of $J^T$ in the domain of outer communications, the only essential difficulty is obtaining
bounds for $\psi$ up to the horizon (where $T$ becomes null), 
as bounds away from the horizon can be obtained
essentially as described immediately above for static
perturbations of Minkowski space.

The arguments of Kay and Wald to prove the analogue of 
Theorem~\ref{deutero} relied on 
the staticity to realize a solution $\psi$ as $\partial_t\tilde{\psi}$ where
$\tilde{\psi}$ is again a solution of~$(\ref{waveee})$ constructed by inverting
an elliptic operator acting on initial data. In addition, a pretty
geometric construction exploiting
the discrete symmetries of maximal Schwarzschild was used
to remove the unphysical restriction on the support near $\mathcal{H}^+\cap \mathcal{H}^-$
necessary for
constructing $\tilde\psi$ in the original~\cite{drimos}. Unfortunately,
neither of these methods is particularly
robust to perturbation. 
The reason the authors had to resort to such techniques was 
that Theorem~\ref{deutero} was proven \emph{without}
proving the analogue of
Theorem~\ref{prwto}, rather, using only the conserved flux $J^T$ whose
control degenerates as $\mathcal{H}^+$ is approached. 
Theorem~\ref{prwto} for Schwarzschild was only proven as part of the decay
results of~\cite{dr3} to be discussed below.

Turning now to the issue of decay, 
the first non-quantitative decay result for $(\ref{waveee})$ on Schwarzschild
is contained in the thesis of Twainy~\cite{ft:td}.
The first quantitative decay results for solutions of~$(\ref{waveee})$
on Schwarzschild (and more generally, Reissner-Nordstr\"om)
were proven in~\cite{dr1}, but were restricted
to spherically symmetric solutions, or alternatively, 
the $0$'th spherical harmonic
$\psi_0$ of a general solution $\psi$.
(In fact, this was a byproduct
of  the main result of~\cite{dr1}, which concerns
decay rates for spherically symmetric solutions to the coupled
Einstein-(Maxwell)-scalar field system.)

Quantitative decay results for the whole solution $\psi$,
both pointwise and in energy, 
were proven in~\cite{dr3}, in 
particular, the uniform decay result
\begin{equation}
\label{sharp}
|\psi|\le C\,{\bf Q} v_+^{-1}
\end{equation}
in the domain of outer communications.
Here $v$ is an Eddington-Finkelstein advanced time coordinate and
${\bf Q}$ is an appropriate quantity computable on initial data, and $v_+$ denotes
say $\max\{v,1\}$.
Inequality $(\ref{sharp})$ is sharp as a uniform decay rate in $v$.
The results of~\cite{dr3} exploit both the red-shift
effect near the horizon and the dispersive properties. The estimates are derived
using a variety of vector field multipliers, in particular, a vector field multiplier
$Y$ such that the flux of $T+Y$ gives the local energy at the horizon. The
energy identity of $Y$ quantifies the red-shift effect.

Weaker decay results were proven independently
by Blue and Sterbenz~\cite{bs:le} for initial data vanishing on 
$\mathcal{H}^+\cap \mathcal{H}^-$, but with control which degenerates
on the horizon. In particular, the estimates of~\cite{bs:le} are unstable to
perturbation near the horizon. The stability of the estimates of~\cite{dr3}
near the horizon will be of critical importance here.

Both~\cite{dr3} and~\cite{bs:le} control trapping effects 
with the help of vector field multipliers which must be carefully chosen for each spherical 
harmonic separately.  These were inspired by a series of papers by Soffer
and collaborators, for instance~\cite{bs1}; see, however~\cite{bs2}.
The first proof of decay for $\psi$
not relying on spherical harmonic decomposition for
the construction of these multipliers
is provided by our more recent~\cite{dr4}.

\subsubsection{Kerr}
Since uniform boundedness is the most basic question which can be asked
about $(\ref{waveee})$ on Kerr, previous results in this setting
are of necessity of a partial nature. 
In particular, essentially all previous work on $(\ref{waveee})$ is restricted
to the projection of $\psi$ to a single azimuthal frequency, or equivalently,
to the case where the data are of the form
\begin{equation}
\label{restricted}
\uppsi=\uppsi_k(r,\theta)e^{-ik\phi},\qquad \uppsi' = \uppsi_k'(r,\theta)e^{-ik\phi}.
\end{equation}
Solutions arising from $(\ref{restricted})$ are then of the form $\psi_k(r,\theta, t)e^{-ik\phi}$.
Let us call such solutions azimuthal modes.
In principle, one could attempt to deduce properties of general $\psi$ by
summing relations deduced for each individual azimuthal mode.
As we shall see, however, due to the non-quantitative nature of the results described
below, in of themselves they unfortunately yield no information about general $\psi$.
Nonetheless, even the study of such $\psi_k$ without regard to uniform control in $k$
turns out to be a non-trivial problem. 
Indeed,
even for such individual azimuthal modes,
the most basic questions had not been previously
answered, in particular, 
the analogue of Theorems~\ref{prwto} or~\ref{deutero}.

This being said, there are interesting partial results
concerning $(\ref{restricted})$ that had been
previously
obtained. In particular, most recently Finster \emph{et~al.}~\cite{fksy} had been able to
show for smooth $\psi_k$ that
for fixed $r>M+\sqrt{M^2-a^2}$ and $\theta$,
\begin{equation}
\label{to0}
\lim_{t\to\infty}\psi_k(t,r,\theta)\to 0,
\end{equation}
under the assumption that the support of $\psi_k$ does not contain $\mathcal{H}^+\cap 
\mathcal{H}^-$. See however~\cite{fksy2}.  
In particular, one can deduce
\begin{equation}
\label{supd}
\sup_{-\infty<t<\infty}\psi_k(t,r,\theta)<\infty,
\end{equation}
for each fixed $r>M+\sqrt{M^2-a^2}$ and $\theta$,
without however a bound on the $\sup$.
The results rest on an 
explicit integral representation of the solution which is
derived using the remarkable (but all too fragile) separability properties
of the Kerr metric. The arguments contain many
pretty applications of contour integral methods
of classical complex analysis.
Since these techniques are essentially algebraic,
no restriction on the size of $|a|$ need be made provided $|a|<M$.
In~\cite{tdee}, under the same assumption on the initial support,
the authors deduce that for each $\delta>0$,
\begin{equation}
\label{nonq}
\sup_{\tau\ge 0}  \int_{\Sigma_\tau\cap \{r\ge M+\sqrt{M^2-a^2}+\delta\}}J^{T}_\mu (
\psi_ke^{-ik\phi})n_\Sigma^\mu  <\infty.
\end{equation}
Thus, the energy of each mode in the region $r\ge M+\sqrt{M^2-a^2}+\delta$
remains finite but again, no quantitative bound in terms of data can be produced. Moreover, 
from the results of~\cite{fksy, tdee}, one cannot deduce 
that the $\sup$ of $(\ref{supd})$ and $(\ref{nonq})$
commute with taking $\lim_{r\to M+\sqrt{M^2-a^2}}$ or
$\lim_{\delta\to 0}$, i.e.~$(\ref{supd})$ is \emph{a priori}
compatible with $\psi_k$ blowing up along the horizon:
\[
\sup_{\mathcal{H}}|\psi_k|=\infty
\]
and $(\ref{nonq})$ is compatible with 
infinite energy 
concentration near the horizon:
\[
\sup_{\tau\ge 0}  \int_{\Sigma_\tau\cap \{M+\sqrt{M^2-a^2}\le r\le 
M+\sqrt{M^2-a^2}+\delta\}}J^{T}_\mu (
\psi_ke^{-ik\phi})n_\Sigma^\mu  = \infty.
\]

As explained before, no statement could be inferred
for the general solution $\psi$ from the above
statements on individual azimuthal
modes, not even a weak statement like $(\ref{supd})$ or
$(\ref{nonq})$. 
This is because the $\lim$ and $\sup$ of
$(\ref{to0})$, $(\ref{supd})$ and $(\ref{nonq})$ do not 
\emph{a priori} commute with summation over
$k$.\footnote{Note
that in the abstract of~\cite{fksy}, $\psi$  
must be understood as the projection $\psi_k$,
to agree with what is proven in the
body of the paper.}
Of course, in view of Theorems~\ref{prwto} and~\ref{deutero}, one can now
infer from $(\ref{to0})$ Corollary~\ref{nqc} of Section~\ref{nqd}.

The somewhat unsatisfying nature of the above previous results deduced with the help of
separability are indicative of how difficult it is to obtain
quantitative statements about solutions of the wave equation $(\ref{waveee})$
 even in the 
algebraically special
case where one has explicit representations of the solution. Perhaps this is for the best,
however. Remarkable though they are, the separability properties of the Kerr metric are 
unstable to perturbation. 
Just as in the case of stability of Minkowski space~\cite{book}, 
understanding the stability
properties of the Einstein equations near the Kerr solution will undoubtedly require
robust methods. We hope that the techniques employed here will have further
applications in this direction.

\subsubsection{Klein-Gordon}
\label{KGsec}
A related problem to the wave equation
is that of the Klein-Gordon equation
\begin{equation}
\label{KGnewlab}
\Box_g \psi = m^2\psi
\end{equation}
with $m>0$. There is a well-developed scattering theory on Schwarzschild
for the class of solutions of $(\ref{KGnewlab})$
with finite energy associated to the Killing $T$. In particular,
an asymptotic completeness statement has been proven in~\cite{bachelot}. 
This analysis in of itself, however, when specialised to $H^1_{\rm loc}$
solutions in the geometric sense,
only gives very weak information about the solution. In particular,
it does not give $L^2$ control of $\psi$ or its
angular derivatives on $\mathcal{H}^+$. 

In the case of Kerr, there are again certain partial results for $(\ref{KGnewlab})$ 
in the direction of scattering for a ``non-superradiant'' subspace of initial data~\cite{haf}.
These interesting results do not, however, address
the characteristic difficulties of superradiance. See also~\cite{horstbeyer}.

\subsubsection{Dirac on Kerr}

Finally, we mention that there has been a series of interesting papers
concerning the Dirac equation on Kerr and Kerr-Newman. See~\cite{hn, fksy0}.
For Dirac, considerations turn out to be much easier as this
equation does not exhibit the phenomenon of superradiance.
We shall not comment more about this here but
refer the reader to the nice article~\cite{hn}.

\subsection{Heuristic work}
We cannot do justice here to the vast work on this subject in the physics literature.
See~\cite{Kokkotas} for a nice survey.

\subsection{Acknowledgements}
The authors thank G.~Holzegel for comments on the manuscript.
The authors also thank the Niels Bohr Institute, Copenhagen, for hospitality while 
this work was being completed.
M.~D.~is support in part by a grant from the European Research Council. 
I.~R.~is support in part by NSF grant DMS-0702270.

\section{Constants and parameters}
\label{constants!}
Constants will play an important role in this paper and it is imperative to set
the conventions early. 
In the next section we shall fix a Schwarzschild metric with parameter $M$.

We shall use the notation $B$ and $b$
for general positive constants which only 
depend on the choice of $M$. An inequality
true with a constant $B$ will be true if $B$ is replaced by a larger constant,
and similary, for $b$ if $b$ is replaced by a smaller positive constant.\footnote{In the case of 
chains of inequalities, e.g.~$f_1\le B f_2\le Bf_3$ this convention is
obviously violated and has to be reinterpreted appropriately.}
We shall use the notation $f_1\sim f_2$ to denote
\[
bf_1 \le  f_2 \le Bf_1.
\]
Since $B$ and $b$ denote general constants, we shall apply without comment the obvious
algebraic rules $B^2=B$, $B^{-1}=b$, $b^2=b$, etc.

We will also require various particular parameters which can be chosen depending
only on $M$:
\[
r^\pm_Y, r_{\hat{Y}}, \,  \delta, \, \delta_{\hat{Y}},  \, q, \, \lambda, 
\, \omega_0, \,\alpha, \, R_0,\, R_1,\,R, \, e, \,\tau_{\rm step}, \,\epsilon_{\rm close}.
\]
The above parameters are not explicitly computed but are 
determined implicitly by various constraints. Before choosing a parameter,
say $\alpha$, 
we shall use notation like $B(\alpha)$, $b(\alpha)$ to denote constants depending only on $M$
and the as of yet unchosen $\alpha$. It is to be understood that again here, the notation $B$ 
indicates that the constant can always be replaced by a bigger one, and $b$ by a smaller one.
We shall also use the notation $R_1(\alpha)$ to indicate that the parameter $R_1$ 
depends on the still unchosen $\alpha$.
Once $\alpha$ is determined, we may replace the expressions $B(\alpha)$, $R(\alpha)$
etc., with $B$, $R$, etc.

\section{The class of spacetimes}
\label{ta3n}
In this section we shall describe the general class
of metrics for which our results will apply. To set
the stage, we must first fix some structures 
associated to a Schwarzschild metric.

\subsection{Schwarzschild}
We refer the reader to our previous~\cite{dr3} for a review of  the
geometry of Schwarzschild.
We must first fix a certain subregion of Schwarzschild with parameter $M>0$,
relevant coordinates, and a choice of axisymmetric Killing field. 
This will provide the underlying
manifold with stratified boundary\footnote{The boundary will be 
the union of two manifolds with boundary intersecting along their common boundary.}
for the class of metrics to be considered later. Let us use the notation
$g_M$ to denote the Schwarzschild metric.

Refer to the diagram below:
\[
\begin{picture}(0,0)%
\includegraphics{regions.pstex}%
\end{picture}%
\setlength{\unitlength}{3158sp}%
\begingroup\makeatletter\ifx\SetFigFont\undefined%
\gdef\SetFigFont#1#2#3#4#5{%
  \reset@font\fontsize{#1}{#2pt}%
  \fontfamily{#3}\fontseries{#4}\fontshape{#5}%
  \selectfont}%
\fi\endgroup%
\begin{picture}(2752,2178)(5014,-6100)
\put(6357,-4453){\makebox(0,0)[lb]{\smash{{\SetFigFont{10}{12.0}{\rmdefault}{\mddefault}{\updefault}{\color[rgb]{0,0,0}$R$}%
}}}}
\put(5718,-4671){\makebox(0,0)[lb]{\smash{{\SetFigFont{10}{12.0}{\rmdefault}{\mddefault}{\updefault}{\color[rgb]{0,0,0}$r^-_Y$}%
}}}}
\put(6110,-4774){\makebox(0,0)[lb]{\smash{{\SetFigFont{10}{12.0}{\rmdefault}{\mddefault}{\updefault}{\color[rgb]{0,0,0}$r^+_Y$}%
}}}}
\put(6826,-5686){\makebox(0,0)[lb]{\smash{{\SetFigFont{10}{12.0}{\rmdefault}{\mddefault}{\updefault}{\color[rgb]{0,0,0}$\mathcal{I}^-$}%
}}}}
\put(7351,-5011){\makebox(0,0)[lb]{\smash{{\SetFigFont{10}{12.0}{\rmdefault}{\mddefault}{\updefault}{\color[rgb]{0,0,0}$i_0$}%
}}}}
\put(5401,-4636){\rotatebox{45.0}{\makebox(0,0)[lb]{\smash{{\SetFigFont{10}{12.0}{\rmdefault}{\mddefault}{\updefault}{\color[rgb]{0,0,0}$\mathcal{H}^+$}%
}}}}}
\put(5176,-5311){\rotatebox{315.0}{\makebox(0,0)[lb]{\smash{{\SetFigFont{10}{12.0}{\rmdefault}{\mddefault}{\updefault}{\color[rgb]{0,0,0}$\mathcal{H}^-$}%
}}}}}
\put(6751,-4411){\makebox(0,0)[lb]{\smash{{\SetFigFont{10}{12.0}{\rmdefault}{\mddefault}{\updefault}{\color[rgb]{0,0,0}$\mathcal{I}^+$}%
}}}}
\put(5401,-5086){\makebox(0,0)[lb]{\smash{{\SetFigFont{10}{12.0}{\rmdefault}{\mddefault}{\updefault}{\color[rgb]{0,0,0}$\Sigma(0)$}%
}}}}
\put(6484,-4766){\makebox(0,0)[lb]{\smash{{\SetFigFont{10}{12.0}{\rmdefault}{\mddefault}{\updefault}{\color[rgb]{0,0,0}$\Sigma^{+}(0)$}%
}}}}
\put(6457,-5241){\makebox(0,0)[lb]{\smash{{\SetFigFont{10}{12.0}{\rmdefault}{\mddefault}{\updefault}{\color[rgb]{0,0,0}$\Sigma^{-}(0)$}%
}}}}
\end{picture}%

\]
We will denote by $\mathcal{D}$ the closure of
a domain of outer communications in
maximal Schwarzschild. We have $\mathcal{D}\setminus{\rm int}(\mathcal{D})=
\mathcal{H}^+\cup\mathcal{H}^-$ where $\mathcal{H}^+$ denotes the \emph{future
event horizon} and $\mathcal{H}^-$ the \emph{past event horizon}.
The intersection
$\mathcal{H}^+\cap\mathcal{H}^-$ is known as the \emph{bifurcation sphere}.

Recall the static Killing field $T$, timelike on ${\rm int}(\mathcal{D})$ and null
on $\mathcal{H}^+\cup\mathcal{H}^-$.
Flow by integral curves of
$T$ defines a one-parameter
family of diffeomorphisms $\rho_s: \mathcal{D}\to \mathcal{D}$.

Recall now the area-radius function $r$.
On the horizons $\mathcal{H}^+\cup\mathcal{H}^-$ we have $r=2M$. 
We will use the notation $\mu$ for the function defined by
$\mu=2M/r$.

Associated to Schwarzschild will be the constants
$2M<r^-_Y<r^+_Y$ determined in Section~\ref{Ysec}.
We may assume say that
\begin{equation}
\label{specialassump}
r^-_Y \le \frac{9M}4.
\end{equation}

Let $\chi$ be a cutoff function such that $\chi=1$ for $r\le 2M+(r^-_Y-2M)/2$ and $\chi=0$
for $r\ge r^-_Y$. Define the hypersurface 
$\Sigma(0)$ by
\begin{equation}
\label{Notby}
t= -\chi(r)2M\log(r-2M).
\end{equation}
This
can be extended beyond $\mathcal{H}^+$--not by the expression $(\ref{Notby})$, 
however--to a spacelike hypersurface in maximal Schwarzschild.
Let us actually define $\Sigma(0)$ to include its limit points on the horizon
$\mathcal{H}^+$.
Note of course that in view of the support of $\chi$,
it follows that in the region $r\ge r^-_Y$, $\Sigma(0)$ coincides
with the constant $t=0$ hypersurface.

We may define a new coordinate 
\[
t^*\doteq t + \chi(r)2M\log(r-2M)
\]
This coordinate is regular on $\mathcal{H}^+\setminus\mathcal{H}^-$.
We have that
\[
\Sigma(0)=\{ t^*=0\}.
\]
Let us define
\[
\Sigma(\tau)=\{t^*=\tau\}.
\]
Clearly $\Sigma(\tau)=\rho_\tau(\Sigma(0))$.

We have that
\begin{equation}
\label{forspacelike}
B \ge -g_M(\nabla t^*,\nabla t^*) \ge b>0
\end{equation}
for some constants $B$, $b$. Recall here the conventions of Section~\ref{constants!}.

For technical reasons, we shall require two auxilliary sets of spacelike hypersurfaces.
Let $\chi(x)$ be a cutoff function such that $\chi(x)=1$ for $x \le 0$ and $\chi(x)=0$ 
for $x \ge 1$.
Let us define
\[
t^{+}= t^*- \chi(-r+R)(1+r-R)^{1/2}
\]
and 
\[
t^{-}= t^* + \chi(-r+R)(1+r-R)^{1/2}
\]
for an $R$ to be determined later with $R\ge r^-_Y+1$.
Let us define 
\[
\Sigma^{+}(\tau)\doteq\{t^{+} =\tau\},
\qquad
\Sigma^{-}(\tau)\doteq\{t^{-} =\tau\}.
\]
Independently of the choice of $R$, we have that
$\Sigma^{\pm}$ are spacelike, in fact
\begin{equation}
\label{forspacelike2}
B \ge -g_M(\nabla t^{+},\nabla t^{+}) \ge b>0, \qquad B \ge -g_M(\nabla t^{-},\nabla t^{-}) \ge b>0.
\end{equation}

In what follows we shall restrict to 
\[
\mathcal{R}\doteq {\mathcal{D}}\cap J^+_{g_M}(\Sigma^{-}(0)).
\]
The set $\mathcal{R}$ is again a manifold with stratified boundary (as was $\mathcal{D}$), where
the boundary is given by
$\Sigma^{-}(0)\cup (\mathcal{H}^+\cap J^+_{g_M}(\Sigma^{-}(0)))$.

Choosing a coordinate atlas consisting of two charts
$(\xi^A,\xi^B)$, $(\tilde{\xi}^A,\tilde{\xi}^B)$ on
$\mathbb S^2$, then setting
$x^A=r^{-1}\xi^A$, $\tilde{x}^A=r^{-1}\tilde{\xi}^A$, 
it follows that 
\begin{equation}
\label{theatlas}
 (r,t^*, x^A,x^B), \qquad (r,t^*,\tilde{x}^A,\tilde{x}^B)
\end{equation}
form a coordinate atlas
for $\mathcal{R}$. We can ensure moreover that the regions of
the sphere covered by the charts are restricted so that the metric functions
satisfy 
\begin{equation}
\label{nearone}
(g_M)_{ij}\le B ,\qquad g_M^{ij} \le B
\end{equation}
 in these coordinates.
Note that with respect to both these charts,
the vector $\frac{\partial}{\partial t^*}$ is the stationary Killing field $T$.

We will use the above coordinate atlas $(\ref{theatlas})$ in formulating our closeness
assumptions. A third set of coordinates will be useful for us, namely
the coordinates arising from a choice of standard spherical coordinates\footnote{Here,
$\phi$ denotes an azimuthal coordinate.}
$(\theta, \phi)$ on $\mathbb S^2$. 
With respect to 
\begin{equation}
\label{tp}
(r,t^*,\theta, \phi)
\end{equation}
 coordinates, it follows that
$\frac\partial {\partial \phi}$ is a Killing field. 
Let us denote the smooth extension to $\mathcal{D}$ of this Killing field as $\Phi$.
Note that $\Phi$ vanishes precisely at two points on each sphere of
symmetry.
This corresponds to the locus of points where the $(\ref{tp})$ coordinates break down.
Because $(\ref{nearone})$ is not satisfied with respect to these coordinates, they
will not be as useful in formulating the closeness assumptions.

We will say that Schwarzschild is axisymmetric and $\Phi$ is a choice of axisymmetric
Killing field.

Finally, we shall also at times refer to so-called Regge-Wheeler coordinates
\[
(r^*,t,x^A,x^B).
\]
Here $t$ is the standard Schwarzschild time and 
the coordinate $r^*$ is defined by
\[
r^*\doteq r+2M\log(r-2M) -3M -2M\log M.
\]
Note that this coordinate is regular in ${\rm int}(\mathcal{R})$, but sends the
boundary to $r^*=-\infty$.
With respect to these coordinates, 
we note that $\partial_{r^*}$ extends to a smooth vector field 
on all of $\mathcal{R}$ (i.e.~to the event horizon), and in fact, in the limit
$\partial_{r^*}= T$ on $\mathcal{H}^+\cap\mathcal{R}$.

This last coordinate system is not useful for formulating closeness assumptions
in view of the fact that it breaks down on the horizon. We shall only use
Regge-Wheeler coordinates for making calculations with respect to 
the Schwarzschild metric. 

Finally, a word of caution. Since we have several coordinate systems which
will be considered, coordinate vectors like $\partial_{t^*}$ will always be referred
to in conjunction with a specific coordinate system.

\subsection{The general class}
\label{genclass}
We now describe the class of metrics to be allowed.

We consider the manifold with stratified
boundary $\mathcal{R}$ defined above.
We consider the class of all smooth Lorentzian metrics
$g$ such that:
\begin{enumerate}
\item 
For
$\epsilon_{\rm close}>0$ sufficiently small,
\begin{equation}
\label{locsmal}
|g_{ij}- {(g_M)}_{ij}|
\le \epsilon_{\rm close} r^{-2},\qquad  |g^{ij}- {(g_M)}^{ij}|
\le \epsilon_{\rm close} r^{-2}
\end{equation}
\begin{equation}
\label{locsmal2}
|\partial_m g_{ij}-\partial_m({g_M})_{ij}|\le \epsilon_{\rm close} r^{-2}
\end{equation}
with respect to the atlas $(\ref{theatlas})$.\footnote{When specialized to the case of
Kerr-Newman, this clearly will \emph{not} be the Boyer-Lindquist $r$ referred to 
previously. For the relation to Kerr-Newman, see Section~\ref{KNse}.} 
\item
The vector fields $T=\partial_{t^*}$ and $\Phi=\partial_{\phi}$ with respect
to $(r,t^*,\theta, \phi)$ coordinates
are again Killing with respect to $g$.
\item
\label{lastassump}
There is a function $\gamma$ defined on $\mathcal{H}^+$ such that
$T+ \gamma \Phi$ is null on the horizon,
and
\begin{equation}
\label{planeeq}
|\gamma|<\epsilon_{\rm close}.
\end{equation}
\end{enumerate}

In particular, Assumption~\ref{lastassump} above implies that 
$\mathcal{H}^+$ is null with respect to $g$ and its null generator lies in the span
of $T$ and
$\Phi$. 
We may define the  \emph{ergoregion} to be the
region where $T$ itself is \emph{not} timelike.

For sufficiently small $\epsilon_{\rm close}$,
assumptions $(\ref{locsmal})$ and $(\ref{forspacelike})$ 
imply
that $\Sigma(0)$ is spacelike with respect to $g$,
in fact,  with our conventions on constants,
\begin{equation}
\label{xwroeides}
B\ge -g(\nabla t^*,\nabla t^*)\ge b.
\end{equation}
Similarly, we have from $(\ref{forspacelike2})$ that for $\epsilon_{\rm close}$ sufficiently
small,  $\Sigma^{+}(\tau)$ 
and $\Sigma^{-}(\tau)$ are spacelike, in fact
\begin{equation}
\label{xwroeides2}
B\ge -g(\nabla t^{\pm},\nabla t^{\pm})\ge b,
\end{equation}
independently of the choice of $R$.

Note that $\Sigma(\tau)$ is again isometric to $\Sigma(0)$ with respect to $g$,
and similarly $\Sigma^\pm(\tau)$ is isometric to $\Sigma^\pm(0)$.
We will denote by $n_\Sigma$ the future normal of $\Sigma(\tau)$: 
\[
n_\Sigma^\mu\doteq (-g(\nabla t^*,\nabla t^*))^{-1/2}\nabla^\mu t^*.
\] 
This defines
a translation invariant smooth timelike unit vector field on $\mathcal{R}$.
Similarly, we define
\[
n_{\Sigma^{\pm}}^\mu\doteq (-g(\nabla t^{\pm},\nabla t^{\pm}))^{-1/2}\nabla^\mu t^{\pm}.
\]

We will use the notations
\[
\mathcal{R}(\tau',\tau'') \doteq \bigcup_{\tau'\le \bar{\tau}\le \tau''}
\Sigma(\bar{\tau}),
\]
\[
\mathcal{R}^{+}(\tau',\tau'') \doteq \bigcup_{\tau'\le \bar{\tau}\le \tau''}
\Sigma^{+}(\bar{\tau}),
\]
\[
\mathcal{R}^{-}(\tau',\tau'') \doteq \bigcup_{\tau'\le \bar{\tau}\le \tau''}
\Sigma^{-}(\bar{\tau}),
\]
\[
\mathcal{H}(\tau',\tau'')\doteq
\mathcal{H}^+\cap\mathcal{R}(\tau',\tau'').
\]

All integrals without an explicit measure of integration are to be taken
with respect to the volume form in the case of a region of spacetime or
a spacelike hypersurface, and an induced volume form connected
to the choice of a $\rho_t$-invariant tangential vector field $n^\mu_\mathcal{H}$, in the case
of $\mathcal{H}(\tau',\tau'')$.

Note the following property of the volume integral
with respect to the (almost) global $(t, r, \phi, \theta)$ coordinate system:
There exist smooth $\nu(\theta, r)\ge 0$, $\tilde{\nu}(\theta)\ge 0$
such that
for all  continuous $f$:
\begin{eqnarray*}
\int_{\mathcal{R}(\tau',\tau'')}f 
&=& 
\int_{2M}^\infty \int_0^\pi \nu(\theta,r)\left(\int_{\tau'}^{\tau''} 
\left( \int_0^{2\pi}f \, d\phi\right)\, dt^*\right) \,d\theta \, dr , 
\end{eqnarray*}
\begin{eqnarray*}
\int_{\mathcal{H}(\tau',\tau'')}f 
&=& 
 \int_0^\pi \tilde{\nu}(\theta)\left(\int_{\tau'}^{\tau''} 
\left( \int_0^{2\pi}f \, d\phi\right)\, dt^*\right) \,d\theta.  \\
\end{eqnarray*}

Also let us note that 
\[
\int_{\mathcal{R}(\tau',\tau'')} f
=
\int_{\tau'}^{\tau''}\left(\int_{\Sigma(\bar\tau)}\left(-g(\nabla t^*,\nabla t^*)\right)^{-1/2} f \right)
d\bar\tau .
\]
By $(\ref{xwroeides})$, it follows
that if $f_1\sim f_2$ in the sense $0<b f_1 \le f_2 \le B f_1$, it follows
that
\[
\int_{\mathcal{R}(\tau',\tau'')} f_1
\sim 
\int_{\tau'}^{\tau''}\left(\int_{\Sigma(\bar{\tau})}f_2 \right)d\bar\tau.
\]
A similar relation holds with $\mathcal{R}^{\pm}$ and $\Sigma^{\pm}$.

\subsection{The Kerr and Kerr-Newman metrics}
\label{KNse}
\begin{proposition}
Let $Q\ll M$, $a\ll M$. Then the Kerr-Newman metric
with parameters $Q$, $a$ satisfies the assumptions of Section~\ref{genclass}.
\end{proposition}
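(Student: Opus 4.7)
The plan is to realise the Kerr-Newman metric on the manifold $\mathcal{R}$ of Section~\ref{ta3n} via a Kerr-star-type coordinate change from Boyer-Lindquist form, designed to reduce to the Schwarzschild $t^*$ construction of (\ref{Notby}) in the limit $a = Q = 0$, and then to check items (1)--(3) of Section~\ref{genclass} in turn. Starting with Kerr-Newman in Boyer-Lindquist coordinates $(t, r, \theta, \phi)$ with $\Delta = r^2 - 2Mr + a^2 + Q^2$ and $\Sigma = r^2 + a^2\cos^2\theta$, I set
\[
t^* = t + F(r;a,Q), \qquad \phi^* = \phi + H(r;a,Q),
\]
where $H'(r) = \chi(r)\, a/\Delta$ and $F'$ is chosen as a smooth interpolation between the standard Kerr-star value $(r^2+a^2)/\Delta$ near the outer horizon $r_+ = M + \sqrt{M^2 - a^2 - Q^2}$ and $0$ for $r$ large, tuned so that for $a = Q = 0$ one recovers exactly $F(r) = \chi(r)\cdot 2M\log(r-2M)$. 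For $|a|, |Q| \ll M$ we have $r_+ \to 2M$ lying safely inside $\{\chi = 1\}$, so the standard cancellation of $1/\Delta$ singularities in the Kerr-star transformation takes place and the new metric extends smoothly across $\mathcal{H}^+$, identifying the Kerr-Newman exterior (with a piece of $\mathcal{H}^+$) as a metric on $\mathcal{R}$ with the paper's azimuthal coordinate $\phi$ identified with $\phi^*$.

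Items (2) and (3) follow directly. Since $F$ and $H$ depend only on $r$, the chain rule gives $\partial_{t^*}|_{r,\phi^*,\theta} = \partial_t|_{r,\phi,\theta}$ and $\partial_{\phi^*}|_{r,t^*,\theta} = \partial_\phi|_{r,t,\theta}$, which are the Boyer-Lindquist Killing fields of Kerr-Newman, verifying (2). For (3), the classical null generator of $\mathcal{H}^+$ is $K = \partial_t + \Omega_H\, \partial_\phi$ with constant horizon angular velocity $\Omega_H = a/(r_+^2 + a^2)$, giving $\gamma \equiv \Omega_H$ on the horizon; since $r_+ \ge M$, one has $|\gamma| \le |a|/M^2$, so (\ref{planeeq}) holds for $|a|$ small enough in units of $M$.

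Item (1) is proven by explicit computation. Substituting $dt = dt^* - F'(r)\, dr$ and $d\phi = d\phi^* - H'(r)\, dr$ into the Boyer-Lindquist line element, all $1/\Delta$ poles cancel algebraically, and each resulting $g_{\mu\nu}^{KN}$ is a rational function of $(r, \cos^2\theta, a, Q)$ with denominator built from $\Sigma$ and (on the support of $\chi'$ only) $\Delta$, both bounded away from zero on $\mathcal{R}$ for small $(a, Q)$. Each difference $g_{\mu\nu}^{KN} - (g_M)_{\mu\nu}$ carries an overall prefactor of $a$, $a^2$, or $Q^2$ by direct expansion, and decays like $r^{-2}$ at infinity thanks to the $\Sigma^{-1}$ factors while remaining bounded uniformly on $\mathcal{R}$. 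The passage from $(\theta, \phi^*)$ to the atlas (\ref{theatlas}) is by a smooth, $r$-independent change of sphere coordinates which preserves these rates, and the same argument applied to one $\partial$-derivative yields (\ref{locsmal2}). Choosing $\max(|a|, |Q|)/M$ small enough gives (\ref{locsmal}) with the prescribed $\epsilon_{\rm close}$.

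The main obstacle is the compatibility of the cutoff in $F$ with the two competing requirements: horizon regularity (which dictates $F' \sim (r^2+a^2)/\Delta$ near $r_+$) and the match with (\ref{Notby}) in the Schwarzschild limit (which forces $F \equiv 0$ outside the support of $\chi$). These can be reconciled because for small $(a, Q)$ the $(a, Q)$-dependent part of $F$ can be absorbed into a smooth correction supported on a fixed compact $r$-interval bounded away from $r_+$, producing $\chi'$-supported metric correction terms whose decay in $(r, a, Q)$ and in one further derivative must be tracked; this is however a routine algebraic bookkeeping exercise once the coordinate change has been made explicit.
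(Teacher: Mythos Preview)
Your Kerr-star strategy and the verification of items (2) and (3) are fine in spirit, but there is a structural gap in how you place the metric on $\mathcal{R}$. The manifold $\mathcal{R}$ of Section~\ref{ta3n} is \emph{fixed}, with horizon boundary at $r=2M$, and Assumption~\ref{lastassump} demands that this boundary be null for $g$. You, however, identify the Boyer--Lindquist radial coordinate directly with the $r$ of $\mathcal{R}$; since the Kerr--Newman outer horizon sits at $r_+=M+\sqrt{M^2-a^2-Q^2}<2M$ for $(a,Q)\ne(0,0)$, the hypersurface $\{r=2M\}=\mathcal{H}^+\subset\partial\mathcal{R}$ is then a perfectly regular timelike hypersurface of the Kerr--Newman exterior, on which no combination $T+\gamma\Phi$ is null. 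Your remark that ``$r_+\to 2M$ lies safely inside $\{\chi=1\}$'' does not address this: it guarantees the $1/\Delta$ cancellation happens in the coordinate change, but the resulting smooth metric simply puts the wrong hypersurface at the boundary of $\mathcal{R}$. Equivalently, your construction realises on $\mathcal{R}$ only the region $\{r\ge 2M\}$ of Kerr--Newman, missing the collar $r_+\le r<2M$ and hence the actual event horizon.

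The paper fixes exactly this with an additional radial reparametrisation: one introduces $\hat{r}$ via $r^2-2Mr=\hat{r}^2-2M\hat{r}+Q^2+a^2$, which sends $r=2M$ precisely to $\hat{r}=r_+$, and then defines the analogues of your $F$ and $H$ as functions of $\hat{r}$. With this change in place the Kerr--Newman horizon lands on $\partial\mathcal{R}$, Assumption~\ref{lastassump} holds with $\gamma=\Omega_H$, and the remainder of your argument (checking the Killing fields and the $O(\epsilon_{\rm close}r^{-2})$ bounds in the atlas $(\ref{theatlas})$) goes through essentially as you wrote it.
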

Let us sketch how one can implicitly define a Kerr-Newman metric on $\mathcal{R}$
in our $(r,t^*,\theta,\phi)$ coordinate system.

For convenience, 
let us do this by defining a new set of coordinates on ${\rm int}(\mathcal{R})$
which are to represent 
Boyer-Lindquist coordinates
$(\hat{r},\hat{t},\hat{\theta},\hat{\phi})$. 
For this
define $\hat{r}$ by
\[
r^2-2Mr = \hat{r}^2-2M\hat{r}+Q^2+a^2
\]
$\hat{t}$ by
\[
\hat{t} =  t^*-h(\hat{r})
\]
where $h$  is defined by $\frac{dh}{d\hat{r}}=
\frac{2M\hat{r} -Q^2}{\hat{r}^2-2M\hat{r}+Q^2+a^2}$, 
and $\hat{\phi}$ by
\[
\hat{\phi} = \phi- P(\hat{r})
\]
where $\frac{dP}{d\hat{r}}=\frac{a}{\hat{r}^2-2M\hat{r}+Q^2+a^2}$,
and $\hat{\theta}$ by
\[
\hat\theta=\theta.
\]
Now consider the metric on ${\rm int}(\mathcal{R})$ defined
in these new coordinates by
\begin{align*}
-\left(1-\frac{2M-\frac{Q^2}{\hat{r}}}
{\hat{r}\left(1+ \frac{a^2\cos^2\hat\theta }{\hat{r}^2}\right) } \right)   \,d\hat{t}^2
+\frac{1+\frac{a^2\cos^2\hat{\theta}}{\hat{r}^2}}{1-\frac{2M}{\hat{r}}+\frac{Q^2}{\hat{r}^2}+
\frac{a^2}{\hat{r}^2}} \, d\hat{r}^2 
+ \hat{r}^2\left(1+\frac{a^2\cos^2\hat\theta}{\hat{r}^2}\right) \, d\hat\theta^2\\
+\hat{r}^2 \left (1+\frac{a^2}{\hat{r}^2}
+\left(\frac{2M}{\hat{r}}-\frac{Q^2}{\hat{r}^2}\right)\frac{a^2\sin^2\hat\theta}{\hat{r}^2\left(1+\frac{a^2\cos^2
\hat\theta}{\hat{r}^2}
\right)}\right)
\sin^2\hat{\theta}\, d\hat{\phi}^2\\
-2\left(2M-\frac{Q^2}{\hat{r}}\right)\frac{a\sin^2\hat{\theta}}{\hat{r}\left(1+\frac{a^2\cos^2{\hat{\theta}}}{\hat{r}^2}\right)}\, 
d\hat{t}  \, d\hat{\phi}.
\end{align*}
Writing the metric in $(r,t^*,\theta,\phi)$ coordinates, and then relating this form in turn
to the coordinates of $(\ref{theatlas})$ one sees immediately that 
\[
r^2 (g_{ij} - (g_{M})_{ij})\to 0, \qquad r^2(g^{ij}-g_{M}^{ij})\to 0
\]
uniformly as $a\to 0$,
and
\[
r^3 (\partial_k g_{ij} - \partial_k(g_{M})_{ij})\to 0
\]
uniformly as $a\to 0$, where $i,j,k$ denote coordinates of $(\ref{theatlas})$.
It follows that given $\epsilon_{\rm close}$, the assumptions $(\ref{locsmal})$
and $(\ref{locsmal2})$ hold.
The remaining assumptions are well-known properties of Kerr
which are manifest from the Boyer-Lindquist form.

\section{The class of solutions $\psi$}
\label{sols}
Let $(\mathcal{R}, g)$, $\Sigma(\tau)$ be as in Section~\ref{genclass}, and let $\uppsi$ be 
an $H^1_{\rm loc}$ function on $\Sigma(0)$, and let $\uppsi'$ be an 
$L^2_{\rm loc}$ function on $\Sigma(0)$. Here the $L^2$ norm is defined naturally
with respect to the induced Riemannian metric on $\Sigma(0)$.
 By standard theory, there exists a unique solution
$\psi$ of
the initial value problem
\begin{equation}
\label{waveq}
\Box_g\psi=0, \qquad \psi|_{\Sigma(0)}=\uppsi, \qquad  n_\Sigma\psi|_{\Sigma(0)}=\uppsi',
\end{equation}
with the property that 
\[
\psi\in C^1(H^1_{\rm loc}(\Sigma(\tau)), \qquad n_\Sigma\psi \in C^0(L^2_{\rm loc}(\Sigma(\tau)).
\]

We will in fact require that 
\begin{equation}
\label{infactassume}
\nabla^{\Sigma}\uppsi \in L^2(\Sigma(0)),\qquad  \uppsi' \in L^2(\Sigma(0)).
\end{equation} 
By density arguments, the main results of this paper would follow if they
were proven under the additional restriction that 
$\uppsi$, $\uppsi'$ are in fact smooth,
and thus, that $\psi$ is smooth. 
Moreover, we can safely assume that $\nabla^{\Sigma} \uppsi$ and $\uppsi'$
are supported away from infinity.
Let us assume this in what follows so as
not to have to comment on regularity issues or the \emph{a priori}
finiteness of certain quantities. 
It follows in particular from this assumption that
\begin{equation}
\label{folpart}
\nabla\psi \in L^2(\Sigma(\tau)), \qquad \nabla\psi \in L^2(\Sigma^{\pm}(\tau)),
\end{equation} 
moreover, that
$\nabla\psi$ is supported away from spatial infinity.

\section{The main theorem}
For a sufficiently regular function $\Psi$, let us define 
\begin{equation}
\label{Tdefwm}
T_{\mu\nu}(\Psi)\doteq 
\partial_\mu \Psi \partial_\nu \Psi -\frac12g_{\mu\nu} g^{\alpha\beta}\partial_\alpha
\Psi \partial_\beta \Psi
\end{equation}
and for $V^\mu$ a vector field, 
\begin{equation}
\label{Jdefwm}
J_\mu^V(\Psi)	\doteq 	T_{\mu\nu}(\Psi)V^\nu.
\end{equation}
In addition, let us define the quantity
\[
{\bf q}(\Psi) \doteq J^{n_\Sigma}_\mu(\Psi) n_\Sigma^\mu.
\]
Note that this is non-negative. Moreover, in the coordinate charts of the atlas $(\ref{theatlas})$,
we have
\begin{equation}
\label{xrnsimo}
{\bf q}(\Psi) \sim \sum_i(\partial_i\Psi)^2.
\end{equation}

By $(\ref{infactassume})$ and $(\ref{folpart})$, 
we have that,
\[
\int_{\Sigma(0)}{\bf q}(\psi) \le
 B\left(\lVert \uppsi' \rVert_{L^2}^2+\lVert\nabla^{\Sigma}\uppsi\rVert_{L^2}^2 \right)
\]
and for all $\tau\ge 0$,
\[
\int_{\Sigma(\tau)}{\bf q}(\psi) <\infty,
\qquad
\int_{\Sigma^{\pm}(\tau)}{\bf q}(\psi) <\infty.
\]
Key to our results will be the uniform boundedness of this quantity.
\begin{theorem}
\label{kevtriko}
There exist positive constants $\epsilon_{\rm close}$, $C$
depending only on $M>0$ such  that the following holds. 
Let $g$, $\Sigma(\tau)$ be as in Section~\ref{genclass}
and let $\uppsi$, $\uppsi'$, $\psi$ be as in Section~\ref{sols} where
$\psi$ satisfies $(\ref{waveee})$.
Then, for $\tau\ge0$,
\begin{equation}
\label{newstatement}
\int_{\Sigma(\tau)} {\bf q}(\psi)\le C\int_{\Sigma(0)} {\bf q}(\psi).
\end{equation}
\end{theorem}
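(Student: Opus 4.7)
The plan is to establish Theorem~\ref{kevtriko} by a continuity/bootstrap argument on $\tau$, driven by three local estimates: a red-shift identity up to $\mathcal{H}^+$, $T$-conservation on the complement, and a frequency-localized bulk estimate in the transition region. Concretely, I aim for a schematic inequality of the form
\[
\int_{\Sigma(\tau)}{\bf q}(\psi) + b\int_{\mathcal{R}(0,\tau)\cap\{r_Y^-\le r\le r_Y^+\}}{\bf q}(\psi) \le B\int_{\Sigma(0)}{\bf q}(\psi) + B\,\epsilon_{\rm close}\sup_{0\le\bar\tau\le\tau}\int_{\Sigma(\bar\tau)}{\bf q}(\psi),
\]
so that taking $\epsilon_{\rm close}$ small enough absorbs the last term and yields $(\ref{newstatement})$.

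\textbf{Red-shift and $T$-energy.} Following~\cite{dr3}, I would introduce a vector field $N=T+\lambda Y$ with $Y$ transverse to $\mathcal{H}^+$, chosen so that on Schwarzschild $K^N(\psi)\doteq(\nabla^\mu N^\nu)T_{\mu\nu}(\psi)$ is pointwise coercive on ${\bf q}(\psi)$ throughout $\{r\le r_Y^+\}$. By $(\ref{locsmal})$--$(\ref{locsmal2})$ this persists for metrics in the class of Section~\ref{genclass} once $\epsilon_{\rm close}$ is small, and assumption~(\ref{planeeq}) ensures that the $\mathcal{H}^+$-flux of $J^N$ has a good sign, since $N$ remains timelike on the horizon for small $\gamma$. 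Integrating the divergence identity for $J^N$ over $\mathcal{R}(0,\tau)\cap\{r\le r_Y^+\}$ controls $\int_{\Sigma(\tau)\cap\{r\le r_Y^-\}}{\bf q}(\psi)$ and a coercive bulk term on $\{r\le r_Y^+\}$, at the cost of a boundary bulk contribution supported in the transition annulus $\{r_Y^-\le r\le r_Y^+\}$. Complementarily, the $T$-energy identity on $\{r\ge r_Y^-\}$ is conserved, with a boundary flux that is non-negative outside the ergoregion; inside the ergoregion $T-n_\Sigma = O(\epsilon_{\rm close})$, so the defect is at most $B\epsilon_{\rm close}\int_{\Sigma(\bar\tau)}{\bf q}(\psi)$ and is absorbable. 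Combining the two estimates reduces the problem to bounding the bulk integral $\int_{\mathcal{R}(0,\tau)\cap\{r_Y^-\le r\le r_Y^+\}}{\bf q}(\psi)$.

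\textbf{Frequency-localized Morawetz estimate.} To bound this bulk term, I would introduce a time cutoff $\xi(t^*)$ supported in $[-1,\tau+1]$, set $\psi_{\hbox{\Rightscissors}}=\xi\psi$, Fourier transform in $t^*$, expand in azimuthal modes $e^{ik\phi}$ for the Killing field $\Phi$, and split $\psi_{\hbox{\Rightscissors}}=\psi_\flat+\psi_\sharp$ by a smooth frequency cutoff at $\omega^2\sim\omega_0^2 k^2$. For the non-superradiant part $\psi_\sharp$, the vector field $T+\omega_0\,\mathrm{sgn}(\omega k)\Phi$, applied mode by mode, is timelike throughout the ergoregion once $\epsilon_{\rm close}$ is sufficiently small relative to $\omega_0$, so its associated current gives a coercive energy identity that controls ${\bf q}(\psi_\sharp)$ on the transition annulus. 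For the superradiant part $\psi_\flat$, the restriction $\omega^2\lesssim\omega_0^2 k^2$ keeps the frequencies well away from the trapping regime of the Schwarzschild photon sphere, and a Morawetz multiplier $f(r^*)\partial_{r^*}$ analogous to the Schwarzschild one produces a non-negative spacetime integral coercive on ${\bf q}(\psi_\flat)$ in $\{r_Y^-\le r\le r_Y^+\}$; stability under the closeness assumptions follows from the fact that this annulus avoids both the horizon (where the red-shift already dominates) and the photon sphere.

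\textbf{Main obstacle.} The delicate point is that $\psi_\flat$ and $\psi_\sharp$ do not separately satisfy $\Box_g=0$: they are coupled through the cutoff errors $\xi'\psi$ and $\xi''\psi$, which involve the full solution and not the frequency pieces. These errors are supported in unit-width slabs near $t^*=0$ and $t^*=\tau$. To absorb them, one argues by pigeonhole: on any interval of length, say, $1$ ending near $\tau$, there is some slice where $\int_{\Sigma(\bar\tau)}{\bf q}(\psi)$ is comparable to its time-average, and evaluating the estimate on such a slice converts the cutoff errors into an $O(\epsilon_{\rm close}+B/\tau_{\rm step})\sup_{\bar\tau}\int_{\Sigma(\bar\tau)}{\bf q}(\psi)$ defect. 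Feeding this back into the red-shift estimate and using a continuity argument in $\tau$ (valid because the bootstrap assumption is open and the initial estimate at $\tau=0$ is trivial), one closes the chain for $\epsilon_{\rm close}$ small and obtains $(\ref{newstatement})$.
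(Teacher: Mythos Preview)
Your architecture matches the paper's (bootstrap on $\sup_{\bar\tau\le\tau}\int_{\Sigma(\bar\tau)}{\bf q}(\psi)$, red-shift field $N_e=T+eY$, time cutoff and decomposition $\psi_\flat+\psi_\sharp$, a Morawetz current for $\psi_\flat$, and a pigeonhole on a step $\tau_{\rm step}$), but there is a genuine gap in the $\psi_\sharp$ step. The claim that $T+\omega_0\,\mathrm{sgn}(\omega k)\,\Phi$ is timelike in the ergoregion once $\epsilon_{\rm close}\ll\omega_0$ is false. Since $g(T,\Phi)=O(\epsilon_{\rm close})$ while $g(\Phi,\Phi)\sim r^2\sin^2\theta>0$, in the ergoregion (where $g(T,T)\ge 0$) one has
\[
g(T+c\Phi,T+c\Phi)=g(T,T)+2c\,g(T,\Phi)+c^2g(\Phi,\Phi)\ \ge\ c^2g(\Phi,\Phi)-B\epsilon_{\rm close}(1+|c|),
\]
which for fixed $c=\pm\omega_0$ and $\epsilon_{\rm close}\to 0$ is $\sim\omega_0^2(2M)^2\sin^2\theta>0$, i.e.\ spacelike away from the axis. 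The only combinations $T+c\Phi$ that are causal near $\mathcal{H}^+$ have $c\approx\gamma=O(\epsilon_{\rm close})$, not $c=\omega_0$. The paper's mechanism is different: it applies the global $J^T$ identity and shows that its \emph{horizon flux} is essentially non-negative for $\psi_\sharp$ (Proposition~\ref{notquitecons}). By $(\ref{planeeq})$ one has on $\mathcal{H}^+$ the one-sided bound $-J^T_\mu(\psi_\sharp^\tau)n^\mu_{\mathcal{H}}\le B\epsilon_{\rm close}(\partial_\phi\psi_\sharp^\tau)^2-b(\partial_{t^*}\psi_\sharp^\tau)^2$, and the frequency restriction gives $\int_{\mathcal{H}}(\partial_\phi\psi_\sharp^\tau)^2\lesssim\omega_0^{-2}\int_{\mathcal{H}}(\partial_{t^*}\psi_\sharp^\tau)^2$ plus bootstrap-controlled errors (Proposition~\ref{compderpro}), so the negative part is absorbed. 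No coercive Killing-type current in the ergoregion is available or needed; boundedness for $\psi_\sharp$ follows by combining the near-conserved $J^T$ with the $N_e$ identity via the pigeonhole (Proposition~\ref{forhigh}).

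A secondary point: the Morawetz current for $\psi_\flat$ in the paper (the current $J^{\bf X}$ built from $X_a,X_b$) is global in $r$, and it is the frequency comparison $(\partial_{t^*}\psi_\flat)^2\lesssim\omega_0^2(\partial_\phi\psi_\flat)^2$ (Proposition~\ref{compderpro2}) that makes $K^{\bf X}+K^{N_e}$ coercive on all of $r_Y^-\le r\le R$, including the photon sphere $r=3M$; it is not that the transition annulus $\{r_Y^-\le r\le r_Y^+\}$ avoids that sphere. Localizing the multiplier to the annulus would instead produce uncontrolled fluxes through $r=r_Y^\pm$.
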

Inequality 
$(\ref{prwtoprwto})$ of Theorem~\ref{prwto} follows from Theorem~\ref{kevtriko}.
(The universality of the constant $\epsilon$ in the statement
of that theorem follows \emph{a posteriori} from a simple scaling argument.)

\section{The auxiliary positive definite quantities ${\bf q}_e$ and ${\bf q}_e^\bigstar$}
We note that given $e>0$, for small enough $\epsilon_{\rm close}\ll e$, 
the vector field $T+e n_{\Sigma}$ is timelike. 
For sufficiently regular $\Psi$, let us define 
\[
{\bf q}_e(\Psi) =  J^{T+en_\Sigma}_\mu(\Psi) n_\Sigma^\mu.
\]
Note that
\[
e b\, {\bf q}(\Psi)\le {\bf q}_e(\Psi) \le B\, {\bf q}(\Psi).
\]
Thus, to prove Theorem~\ref{kevtriko}, it is sufficient to prove $(\ref{newstatement})$
with ${\bf q}_e$ 
replacing ${\bf q}$. The significance of the parameter $e$ will become
clear in the context of the proof.

We shall need also a weaker positive definite quantity defined as follows.
Let $\chi$ denote a cutoff function such that $\chi=1$ for $r\ge r^-_Y$, and $\chi=0$ 
for say $r\le r^-_Y-(r^-_Y-2M)/2$. For a sufficiently regular function $\Psi$,
define
\[
{\bf q}_e^\bigstar(\Psi) = r^{-2} J^{\chi T+en_\Sigma}_\mu(\Psi) n_\Sigma^\mu.
\]
Note that we have
\[
 e b\, r^{-2} {\bf q}(\Psi)\le {\bf q}_e^\bigstar
 (\Psi) \le B\, r^{-2}{\bf q}(\Psi).
\]

Note also that for
$r\ge r^-_Y$,
we have
\[
 {\bf q}_e(\Psi)\sim  {\bf q}(\Psi).
\]
and
\begin{equation}
\label{prebest}
 {\bf q}_e^\bigstar (\Psi)\sim  r^{-2}\,  {\bf q}(\Psi)\sim  r^{-2}\, {\bf q}_e(\Psi).
\end{equation}
For all $r\ge 2M$, we have
\begin{equation}
\label{best}
\, {\bf q}_e(\Psi)\le Be^{-1}
r^2  {\bf q}_e^\bigstar (\Psi),
 \end{equation}
\begin{equation}
\label{best2}
\, {\bf q}(\Psi)\le Be^{-1}
r^2  {\bf q}_e^\bigstar (\Psi).
 \end{equation}

\section{The basic identity for currents}
\label{basicidsec}
For an arbitrary suitably regular function $\Psi$ such that
$\nabla\Psi$ is supported away from spatial infinity,
recall from $(\ref{Tdefwm})$ and $(\ref{Jdefwm})$ the definitions of $T_{\mu\nu}$
and $J_{\mu}$.
Define also
\[
K^V(\Psi) 	\doteq	T_{\mu\nu}(\Psi)\nabla^\mu V^\nu .
\]

We have
\[
\nabla^\mu  J_\mu(\Psi)  = K^V(\Psi) + F\,V^\nu  \Psi_\nu 
\]
where 
\[
F\doteq \Box_g\Psi.
\]
Thus,
setting 
\begin{equation}
\label{basikne}
\mathcal{E}^V(\Psi) =-F\, V^\nu \Psi_\nu,
\end{equation}
 we have
 the identity
\begin{align}
\nonumber
\label{basicid}
&\int_{\mathcal{H}(\tau',\tau'')} J_\mu^V(\Psi) n_\mathcal{H}^\mu
+ \int_{\Sigma(\tau'')} J^V_\mu (\Psi) n_\Sigma^\mu 
+\int_{\mathcal{R(\tau'.\tau'')}} K^V(\Psi) \\
&= \int_{\Sigma(\tau')} 
J^V_\mu (\Psi) n^\mu_\Sigma+
\int_{\mathcal{R}(\tau',\tau'')} \mathcal{E}^V(\Psi).
\end{align}

We will also consider currents modified as follows. Given a function $w$, define
$J_{\mu}^{V, w}$
by 
\begin{equation}
\label{modcur}
J_\mu^{V,w}(\Psi) =J_\mu^V(\Psi) + \frac18 w \partial_\mu(\Psi^2)
-\frac18(\partial_\mu w) \Psi^2,
\end{equation}
\[
K^{V,w}(\Psi) =
K^V(\Psi)  -\frac18\Box_g w(\Psi^2)+\frac14 w\nabla^\alpha\Psi
\nabla_\alpha\Psi,
\]
\begin{equation}
\label{basikne2}
\mathcal{E}^{V,w}(\psi)
=\mathcal{E}^V(\psi) - \frac14 w\Psi F.
\end{equation}
Identity $(\ref{basicid})$ also holds for $J^{V,w}$ as long as appropriate
assumptions are made in a neighborhood of spatial infinity. We will always
apply $J^{V,w}$ to $\Psi$ with $\Psi_0=0$, and thus, by our assumptions
on $\nabla\Psi$, such $\Psi$ will in fact be supported away from spatial infinity.

It will be useful to have a separate notation for currents as defined with respect
to the Schwarzschild metric.
For these we use the notation $(J^V_{g_M})_\mu$, 
$K^V_{g_M}$, $(J^{V,w}_{g_M})_\mu$, etc.

Suppose that $V$ is a vector field such that its components
$V^i$ are bounded in the atlas $(\ref{theatlas})$. It follows
from $(\ref{locsmal})$
that
\begin{equation}
\label{cws0}
\left|(J^{V,w}_{g_M})_\mu (\Psi) n^\mu -J^{V,w}_\mu(\Psi)  n^\mu\right|
\le B\epsilon_{\rm close} r^{-2}
\max_{i} |V_i|
\sum (\partial_i\Psi)^2.
\end{equation}
The above applies in particular if $w=0$, i.e.~for the case $J^V_{g_M}$. (In fact,
the $w$ term disappears from the difference above.)
Note that if the components of $n_\mu-\tilde{n}_\mu$ are less than $B\epsilon_{\rm close} r^{-2}$
we have by the triangle inequality
\begin{eqnarray}
\label{cws00}
\nonumber
\left|(J^{V,w}_{g_M})_\mu (\Psi) n^\mu - J^{V,w}_\mu(\Psi)  \tilde{n}^\mu\right|&\le&
 B\epsilon_{\rm close} r^{-2}
(|w|+ \max_{i} (|V_i|+|\partial_i w|))\\
&&\hbox{}\cdot
\sum (\partial_i\Psi)^2.
\end{eqnarray}

Note also that if $V^j$, $\partial_iV^j$, $w$, $\partial_i w$ and
$\partial_i\partial_j w$ are 
bounded with respect to $(\ref{theatlas})$, where
then
from $(\ref{locsmal})$, $(\ref{locsmal2})$, we obtain
\begin{eqnarray}
\nonumber
\label{cws}
\left|K^{V,w}_{g_M}(\Psi) -K^{V,w}(\Psi)\right| 
&\le& B\epsilon_{\rm close} r^{-2}
\left(
\max_{ij}\max_{p_k=0,1}|\partial_j^{p_j}V^i|+|\partial_i^{p_i}\partial_j^{p_j} w|
\right)\\
&&\hbox{}\cdot
\sum(\partial_i\Psi)^2.
\end{eqnarray}

If $F$ above vanishes, then $J_\mu^{V,w}$ are examples of 
\emph{compatible currents} in the sense of~\cite{book2}.
This is a unifying principle for understanding the structure
behind much of the analysis for Lagrangian equations like $(\ref{waveee})$.

\section{The vector fields and their currents}

\subsection{The vector field $T$}
Since
$T$ is Killing we have
\[
K^T(\Psi)=0.
\]
In $r\ge r^-_Y$, $T$ is timelike and moreover we have
\[
J^T_\mu(\Psi) n^\mu_\Sigma \sim J^{n_\Sigma}_\mu(\Psi) n^\mu_\Sigma
\]
in that region.
In all regions we have
\[
|J_\mu^T(\Psi)n_\Sigma^\mu|\le B\,J_\mu^{n_\Sigma}(\Psi) n^\mu_\Sigma, \qquad
|J_\mu^T(\Psi)n^\mu_\mathcal{H}|\le B\, J_\mu^{n_\Sigma}(\Psi) n^\mu_{\mathcal{H}}.
\]
For $\epsilon_{\rm close}\ll e$ we have
\begin{equation}
\label{inpartic}
|J_\mu^T(\Psi) n_\Sigma^\mu|\le B\, {\bf q}_e(\Psi).
\end{equation}

\subsection{The vector fields $Y$ and $N_e=T+eY$}
\label{Ysec}
Let $(u,v)$ denote Eddington-Finkelstein null coordinates\footnote{See~\cite{dr4,dr5}. Our use of this terminology is
somewhat non-standard. Here $v\doteq (t+r^*)/2$, $u\doteq (t-r^*)/2$.} on ${\rm int}(\mathcal{D})$
and let $r^*$ denote the Regge-Wheeler coordinate. 
In the paragraph that follows, coordinate
derivatives are with respect to say $(u,v,x^A,x^B)$ coordinates, whereas
$y_1'$, $y_2'$ denote $\frac{dy_1}{dr^*}$, etc.

Recall from~\cite{dr3} that for a vector field $Y$ of the form:
\[
Y= y_1(r^*) \frac{1}{1-\mu}\frac{\partial}{\partial u} +y_2(r^*) \frac\partial {\partial v},
\]
we have
\begin{eqnarray*}
K_{g_M}(\Psi)
&=&
\frac{(\partial_u\Psi)^2}{2(1-\mu)^2}
\left(\frac{y_1\mu}r-y_1'\right) + (\partial_v\Psi)^2\frac{y_2'}{2(1-\mu)}\\
&&\hbox{}
+\frac12|\nabb\Psi|^2_{g_M}\left(\frac{y_1'}{1-\mu}-\frac{(y_2(1-\mu))'}{1-\mu}\right)\\
&&\hbox{}-\frac1r\left(\frac{y_1}{1-\mu}-y_2\right)\partial_u\Psi\partial_v\Psi.
\end{eqnarray*}
Let us define $y_1=\xi (r^*) ( 1+(1-\mu)) $, $y_2=\xi(r^*)\delta^{-1}(1-\mu)$ where
$\xi$ is a cutoff function such that $\xi = 1$ for $r\le r^-_Y$, and $\xi =0$ for $r\ge r^+_Y$,
for two parameters $2M<r^-_Y<r^+_Y$, and a small constant $\delta$.
One sees easily that there exist such parameters such that
for $r\le r^-_Y$,
\[
\left(\frac{y_1\mu}r-y_1'\right)  \ge b, \qquad 
\frac{y_2'}{2(1-\mu)} \ge b,
\]
\[
\frac{y'_1}{1-\mu}-\frac{\left(y_2(1-\mu)\right)'}{1-\mu}\ge b
\]
\[
\left|-\frac1r\left(\frac{y_1}{1-\mu}-y_2\right)\partial_u\Psi\partial_v\Psi\right|\
\le \frac12\left(\frac{(\partial_u\Psi)^2}{2(1-\mu)^2}
\left(\frac{y_1\mu}r-y_1'\right) + (\partial_v\Psi)^2\frac{y_2'}{2(1-\mu)}\right).
\]

Let us return now to the coordinate charts of our $(\ref{theatlas})$.
We see from the above that
the vector field $Y$  has the property that in $r\le r^-_Y$, 
\begin{equation}
\label{because}
B \sum_i(\partial_i\Psi)^2 \ge 
K^{Y}_{g_M}(\Psi) \ge b \sum_i(\partial_i\Psi)^2
\end{equation}
where $i, j$ refer to the coordinate charts of $(\ref{theatlas})$
whereas we easily see also that in $r^-_Y\le r \le r^+_Y$ 
\begin{equation}
\label{because2}
|K^Y_{g_M}| \le  B \sum_i(\partial_i\Psi)^2.
\end{equation}
Finally,  for $r\ge r^+_Y$, $Y=0$.

Moreover, we note that $Y$ is a regular vector field, 
in particular, when expressed with respect
to the coordinates of $(\ref{theatlas})$, we have $\max|Y^i|\le B$, $\max|\partial_i Y^j|\le B$.

Because all derivatives appear on the right hand sides of $(\ref{because})$
and $(\ref{because2})$,
these inequalities are stable, i.e.~it
follows from $(\ref{cws})$ 
that for $\epsilon_{\rm close}$ sufficiently small,
\begin{equation}
\label{becbec}
K^{Y}(\Psi) \sim  \sum_i(\partial_i\Psi)^2
\sim J^{n_\Sigma}_\mu(\Psi) n_{\Sigma}^\mu
\end{equation}
in $r\le r^-_Y$, and
\begin{equation}
\label{becbec2}
\left|K^Y\right| \le  B \sum_i(\partial_i\Psi)^2\le
B\,J^{n_\Sigma}_\mu(\Psi) n_{\Sigma}^\mu
\end{equation}
in $r^-_Y\le r \le r^+_Y$, while certainly 
$K^Y=0$ for $r\ge r^+_Y$.

Define 
\[
N_e=T+eY.
\] 
Note that
\[
K^{N_e} = K^T+e\, K^Y= e\, K^Y.
\]

In the region $r^-_Y<r <r^+_Y$, we have by $(\ref{becbec2})$
\[
|K^{N_e}(\Psi) |\le B e \,J_\mu^{n_\Sigma}(\Psi) n_\Sigma^\mu\le eB\, 
{\bf q}_e^\bigstar(\Psi).
\]
Note the factor of $e$.
In the region $r\le r^-_Y$, we certainly have by $(\ref{becbec})$
\begin{equation}
\label{kiautoetsi}
K^{N_e}(\Psi) \ge b\, {\bf q}^\bigstar_e(\Psi).
\end{equation}
For $r\ge r^+_Y$,
we have of course
\[
K^{N_e}=e\, K^Y=0.
\]
In particular, the bound
\begin{equation}
\label{muhim}
-K^{N_e}(\Psi) \le eB \,{\bf q}_e^\bigstar(\Psi)
\end{equation}
holds in all regions.

With the help of $(\ref{becbec})$ and $(\ref{muhim})$, 
we obtain easily that 
\begin{equation}
\label{muh2}
{\bf q}_e (\Psi) \le B \left(K^{N_e}(\Psi) + J_\mu^T(\Psi) n^\mu_\Sigma\right)
\end{equation}
holds everywhere, if $e$ is sufficiently small.

By similar considerations to the above, we see that 
given $e$, by requiring $\epsilon_{\rm close}\ll e$ sufficiently small, we
have that $N_e$ is timelike everwhere up to the boundary, and in fact
\begin{equation}
\label{muhimb}
J^{N_e}_\mu(\Psi) n^\mu_\Sigma \sim {\bf q}_e(\Psi).
\end{equation}
On the other hand, since by Assumption~\ref{lastassump}, $\mathcal{H}^+$ is null,
$J_\mu^{N_e}(\Psi)n^\mu_{\mathcal{H}}$ controls all tangential derivatives. 
More precisely, we have
\begin{equation}
\label{tangcont}
(\partial_t\psi)^2\le  (B+B\epsilon_{\rm close}e^{-1})
J_\mu^{N_e}(\Psi)n^\mu_{\mathcal{H}}\le B\,
J_\mu^{N_e}(\Psi)n^\mu_{\mathcal{H}},
\end{equation}
\begin{equation}
\label{tangcont2}
(\partial_\phi\psi)^2\le Be^{-1}
J_\mu^{N_e}(\Psi)n^\mu_{\mathcal{H}},
\end{equation}
on $\mathcal{H}^+$. For the above we have used the full content of Assumption~\ref{lastassump},
as well as the translation invariance of $n_{\mathcal{H}}$, $n_\Sigma$, $\partial_\phi$,
$\partial_t$ and $N_e$, which allows us to choose uniform constants $B$.

\subsection{The vector fields $X^a$ and $X^b$}
In this section we shall often use Regge-Wheeler coordinates as many
of the computations refer to the Schwarzschild metric $g_{M}$.

In particular, we will consider vector fields of the form 
$V=f(r^*)\partial_{r^*}$.
In what follows $f'$ will denote
$\frac{d f}{d r^*}$. 

In $(t,r^*,x^A,x^B)$ coordinates\footnote{Careful, $t$ not the $t^*$ of our chart! Of course,
$t^*$ coincides with $t$ for $r\ge r^-_Y$.}
we have
\begin{eqnarray}
\label{genform00}
\nonumber
 K^V_{g_M} &=& 
\frac{f'}{1-\mu}(\partial_{r^*}\Psi)^2 + \frac12|\nabb\Psi|^2_{g_M} \left(\frac{2-3\mu}r\right)f\\
&&\hbox{}
-\frac14\left(2f'+4\frac{1-\mu}r f\right) g^{\mu\nu}_{M}\partial_\mu\Psi\partial_\nu\Psi
\end{eqnarray}
where
$|\nabb\Psi|_{g_M}^2$ denotes the induced metric from $g_M$ on the spheres.
We may rewrite the above as
\begin{eqnarray}
\label{genform02}
\nonumber
K^V_{g_M} &=& 
\left(\frac{f'}{2(1-\mu)}-\frac{f}r\right)(\partial_{r^*}\Psi)^2 + |\nabb\Psi|^2_{g_M}\left(-\frac\mu{2r}
f-\frac12 f'\right)\\
&&\hbox{}
+\left(\frac{f'}{2(1-\mu)}+\frac{f} r\right)(\partial_t\Psi)^2.
\end{eqnarray}

Let $\alpha$, $R_1(\alpha)\gg M$ be parameters  to be chosen 
in what follows. Let $R(\alpha)=\exp(4) R_1(\alpha)$.
Given these, we  
define a function $f_a$ such that 
\begin{eqnarray*}
f_a&=& -r^{-4}(r^-_Y)^{4}, \qquad {\rm\ for\ } r\le r^-_Y\\
f_a&=&-1, \qquad {\rm\ for\ }  r^-_Y\le r\le R_1(\alpha),\\
f_a&=& -1+\int_{R_1(\alpha)}^r \frac{d\tilde{r}}{4\tilde{r}} \qquad {\rm\ for\ } R_1(\alpha)\le
  r\le R(\alpha),\\
f_a&=&0 {\rm\ for\ } r\ge R(\alpha).
\end{eqnarray*}
(One can smooth this function, although this is
irrelevant.) 
We call the resulting vector field $X_a$.

We obtain
that in $R_1(\alpha)\ge r>r^-_Y$
\begin{equation}
\label{calcul}
K^{X_a}_{g_M}(\Psi) =
|\nabb\Psi|^2_{g_M}\left(\frac{\mu}{2r}\right) +r^{-1}|\partial_{r^*}\Psi|^2
-r^{-1} |\partial_t\Psi|^2 .
\end{equation}
Since $t=t^*$ for $r\ge r^-_Y$, we can rewrite this
as
\begin{equation}
\label{rrw}
K^{X_a}_{g_M}(\Psi) =
|\nabb\Psi|^2_{g_M}\left(\frac{\mu}{2r}\right) +r^{-1}(1-\mu)^2|\partial_r\Psi|^2
-r^{-1} |\partial_{t^*}\Psi|^2,
\end{equation}
where the coordinate derivatives in the last line can now be understood
with respect to the atlas $(\ref{theatlas})$.
For $\epsilon_{\rm close}$ sufficiently small we obtain from $(\ref{cws})$
\begin{eqnarray}
\label{rrw2}
\nonumber
K^{X_a}(\Psi) &\ge&
|\nabb\Psi|^2\left(\frac{\mu}{2r}\right) +r^{-1}(1-\mu)^2|\partial_r\Psi|^2
\\
&&\hbox{}
-r^{-1} |\partial_{t^*}\Psi|^2 -\epsilon_{\rm close}B\,  {\bf q}_e^\bigstar(\Psi)
\end{eqnarray}
in this region, where we have used $(\ref{prebest})$.

By $(\ref{muhim})$, it follows that in $r^-_Y\le r\le R_1(\alpha)$.
\begin{eqnarray}
\label{legelege}
\nonumber
K^{X_a}+ K^{N_e}(\Psi) &\ge&
 |\nabb \Psi|^2 \left(\frac{\mu}{2r}\right)+r^{-1}(1-\mu)^2|\partial_r\Psi|^2
 \\
 &&\hbox{}
-r^{-1} |\partial_{t^*}\Psi|^2
 - e  B \, {\bf q}_e^\bigstar(\Psi)
\end{eqnarray}
for small enough $\epsilon_{\rm close}\ll e$.

Consider now the region $2M\le r \le r^-_Y$.
We have
\[
f' = 4  r^{-5}(r^-_Y)^{4}(1-\mu),
\]
and thus
\[
\left(\frac{f'}{2(1-\mu)}+\frac fr \right) = 
 (r^-_Y)^{4}r^{-5},
\]
\[
\left(\frac{f'}{2(1-\mu)}-\frac fr \right) = 
3 (r^-_Y)^{4} r^{-5},
\]
\[
\left(-\frac{\mu}{2r}f-\frac12f'\right) =(r^-_Y)^4 \left(\frac{5\mu-4}{2}r^{-5}\right).
\]
We have thus
\[
K_{g_M}^{X_a}(\Psi)
\ge
0
\]
in this region.

Thus, by $(\ref{cws})$, $(\ref{xrnsimo})$ and $(\ref{best2})$
we have
\[
K_g^{X_a}(\Psi) \ge
- \epsilon_{\rm close} e^{-1} B
\,{\bf q}_e^\bigstar (\Psi)
\]
in this region.
It follows now from $(\ref{kiautoetsi})$ that
\begin{equation}
\label{veofavns}
K^{X_a}(\Psi)+ K^{N_e}(\Psi)\ge 
b\,{\bf q}_e^\bigstar(\Psi)
\end{equation}
in this region, 
for small enough $\epsilon_{\rm close}\ll e$.

In view of $(\ref{calcul})$, 
$K^{X_a}$ will ``have\footnote{After integration over appropriate domains and modulo
error terms} a sign'' when applied to $\psi_{\flat}^\tau$ (see Section~\ref{flest})
except for
very large values of $r$, namely $r\ge R_1(\alpha)$.
To control the behaviour there we will need an additional current.
First, let us notice that for the $X_a$ we have selected,
the coefficient of $(\partial_r\Psi)^2$ is always nonnegative.
Finally we notice that for $r\ge R_1(\alpha)$,  the coefficient of $|\nabb\Psi|^2$ satisfies
\begin{equation}
\label{yIldIzcIk}
-\frac{\mu}{2r}f_a-\frac12f'_a \ge -\frac{1}{8r}.
\end{equation}

To choose an additional vector field,
let us
define 
\[
f_b\doteq \chi(r^*)\pi^{-1}\int_0^{r^*}\frac{\alpha\, dx}{x^2+\alpha^2},
\]
where $\chi$ is a smooth cutoff with $\chi=0$ for $r^*\le 0$ and $\chi=1$
for $r^*\ge 1$, and
let $X_b$ be the vector
\[
X_b=f_b\partial_{r^*}.
\]
Finally, define the function
\[
w_b\doteq f'_b+2\frac{1-\mu}r f_b -\frac{2M(1-\mu)f_b}{r^2}
\]
and consider the modified current $J_\mu^{X_b,w_b}$ defined
by $(\ref{modcur})$, as well as the associated $K^{X_b,w_b}$ and 
$\mathcal{E}^{X_b,w_b}$.

Note that for general $f$, we can rewrite
\begin{eqnarray}
\label{rewrite}
\nonumber
K^V_{g_M}&=& 
\left(\frac{f'}{1-\mu}\right)(\partial_{r^*}\Psi)^2 +\frac12 \left(
\frac{2-3\mu}r\right)
f |\nabb\Psi|^2_{g_M}\\
\nonumber
&&\hbox{}- \frac{M(1-\mu)f}{r^2}g^{\mu\nu}_{M}\partial_\mu \Psi \partial_\nu\Psi\\
&&\hbox{}
-\frac18\left(2f'+4\frac{1-\mu}r f-\frac{4M(1-\mu) f}{r^2}\right)(\Box_{g_M}\Psi^2-2\Psi F) 
\end{eqnarray}
from which we see
\begin{eqnarray*}
K^{X_b,w_b}_{g_M}(\Psi)&=&
\left(\frac{f'_b}{1-\mu}-\frac{M f_b}{r^2}\right)(\partial_{r^*}\Psi)^2
+\frac{M f_b}{r^2}(\partial_t\Psi)^2\\
&&\hbox{} +\left(
\frac{2-3\mu}{2r}-\frac{M(1-\mu)}{r^2}\right)
f_b |\nabb\Psi|^2_{g_M}\\
&&\hbox{}
-\frac18\Box_{g_M}\left(2f'_b+4\frac{1-\mu}r f_b-\frac{4M(1-\mu)}{r^2} f_b\right)\Psi^2.
\end{eqnarray*}
Note also the modified error term
\[
\mathcal{E}^{X_b,w_b}(\Psi) = 
\mathcal{E}^{X_b}(\Psi)-\frac14\left(2f'_b+4\frac{1-\mu}rf_b
-\frac{4M(1-\mu)f_b}{r^2}\right)\Psi F .
\]
Finally, let
us define the currents
\[
J^{\bf X}_\mu =  J^{X_a}_\mu  + J^{X_b,w_b}_\mu,
\]
\[
K^{\bf X} =K^{X_a}+K^{X_b,w_b},
\]
\[
\mathcal{E}^{\bf X}=\mathcal{E}^{X_a}+\mathcal{E}^{X_b,w_b}.
\]
By our previous remarks, $(\ref{basicid})$ holds for $J^{\bf X}$.
Also, in view of the definition of $w$, identities $(\ref{cws0})$, 
$(\ref{cws00})$ and $(\ref{cws})$ hold for $J^{\bf X}$, $K^{\bf X}$.

Let us expand
\[
K^{\bf X}_{g_M} =
H_1(\partial_{r^*}\Psi)^2 + 
H_2(\partial_t\Psi)^2
+H_3|\nabb\Psi|^2_{g_M} 
+H_4\Psi^2
\]
where
\[
H_1=\frac{f'_a}{2(1-\mu)}-\frac{f_a}r + \frac{f'_b}{1-\mu}-\frac{M f_b}{r^2},
\]
\[
H_2= \frac{f'_a}{2(1-\mu)}+\frac{f_a}r+\frac{M f_b}{r^2},
\]
\[
H_3=
-\frac{\mu}{2r}f_a-\frac12f'_a+
\left(\frac{2-3\mu}{2r} -\frac{M(1-\mu)}{r^2} \right)f_b,
\]
\[
H_4=
-\frac18\Box_{g_M}\left(2f'_b+4\frac{1-\mu}r f_b-\frac{4M(1-\mu)}{r^2}f_b\right).
\]

Note that  for $r^*\ge 1$, we have
\[
{f'_b}=\frac1{\pi} \frac{\alpha}{(r^*)^2+\alpha^2}.
\]
In particular, for $r\ge R_1(\alpha)$ for sufficiently large $R_1(\alpha)$ 
we
have that
\[
H_1= \frac{f'_a}{2(1-\mu)}-\frac{f_a}r + \frac{f'_b}{1-\mu}-\frac{M f_b}{r^2} \ge \frac{\alpha}{2\pi r^2}
\]
while in $r^-_Y\le r\le R_1(\alpha)$, we have
\[
H_1=
\frac1r + \frac{f'_b}{1-\mu}-\frac{M f_b}{r^2} \ge  \frac1{2r}.
\]

For $H_2$, let us simply remark that for $r\ge R(\alpha)$, we have
\[
H_2 = \frac{M f_b}{r^2} \ge b(\alpha) r^{-2}.
\]

For $H_3$,
we note first that 
we have the following asymptotic formula 
\[
\left(\frac{2-3\mu}{2r} -\frac{M(1-\mu)}{r^2} \right)f_b \sim \frac{1}{r},
\]
i.e.~for
$r\ge R_1(\alpha)$ for sufficiently big $R_1(\alpha)$,
we have
\[
\left(\frac{2-3\mu}{2r} -\frac{M(1-\mu)}{r^2} \right)f_b  \ge  \frac{7}{8r}
\]
and thus by  $(\ref{yIldIzcIk})$
\[
H_3=-\frac{\mu}{2r}f_a-\frac12f'_a+
\left(\frac{2-3\mu}{r} -\frac{M(1-\mu)}{r^2} \right)f_b
\ge \frac{3}{4r}.
\]
To consider the behaviour for $r\le R_1(\alpha)$, let us first note that 
there exists an $R_0$ depending only on $M$--i.e.~independent of $\alpha$ if
we require $\alpha$ to be sufficiently large--such that
for $r>R_0$   we have
\[
\left(
\frac{2-3\mu}{2r}-\frac{M(1-\mu)}{r^2} \right)f_b
\ge 0
\]
and thus, in $R_0\le r\le R_1(\alpha)$ we have
\[
H_3= \frac{\mu}{2r}+\left(
\frac{2-3\mu}{2r}-\frac{M(1-\mu)}{r^2} \right)f_b
\ge \frac{M}{r^2}.
\]
For $r^-_Y \le r\le R_0$ we have
\[
\left| \left(
\frac{2-3\mu}{2r}-\frac{M(1-\mu)}{r^2} \right)f\right|
\le  B\alpha^{-1}
\]
and thus, say
\[
H_3\ge \frac{M}{2r^2}
\]
for $\alpha$ sufficiently large.

Turning to $H_4$, we note first
\begin{eqnarray*}
-\frac18 \Box_{g_M}\left(2f_b'+4\frac{1-\mu}r f_b-\frac{4M(1-\mu)}{r^2} f_b\right)
&=&-\frac14\frac{1}{1-\mu}f_b'''-\frac{1}r f'' +\frac{\mu'}{r(1-\mu)}f_b'\\
&&\hbox{}
-\frac{1}{2(1-\mu)r}\left(\frac{\mu'(1-\mu)}r-\mu''\right)f_b\\
&&\hbox{}
+\frac12\Box_{g_M}\left(\frac{M(1-\mu)}{r^2}f_b\right)\\
&\sim& \frac{7\alpha} {2\pi r^4}
\end{eqnarray*}
for large $r$, i.e.~we
have
\[
H_4=
-\frac18 \Box_{g_M}\left(2f'_b+4\frac{1-\mu}r f_b-\frac{4M(1-\mu)}{r^2} f_b\right)\ge
\frac{7\alpha}{4\pi r^4}
\]
for $r\ge R_1(\alpha)$ for $R_1(\alpha)$ suitably chosen.
On the other hand, one sees easily that $R_0$ before could have been chosen such 
that for all $\alpha$
we have
\[
H_4=-\frac18\Box_{g_M}\left(2f'_b+4\frac{1-\mu}r f_b-\frac{4M(1-\mu)}{r^2}f_b\right) \ge 0
\]
for $r\ge R_0$.
For $r^-_Y\le r\le R_0$, 
we have
\[
\left|-\frac18\Box_{g_M}\left(2f'_b+4\frac{1-\mu}r f_b-\frac{4M(1-\mu)}{r^2}f_b\right)\right| \le B\alpha^{-1}.
\]
We may thus choose $\alpha$ large enough so that in this region
\[
|H_4 |\le  \frac{M}{8r^4}\le \frac1{4r^2} H_3.
\]
{\bf Let $\alpha$ be now chosen. It follows that $R_1=R_1(\alpha)$ and $R=R(\alpha)$
can be chosen. These choices thus can be made to depend only on $M$.}

Let us assume in what follows in this section that $\Psi_0=0$.
We thus have
\begin{equation}
\label{thusthus}
\int_0^{2\pi}\Psi^2\,d\phi \le \int_0^{2\pi} (\partial_\phi\Psi )^2 d\phi.
\end{equation}
It follows that
\[
\int_0^{2\pi}\Psi^2\,d\phi \le \int_0^{2\pi}(\partial_\phi \Psi)^2 \,d\phi\le
r^2 \int_0^{2\pi} |\nabb \Psi|^2_{g_M}\,d\phi.
\]
Thus, in the region $r^-_Y\le r \le R$, we have
\[
\int_0^{2\pi} (H_3|\nabb\Psi|^2_{g_M}+H_4\Psi^2)\, d\phi  \ge \frac12 \int_0^{2\pi}
H_3|\nabb\Psi|^2_{g_M}\,d\phi.
\]

Note that, in the support of $f_b$, we have
\[
(\partial_r^*\Psi)^2\sim (\partial_r\Psi)^2, \qquad
|\nabb\Psi|^2\sim |\nabb \Psi|_{g_M}^2.
\]

We have then by the above bounds and $(\ref{cws})$, $(\ref{xrnsimo})$
and $(\ref{prebest})$
 that for $r\ge  R$, 
\begin{eqnarray}
\label{refto1}
\nonumber
\int_0^{2\pi}( K^{\bf X}+  K^{N_e})(\Psi)\, d\phi 
&\ge&
\nonumber
\int_0^{2\pi}(  K^{\bf X}_{g_M}+  K^{N_e}_{g_M})(\Psi)\, d\phi \\
\nonumber
&&\hbox{}
-
\int_0^{2\pi}\epsilon_{\rm close}B\, {\bf q}_e^\bigstar (\Psi) \,d\phi\\
\nonumber
&=&
\int_0^{2\pi}\left( H_1(\partial_{r^*}\Psi)^2+H_2(\partial_t\Psi)^2
+H_3|\nabb\Psi|^2_{g_M}\right.\\
\nonumber
&&\hbox{}\left. +H_4\Psi^2\right)\,d\phi\\
\nonumber
&&\hbox{}-
\int_0^{2\pi}\epsilon_{\rm close}B\,{\bf q}_e^\bigstar (\Psi) \,d\phi
\\
\nonumber
&\ge&
\int_0^{2\pi} \left(b\, {\bf q}_e^\bigstar (\Psi)
-\epsilon_{\rm close}B\,{\bf q}_e^\bigstar (\Psi)\right)\,d\phi\\
&\ge&
\int_0^{2\pi} b\,{\bf q}_e^\bigstar (\Psi)\, d\phi
\end{eqnarray}
for $\epsilon_{\rm close}$ suitably small,
whereas
for $r^-_Y \le r\le  R$ we may write
\begin{eqnarray}
\label{refto2}
\nonumber
\int_0^{2\pi}( K^{\bf X}+ K^{N_e})(\Psi)\, d\phi &\ge& 
b\int_0^{2\pi}  {\bf q}_e^\bigstar (\Psi) \,d\phi\\
\nonumber
&&\hbox{}
+ \int_0^{2\pi}(b|\nabb\Psi|^2- B(\partial_t\Psi)^2)\,d\phi\\
\nonumber
&&\hbox{}-b\epsilon_{\rm close}\int_0^{2\pi} {\bf q}_e^\bigstar(\Psi)\\
\nonumber
&\ge& 
b\int_0^{2\pi} \,{\bf q}_e^\bigstar (\Psi) \,d\phi\\
&&\hbox{}
+ \int_0^{2\pi}(b|\nabb\Psi|^2- B(\partial_t\Psi)^2)\,d\phi
\end{eqnarray}
where for the second inequality we require that $\epsilon_{\rm close}$ be sufficiently small.
From $(\ref{veofavns})$ and the fact that $f_b$  vanishes identically in $r\le r^-_Y$,
we have
\begin{eqnarray}
\label{refto3}
\nonumber
\int_0^{2\pi}(K^{\bf X}+ K^{N_e})(\Psi)\, d\phi &=&
\int_0^{2\pi}(K^{X_a}+K^{N_e})(\Psi)\, d\phi \\
&\ge& b\int_0^{2\pi} {\bf q}_e^\bigstar(\Psi)
\,d\phi,
\end{eqnarray}
in the region $r\le r^-_Y$.

To give bounds for the boundary terms, note first
that $X_a=- \left(\frac{r^-_Y}{2M}\right)^4T$ on $\mathcal{H}^+$. 
It follows that on the horizon, we have
\[
J^{X_a}_\mu n^\mu_{\mathcal{H}} = - \left(\frac{r^-_Y}{2M}\right)^4
J^T_\mu n^\mu_{\mathcal{H}}.
\]
One sees easily that for $\mathcal{H}^+$ or $\Sigma(\tau)$ where $n^\mu=n^\mu_{\mathcal{H}}$
or $n^\mu=n^\mu_{\Sigma}$, we have
\[
\left|J^{X_a}_\mu n^\mu\right|
 \le B \left|J^T_\mu n^\mu\right|+ B\epsilon_{\rm close}e^{-1}\, 
 J_\mu^{N_e}n^\mu 
\le B\, J^{N_e}_\mu n^\mu
\]
for $\epsilon_{\rm close}$ sufficiently small.
In view of the fact that we also have
\[
|J^T_\mu n^\mu| \le J^T_\mu n^\mu + B\epsilon_{\rm close}e^{-1}\,
 J_\mu^{N_e} n^\mu
\le B\, J^{N_e}_\mu n^\mu ,
\]
it follows that
\[
\left|J^{X_a}_\mu n^\mu  \right|\le 
B\left| J^T_\mu n^\mu\right|  + B\epsilon_{\rm close}e^{-1}\, 
 J^{N_e}_\mu n^\mu 
\le B\, J^{N_e}_\mu n^\mu
\]
on $\mathcal{H}^+$ or $\Sigma(\tau)$.
On the other hand, 
in view of the the assumption $\Psi_0=0$, we
 have similarly
\begin{eqnarray*}
\left|\int_0^{2\pi} J_{\mu}^{X_b}n^\mu \, d\phi \right| &\le&
B \left|\int_0^{2\pi} J_\mu^T n^\mu d\phi\right|  + B \epsilon_{\rm close}e^{-1}
 \int_0^{2\pi} J_\mu^{N_e} n^\mu\, d\phi \\
 &\le & B \int_0^{2\pi} J_\mu^{N_e} n^\mu\, d\phi.
\end{eqnarray*}
It follows from the above inequalities that
\begin{equation}
\label{yeniyIldIz}
\left|\int_0^{2\pi}J_\mu^{\bf X} n ^\mu \, d\phi\right| \le 
B \int_0^{2\pi} J_\mu^{N_e} n^\mu d\phi
\end{equation}
on both $\Sigma(\tau)$ and $\mathcal{H}^+$.

\section{The high-low frequency decomposition}
As explained in the introduction, the arguments  of this paper
hinge on separating the ``superradiant'' part of the solution from the
non-``superradiant'' part, and then exploiting dispersion for the former and 
positive definiteness for the  $J^T$ flux through the event horizon for the latter.
These two parts will be characterized by their support in frequency space.
As we certainly do not know, however, \emph{a priori} that $\psi$ is in $L^2(t^*)$, we will first
need  to cut off $\psi$ in $t^*$. 
This construction, together with propositions which control the errors that arise,
are given in this section.

\subsection{$\psi$ cut off: the definition of $\psi^\tau_{\hbox{\Rightscissors}}$}
Let $\chi(x)$ be a cutoff function such that $\chi(x)=1$ for $x\le 0$ and
$\chi(x)=0$ for $x\ge 1$.
Given $\tau\ge 2$, define 
\begin{eqnarray*}
\psi^\tau_{\hbox{\Rightscissors}}&=&
\chi(t^{+}+1-\tau)\chi(-t^{-}+1)\psi.
\end{eqnarray*}
We may express this as
\[
\psi^\tau_{\hbox{\Rightscissors}}
=\chi^\tau_{\hbox{\Rightscissors}}\psi=({}^{+}\chi^\tau_{\hbox{\Rightscissors}}
+{}^{-}\chi^\tau_{\hbox{\Rightscissors}})\psi,
\]
where ${}^{+}\chi^\tau_{\hbox{\Rightscissors}}$ and
${}^{-}\chi^\tau_{\hbox{\Rightscissors}}$ are smooth
functions on $\mathcal{R}$ 
with
\[
{\rm supp} ({}^{+}\chi^\tau_{\hbox{\Rightscissors}})
\subset \mathcal{R}^{+} (\tau-1,\tau),
\]
\[
{\rm supp} ({}^{-}\chi^\tau_{\hbox{\Rightscissors}}) 
\subset \mathcal{R}^{-} (0,1),
\]
\[
0\le {}^{-}\chi^\tau_{\hbox{\Rightscissors}} \le 1,
\qquad 0\le{}^{+}\chi^\tau_{\hbox{\Rightscissors}}\le 1
\]
and
\[
|\partial_{(i)} {}^{+}\chi^\tau_{\hbox{\Rightscissors}} |\le B_q,
\qquad 
|\partial_{(i)} {}^{-}\chi^\tau_{\hbox{\Rightscissors}} |\le B_q,
\]
with respect to the charts of $(\ref{theatlas})$, for any multi-index $(i)$ of
order $q$.
Moreover,
\begin{equation}
\label{anote}
\partial_\theta {}^{+}\chi^\tau_{\hbox{\Rightscissors}} =0,
\qquad 
\partial_\theta {}^{-}\chi^\tau_{\hbox{\Rightscissors}} =0.
\end{equation}

The reader may wonder why the cutoff region is related to $\Sigma^\pm$,
indeed, why $\Sigma^\pm$ have been introduced in the first place.
Essentially,  
this is necessary to achieve the propositions of Section~\ref{sugkri}--\ref{revisit}.
We would like to express all errors in terms of the positive definite
quantity ${\bf q}_e(\psi)$. This quantity does not contain $\psi$ itself
but only derivatives. Of course, in view of the fact that, as we shall see, 
the spherical average $\psi_0$ does
not give rise to errors, this does not generate problems for the region $r\le R$
for 
$(\psi-\psi_0)^2$ can be controlled by ${\bf q}_e(\psi)$ via a Poincar\'e
inequality. As $r\to \infty$, one needs extra negative powers of $r$. Our cutoff
region diverges from $\mathcal{R}(0,\tau)$ as $r\to\infty$ and
this allows us to ``gain'' powers of $r$ necessary to control $0$'th order terms via
a Poincar\'e inequality in $\mathcal{R}(0,\tau)$. One can
then retrieve estimates all the way to the boundary of the cutoff region using 
 the positive definiteness of $J^T$ for large $r$.

\subsection{Definition of $\Psi_{\flat}$ and $\Psi_{\sharp}$}
\label{defflat}
Let $\zeta$ be a smooth cutoff supported in $[-2,2]$ with the property
that $\zeta=1$ in $[-1,1]$, and let $\omega_0>0$ be a parameter to be determined
later. 

For a smooth function $\Psi(t^*, \cdot)$ of compact support in $t^*$, 
let $\Psi_k$ denote its $k$'th azimuthal mode. Let $\hat{\Psi}$ denote
the Fourier transform of $\Psi$ in $t^*$. Note that $\widehat{\Psi_k}=\hat{\Psi}_k$.

Define
\[
\Psi_{\flat} (t^*,\cdot)
\doteq\sum_{k\ne 0} e^{-ik \theta} \int_{-\infty}^{\infty} \zeta((\omega_0k)^{-1}\omega)\, \hat\Psi_{k}(\omega,\cdot) \, e^{i\omega t^*}d\omega,
\]
\[
\Psi_{\sharp}(t^*,\cdot) \doteq
\Psi_0+
 \sum_{k\ne 0} e^{-ik \theta} 
 \int_{-\infty}^{\infty} \left(1-\zeta((\omega_0k)^{-1}\omega)\right)\, \hat\Psi_{k}
 (\omega,\cdot)\, e^{i\omega t^*}d\omega.
\]

Note of course
\begin{equation}
\label{fusika}
\Psi_{\flat}+\Psi_{\sharp}= \Psi.
\end{equation}
Note in addition that
\begin{equation}
\label{torecall}
(\Psi_\flat)_0 =0
\end{equation}
whereas
\begin{equation}
\label{whereas}
(\Psi_{\sharp})_0 = \Psi_0.
\end{equation}

In the application to $\Psi=\psi^\tau_{\hbox{\Rightscissors}}$, we shall write
simply $\psi^\tau_{\sharp}$ and $\psi^\tau_{\flat}$. Note
finally, that in view of $(\ref{anote})$,
$(\psi_k)^\tau_{\hbox{\Rightscissors}}=(\psi^\tau_{\hbox{\Rightscissors}})_k$.

Note that for $k\ne 0$,
$$
(\Psi_{\flat})_k(t^*)=\int_{-\infty}^{\infty} \zeta((\omega_0k)^{-1}\omega)\, \hat\Psi_{k}(\omega) 
\, e^{i\omega t^*}d\omega=\int_{-\infty}^{\infty} P^<_{k}(t^*-s^*) \Psi_k(s^*)\, ds^*,
$$
where 
$$
P^{<}_{k}(t^*)= \omega_0 k\int_{-\infty}^{\infty} \zeta(\omega) e^{-i\omega (\omega_0 k t^*)}\,d\omega.
$$
The kernel $P^<_{k}(t^*)$
is a rescaled copy of a Schwarz function of $t^*$. 
As a consequence, for any $m, q\ge 0$,
\begin{equation}
\label{convest}
|\pa_{t^*}^m P^<_{k} (t^*)|\le B_{mq} (\omega_0 |k|)^{m+1} \left (1+|\omega_0 k t^*|\right)^{-q}.
\end{equation}
On the other hand,
let $\tilde \zeta$ be a smooth cut-off function supported in $(-3,3)$ such that $\tilde \zeta=1$ on 
$[-2,2]$. Then, since $\tilde \zeta\, \zeta =\zeta$, we have the reproducing formula
$$
(\Psi_{\flat})_k(t^*)= \int_{-\infty}^\infty \tilde \zeta((\omega_0k)^{-1}\omega) 
(\hat \Psi_{\flat})_k(\omega)\,  e^{i \omega t^*}\, d\omega=
\int_{-\infty}^\infty  \tilde P^{<}_{k} (t^*-s^*) (\Psi_{\flat})_k(s^*)\, ds^*,
$$
where the kernel $\tilde P^<_{k}$ also satisfies $(\ref{convest})$.

Finally, let $\xi(\omega)$ be a function smooth away from $\omega=0$ and with the property that 
$\xi(\omega)=\omega^{-1}$ for $|\omega|\le 1/2$ and $\xi(\omega)=1$ for $|\omega|\ge 1$. 
In particular,
the function $\tilde\xi(\omega)=\omega \xi(\omega)$ is smooth and 
$\tilde \xi(\omega)=1$ for $|\omega|\le 1/2$ and $\tilde \xi(\omega)=\omega$ for $|\omega|\ge 1$.
Since $\xi (1-\zeta)=1-\zeta$, we can write for $k\ne 0$,
\[
(\Psi_{\sharp})_k(t^*) = \int_{-\infty}^\infty Q^>_{k}(t^*-s^*) (\Psi_{\sharp})_k(s^*)\, ds^*, 
\]
where
\[
Q^>_{k}(t^*)=\omega_0k \int_{-\infty}^\infty 
\xi(\omega)\,e^{i\omega (\omega_0 k t^*)}\, d\omega.
\]
Furthermore,
\[
\pa_{t^*} (\Psi_{\sharp})_k(t^*) = \omega_0 k  \int_{-\infty}^\infty  \tilde Q^>_{k} (t^*-s^*) 
(\Psi_{\sharp})_k(s^*)\, ds^*,
\]
where
\[
\tilde Q^>_{k}(t^*)=\omega_0 k \int_{-\infty}^\infty 
\tilde \xi(\omega)\, e^{i\omega (\omega_0 k t^*)}\, d\omega.
\]
and 
\[
(\Psi_{\sharp})_k(t^*) = (\omega_0k)^{-1}\int_{-\infty}^\infty 
R^>_{k} (t^*-s^*) \pa_{s^*} (\Psi_{\sharp})_k(s^*)\, ds^*, 
\]
where
\[
R^>_{k}(t^*)=
\omega_0 k \int_{-\infty}^\infty 
\left (\tilde \xi(\omega)\right)^{-1}\,e^{i\omega (\omega_0kt^*)} \,d\omega.
\]
The function $a(\omega)=(\tilde \xi(\omega))^{-1}$ is equal to one on $(-1/2,1/2)$ and 
$\omega^{-1}$ for $|\omega|\ge 1$. The kernel $R^>_{k}(t^*)$ satisfies
$$
|R^>_{k}(t^*)|\le B_q (\omega_0|k|)^{1-q} (t^*)^{-q}
$$
for any $q>0$.
In addition, we have a uniform bound (coming from $1/\omega$ decay)
$$
|R^>_{k}(t^*)|\le B \omega_0 |k|\,| \log (\omega_0 |k| t^*)|.
$$
Combining we obtain
$$
|R^>_{k}(t^*)| \le B_q \omega_0 |k|\,| \log (\omega_0 |k| t^*)|\,
(1+|\omega_0 kt^*|)^{-q}.
$$

\subsection{Comparing $\partial_{t^*}\Psi$ and $\partial_\phi\Psi$}
The decomposition of $\Psi$ into $\Psi_{\flat}$ and $\Psi_{\sharp}$
is motivated
by the desire to compare various $L^2$-type norms 
of the $\partial_\phi$ and $\partial_{t^*}$ derivatives. 
Since this is required at a localised level, however,
error terms arise. The precise relations one can make are 
recorded in this section. The estimates
of this section employ standard techniques of elementary Fourier analysis.
We must be careful, however, to express all ``error terms'' in a form which 
can be related to our bootstrap assumptions which will be introduced later
on.

\subsubsection{Comparisons for $\Psi_{\flat}$}
First a lemma.
\begin{lemma}
\label{flatlem}
Let $\tau''\ge\tau'$ and let $\Psi$ be smooth and of
compact support in $t^*$. Then
\begin{eqnarray*}
\int_{\mathcal{R}(\tau',\tau'')\cap \{r^-_Y\le r\le R\}}(\partial_{t^*}\Psi_{\flat})^2
&\le& B {\omega_0}^2 \int_{\mathcal{R}(\tau',\tau'')\cap \{r^-_Y\le r\le R\}}
  (\partial_\phi \Psi_{\flat})^2
\\
&&\hbox{}+ B{\omega_0} \sup_{-\infty\le \bar\tau\le  \infty}\int_{\bar{\tau}}^{\bar\tau+1}
\left(\int_{\Sigma(\tilde\tau)\cap
\{r^{-}_Y\le r\le R\} } 
(\partial_\phi\Psi_{\flat})^2\right) d\tilde\tau.
\end{eqnarray*}
\end{lemma}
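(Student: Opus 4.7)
My plan is to reduce to a per-mode convolution estimate via Parseval in $\phi$, prove a localized mode-wise bound using the reproducing kernel $\tilde P^<_k$, and then recombine the modes by exploiting a $k$-uniform decay of the kernel.

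By Parseval in $\phi$, $\int_0^{2\pi}(\partial_{t^*}\Psi_{\flat})^2\,d\phi = 2\pi\sum_{k}|\partial_{t^*}(\Psi_{\flat})_k|^2$ and $\int_0^{2\pi}(\partial_\phi\Psi_{\flat})^2\,d\phi = 2\pi\sum_{k}k^2|(\Psi_{\flat})_k|^2$, which combined with the explicit volume form in Section~\ref{genclass} reduces the claim to a spatial-pointwise estimate for each mode $k\ne 0$, together with the Parseval identity $\sum_{k} k^2|(\Psi_\flat)_k|^2(s^*,\theta,r)$ producing $(\partial_\phi\Psi_\flat)^2$ after integration in $\phi$. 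For $k\ne 0$, the reproducing formula from Section~\ref{defflat} gives $\partial_{t^*}(\Psi_{\flat})_k = K_k*(\Psi_{\flat})_k$ with $K_k:=\partial_{t^*}\tilde P^{<}_k$; by $(\ref{convest})$, $|K_k(u)|\le B_q(\omega_0|k|)^2(1+|\omega_0 k u|)^{-q}$ and a substitution yields $\|K_k\|_{L^1(\mathbb{R})}\le B\omega_0|k|$. Cauchy--Schwarz then gives
\[
|\partial_{t^*}(\Psi_{\flat})_k(t^*)|^2 \;\le\; B\omega_0|k|\,(|K_k|*|(\Psi_{\flat})_k|^2)(t^*),
\]
and integrating in $t^*$ over $[\tau',\tau'']$, swapping the order of integration, and performing the change of variables $u=\omega_0|k|(s^*-t^*)$ yield, independently of the slab length,
\[
\int_{\tau'}^{\tau''}|\partial_{t^*}(\Psi_{\flat})_k|^2\,dt^* \;\le\; B_q\omega_0^2 k^2 \int_{\mathbb{R}} (1+\omega_0|k|\,d(s^*))^{-q+1}|(\Psi_{\flat})_k(s^*)|^2\,ds^*,
\]
where $d(s^*):=\mathrm{dist}(s^*,[\tau',\tau''])$.

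The decisive step is then the $k$-uniform majorization $(1+\omega_0|k|\,d)^{-q+1}\le (1+\omega_0\,d)^{-q+1}$, valid for $|k|\ge 1$. Multiplying by $\nu(\theta,r)$, integrating over $(\theta,r)\in[r^-_Y,R]\times[0,\pi]$, and summing in $k\ne 0$ (recall $(\Psi_{\flat})_0=0$ by $(\ref{torecall})$) lets the $s^*$-weight factor out of the $k$-sum, which the Parseval identity then collapses into $G(s^*):=\int_{\Sigma(s^*)\cap\{r^-_Y\le r\le R\}}(\partial_\phi\Psi_{\flat})^2$. One arrives at
\[
\int_{\mathcal{R}(\tau',\tau'')\cap\{r^-_Y\le r\le R\}}(\partial_{t^*}\Psi_{\flat})^2 \;\le\; B_q\omega_0^2 \int_{\mathbb{R}}(1+\omega_0\,d(s^*))^{-q+1}G(s^*)\,ds^*.
\]
Splitting the $s^*$-integral into $[\tau',\tau'']$ (where $d=0$ and one obtains the main term $B\omega_0^2\int_{\mathcal{R}(\tau',\tau'')\cap\{r^-_Y\le r\le R\}}(\partial_\phi\Psi_{\flat})^2$) and into unit intervals $I_n$ in the complement (where $d(s^*)\gtrsim n$), using $\int_{I_n} G\le \sup_{\bar\tau}\int_{\bar\tau}^{\bar\tau+1} G$, and summing the series $\sum_{n\ge 0}(1+\omega_0 n)^{-q+1}\le B_q/\omega_0$ for $q\ge 3$ and $\omega_0\le 1$, converts the two powers $\omega_0^2\cdot\omega_0^{-1}$ into the stated single $\omega_0$, completing the estimate.

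The main obstacle is precisely the interplay between the azimuthal sum and the $t^*$-localization: summing per-mode slab estimates with a $\sup_n$ inside would produce $\sum_{k} k^2\sup_n(\cdots)$, which generally exceeds the wanted $\sup_n\sum_{k} k^2(\cdots)$. The $k$-uniform decay majorization above is the mechanism that breaks this coupling by extracting the $s^*$-weight before the $k$-sum, so that Parseval can identify what remains as the spatial integral of $(\partial_\phi\Psi_{\flat})^2$ at time $s^*$; avoiding this pitfall is also why one must use the slab-length-free bound on $L_k(s^*):=\int_{\tau'}^{\tau''}|K_k(t^*-s^*)|\,dt^*$ rather than the cruder bound $L_k\le \|K_k\|_{L^1}$.
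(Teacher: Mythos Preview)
Your proof is correct and follows essentially the same strategy as the paper: reduce to azimuthal modes, exploit the rescaled kernel bound $(\ref{convest})$ for $\partial_{t^*}\tilde P^<_k$, and then use the fact that $|k|\ge 1$ to pass to a $k$-independent weight before applying Parseval. The paper organises the $t^*$-integral by partitioning into intervals of length $(\omega_0|k|)^{-1}$ and manipulating an $\ell$-indexed sum, while you keep a continuous weight $(1+\omega_0|k|\,d(s^*))^{-q+1}$ and majorise it directly by $(1+\omega_0\,d(s^*))^{-q+1}$; this is the continuous analogue of the paper's step $\int_{\tau'+\ell/(\omega_0|k|)}^{\tau'}\le \int_{\tau'+\ell/\omega_0}^{\tau'}$, and the remainder of the two arguments is then identical in spirit. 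Your version is somewhat more streamlined. One minor remark: your final series estimate $\sum_{n\ge 0}(1+\omega_0 n)^{-q+1}\le B_q/\omega_0$ uses $\omega_0\le 1$, which the lemma does not state explicitly; this is harmless since the paper's own proof needs the same (via the step $|\ell|\omega_0^{-1}$ bounding the number of unit intervals) and $\omega_0$ is a small parameter throughout.
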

\begin{proof}
Recall $(\ref{torecall})$.
Note first that by the relations of Section~\ref{defflat}, it follows
that for any $q>0$, we have
\[
\left|\partial_{t^*} (\Psi_\flat)_k (t^*,\cdot)\right|
\le B_q (\omega_0 k)^2\int_{-\infty}^{\infty} \left(1+\left|\omega_0 k(t^*-s^*)\right|\right)^{-q}
\left|(\Psi_\flat)_k\right|(s^*,\cdot) ds^*.
\]
We have thus 
\begin{eqnarray*}
\left|\partial_{t^*} (\Psi_\flat)_k (t^*,\cdot)\right|&\le
&B_q(\omega_0k)^2\sum_{\ell=-\infty}^\infty
\int_{t^*+\frac{\ell}{\omega_0 |k|}}^{t^*+\frac{\ell+1}{\omega_0 |k|}}
 \left(1+|\ell| \right)^{-q}
\left|(\Psi_\flat)_k\right|(s^*,\cdot) ds^*\\
&\le&B_q(\omega_0k)^2\sum_{\ell=-\infty}^\infty  \left(1+|\ell| \right)^{-q}
\int_{t^*+\frac{\ell}{\omega_0 |k|}}^{t^*+\frac{\ell+1}{\omega_0 |k|}}
\left|(\Psi_\flat)_k\right|(s^*,\cdot) ds^*\\
&\le&B_q(\omega_0|k|)^{3/2} \sum_{\ell=-\infty}^\infty  \left(1+|\ell| \right)^{-q}
\left(\int_{t^*+\frac{\ell}{\omega_0 |k|}}^{t^*+\frac{\ell+1}{\omega_0 |k|}}
(\Psi_\flat)_k^2(s^*,\cdot) ds^*\right)^{1/2}.
\end{eqnarray*}

It follows that, for $q>1$,
\begin{eqnarray*}
&&\int_{\tau'}^{\tau''} (\partial_{t^*} (\Psi_\flat)_k)^2 (t^*,\cdot)\,dt^*\\
&\le& B_q (\omega_0|k|)^3 \int_{\tau'}^{\tau''}\left(  \sum_{\ell=-\infty}^\infty  \left(1+|\ell| \right)^{-q}
\left(\int_{t^*+\frac{\ell}{\omega_0 |k|}}^{t^*+\frac{\ell+1}{\omega_0 |k|}}
(\Psi_\flat)_k^2(s^*,\cdot) \,ds^*\right)^{1/2}\right)^2
    \,dt^*\\
   &\le&
   B_q (\omega_0|k|)^3 \int_{\tau'}^{\tau''}\left( \sum_{\ell=-\infty}^\infty  \left(1+|\ell| \right)^{-q}\right)
    \sum_{\ell=-\infty}^\infty \left(1+|\ell| \right)^{-q}
\int_{t^*+\frac{\ell}{\omega_0 |k|}}^{t^*+\frac{\ell+1}{\omega_0 |k|}}
(\Psi_\flat)_k^2(s^*,\cdot)\, ds^*
    \,dt^*\\
&\le&     B_q (\omega_0|k|)^3 \int_{\tau'}^{\tau''} 
    \sum_{\ell=-\infty}^\infty \left(1+|\ell| \right)^{-q}
\int_{t^*+\frac{\ell}{\omega_0 |k|}}^{t^*+\frac{\ell+1}{\omega_0 |k|}}
(\Psi_\flat)_k^2(s^*,\cdot)\, ds^*
    \,dt^*\\
 &=&B_q (\omega_0|k|)^3
    \sum_{\ell=-\infty}^\infty \left(1+|\ell| \right)^{-q} \int_{\tau'}^{\tau''} 
\int_{t^*+\frac{\ell}{\omega_0 |k|}}^{t^*+\frac{\ell+1}{\omega_0 |k|}}
(\Psi_\flat)_k^2(s^*,\cdot)\, ds^*
    \,dt^*\\
     &=&B_q (\omega_0|k|)^3
    \sum_{\ell=-\infty}^\infty \left(1+|\ell| \right)^{-q}
 \int_{\frac{\ell}{\omega_0 |k|}}^{\frac{\ell+1}{\omega_0 |k|}} 
\int_{\tau'}^{\tau''}
(\Psi_\flat)_k^2(s^*+t^*,\cdot)\, dt^*
    \,ds^*\\
         &=&B_q (\omega_0k)^2
    \sum_{\ell=-\infty}^\infty \left(1+|\ell| \right)^{-q}
 \int_{\tau'+\frac{\ell}{\omega_0|k|}}^{\tau''+\frac{\ell+1}{\omega_0 |k|}} 
(\Psi_\flat)_k^2(t^*,\cdot)
    \,dt^*.
\end{eqnarray*}
Thus,
\begin{eqnarray}
\nonumber
\label{above}
&&\int_{\tau'}^{\tau''} (\partial_{t^*} (\Psi_\flat)_k)^2 (t^*,\cdot)\,dt^*\\
\nonumber
&\le& B_q (\omega_0k)^2
    \sum_{\ell=-\infty}^\infty \left(1+|\ell| \right)^{-q}
 \int_{\tau'+\frac{\ell}{\omega_0|k|}}^{\tau''+\frac{\ell+1}{\omega_0 |k|}} 
\chi_{\tau',\tau''}(s^*)(\Psi_\flat)_k^2(s^*,\cdot)
    \,ds^* \\
    \nonumber
  &&+B_q (\omega_0k)^2
    \sum_{\ell=-\infty}^\infty \left(1+|\ell| \right)^{-q}
 \int_{\tau'+\frac{\ell}{\omega_0|k|}}^{\tau''+\frac{\ell+1}{\omega_0 |k|}} 
(1-\chi_{\tau',\tau''}(s^*))(\Psi_\flat)_k^2(s^*,\cdot)
    \,ds^* \\
    &\doteq& T_{1,k}+T_{2,k}
\end{eqnarray}
where $\chi_{\tau',\tau''}(s^*)=1$ if $s^*\in[\tau',\tau'']$ and $0$ otherwise.

To prove the lemma, in view of the comments in Section~\ref{genclass}
on the volume form and Plancherel, it would suffice to show
that
\begin{align}
\label{would1}
\nonumber
&\sum_{|k|\ge1} \int_{r^-_Y}^R \int_0^\pi \nu(\theta, r)\int_0^{2\pi}
T_{1,k} \,
d\phi \, d\theta\, dr \\
&\le 
 B {\omega_0}^2 \int_{\tau'}^{\tau''} \int_{r^-_Y}^R \int_0^\pi \nu(\theta, r)\int_0^{2\pi}
  (\partial_\phi \Psi_{\flat})^2 \,d\phi \, d\theta\, dr\, dt^*,
\end{align}
\begin{align}
\label{would2}
\nonumber
&\sum_{|k|\ge 1} \int_{r^-_Y}^R \int_0^\pi \nu(\theta, r)\int_0^{2\pi}
T_{2,k} \,
d\phi \, d\theta\, dr \\
&\le
 B {\omega_0}\sup_{-\infty\le \bar\tau\le \infty}\int_{\bar\tau}^{\bar\tau+1}
  \int_{r^-_Y}^R \int_0^\pi \nu(\theta, r)\int_0^{2\pi}
  (\partial_\phi \Psi_{\flat})^2 \,d\phi \, d\theta\, dr d\tilde\tau.
\end{align}

The first term on the right hand side
of $(\ref{above})$
is bounded by
\begin{eqnarray*}
T_{1,k}& \le& B_q (\omega_0k)^2
    \sum_{\ell=-\infty}^\infty \left(1+|\ell| \right)^{-q}
 \int_{\tau'}^{\tau''} 
(\Psi_\flat)_k^2(s^*,\cdot)
    \,ds^*\\
    &\le&
     B_q (\omega_0k)^2
     \int_{\tau'}^{\tau''} 
(\Psi_\flat)_k^2(s^*,\cdot)
    \,ds^*.
\end{eqnarray*}
Thus, it follows that
\begin{eqnarray*}
\sum_{|k|\ge 1} \int_0^{2\pi} T_{1,k} d\phi
    &\le&
     \sum_{|k|\ge1} B_q \omega_0^2k^2
    \int_0^{2\pi} \int_{\tau'}^{\tau''} 
(\Psi_\flat)_k^2(s^*,\cdot)
    \,ds^*d\phi\\
     &\le&
     \sum_{|k|\ge1} B_q \omega_0^2
    \int_0^{2\pi} \int_{\tau'}^{\tau''} 
(\partial_\phi\Psi_\flat)^2(s^*,\cdot)
    \,ds^*d\phi.
\end{eqnarray*}
We have established $(\ref{would1})$.

The second term on the right hand side of $(\ref{above})$ is bounded
by
\begin{eqnarray}
T_{2,k}&\le&\nonumber
 B_q(\omega_0k)^2
    \sum_{\ell=-\infty}^{-1} \left(1+|\ell| \right)^{-q}
 \int_{\tau'+\frac{\ell}{\omega_0|k|}}^{\tau'} 
(\Psi_\flat)_k^2(s^*,\cdot)
    \,ds^*\\
    \nonumber
    &&\hbox{}+ B_q (\omega_0k)^2
    \sum_{\ell=0}^\infty \left(1+|\ell| \right)^{-q}
 \int_{\tau''}^{\tau''+\frac{\ell+1}{\omega_0|k|}}  
(\Psi_\flat)_k^2(s^*,\cdot)
    \,ds^*\\
    &\doteq& T_{21,k}
  +T_{22,k}.
\end{eqnarray}

We have
 \begin{eqnarray}
 \label{yepyeni}
 \nonumber
&&\sum_k \int_{r^-_Y}^R \int_0^\pi \nu(\theta, r)\int_0^{2\pi} T_{21,k} \, d\phi \, d\theta\,  dr\\
\nonumber
&\le& B_q\omega_0^2\sum_{|k|\ge 1}
 \sum_{\ell=-\infty}^{-1} \left(1+|\ell| \right)^{-q}
 \int_{r^-_Y}^R \int_0^\pi \nu(\theta, r)\int _0^{2\pi} \int_{\tau'+\frac{\ell}{\omega_0|k|}}^{\tau'} 
k^2(\Psi_\flat)^2_k(s^*,\cdot)
    \,ds^*\,d\phi\,d\theta\, dr\\
    \nonumber
    &\le& B_q\omega_0^2\sum_{|k|\ge 1}
 \sum_{\ell=-\infty}^{-1} \left(1+|\ell| \right)^{-q}
 \int_{r^-_Y}^R \int_0^\pi \nu(\theta, r) \int _0^{2\pi} \int_{\tau'+\frac{\ell}{\omega_0}}^{\tau'} 
k^2(\Psi_\flat)^2_k(s^*,\cdot)
    \,ds^*\,d\phi\, d\theta\, dr\\
    \nonumber
        &\le& B_q\omega_0^2
 \sum_{\ell=-\infty}^{-1} \left(1+|\ell| \right)^{-q}
\sum_{|k|\ge 1} \int_{r^-_Y}^R \int_0^\pi \nu(\theta, r) \int _0^{2\pi} \int_{\tau'+\frac{\ell}{\omega_0}}^{\tau'} 
k^2(\Psi_\flat)^2_k(s^*,\cdot)
    \,ds^*\,d\phi\, d\theta\, dr\\
    \nonumber
     &\le& B_q\omega_0^2
 \sum_{\ell=-\infty}^{-1} \left(1+|\ell| \right)^{-q}
 \int_{r^-_Y}^R \int_0^\pi \nu(\theta, r)\int _0^{2\pi} \int_{\tau'+\frac{\ell}{\omega_0}}^{\tau'} 
(\partial_\phi\Psi_\flat)^2(s^*,\cdot)
    \,ds^*\,d\phi\, d\theta\, dr\\
    \nonumber
          &\le& B_q\omega_0^2
 \sum_{\ell=-\infty}^{-1} \left(1+|\ell| \right)^{-q}|\ell|\omega_0^{-1}
\sup_{-\infty\le\bar{\tau}\le \infty}
 \int_{r^-_Y}^R \int_0^\pi \nu(\theta, r) \int _0^{2\pi}  \int_{\bar\tau}^{\bar\tau+1}
(\partial_\phi\Psi_\flat)^2(s^*,\cdot)
    \,ds^*\,d\phi\, d\theta\, dr\\
          &\le& B\omega_0
\sup_{-\infty\le\bar{\tau}\le \infty} \int_{r^-_Y}^R \int_0^\pi \nu(\theta, r)
\int _0^{2\pi} \int_{\bar\tau}^{\bar\tau+1}
(\partial_\phi\Psi_\flat)^2_k(s^*,\cdot)
    \,ds^*\,d\phi\, d\theta\, dr,
 \end{eqnarray}
 for $q$ chosen sufficiently large.

As for $T_{22,k}$, we have
\begin{eqnarray*}
&&\sum_k \int_{r^-_Y}^R \int_0^\pi \nu(\theta, r)\int_0^{2\pi} T_{22,k} d\phi \, d\theta\,  dr\\
&\le& B_q\omega_0^2\sum_{|k|\ge 1} 
 \sum_{\ell=0}^{\infty} \left(1+\ell \right)^{-q}
 \int_{r^-_Y}^R \int_0^\pi \nu(\theta, r)\int_0^{2\pi} \int_{\tau''}^{\tau''+\frac{\ell+1}{\omega_0|k|}} 
k^2(\Psi_\flat)_k^2(s^*,\cdot)
    \,ds^*\,d\phi\, d\theta\, dr\\
&\le& B_q\omega_0^2\sum_{|k|\ge 1} 
 \sum_{\ell=0}^{\infty} \left(1+\ell \right)^{-q} \int_{r^-_Y}^R \int_0^\pi \nu(\theta, r)
\int_0^{2\pi} \int_{\tau''}^{\tau''+\frac{\ell+1}{\omega_0}} 
k^2(\Psi_\flat)_k^2(s^*,\cdot)
    \,ds^*\,d\phi\, d\theta\, dr\\
&\le& B_q\omega_0^2
 \sum_{\ell=0}^{\infty} \left(1+\ell \right)^{-q} \int_{r^-_Y}^R \int_0^\pi \nu(\theta, r)
 \sum_{|k|\ge 1} 
\int_0^{2\pi} \int_{\tau''}^{\tau''+\frac{\ell+1}{\omega_0}} 
k^2(\Psi_\flat)_k^2(s^*,\cdot)
    \,ds^*\,d\phi\, d\theta\, dr\\
&=& B_q\omega_0^2
 \sum_{\ell=0}^{\infty} \left(1+\ell \right)^{-q} \int_{r^-_Y}^R \int_0^\pi \nu(\theta, r)
\int_0^{2\pi} \int_{\tau''}^{\tau''+\frac{\ell+1}{\omega_0}} 
(\partial_\phi\Psi_\flat)^2(s^*,\cdot)
    \,ds^*\,d\phi\, d\theta\, dr\\
    &\le&
    B_q\omega_0^2\sum_{\ell=0}^{\infty} \left(1+\ell \right)^{-q}
\frac{1+\ell}{\omega_0} \sup_{-\infty\le \bar\tau\le \infty}
 \int_{r^-_Y}^R \int_0^\pi \nu(\theta, r)
\int_0^{2\pi}
 \int_{\bar\tau}^{\bar\tau+1} 
(\partial_\phi\Psi_\flat)^2(s^*,\cdot)
    \,ds^*\,d\phi\, d\theta\, dr\\
          &\le&
    B_q\omega_0 \sup_{-\infty\le \bar\tau\le \infty}
 \int_{r^-_Y}^R \int_0^\pi \nu(\theta, r)
\int_0^{2\pi}
 \int_{\bar\tau}^{\bar\tau+1} 
(\partial_\phi\Psi_\flat)^2(s^*,\cdot)
    \,ds^*\,d\phi\, d\theta\, dr.
    \end{eqnarray*}
The above and $(\ref{yepyeni})$ give $(\ref{would2})$.
\end{proof}

\begin{lemma}
\label{yenilemma}
Under the assumptions of the previous lemma, if $\omega_0\le 1$, then
\begin{align*}
 \int_{r^-_Y}^R&\int_0^\pi \nu(\theta, r)
 \int_0^{2\pi }\int_{\bar\tau-1}^{\bar\tau}(\partial_\phi\Psi_{\flat})^2 dt^*\, d\phi\\
&\le
B \omega_0^{-1}
\sup_{-\infty\le \tilde\tau\le  \infty}
 \int_{r^-_Y}^R \int_0^\pi \nu(\theta, r) \int_0^{2\pi} \int_{\tilde\tau-1}^{\tilde\tau}
(\partial_{\phi}\Psi)^2 dt^*\, d\phi.
\end{align*}
\end{lemma}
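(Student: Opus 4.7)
The plan is to work Fourier mode by Fourier mode in $\phi$ and then sum, ordering the steps so that Parseval is applied \emph{before} any supremum in $\tilde\tau$ is taken. Setting $u\doteq \partial_\phi\Psi$, one has $u_k=-ik\Psi_k$ for $k\ne 0$, and the relations of Section~\ref{defflat} give $(\partial_\phi\Psi_\flat)_k=P^<_k\ast u_k$ (i.e.\ the cutoff $\flat$ commutes with $\partial_\phi$). By Parseval in $\phi$ together with the volume-form formula of Section~\ref{genclass}, the inequality to be proved is equivalent to
\[
\sum_{k\ne 0}\int_{r_Y^-}^R\!\int_0^\pi\nu\int_{\bar\tau-1}^{\bar\tau}|P^<_k\ast u_k|^2\,dt^*\,d\theta\,dr\le B\omega_0^{-1}\sup_{\tilde\tau}\sum_{k\ne 0}\int_{r_Y^-}^R\!\int_0^\pi\nu\int_{\tilde\tau-1}^{\tilde\tau}|u_k|^2\,dt^*\,d\theta\,dr.
\]

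First I would apply Cauchy--Schwarz to the convolution, using $\|P^<_k\|_{L^1(dt)}\le B$ (a direct consequence of $(\ref{convest})$ with $m=0$ and $q>1$), to obtain
\[
\int_{\bar\tau-1}^{\bar\tau}|P^<_k\ast u_k|^2\,dt^*\le B\int h_k(s^*)|u_k(s^*)|^2\,ds^*,\qquad h_k(s^*)\doteq\int_{\bar\tau-1}^{\bar\tau}|P^<_k(t^*-s^*)|\,dt^*.
\]
Next I partition the $s^*$-axis into a central interval $J_0$ of length $O(\omega_0^{-1})$ containing $[\bar\tau-1,\bar\tau]$ and tail intervals $J_m$, $m\in\mathbb Z\setminus\{0\}$, each of length $O(\omega_0^{-1})$ and situated at distance $\sim|m|\omega_0^{-1}$ from $[\bar\tau-1,\bar\tau]$. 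For the central interval, $h_k\le\|P^<_k\|_{L^1}\le B$ uniformly. For the tail intervals, $|t^*-s^*|\ge|m|\omega_0^{-1}$ together with $(\ref{convest})$ gives $h_k(s^*)\le B_q\omega_0|k|(1+|m||k|)^{-q}$. The crucial observation is that this quantity, viewed as a function of $|k|\ge 1$, is monotone decreasing once $q\ge 2$, so its supremum over $|k|\ge 1$ is attained at $|k|=1$ and equals $B\omega_0(1+|m|)^{-q}\le B\omega_0|m|^{-q}$. Thus on each $J_m$ I obtain the \emph{$k$-uniform} bound $\sup_{s^*\in J_m}h_k(s^*)\le B$ if $m=0$, and $\le B\omega_0|m|^{-q}$ if $m\ne 0$.

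The $k$-independence of these bounds is decisive: on each $J_m$ I can factor the bound out, interchange summation and $s^*$-integration, and apply Parseval in $\phi$ to convert $\sum_{k\ne 0}|u_k|^2$ into $(2\pi)^{-1}\int_0^{2\pi}u^2\,d\phi$. Since each $J_m$ has length $O(\omega_0^{-1})$, it splits into $O(\omega_0^{-1})$ unit subintervals, and hence $\int_{J_m}(\cdot)\le B\omega_0^{-1}\sup_{\tilde\tau}\int_{\tilde\tau-1}^{\tilde\tau}(\cdot)$. The central term $m=0$ thus contributes $B\omega_0^{-1}$ times the target supremum, while the tail terms contribute $\sum_{m\ne 0}B|m|^{-q}$ times the target supremum, which is finite for $q>1$. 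Using the hypothesis $\omega_0\le 1$ to absorb the $O(1)$ piece into $B\omega_0^{-1}$ completes the argument.

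The main obstacle — and the very source of the loss $\omega_0^{-1}$ — is the non-commutativity of $\sum_k$ with $\sup_{\tilde\tau}$: a naive per-mode estimate of $\int_{\bar\tau-1}^{\bar\tau}|(P^<_k\ast u_k)|^2$ by $\sup_{\tilde\tau_k}\int_{\tilde\tau_k-1}^{\tilde\tau_k}|u_k|^2$, followed by summation in $k$, yields $\sum_k\sup_{\tilde\tau_k}\int|u_k|^2$, which is not controlled by $\sup_{\tilde\tau}\sum_k\int|u_k|^2$. The $k$-uniform decay of $h_k$ afforded by the Schwartz character of $P^<_k$ in $(\ref{convest})$ is precisely what allows the supremum to be taken only \emph{after} summation in $k$, at the price of having to widen the time window from unit length to the length $\omega_0^{-1}$ of the pieces $J_m$, which is exactly the origin of the factor $\omega_0^{-1}$ in the stated estimate.
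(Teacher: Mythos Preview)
Your proof is correct and follows essentially the same strategy as the paper: convolution estimate for $P^<_k$, Cauchy--Schwarz, a partition of the $s^*$-axis, $k$-uniformity of the resulting weights so that Parseval in $\phi$ can be applied \emph{after} the partition, and then summing a convergent tail series. The one technical variation is in how the $k$-uniformity is obtained: the paper partitions into intervals of length $(\omega_0|k|)^{-1}$ and then enlarges them to the $k$-independent intervals $[\bar\tau-1-\ell/\omega_0,\bar\tau+(\ell+1)/\omega_0]$ using $|k|\ge 1$, whereas you partition directly into $\omega_0^{-1}$-intervals and invoke the monotonicity of $|k|(1+|m||k|)^{-q}$ in $|k|\ge 1$ for $q\ge 2$ to make $h_k$ uniformly small on each tail piece. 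Both devices accomplish the same thing and lead to the same loss $\omega_0^{-1}$ from the central block.
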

\begin{proof}
For any $q>0$, we have 
\begin{align*}
|(\Psi_\flat)_k(t^*,\cdot)|&\le B_q (\omega_0 |k|) \int_{-\infty}^\infty (1+|\omega_0 k (t^*-s^*)|)^{-q}
|\Psi_k|(s^*,\cdot)\, ds^*\\ &\le B_q (\omega_0 |k|) \sum_{\ell=-\infty}^{\infty} (1+|\ell|)^{-q}
\int_{t^*+\frac{\ell}{\omega_0|k|}}^{t^*+\frac{\ell+1}{\omega_0 |k|}}
|\Psi_k|(s^*,\cdot)\, ds^*\\&\le  B_q (\omega_0 |k|)^{\frac 12} \sum_{\ell=-\infty}^{\infty} (1+|\ell|)^{-q}
\left (\int_{t^*+\frac{\ell}{\omega_0|k|}}^{t^*+\frac{\ell+1}{\omega_0 |k|}}
|\Psi_k|^2(s^*,\cdot)\, ds^*\right)^{\frac 12}.
\end{align*}
It follows, with the help of Cauchy-Schwarz, that for $q>1$,
\begin{align*}
|(\Psi_\flat)_k(t^*,\cdot)|^2 &\le B_q (\omega_0 |k|) \sum_{\ell=-\infty}^{\infty} (1+|\ell|)^{-q}
\int_{t^*+\frac{\ell}{\omega_0|k|}}^{t^*+\frac{\ell+1}{\omega_0 |k|}}
|\Psi_k|^2(s^*,\cdot)\, ds^*,
\end{align*}
and thus,
\begin{align*}
 \int_{r^-_Y}^R&\int_0^\pi \nu(\theta, r)
\int_{\bar\tau-1}^{\bar\tau} 
|(\Psi_\flat)_k(t^*,\cdot)|^2 dt^*\\
&\le B_q (\omega_0 |k|) \sum_{\ell=-\infty}^{\infty} (1+|\ell|)^{-q} \int_{r^-_Y}^R\int_0^\pi \nu(\theta, r)
\int_{\bar\tau-1}^{\bar\tau} \int_{t^*+\frac{\ell}{\omega_0|k|}}^{t^*+\frac{\ell+1}{\omega_0 |k|}}
|\Psi_k|^2(s^*,\cdot)\, ds^*\,dt^*\\ &\le B_q (\omega_0 |k|) \sum_{\ell=-\infty}^{\infty} (1+|\ell|)^{-q}
 \int_{r^-_Y}^R\int_0^\pi \nu(\theta, r)
\int_{\bar\tau-1+\frac{\ell}{\omega_0|k|}}^{\bar\tau+\frac{\ell+1}{\omega_0 |k|}} \int^{s^*-\frac{\ell}{\omega_0|k|}}_{s^*-\frac{\ell+1}{\omega_0 |k|}}
|\Psi_k|^2(s^*,\cdot)\,dt^*\, ds^*\\ &\le B_q \sum_{\ell=-\infty}^{\infty} (1+|\ell|)^{-q}
 \int_{r^-_Y}^R\int_0^\pi \nu(\theta, r)\int_{\bar\tau-1+\frac{\ell}{\omega_0|k|}}^{\bar\tau+\frac{\ell+1}{\omega_0 |k|}} 
|\Psi_k|^2(s^*,\cdot)\,ds^*.
\end{align*}
We then obtain
\begin{align*}
 \int_{r^-_Y}^R&\int_0^\pi \nu(\theta, r)
\int_0^{2\pi }\int_{\bar\tau-1}^{\bar\tau}(\partial_\phi\Psi_{\flat})^2 dt^*\, d\phi\\
&=
\sum_{|k|\ge 1}
 \int_{r^-_Y}^R\int_0^\pi \nu(\theta, r)\int_0^{2\pi } \int_{\bar\tau-1}^{\bar\tau}
 k^2 |(\Psi_\flat)_k(t^*,\cdot)|^2 dt^*\\
&\le B_q \sum_{\ell=-\infty}^{\infty} (1+|\ell|)^{-q} \sum_{|k|\ge 1}
 \int_{r^-_Y}^R\int_0^\pi \nu(\theta, r)
\int_0^{2\pi }\int_{\bar\tau-1+\frac{\ell}{\omega_0|k|}}^{\bar\tau+\frac{\ell+1}{\omega_0 |k|}} 
k^2 |\Psi_k|^2(s^*,\cdot)\,ds^*\\
&\le B_q \sum_{\ell=0}^{\infty} (1+\ell)^{-q} \sum_{|k|\ge 1}
 \int_{r^-_Y}^R\int_0^\pi \nu(\theta, r)\int_0^{2\pi }\int_{\bar\tau-1-\frac{\ell}{\omega_0}}^{\bar\tau+\frac{\ell+1}{\omega_0}} 
k^2 |\Psi_k|^2(s^*,\cdot)\,ds^*\\
&= B_q \sum_{\ell=0}^{\infty} (1+\ell)^{-q}
 \int_{r^-_Y}^R\int_0^\pi \nu(\theta, r)
\int_0^{2\pi }\int_{\bar\tau-1-\frac{\ell}{\omega_0}}^{\bar\tau+\frac{\ell+1}{\omega_0}} 
(\partial_{\phi}\Psi)^2\,ds^*\\
&\le B_q \sum_{\ell=0}^{\infty} (1+\ell)^{-q}(2\ell+2){\omega_0}^{-1}
\sup_{-\infty\le \tilde\tau\le  \infty}
 \int_{r^-_Y}^R\int_0^\pi \nu(\theta, r)
\int_0^{2\pi }\int_{\tilde\tau-1}^{\tilde\tau} 
(\partial_{\phi}\Psi)^2\,ds^*\\
&\le B_q \omega_0^{-1} \sup_{-\infty\le \tilde\tau\le  \infty}
 \int_{r^-_Y}^R\int_0^\pi \nu(\theta, r)
\int_0^{2\pi }\int_{\tilde\tau-1}^{\tilde\tau} 
(\partial_{\phi}\Psi)^2\,dt^*,
\end{align*}
where we have assumed $q$ sufficiently large, and have used ${\omega_0}^{-1}\ge 1$. The lemma follows after
fixing $q$.
\end{proof}

\subsubsection{Application to $\psi^\tau_{\flat}$}
From the above lemmas, we easily obtain the following statement, which
is the form we shall use later in this paper:
\begin{proposition}
Let $\tau''\ge\tau'$ and $\omega_0\le 1$. Then
\label{compderpro2}
\begin{eqnarray*}
\int_{\mathcal{R}(\tau',\tau'')\cap \{r^-_Y\le r\le R\}}  (\partial_{t^*}\psi^\tau_{\flat})^2 &\le&
B {\omega_0}^2
\int_{\mathcal{R}(\tau',\tau'')\cap \{r^-_Y\le r\le R\}}
(\partial_\phi \psi_{\flat}^\tau)^2\\
&&\hbox{}+
B \sup_{0\le\bar\tau\le \tau}  \int_{\Sigma(\bar\tau)}{\bf q}_e(\psi).
\end{eqnarray*}
\end{proposition}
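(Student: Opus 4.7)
The plan is to chain Lemma~\ref{flatlem} and Lemma~\ref{yenilemma}, applied with $\Psi=\psi^\tau_{\hbox{\Rightscissors}}$, exploiting the precise cancellation between their two $\omega_0$-weights, and then to remove the cutoff $\chi^\tau_{\hbox{\Rightscissors}}$ on the right-hand side using that $\partial_\phi$ commutes with it.

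First, since $\psi^\tau_{\hbox{\Rightscissors}}$ is smooth and of compact support in $t^*$, Lemma~\ref{flatlem} applies and gives
\[
\int_{\mathcal{R}(\tau',\tau'')\cap\{r^-_Y\le r\le R\}}(\partial_{t^*}\psi^\tau_{\flat})^2\le B\omega_0^2\int_{\mathcal{R}(\tau',\tau'')\cap\{r^-_Y\le r\le R\}}(\partial_\phi \psi^\tau_{\flat})^2+B\omega_0\,E,
\]
where
\[
E\doteq\sup_{\bar\tau\in\mathbb{R}}\int_{\bar\tau}^{\bar\tau+1}\int_{\Sigma(\tilde\tau)\cap\{r^-_Y\le r\le R\}}(\partial_\phi \psi^\tau_{\flat})^2\,d\tilde\tau.
\]
The first term is already in the form claimed by the proposition, so it remains only to estimate $E$.

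Next I invoke Lemma~\ref{yenilemma} with the same $\Psi$, which is admissible because $\omega_0\le 1$. Rewriting the slice integrals in the $(r,t^*,\theta,\phi)$ product form using the volume-form identities of Section~\ref{genclass} together with \eqref{xwroeides}, the lemma translates into
\[
E\le B\omega_0^{-1}\sup_{\tilde\tau\in\mathbb{R}}\int_{\tilde\tau-1}^{\tilde\tau}\int_{\Sigma(s^*)\cap\{r^-_Y\le r\le R\}}(\partial_\phi \psi^\tau_{\hbox{\Rightscissors}})^2\,ds^*,
\]
and the $\omega_0^{-1}$ exactly cancels the outer factor of $\omega_0$, leaving overall prefactor $B$. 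To replace $\psi^\tau_{\hbox{\Rightscissors}}$ by $\psi$, note that by \eqref{anote} the cutoff is $\phi$-independent, so $\partial_\phi\psi^\tau_{\hbox{\Rightscissors}}=\chi^\tau_{\hbox{\Rightscissors}}\partial_\phi\psi$ and $|\chi^\tau_{\hbox{\Rightscissors}}|\le 1$ gives $(\partial_\phi\psi^\tau_{\hbox{\Rightscissors}})^2\le(\partial_\phi\psi)^2$; in the bounded-$r$ region $\{r^-_Y\le r\le R\}$ the Killing field $\Phi=\partial_\phi$ has coordinate components bounded in the atlas $(\ref{theatlas})$, so \eqref{xrnsimo} yields $(\partial_\phi\psi)^2\le B\,{\bf q}(\psi)\le B\,{\bf q}_e(\psi)$ (absorbing a fixed $e^{-1}$ into $B$). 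Finally, the support of $\chi^\tau_{\hbox{\Rightscissors}}$ in $t^*$ lies in an $O(1)$-enlargement of $[0,\tau]$, so the $\sup$ over $\tilde\tau\in\mathbb{R}$ reduces, by $\rho_s$-invariance of the geometry, to $\sup_{0\le\bar\tau\le\tau}\int_{\Sigma(\bar\tau)}{\bf q}_e(\psi)$ up to a further enlargement of $B$. Combining the three steps yields the proposition.

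The main technical obstacle is precisely this final reduction: Lemma~\ref{yenilemma} produces a $\sup$ over all of $\mathbb{R}$, while the target is a $\sup$ over $[0,\tau]$. The reduction is carried out using the compact $t^*$-support of $\chi^\tau_{\hbox{\Rightscissors}}$ and the stationarity of the metric, with the $O(1)$-shifts absorbed into the universal constant $B$. Modulo this bookkeeping, the cancellation of $\omega_0$-weights between the two Fourier-analytic lemmas is the clean algebraic core of the argument.
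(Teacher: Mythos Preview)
Your proof is correct and follows essentially the same route as the paper: apply Lemma~\ref{flatlem} to $\Psi=\psi^\tau_{\hbox{\Rightscissors}}$, feed the resulting error term into Lemma~\ref{yenilemma} so that the $\omega_0$ and $\omega_0^{-1}$ cancel, and then use \eqref{anote} together with $|\chi^\tau_{\hbox{\Rightscissors}}|\le 1$ and the compact $t^*$-support of the cutoff in $\{r^-_Y\le r\le R\}$ to pass from $(\partial_\phi\psi^\tau_{\hbox{\Rightscissors}})^2$ to ${\bf q}_e(\psi)$ on slices $\Sigma(\bar\tau)$ with $\bar\tau\in[0,\tau]$. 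The only remark is that your appeal to ``$\rho_s$-invariance of the geometry'' for the final reduction is slightly imprecise: the point is purely the $t^*$-support of $\chi^\tau_{\hbox{\Rightscissors}}$ in the bounded-$r$ strip (exactly $[0,\tau]$ for $r\le R-1$, with an $O(1)$ spill-over only in $R-1\le r\le R$), together with a trivial local-in-time energy bound to absorb that $O(1)$ edge; the paper handles this at the same level of detail.
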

\begin{proof}
In view of $(\ref{anote})$, it follows that  
\[
(\partial_\phi\psi^\tau_{\hbox{\Rightscissors}})^2 =(\chi^\tau_{\hbox{\Rightscissors}})^2
(\partial_\phi\psi)^2 \le B\, {\bf q}_e(\psi)
\]
in the region $r^-_Y\le r\le R$. In view also of the support of $\psi^\tau_{\hbox{\Rightscissors}}$,
we may thus bound the right hand side of the statement of Lemma~\ref{yenilemma}
applied to $\psi^\tau_{\hbox{\Rightscissors}}$
by
\[
B\omega_0^{-1}\sup_{1\le \bar\tau\le \tau}  \int_{\bar\tau-1}^{\bar\tau}
\left(\int_{\Sigma(\tilde{\tau})}{\bf q}_e(\psi)\right) d\tilde\tau .
\]
The proposition now follows from Lemmas~\ref{flatlem} and~\ref{yenilemma}.
\end{proof}

\subsubsection{Comparisons for $\Psi_{\sharp}$}
First a lemma:
\begin{lemma}
\label{sharplem}
Let $\tau'\le \tau''$ let $\Psi$ be smooth and of
compact support in $t^*$. Then
\[
\int_{\mathcal{H}(\tau',\tau'')} (\partial_{t^*}\Psi_{\sharp})^2
\ge B {\omega_0}^2 \int_{\mathcal{H}(\tau',\tau'')} (\partial_\phi \Psi_{\sharp})^2
- B{\omega_0}^{-1} \sup_{-\infty\le \bar\tau\le  \infty}
\int _{\mathcal{H}(\bar\tau,\bar\tau+1) }(\partial_{t^*}\Psi_{\sharp})^2,
\]
\end{lemma}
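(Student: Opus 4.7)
The plan is to invert the arguments of Lemmas~\ref{flatlem} and~\ref{yenilemma}: here one exploits that $\Psi_{\sharp}$ is supported in the ``high frequency'' region $|\omega|\gtrsim \omega_0|k|$, so that $\partial_{t^*}$ controls $\omega_0\partial_\phi$ on the Fourier side. Mechanically this will be achieved via the \emph{third} reproducing formula recorded in Section~\ref{defflat}, namely
\[
(\Psi_{\sharp})_k(t^*,\cdot) = (\omega_0k)^{-1}\int_{-\infty}^\infty R^{>}_{k}(t^*-s^*)\,\partial_{s^*}(\Psi_{\sharp})_k(s^*,\cdot)\,ds^*,
\]
combined with the pointwise bound $|R^{>}_{k}(t^*)|\le B_q\,\omega_0|k|\,(1+|\omega_0kt^*|)^{-q}$, valid for any $q>0$ after absorbing the logarithmic factor by slightly shrinking the exponent.

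First I would fix $k\neq 0$ (the $k=0$ mode contributes nothing to $\partial_\phi$). Partitioning the $s^*$-line into intervals $I_\ell(t^*):=[t^*+\ell(\omega_0|k|)^{-1},t^*+(\ell+1)(\omega_0|k|)^{-1}]$, applying the kernel bound on each $I_\ell$, using Cauchy--Schwarz on each interval (exploiting $|I_\ell|=(\omega_0|k|)^{-1}$), and then Cauchy--Schwarz on the $\ell$-sum for $q>1$ yields the pointwise estimate
\[
k^2|(\Psi_{\sharp})_k(t^*,\cdot)|^2 \le B_q\,\omega_0^{-1}|k|\sum_{\ell\in\mathbb Z}(1+|\ell|)^{-q}\int_{I_\ell(t^*)}|\partial_{s^*}(\Psi_{\sharp})_k|^2(s^*,\cdot)\,ds^*.
\]
This is the analogue of the pointwise bound on $|\partial_{t^*}(\Psi_{\flat})_k|$ used in Lemma~\ref{flatlem}, except one now \emph{gains} a factor of $\omega_0^{-1}$ rather than losing one.

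Next I would integrate in $t^*\in[\tau',\tau'']$ and swap orders of integration using the change of variables $u=s^*-t^*$, exactly as in the proof of Lemma~\ref{flatlem}, producing
\[
\int_{\tau'}^{\tau''} k^2|(\Psi_{\sharp})_k|^2\,dt^* \le B_q\,\omega_0^{-2}\sum_\ell(1+|\ell|)^{-q}\int_{\tau'+\ell(\omega_0|k|)^{-1}}^{\tau''+(\ell+1)(\omega_0|k|)^{-1}}|\partial_{s^*}(\Psi_{\sharp})_k|^2\,ds^*.
\]
I then split each such $s^*$-interval into its intersection with $[\tau',\tau'']$ (the main term) and the two boundary pieces whose combined length is $\lesssim(|\ell|+1)(\omega_0|k|)^{-1}$. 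Summing $(1+|\ell|)^{-q}$ controls the main term by $B\omega_0^{-2}\int_{\tau'}^{\tau''}|\partial_{s^*}(\Psi_{\sharp})_k|^2\,ds^*$, while each boundary piece is bounded by $\lceil(|\ell|+1)(\omega_0|k|)^{-1}\rceil\cdot\sup_{\bar\tau}\int_{\bar\tau}^{\bar\tau+1}|\partial_{s^*}(\Psi_{\sharp})_k|^2$.

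Finally, I would multiply through by $\omega_0^2$, integrate over the horizon using the measure of Section~\ref{genclass} (with the factor $\tilde\nu(\theta)$), sum on $k\neq 0$, and invoke Plancherel in $\phi$ to identify $\sum_k k^2|(\Psi_{\sharp})_k|^2$ with $\int_0^{2\pi}(\partial_\phi\Psi_{\sharp})^2\,d\phi$ and the analogous identity for the $\partial_{t^*}$-terms. This gives
\[
\omega_0^2\int_{\mathcal H(\tau',\tau'')}(\partial_\phi\Psi_{\sharp})^2 \le B\int_{\mathcal H(\tau',\tau'')}(\partial_{t^*}\Psi_{\sharp})^2 + B\omega_0^{-1}\sup_{\bar\tau}\int_{\mathcal H(\bar\tau,\bar\tau+1)}(\partial_{t^*}\Psi_{\sharp})^2,
\]
which is (a rearrangement of) the claimed inequality. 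The one delicate step—and hence the main obstacle—is the bookkeeping of the boundary contribution: after multiplication by $\omega_0^2$, a boundary piece contributes a factor $(1+|\ell|)^{-q}(|\ell|+1)(\omega_0|k|)^{-1}$, and one must use $|k|^{-1}\le 1$ (legitimate since $|k|\ge 1$) to avoid a spurious $|k|^{-1}$ that would obstruct Plancherel, so that the final error scales as $\omega_0^{-1}$ rather than $\omega_0^{-2}$ times the sup. One also chooses $q>2$ so that $\sum_\ell(1+|\ell|)^{1-q}$ converges, and absorbs the $\log(\omega_0|k|t^*)$ into the polynomial decay of $R^{>}_{k}$. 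Everything else is a straightforward adaptation of the earlier two lemmas.
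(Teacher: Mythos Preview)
Your overall strategy is correct and is essentially the same as the paper's: use the reproducing formula with $R^{>}_k$, split the resulting $s^*$-integral into a main piece on $[\tau',\tau'']$ and two boundary pieces, and control the boundary pieces by a supremum over unit-length horizon segments. The main term goes through exactly as you say.

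There is, however, a genuine gap in your handling of the boundary term. You bound each boundary piece, \emph{for fixed $k$}, by
\[
\Big\lceil(|\ell|+1)(\omega_0|k|)^{-1}\Big\rceil\cdot\sup_{\bar\tau}\int_{\bar\tau}^{\bar\tau+1}|\partial_{s^*}(\Psi_{\sharp})_k|^2\,ds^*,
\]
and then propose to sum over $k$ and ``invoke Plancherel''. But once you have taken the supremum over $\bar\tau$ separately for each $k$, Plancherel is no longer available: the optimal $\bar\tau$ may differ from mode to mode, and
\(
\sum_{k}\sup_{\bar\tau}(\cdot)_k
\)
is in general \emph{larger} than
\(
\sup_{\bar\tau}\sum_{k}(\cdot)_k,
\)
not smaller. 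So from your bound you cannot reach the right-hand side of the lemma, which involves the full $\partial_{t^*}\Psi_{\sharp}$ on a single segment $\mathcal{H}(\bar\tau,\bar\tau+1)$.

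The fix (and this is precisely what the paper does) is to reverse the order of the last two operations: before taking any supremum, first enlarge the $k$-dependent boundary interval of length $(|\ell|+1)(\omega_0|k|)^{-1}$ to the $k$-independent interval of length $(|\ell|+1)\omega_0^{-1}$ (legitimate since $|k|\ge 1$). Now the domain of the $s^*$-integral no longer depends on $k$, so you may sum over $k$ and apply Plancherel to obtain $\int (\partial_{t^*}\Psi_{\sharp})^2$ over that fixed interval. Only then do you bound this by the interval's length times $\sup_{\bar\tau}\int_{\mathcal{H}(\bar\tau,\bar\tau+1)}(\partial_{t^*}\Psi_{\sharp})^2$, and the $\ell$-sum $\sum_\ell(1+|\ell|)^{1-q}$ converges for $q>2$, yielding the claimed $\omega_0^{-1}$ error. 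With this correction your argument goes through and matches the paper's.
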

\begin{proof}
Note first that the lemma holds trivially for $(\Psi_{\sharp})_0$. We may thus
assume that $(\Psi_{\sharp})_0=0$. For $|k|\ge 1$,
we note first that from Section~\ref{defflat} we obtain
\[
|(\Psi_{\sharp})_k(t^*,\cdot)| \le B_q \omega_0^{-1}|k|^{-1}  \int_{-\infty}^{\infty}  
\omega_0|k|\frac{\left|\log |\omega_0k(t^*-s^*)|\right|}
{(1+|\omega_0k(t^*-s^*)|)^{q}}\left|\partial_{s^*}(\Psi_{\sharp})_k(s^*,\cdot)\right| \, ds^*.
\]
Thus,
\begin{eqnarray*}
&&\int_{\tau'}^{\tau''}
k^2 (\Psi_{\sharp})_k^2(t^*,\cdot)\,dt^*\\
&\le&B_q\omega_0^{-2} \int_{\tau'}^{\tau''}  \left( \int_{-\infty}^{\infty}  
\omega_0|k|\frac{\left|\log |\omega_0k(t^*-s^*)|\right|}
{(1+|\omega_0k(t^*-s^*)|)^{q}}\left|\partial_{s^*}(\Psi_{\sharp})_k(s^*,\cdot)\right| \, ds^*
\right)^2 \,dt^*\\
&\le&B_q\omega_0^{-2} \int_{\tau'}^{\tau''}  \left( \int_{\tau'}^{\tau''}  
\omega_0|k|\frac{\left|\log |\omega_0k(t^*-s^*)|\right|}
{(1+|\omega_0k(t^*-s^*)|)^{q}}\left|\partial_{s^*}(\Psi_{\sharp})_k(s^*,\cdot)\right| \, ds^*
\right)^2 \,dt^*\\
&&\hbox{}+B_q\omega_0^{-2} \int_{\tau'}^{\tau''}  \left( \int_{-\infty}^{\tau'}  
\omega_0|k|\frac{\left|\log |\omega_0k(t^*-s^*)|\right|}
{(1+|\omega_0k(t^*-s^*)|)^{q}}\left|\partial_{s^*}(\Psi_{\sharp})_k(s^*,\cdot)\right| \, ds^*
\right)^2 \,dt^*\\
&&\hbox{}+B_q\omega_0^{-2} \int_{\tau'}^{\tau''}  \left( \int_{\tau''}^{\infty}  
\omega_0|k|\frac{\left|\log |\omega_0k(t^*-s^*)|\right|}
{(1+|\omega_0k(t^*-s^*)|)^{q}}\left|\partial_{s^*}(\Psi_{\sharp})_k(s^*,\cdot)\right| \, ds^*
\right)^2 \,dt^*\\
&\doteq&T_{1,k}+T_{2,k}+T_{3,k}.
\end{eqnarray*}

We obtain immediately that for sufficiently large $q$,
since
\[
\int_{\tau'}^{\tau''}  
\omega_0|k|\frac{\left|\log |\omega_0k(t^*-s^*)|\right|}
{(1+|\omega_0k(t^*-s^*)|)^{q}} \,ds^*\le B_q
\]
then
\begin{align*}
&\int_{\tau'}^{\tau''}  \left( \int_{\tau'}^{\tau''}  
\omega_0|k|\frac{\left|\log |\omega_0k(t^*-s^*)|\right|}
{(1+|\omega_0k(t^*-s^*)|)^{q}}\left|\partial_{s^*}(\Psi_{\sharp})_k(s^*,\cdot)\right| \, ds^*
\right)^2\, dt^*\\
&\le B_q \int_{\tau'}^{\tau''}(\partial_t\Psi_{\sharp})_k^2(t^*,\cdot)\,dt^*
\end{align*}
and thus
\[
T_{1,k}\le B_q\omega_0^{-2}\int_{\tau'}^{\tau''}(\partial_{t^*}(\Psi_{\sharp})_k)^2(t^*,\cdot)\,dt^*.
\]

On the other hand, 
\begin{eqnarray*}
&&\sum_{|k|\ge 1}\int_0^\pi \tilde{\nu}(\theta)\int_0^{2\pi}
T_{2,k}\,d\phi\, d\theta\\
&=&B_q\omega_0^{-2}
\sum_{|k|\ge 1}\int_0^\pi \tilde{\nu}(\theta)\\
&&\hbox{}\int_0^{2\pi}
 \int_{\tau'}^{\tau''}  \left( \int_{-\infty}^{\tau'}  
\omega_0k\frac{\left|\log |\omega_0k(t^*-s^*)|\right|}
{(1+|\omega_0k(t^*-s^*)|)^{q}}\left|\partial_{s^*}(\Psi_{\sharp})_k(s^*,\cdot)\right| \, ds^*
\right)^2 \,dt^*\, d\phi\, d\theta\\
&=&B_q\omega_0^{-2} \sum_{|k|\ge 1}\int_0^\pi \tilde{\nu}(\theta)\\
&&\hbox{}\int_0^{2\pi}
\int_{\tau'}^{\tau''}  \left( \sum_{\ell=0}^{\infty}
\int_{\tau'-\frac{\ell+1}{\omega_0|k|}}^{\tau'-\frac{\ell}{\omega_0|k|}}  
\omega_0k\frac{\left|\log |\omega_0k(t^*-s^*)|\right|}
{(1+|\omega_0k(t^*-s^*)|)^{q}}\left|\partial_{s^*}(\Psi_{\sharp})_k(s^*,\cdot)\right| \, ds^*
\right)^2 \,dt^*\, d\phi\, d\theta\\
&\le&B_q\omega_0^{-2}\sum_{|k|\ge 1}\int_0^\pi \tilde{\nu}(\theta)\int_0^{2\pi} \int_{\tau'}^{\tau''}  \left( \sum_{\ell=0}^{\infty}
\left(\int_{\tau'-\frac{\ell+1}{\omega_0|k|}}^{\tau'-\frac{\ell}{\omega_0|k|}}  
\omega_0^2k^2\frac{\left|\log^2 |\omega_0k(t^*-s^*)|\right|}
{(1+|\omega_0k(t^*-s^*)|)^{2q}} ds^*\right)^{1/2}\right.\\
&&\left.\hbox{}\cdot
\left(\int_{\tau'-\frac{\ell+1}{\omega_0|k|}}^{\tau'-\frac{\ell}{\omega_0|k|}} 
 \left|\partial_{s^*}(\Psi_{\sharp})_k(s^*,\cdot)\right|^2 \, ds^*\right)^{1/2}
\right)^2 \,dt^*\, d\phi\, d\theta\\
&\le&B_q\omega_0^{-2}\sum_{|k|\ge 1}\int_0^\pi \tilde{\nu}(\theta)\\
&&\hbox{}\int_0^{2\pi}
  \int_{\tau'}^{\tau''}\omega_0|k|  \left( \sum_{\ell=0}^{\infty}
\frac{\left|\log |\omega_0k(t^*-\tau')+\ell)|\right|}
{(1+|\omega_0k(t^*-\tau')+\ell|)^{q}} \left(\int_{\tau'-\frac{\ell+1}{\omega_0|k|}}^{\tau'-\frac{\ell}{\omega_0|k|}} 
 \left|\partial_{s^*}(\Psi_{\sharp})_k(s^*,\cdot)\right|^2 \, ds^*\right)^{1/2}
 \right)^{2}dt^*\, d\phi\, d\theta\\
 \end{eqnarray*}
 \begin{eqnarray*}
 &\le&B_q\omega_0^{-2}\sum_{|k|\ge 1}\int_0^\pi \tilde{\nu}(\theta)\\
 &&\hbox{}\int_0^{2\pi}
  \int_{0}^{\infty} \left( \sum_{\ell=0}^{\infty}
\frac{\left|\log |t^*+\ell)|\right|}
{(1+|t^*+\ell|)^{q}} \left(\int_{\tau'-\frac{\ell+1}{\omega_0|k|}}^{\tau'-\frac{\ell}{\omega_0|k|}} 
 \left|\partial_{s^*}(\Psi_{\sharp})_k(s^*,\cdot)\right|^2 \, ds^*\right)^{1/2}
 \right)^{2}dt^*\, d\phi\, d\theta\\
  &\le&B_q\omega_0^{-2}\sum_{|k|\ge 1}\int_0^\pi \tilde{\nu}(\theta)
  \int_0^{2\pi}  \int_{0}^{\infty} \left( \sum_{\ell=0}^{\infty}
\frac{\left|\log |t^*+\ell)|\right|}
{(1+|t^*+\ell|)^{q}} \right)\\
 &&\hbox{}
\left( \sum_{\ell=0}^{\infty}
\frac{\left|\log |t^*+\ell)|\right|}
{(1+|t^*+\ell|)^{q}} 
\int_{\tau'-\frac{\ell+1}{\omega_0|k|}}^{\tau'-\frac{\ell}{\omega_0|k|}} 
 \left|\partial_{s^*}(\Psi_{\sharp})_k(s^*,\cdot)\right|^2 \, ds^*\right)
 dt^*\, d\phi\, d\theta\\
   &\le&B_q\omega_0^{-2}  \int_{0}^{\infty} \left( \sum_{\ell=0}^{\infty}
\frac{\left|\log |t^*+\ell)|\right|}
{(1+|t^*+\ell|)^{q}} \right)\\
 &&\hbox{}
\left( \sum_{\ell=0}^{\infty}
\frac{\left|\log |t^*+\ell)|\right|}
{(1+|t^*+\ell|)^{q}} \sum_{|k|\ge 1}\int_0^\pi \tilde{\nu}(\theta)
  \int_0^{2\pi}
\int_{\tau'-\frac{\ell+1}{\omega_0|k|}}^{\tau'-\frac{\ell}{\omega_0|k|}} 
 \left|\partial_{s^*}(\Psi_{\sharp})_k(s^*,\cdot)\right|^2 \, ds^*\,
d\phi\, d\theta^*\right) \, dt^*\\
   &\le&B_q\omega_0^{-2}  \int_{0}^{\infty}
\frac{\left|\log |t^*|\right|}
{(1+|t^*|)^{q}}\\
 &&\hbox{}
\left( \sum_{\ell=0}^{\infty}
\frac{\left|\log |t^*+\ell|\right|}
{(1+|t^*+\ell|)^{q}} \sum_{|k|\ge 1}\int_0^\pi \tilde{\nu}(\theta)
  \int_0^{2\pi}
\int^{\tau'}_{\tau'-\frac{\ell+1}{\omega_0}} 
 \left|\partial_{s^*}(\Psi_{\sharp})_k(s^*,\cdot)\right|^2 \, ds^*\,
d\phi\, d\theta^*\right) \, dt^*\\
   &\le&B_q\omega_0^{-2}  \int_{0}^{\infty}
\frac{\left|\log |t^*|\right|}
{(1+|t^*|)^{q}}\\
 &&\hbox{}
\left( \sum_{\ell=0}^{\infty}
\frac{\left|\log |t^*+\ell|\right|}
{(1+|t^*+\ell|)^{q}} \int_0^\pi \tilde{\nu}(\theta)
  \int_0^{2\pi}
\int^{\tau'}_{\tau'-\frac{\ell+1}{\omega_0}} 
 \left|\partial_{s^*}(\Psi_{\sharp})(s^*,\cdot)\right|^2 \, ds^*\,
d\phi\, d\theta^*\right) \, dt^*\\
   &\le&B_q\omega_0^{-2}  \int_{0}^{\infty}
\frac{\left|\log |t^*|\right|}
{(1+|t^*|)^{q}}
 \sum_{\ell=0}^{\infty}
\frac{\left|\log |t^*+\ell|\right|}
{(1+|t^*+\ell|)^{q}} \frac{\ell+1}{\omega_0}\\
 &&\hbox{}
\left(\sup_{-\infty\le\bar\tau\le\infty}\int_0^\pi \tilde{\nu}(\theta)
  \int_0^{2\pi}
\int^{\bar\tau}_{\bar\tau+1} 
 \left|\partial_{s^*}(\Psi_{\sharp})(s^*,\cdot)\right|^2 \, ds^*\,
d\phi\, d\theta^* \right) \, dt^*\\
&\le&B_q\omega_0^{-3}
\sup_{-\infty\le\bar\tau\le\infty}\int_0^\pi \tilde{\nu}(\theta)
  \int_0^{2\pi}
\int^{\bar\tau}_{\bar\tau+1} 
 \left|\partial_{s^*}(\Psi_{\sharp})(s^*,\cdot)\right|^2 \, ds^*\,
d\phi\, d\theta^*.
\end{eqnarray*}
A similar bound holds for $T_{3,k}$. We obtain the lemma
after appropriate fixing of $q$.
\end{proof}

\begin{lemma}
Under the assumptions of the previous lemma, if $\omega_0\le 1$, 
\label{sharplem2}
\[
\sup_{-\infty\le \bar\tau\le  \infty}
\int _{\mathcal{H}(\bar\tau,\bar\tau+1) }(\partial_{t^*}\Psi_{\sharp})^2
\le 
B\omega_0^{-1}\sup_{-\infty \le \bar\tau\le  \infty}
\int _{\mathcal{H}(\bar\tau,\bar\tau+1) }(\partial_{t^*}\Psi)^2
\]
\end{lemma}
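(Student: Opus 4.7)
The plan is to reduce the bound on $\Psi_\sharp$ to a bound on $\Psi_\flat$ using $\Psi_\sharp = \Psi-\Psi_\flat$ (cf.~$(\ref{fusika})$). By the triangle inequality and the assumption $\omega_0\le 1$ (so $\omega_0^{-1}\ge 1$), it suffices to establish
\[
\sup_{\bar\tau}\int_{\mathcal{H}(\bar\tau,\bar\tau+1)}(\partial_{t^*}\Psi_\flat)^2 \le B\omega_0^{-1} \sup_{\tilde\tau}\int_{\mathcal{H}(\tilde\tau,\tilde\tau+1)}(\partial_{t^*}\Psi)^2,
\]
since then the same bound for $\Psi_\sharp = \Psi-\Psi_\flat$ follows. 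The $k=0$ mode contributes nothing to $\Psi_\flat$ by $(\ref{torecall})$, so we only need to handle $|k|\ge 1$.

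For $k\ne 0$, I would start from the convolution representation $(\Psi_\flat)_k(t^*)=\int P^<_k(t^*-s^*)\Psi_k(s^*)\,ds^*$ of Section~\ref{defflat} and, using that $\Psi$ has compact support in $t^*$, integrate by parts in $s^*$ to obtain
\[
(\partial_{t^*}\Psi_\flat)_k(t^*) = \int P^<_k(t^*-s^*)(\partial_{s^*}\Psi)_k(s^*)\,ds^*.
\]
The kernel bound $(\ref{convest})$ with $m=0$ then permits the same slab-decomposition as in the proof of Lemma~\ref{yenilemma}: split the $s^*$ line into intervals of length $1/(\omega_0|k|)$ indexed by $\ell\in\mathbb{Z}$, estimate each piece by Cauchy--Schwarz, and arrive at the pointwise bound
\[
|(\partial_{t^*}\Psi_\flat)_k|^2(t^*) \le B_q \,\omega_0|k|\sum_{\ell=-\infty}^{\infty}(1+|\ell|)^{-q}
\int_{t^*+\ell/(\omega_0|k|)}^{t^*+(\ell+1)/(\omega_0|k|)} |(\partial_{s^*}\Psi)_k|^2(s^*)\,ds^*.
\]

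Next I integrate in $t^*$ over $[\bar\tau,\bar\tau+1]$ and exchange the order of integration; for each fixed $s^*$ the $t^*$-range has length at most $1/(\omega_0|k|)$, which exactly cancels the prefactor $\omega_0|k|$ and yields
\[
\int_{\bar\tau}^{\bar\tau+1}|(\partial_{t^*}\Psi_\flat)_k|^2\,dt^* \le B_q\sum_{\ell}(1+|\ell|)^{-q}\int_{I_{\ell,k}}|(\partial_{s^*}\Psi)_k|^2\,ds^*,
\]
where $I_{\ell,k}\subset[\bar\tau-(|\ell|+1)/\omega_0,\bar\tau+1+(|\ell|+1)/\omega_0]$ since $1/(\omega_0|k|)\le 1/\omega_0$. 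This interval can be covered by at most $3(|\ell|+1)/\omega_0$ unit slabs, contributing the factor $\omega_0^{-1}(|\ell|+1)$. Summing over $|k|\ge 1$ via Parseval (so that $\sum_k|(\partial_{s^*}\Psi)_k|^2$ turns into $\int (\partial_{s^*}\Psi)^2\,d\phi$ up to a constant) and integrating in $\theta$ against $\tilde\nu(\theta)$ using the horizon volume form from Section~\ref{genclass}, I get
\[
\int_{\mathcal{H}(\bar\tau,\bar\tau+1)}(\partial_{t^*}\Psi_\flat)^2 \le B_q\omega_0^{-1}\sum_{\ell}(1+|\ell|)^{1-q}\sup_{\tilde\tau}\int_{\mathcal{H}(\tilde\tau,\tilde\tau+1)}(\partial_{s^*}\Psi)^2.
\]
Fixing $q$ large enough makes the $\ell$-series converge, and taking $\sup_{\bar\tau}$ gives the desired estimate for $\Psi_\flat$, hence for $\Psi_\sharp$.

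The one subtle point is the order of operations: one must pass from $\sum_k\int\cdots$ to a spacetime integral against $\tilde\nu$ \emph{before} taking $\sup_{\tilde\tau}$, because the sup over slabs does not commute with the sum in $k$. This is the same difficulty handled in the proof of Lemmas~\ref{flatlem} and~\ref{yenilemma}, and is resolved by invoking Parseval to reconstitute the full derivative $\partial_{t^*}\Psi$ on the horizon slab before the supremum is taken.
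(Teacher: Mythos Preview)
Your proposal is correct and follows essentially the same route as the paper: reduce to $\Psi_\flat$ via $\Psi_\sharp=\Psi-\Psi_\flat$, use the convolution representation with the kernel bound $(\ref{convest})$ at $m=0$ (equivalently, move the $\partial_{t^*}$ onto $\Psi$), perform the dyadic slab decomposition in $s^*$ with Cauchy--Schwarz, enlarge the resulting $s^*$-interval so it is independent of $k$, sum in $k$ by Plancherel, and then cover by unit horizon slabs to extract the $\omega_0^{-1}$ factor before summing in $\ell$. Your remark about applying Plancherel before taking the supremum is exactly the point the paper's computation respects as well.
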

\begin{proof}
Since
\[
\partial_{t^*}\Psi_{\sharp}=\partial_{t^*}\Psi
-\partial_{t^*}\Psi_{\flat},
\]
and ${\omega_0}^{-1}\ge 1$, it suffices in fact to prove
\[
\sup_{-\infty\le \bar\tau\le  \infty}
\int _{\mathcal{H}(\bar\tau,\bar\tau+1) }(\partial_{t^*}\Psi_{\flat})^2
\le B\omega_0^{-1}
\sup_{-\infty \le \bar\tau\le  \infty}
\int _{\mathcal{H}(\bar\tau,\bar\tau+1) }(\partial_{t^*}\Psi)^2.
\]

Recall from Section~\ref{defflat} we have
\[
|(\partial_{t^*}\Psi_{\flat})_k| 
\le
B_q \int_{-\infty}^{\infty}\omega_0|k|(1+\omega_0|k| |s^*-t^*|)^{-q}|(\partial_{t^*}
\Psi)_k|\,ds^*.
\]
We obtain
\begin{eqnarray*}
&&\int _{\tilde\tau}^{\tilde\tau+1}(\partial_{t^*}\Psi_{\flat})^2_k(t^*,\cdot)dt^*\\
&\le& 
B_q\int_{\tilde\tau}^{\tilde\tau+1}
\left( \int_{-\infty}^{\infty}\frac{\omega_0|k|}{(1+\omega_0|k| |s^*-t^*|)^{q}}|(\partial_{t^*}
\Psi)_k(s^*,\cdot)|\,ds^*\right)^2 dt^*\\
&\le& 
B_q \int_{\tilde\tau}^{\tilde\tau+1}
\left(\sum_{\ell=-\infty}^{\infty} \int_{t^*+\frac{\ell}{\omega_0|k|}}^{
t^*+\frac{\ell+1}{\omega_0|k|}}\frac{\omega_0|k|}{(1+\omega_0|k| |s^*-t^*|)^{q}}|(\partial_{t^*}
\Psi)_k(s^*,\cdot)|\,ds^*\right)^2 dt^*\\
&\le& 
B_q (\omega_0 k)^2 \int_{\tilde\tau}^{\tilde\tau+1}
\left(\sum_{\ell=-\infty}^{\infty}(1+|\ell|)^{-q} \int_{t^*+\frac{\ell}{\omega_0|k|}}^{t^*+
\frac{\ell+1}{\omega_0|k|}}|(\partial_{t^*}
\Psi)_k(s^*,\cdot)|\,ds^*\right)^2 dt^*\\
&\le& 
B_q (\omega_0 k)^2 \int_{\tilde\tau}^{\tilde\tau+1}
\left(\sum_{\ell=-\infty}^{\infty}(1+|\ell|)^{-q} 
\right)
\sum_{\ell=-\infty}^{\infty}(1+|\ell|)^{-q} (\omega_0|k|)^{-1}
\int_{t^*+\frac{\ell}{\omega_0|k|}}^{t^*+
\frac{\ell+1}{\omega_0|k|}}|(\partial_{t^*}
\Psi)_k(s^*,\cdot)|^2\,ds^*\,dt^*\\
&\le& 
B_q (\omega_0|k|) \int_{\tilde\tau}^{\tilde\tau+1}
\sum_{\ell=-\infty}^{\infty}(1+|\ell|)^{-q} 
\int_{t^*+\frac{\ell}{\omega_0|k|}}^{t^*+
\frac{\ell+1}{\omega_0|k|}}|(\partial_{t^*}
\Psi)_k(s^*,\cdot)|^2\,ds^*\,dt^*\\
&\le& 
B_q (\omega_0|k|)
\sum_{\ell=-\infty}^{\infty}(1+|\ell|)^{-q}  \int_{\tilde\tau}^{\tilde\tau+1}
\int_{t^*+\frac{\ell}{\omega_0|k|}}^{t^*+
\frac{\ell+1}{\omega_0|k|}}|(\partial_{t^*}
\Psi)_k(s^*,\cdot)|^2\,ds^*\,dt^*\\
&\le& 
B_q (\omega_0|k|)
\sum_{\ell=-\infty}^{\infty}(1+|\ell|)^{-q}  \int_{\tilde\tau+\frac{\ell}{\omega_0|k|}}^{\tilde\tau+1
+\frac{\ell+1}{\omega_0|k|}}
\int_{s^*-\frac{\ell+1}{\omega_0|k|}}^{s^*-
\frac{\ell}{\omega_0|k|}}|(\partial_{t^*}
\Psi)_k(s^*,\cdot)|^2\,dt^*\,ds^*\\
&\le& 
B_q
\sum_{\ell=-\infty}^{\infty}(1+|\ell|)^{-q}  \int_{\tilde\tau+\frac{\ell}{\omega_0|k|}}^{\tilde\tau+1
+\frac{\ell+1}{\omega_0|k|}}
|(\partial_{t^*}
\Psi)_k(s^*,\cdot)|^2\,ds^*\\
&\le& 
B_q
\sum_{\ell=0}^{\infty}(1+|\ell|)^{-q}  \int_{\tilde\tau-\frac{\ell}{\omega_0}}^{\tilde\tau+1
+\frac{\ell+1}{\omega_0}}
|(\partial_{t^*}
\Psi)_k(s^*,\cdot)|^2\,ds^*
\end{eqnarray*}
for $q>1$.

Integrating and summing over $k$, we obtain
\begin{align*}
\int_0^\pi \tilde{\nu}(\theta) \int_0^{2\pi}& \int_{\tilde{\tau}}^{\tilde{\tau}+1}
(\partial_{t^*}\Psi_{\flat})^2(t^*,\cdot)dt^*\, d\phi\, d\theta\\
&\le 
B_q
\sum_{\ell=0}^{\infty}(1+\ell)^{-q}\int_0^\pi \tilde{\nu}(\theta) \int_0^{2\pi}  
\int_{\tilde\tau-\frac{\ell}{\omega_0}}^{\tilde\tau+1
+\frac{\ell+1}{\omega_0}}
|(\partial_{t^*}
\Psi)(s^*,\cdot)|^2\,ds^*\\
&\le 
B_q\omega_0^{-1}
\sum_{\ell=0}^{\infty}(1+\ell)^{-q}(2\ell+2)
\sup_{-\infty<\bar{\tau}<\infty}\int_0^\pi \tilde{\nu}(\theta) \int_0^{2\pi}  \int_{\bar\tau}^{\bar\tau+1}
|(\partial_{t^*}
\Psi)(s^*,\cdot)|^2\,ds^*,
\end{align*}
where we have used in the last line that $\omega_0\le 1$.
 The lemma follows after fixing $q>2$.
\end{proof}

\subsubsection{Application to $\psi^\tau_{\sharp}$}
We may now easily prove
\begin{proposition}
\label{compderpro}
Let $0\le \tau' <\tau''\le \tau$ and let
$\psi$ be as in Theorem~\ref{kevtriko}. We have
\[
\int_{\mathcal{H}(\tau',\tau'')} (\partial_{t^*}\psi^\tau_{\sharp})^2
\ge B {\omega_0}^2 \int_{\mathcal{H}(\tau',\tau'')} (\partial_\phi \psi^\tau_{\sharp})^2
- B{\omega_0}^{-2}e^{-1} \sup_{0\le \bar \tau\le  \tau-1}
\int _{\mathcal{H}(\bar\tau,\bar\tau+1) }J_\mu^{N_e}(\psi)n^\mu.
\]
\end{proposition}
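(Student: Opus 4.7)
The strategy is to chain together Lemmas~\ref{sharplem} and~\ref{sharplem2} applied to $\psi^\tau_{\hbox{\Rightscissors}}$, then convert the resulting error term, which is in terms of $(\partial_{t^*}\psi^\tau_{\hbox{\Rightscissors}})^2$ on the horizon, into one controlled by $J^{N_e}_\mu(\psi)n^\mu$ via the tangential bounds $(\ref{tangcont})$, $(\ref{tangcont2})$ and a Poincar\'e inequality in $\phi$.

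First, note that on $\mathcal{H}^+$ we have $t^+=t^-=t^*$ (since $\chi(-r+R)$ vanishes when $r=2M$ because $R-2M>1$), so $\chi^\tau_{\hbox{\Rightscissors}}$ restricted to $\mathcal{H}^+$ is purely a function of $t^*$, supported in $t^*\in[0,\tau]$, with $\partial_{t^*}\chi^\tau_{\hbox{\Rightscissors}}$ supported in $[0,1]\cup[\tau-1,\tau]$. In particular, $\partial_{t^*}\psi^\tau_{\hbox{\Rightscissors}}$ vanishes outside $t^*\in[0,\tau]$ on $\mathcal{H}^+$, which justifies replacing the sup over $\bar\tau\in(-\infty,\infty)$ produced by the lemmas with a sup over $\bar\tau\in[0,\tau-1]$.

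Next, decompose into azimuthal modes. The zeroth mode $(\psi^\tau_\sharp)_0=(\psi^\tau_{\hbox{\Rightscissors}})_0$ contributes non-negatively to the LHS of the proposition but contributes nothing to the middle term $\int(\partial_\phi\psi^\tau_\sharp)^2$, so it suffices to prove the inequality for $\tilde\Psi\doteq\psi^\tau_{\hbox{\Rightscissors}}-(\psi^\tau_{\hbox{\Rightscissors}})_0$, which satisfies $\tilde\Psi_0=0$ and for which $\tilde\Psi_\sharp=\psi^\tau_\sharp-(\psi^\tau_{\hbox{\Rightscissors}})_0$. Applying Lemma~\ref{sharplem} to $\tilde\Psi$ and then bounding the resulting $\sup\int(\partial_{t^*}\tilde\Psi_\sharp)^2$ by Lemma~\ref{sharplem2}, one obtains
\[
\int_{\mathcal{H}(\tau',\tau'')}(\partial_{t^*}\psi^\tau_\sharp)^2
\ge B\omega_0^2\int_{\mathcal{H}(\tau',\tau'')}(\partial_\phi\psi^\tau_\sharp)^2
-B\omega_0^{-2}\sup_{\bar\tau}\int_{\mathcal{H}(\bar\tau,\bar\tau+1)}(\partial_{t^*}\tilde\Psi)^2,
\]
where the remaining task is to bound the last term by $Be^{-1}\sup_{0\le\bar\tau\le\tau-1}\int_{\mathcal{H}(\bar\tau,\bar\tau+1)}J^{N_e}_\mu(\psi)n^\mu$.

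For this, use the product rule $\partial_{t^*}\tilde\Psi=\chi^\tau_{\hbox{\Rightscissors}}\partial_{t^*}(\psi-\psi_0)+(\partial_{t^*}\chi^\tau_{\hbox{\Rightscissors}})(\psi-\psi_0)$ and the bound $(\chi^\tau)^2+(\partial_{t^*}\chi^\tau)^2\le B$ to estimate pointwise
\[
(\partial_{t^*}\tilde\Psi)^2\le B(\partial_{t^*}(\psi-\psi_0))^2+B(\psi-\psi_0)^2.
\]
For the first term, $(\ref{tangcont})$ gives $(\partial_{t^*}(\psi-\psi_0))^2\le B\,J^{N_e}_\mu(\psi-\psi_0)n^\mu$ on $\mathcal{H}^+$, and the $\phi$-orthogonality of the stress tensor components (since $\int_0^{2\pi}e^{ik\phi}\,d\phi=0$ for $k\neq0$) yields, after $\phi$-integration, $\int_0^{2\pi}J^{N_e}(\psi-\psi_0)\,d\phi\le\int_0^{2\pi}J^{N_e}(\psi)\,d\phi$. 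For the second term, Poincar\'e in $\phi$ (since $\psi-\psi_0$ has zero azimuthal mean) gives $\int_0^{2\pi}(\psi-\psi_0)^2\,d\phi\le\int_0^{2\pi}(\partial_\phi\psi)^2\,d\phi$, and then $(\ref{tangcont2})$ yields $\int_0^{2\pi}(\partial_\phi\psi)^2\,d\phi\le Be^{-1}\int_0^{2\pi}J^{N_e}_\mu(\psi)n^\mu\,d\phi$. This is where the factor $e^{-1}$ in the statement enters. Integrating in $\theta$ and $t^*$ over $[\bar\tau,\bar\tau+1]$ and taking $\sup$, we obtain the desired bound, completing the proof.

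The main technical subtlety is the handling of the zeroth azimuthal mode $\psi_0$, which is not controlled by the angular Poincar\'e inequality or by $(\ref{tangcont2})$; the resolution is the observation that this mode drops out of both the LHS improvement we need and the $\phi$-integrated stress tensor difference, so the argument can be carried out purely on the non-axisymmetric part $\psi-\psi_0$.
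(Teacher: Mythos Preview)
Your proof is correct and follows essentially the same approach as the paper's: both reduce to the non-axisymmetric part (the paper phrases this as ``without loss of generality assume $\psi_0=0$'' while you explicitly subtract $(\psi^\tau_{\hbox{\Rightscissors}})_0$), chain Lemmas~\ref{sharplem} and~\ref{sharplem2}, and then convert the resulting $\sup\int(\partial_{t^*}\psi^\tau_{\hbox{\Rightscissors}})^2$ error into $J^{N_e}_\mu(\psi)n^\mu$ via the product rule, the Poincar\'e inequality in $\phi$, and the tangential bounds $(\ref{tangcont})$, $(\ref{tangcont2})$. Your additional remark that $t^+=t^-=t^*$ on $\mathcal{H}^+$ (justifying the restriction of the $\sup$ to $[0,\tau-1]$) and your explicit $\phi$-orthogonality argument for $\int J^{N_e}(\psi-\psi_0)\le\int J^{N_e}(\psi)$ make transparent points the paper leaves implicit.
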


\begin{proof}
To prove the proposition from the above lemmas, 
we first remark that it suffices to prove the inequality
under the assumption
\[
(\psi_{\sharp}^\tau)_0= \int_0^{2\pi} \psi_{\sharp}^\tau d\phi=0,
\]
for the inequality is trivially true for $(\psi_{\sharp}^\tau)_0$. 
By $(\ref{whereas})$, this is equivalent to assuming
\[
\int_0^{2\pi} \psi_{\hbox{\Rightscissors}}^\tau d\phi=0,
\]
and by $(\ref{anote})$
\[
\int_0^{2\pi}\psi\, d\phi=0
\]
in the support of $\psi_{\hbox{\Rightscissors}}$.
Of course, under this assumption it follows that this holds in all of $\mathcal{R}$.
Thus we may assume
\begin{equation}
\label{olur}
\int_0^{2\pi}\psi^2\, d\phi \le \int_0^{2\pi}(\partial_\phi\psi)^2 \, d\phi
\end{equation}
in the relevant region.
From the above lemma, we just notice that on $\mathcal{H}(0,\tau)$
\begin{eqnarray*}
\int_0^{2\pi}(\partial_{t^*}\psi^\tau_{\hbox{\Rightscissors}})^2\, d\phi 
&\le&\int_0^{2\pi} ( (\partial_{t^*} \chi^\tau_{\hbox{\Rightscissors}}) \psi +
 \chi^\tau_{\hbox{\Rightscissors}} \partial_{t^*}\psi)^2\, d\phi\\
&\le& \int_0^{2\pi} (B\psi^2 + B(\partial_t\psi)^2)\, d\phi \\
&\le& \int_0^{2\pi} (B(\partial_\phi\psi)^2 +B(\partial_t\psi)^2) d\phi \\
&\le& \int_0^{2\pi} Be^{-1}J^{N_e}_\mu n^\mu_{\mathcal{H}},
\end{eqnarray*}
where we have used
$(\ref{olur})$, $(\ref{tangcont})$ and $(\ref{tangcont2})$.
The proposition follows.
\end{proof}

\subsection{Comparing ${\bf q}_e(\psi_{\flat}^\tau)$,  ${\bf q}_e(\psi_{\sharp}^\tau)$ and ${\bf q}_e(\psi)$}
\label{sugkri}

In view of $(\ref{fusika})$, we clearly have the pointwise relation
\begin{equation}
\label{fusika2}
{\bf q}_e(\psi)\le 2\left({\bf q}_e(\psi_{\flat}^\tau)+
{\bf q}_e(\psi_{\sharp}^\tau)\right) 
\end{equation}
in $\mathcal{R}(1,\tau-1)$.
It will be necessary, however, to compare also in the opposite direction.
We have
\begin{proposition}
\label{yenicomp}
Let $\omega_0\le 1\le\tau_{\rm step}\le \tau' \le \tau-\tau_{\rm step}$. Then
\begin{eqnarray*}
\int_{\tau'-1}^{\tau'} dt^* \int_{\Sigma(t^*)} {\bf q}_e(\psi_{\flat}^\tau)&\le& 
B{\omega_0}^{-1}\sup_{\tau'-\tau_{\rm step} \le \bar\tau\le \tau'+\tau_{\rm step}} 
\int_{\Sigma(\bar\tau)} {\bf q}_e(\psi)\\
&&\hbox{}+B{\omega_0}^{-7}e^{-1}
\tau_{\rm step}^{-2} \sup_{0\le\bar \tau\le \tau}\int_{\Sigma(\bar\tau)} {\bf q}_e(\psi),
\end{eqnarray*}
\begin{eqnarray*}
\int_{\tau'-1}^{\tau'} dt^* \int_{\Sigma(t^*)} {\bf q}_e(\psi_{\sharp}^\tau)&\le& 
B\omega_0^{-1}\sup_{\tau'-\tau_{\rm step} \le \bar\tau\le \tau'+\tau_{\rm step}} 
\int_{\Sigma(\bar\tau)}{\bf q}_e(\psi)\\
&&\hbox{}
+B{\omega_0}^{-7}
e^{-1}\tau_{\rm step}^{-2} \sup_{0\le\bar\tau\le \tau}\int_{\Sigma(\bar\tau)} {\bf q}_e(\psi).
\end{eqnarray*}
\end{proposition}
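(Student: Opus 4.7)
I will use that, restricted to the azimuthal modes $k\ne 0$, the projection $(\cdot)_\flat$ acts as a $t^*$-convolution with the Schwartz-like kernel $P^<_k$ from Section~\ref{defflat}, satisfying $|P^<_k(t^*)|\le B_q(\omega_0|k|)(1+|\omega_0 k t^*|)^{-q}$, and commutes with every coordinate derivative $\partial_i$; the $k=0$ mode of $\psi_\flat^\tau$ vanishes. Applying Cauchy--Schwarz with weight $|P^<_k|$, Fubini, and Plancherel in $\phi$ exactly as in the proof of Lemma~\ref{flatlem}, I obtain the master inequality
\[
\int_{\tau'-1}^{\tau'}\!\!\int_{\Sigma(t^*)}{\bf q}_e(\psi_\flat^\tau)\le B_q\sum_{\ell\in\mathbb Z}\sum_{k\ne 0}(1+\omega_0|k||\ell|)^{-q}\int_{\tau'+\ell-1}^{\tau'+\ell+1}\!\!\int_{\Sigma(s^*)}|(\partial\psi^\tau_{\hbox{\Rightscissors}})_k|^2,
\]
where $\partial$ denotes a generic coordinate derivative (the extra $k^2$ coming from $\partial_\phi\psi_\flat^\tau$ is absorbed into $(\partial_\phi\psi^\tau_{\hbox{\Rightscissors}})_k$).

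The integrand on the right is bounded via the decomposition $\partial_i\psi^\tau_{\hbox{\Rightscissors}}=\chi^\tau\partial_i\psi+(\partial_i\chi^\tau)\psi$. The first term satisfies $(\chi^\tau\partial_i\psi)^2\le B\sum_j(\partial_j\psi)^2\le Be^{-1}{\bf q}_e(\psi)$ pointwise by $(\ref{xrnsimo})$. For the second term, I exploit that $\chi^\tau$ depends only on $(t^*,r)$ (so $\partial_i\chi^\tau$ vanishes for $i\in\{\theta,\phi\}$), is bounded, and is supported in the transition slabs $\mathcal R^\pm(0,1)\cup\mathcal R^\pm(\tau-1,\tau)$, and that only $k\ne 0$ modes contribute to $\psi_\flat^\tau$; the azimuthal Poincar\'e inequality $|\psi_k|^2=k^{-2}|(\partial_\phi\psi)_k|^2\le|(\partial_\phi\psi)_k|^2$ together with Parseval in $\phi$ yields $\sum_{k\ne 0}|(\partial_i\chi^\tau)\psi_k|^2\le B\int(\partial_\phi\psi)^2\,d\phi\le Be^{-1}{\bf q}_e(\psi)$.

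I then split the $\ell$-sum at $|\ell|\le\tau_{\rm step}$. The near part uses the uniform bound $\sum_{|\ell|\le\tau_{\rm step}}(1+\omega_0|k||\ell|)^{-q}\le B\omega_0^{-1}$ (valid for $|k|\ge 1$ and $q$ large), giving, together with the pointwise bounds above, the first claimed term $B\omega_0^{-1}\sup_{|\bar\tau-\tau'|\le\tau_{\rm step}+1}\int_{\Sigma(\bar\tau)}{\bf q}_e(\psi)$. The far part uses the tail estimate $\sum_{|\ell|>\tau_{\rm step}}(1+\omega_0|k||\ell|)^{-q}\le B(\omega_0|k|\tau_{\rm step})^{-q+1}$; summing over $k\ne 0$ (convergent via $\sum_k|k|^{1-q}<\infty$ once $q\ge 3$) and bookkeeping the cumulative $\omega_0^{-1}$-losses arising from the conversion $\sum_i(\partial_i)^2\leftrightarrow{\bf q}_e$, from the various spatial-derivative indices, and from the extra $k$-weight in $\partial_\phi\psi_\flat^\tau$, I obtain the second claimed term $B\omega_0^{-7}e^{-1}\tau_{\rm step}^{-2}\sup_{0\le\bar\tau\le\tau}\int_{\Sigma(\bar\tau)}{\bf q}_e(\psi)$ for the appropriate value of $q$. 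The case of $\psi_\sharp^\tau$ follows by writing $\psi_\sharp^\tau=\psi^\tau_{\hbox{\Rightscissors}}-\psi_\flat^\tau$: the flat piece is controlled by the estimate just proven, and the cutoff piece is controlled directly since, by the explicit form of $t^\pm$, the slab $[\tau'-1,\tau']\subset[\tau_{\rm step}-1,\tau-\tau_{\rm step}]$ lies in the plateau $\chi^\tau\equiv 1$ on $r\le R-1$, while the contribution from the compact annulus where $\partial_i\chi^\tau\ne 0$ is bounded using the hypothesis that $\nabla\psi$ is supported away from $i_0$ together with a radial Poincar\'e inequality.

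\textbf{The main obstacle} is the careful bookkeeping of $\omega_0$-powers: the convolution kernel has natural scale $(\omega_0|k|)^{-1}$ in $t^*$, producing the $\omega_0^{-1}$ loss in the near term (from summing the decaying kernel on unit-spaced intervals) and the $\omega_0^{-q+1}\tau_{\rm step}^{-q+1}$ polynomial tail in the far term, whose optimization for the stated $\tau_{\rm step}^{-2}$ decay fixes $q$ and propagates, through the spatial-derivative and azimuthal-weight conversions, to the cumulative $\omega_0^{-7}$ exponent. The secondary structural subtlety is the control of $(\partial_i\chi^\tau)\psi$ by the azimuthal Poincar\'e inequality, which is enabled precisely by $(\psi_\flat^\tau)_0=0$ and the $\phi$-independence of $\chi^\tau$---the very structural properties that motivated the introduction of the auxiliary hypersurfaces $\Sigma^\pm$.
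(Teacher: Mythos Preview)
Your overall architecture matches the paper's: prove a kernel estimate for $(\cdot)_\flat$ acting on generic compactly supported $\Psi$, apply it to the coordinate derivatives $\partial_i\psi^\tau_{\hbox{\Rightscissors}}$ (which commute with the projection), split near/far, and handle the cutoff commutator via the azimuthal Poincar\'e inequality enabled by $(\psi^\tau_\flat)_0=0$. The reduction of the $\psi^\tau_\sharp$ case to the $\psi^\tau_\flat$ case plus ${\bf q}_e(\psi^\tau_{\hbox{\Rightscissors}})$ is also exactly what the paper does.

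There is, however, a genuine gap in your treatment of the commutator term $(\partial_i\chi^\tau)\psi$. Your claimed bound $\int(\partial_\phi\psi)^2\,d\phi\le Be^{-1}{\bf q}_e(\psi)$ is false for large $r$: since $|\Phi|_g\sim r$, the correct pointwise relation is $r^{-2}(\partial_\phi\psi)^2\le B\,{\bf q}(\psi)$, so after Poincar\'e you only get $\sum_{k\ne 0}|(\partial_i\chi^\tau)\psi_k|^2\le B(\partial_\phi\psi)^2\le Br^2{\bf q}_e(\psi)$. For the near part this is harmless, because on $\Sigma(\bar\tau)$ with $\bar\tau\in[0,\tau]$ the set $\{\partial_i\chi^\tau\ne 0\}$ is contained in $\{r\le R\}$. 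But for the far part, the relevant slices can have $\bar\tau$ outside $[0,\tau]$, and there the support of $\partial_i\chi^\tau$ lies at $r\sim(\mathrm{dist}(\bar\tau,[0,\tau]))^2$, so the $r^2$ loss must be absorbed. The paper does this with the key pointwise inequality
\[
|t^*-\tau_{\rm step}|^{-4}\bigl({}^{+}\chi^\tau_{\hbox{\Rightscissors}}+{}^{-}\chi^\tau_{\hbox{\Rightscissors}}\bigr)\le B\,r^{-2},
\]
trading four of the six powers of time decay produced by the kernel lemma for the needed $r^{-2}$. This exchange is the \emph{entire reason} the auxiliary hypersurfaces $\Sigma^\pm$ (and the cutoffs in $t^\pm$ rather than $t^*$) were introduced: the cutoff region recedes from $\mathcal{R}(0,\tau)$ like $r^{1/2}$ in $t^*$. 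Your closing remark that $\Sigma^\pm$ are there to make $\chi^\tau$ $\phi$-independent misidentifies their role---any cutoff in $t^*$ and $r$ would be $\phi$-independent; what $\Sigma^\pm$ buy is the $r$-gain.

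A minor point: your $\psi^\tau_\sharp$ argument is over-engineered. On the slab $t^*\in[\tau'-1,\tau']\subset[1,\tau-1]$ one has $\chi^\tau\equiv 1$ for \emph{all} $r$ (check directly that $t^+\le\tau-1$ and $t^-\ge 1$ there), so $\psi^\tau_{\hbox{\Rightscissors}}=\psi$ on the whole slab and there is no ``compact annulus'' or radial Poincar\'e to invoke.
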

\begin{proof}
Since $\psi^\tau_\sharp=\psi^\tau_{\hbox{\Rightscissors}}-\psi^\tau_\flat$ and 
${\bf q}_e(\psi^\tau_\sharp) \le 2({\bf q}_e(\psi^\tau)+{\bf q}_e(\psi^\tau_\flat))$
it will be sufficient to prove the first statement of the proposition. We begin with the
following
\begin{lemma}
Let $\Psi$ be smooth of compact
support in $t^*$ and $\omega_0\le 1$. Then
\begin{align*}
\int_{\tau'-1}^{\tau'} dt^*
\int_{\Sigma(t^*)} \Psi_\flat^2 &\le B \omega_0^{-1}
\sup_{\tau'-\tau_{\rm step} \le \bar\tau\le \tau'+\tau_{\rm step}} 
\int_{\Sigma(\bar\tau)} \Psi^2 \\ &+ B \sup_{-\infty\le\bar\tau\le \tau'-\tau_{\rm step} \cup 
\tau'+\tau_{\rm step}\le\bar\tau\le\infty} \omega_0^{-7} |\bar\tau-\tau_{\rm step}|^{-6}
\int_{\Sigma(\bar\tau)} \Psi^2.
\end{align*}
\end{lemma}
\begin{proof}
For any $q>0$, we have 
\begin{align*}
|(\Psi_\flat)_k(t^*,\cdot)|&\le B_q (\omega_0 |k|) \int_{-\infty}^\infty (1+|\omega_0 k (t^*-s^*)|)^{-q}
|\Psi_k|(s^*,\cdot)\, ds^*\\ &\le B_q (\omega_0 |k|) \sum_{\ell=-\infty}^{\infty} (1+|\ell|)^{-q}
\int_{t^*+\frac{\ell}{\omega_0|k|}}^{t^*+\frac{\ell+1}{\omega_0 |k|}}
|\Psi_k|(s^*,\cdot)\, ds^*\\&\le  B_q (\omega_0 |k|)^{\frac 12} \sum_{\ell=-\infty}^{\infty} (1+|\ell|)^{-q}
\left (\int_{t^*+\frac{\ell}{\omega_0|k|}}^{t^*+\frac{\ell+1}{\omega_0 |k|}}
|\Psi_k|^2(s^*,\cdot)\, ds^*\right)^{\frac 12}.
\end{align*}
Therefore,
\begin{align*}
\int_{\tau'-1}^{\tau'} |(\Psi_\flat)_k(t^*,\cdot)|^2 dt^* &\le B_q (\omega_0 |k|)
\sum_{\ell=-\infty}^{\infty} (1+|\ell|)^{-q}
\int_{\tau'-1}^{\tau'} dt^* \int_{t^*+\frac{\ell}{\omega_0|k|}}^{t^*+\frac{\ell+1}{\omega_0 |k|}}
|\Psi_k|^2(s^*,\cdot)\, ds^*\\ &\le B_q (\omega_0 |k|)
\sum_{\ell=-\infty}^{\infty} (1+|\ell|)^{-q}
\int_{\tau'-1}^{\tau'} dt^* \int_{t^*+\frac{\ell}{\omega_0|k|}}^{t^*+\frac{\ell+1}{\omega_0 |k|}}
|\Psi_k|^2(s^*,\cdot)\, ds^*\\ &\le B_q 
\sum_{\ell=-\infty}^{\infty} (1+|\ell|)^{-q}
\int_{\tau'-1+\frac{\ell}{\omega_0|k|}}^{\tau'+\frac{\ell+1}{\omega_0 |k|}}
|\Psi_k|^2(s^*,\cdot)\, ds^*.
\end{align*}
As a consequence,
\begin{align*}
\int_{\tau'-1}^{\tau'}\int_{2M}^\infty&\int_0^{\pi} \nu(r,\theta)\int_0^{2\pi}
|(\Psi_\flat)(t^*,\cdot)|^2 d\phi\, d\theta\, dr\, dt^*\\
=&
\sum_{|k|\ge 1} \int_{\tau'-1}^{\tau'} \int_{2M}^\infty\int_0^{\pi} \nu(r,\theta)\int_0^{2\pi}
 |(\Psi_\flat)_k(t^*,\cdot)|^2 dt^*
\\ \le& B_q 
\sum_{|k|\ge 1} \sum_{\ell= -\infty}^{\infty} (1+|\ell|)^{-q}
\int_{\tau'-1+\frac{\ell}{\omega_0|k|}}^{\tau'+\frac{\ell+1}{\omega_0 |k|}}
\int_{2M}^\infty\int_0^{\pi} \nu(r,\theta)\int_0^{2\pi}
\int^{s^*-\frac \ell{\omega_0 |k|}}_{s^*-\frac{\ell+1}{\omega_0 |k|}} 
|\Psi_k|^2(s^*,\cdot)\,dt^*\, ds^*\\ \le& B_q 
\sum_{\ell\ge 0} (1+|\ell|)^{-q}\int_{2M}^\infty\int_0^{\pi} \nu(r,\theta)\int_0^{2\pi}
\int_{\tau'-1-\frac{\ell}{\omega_0}}^{\tau'+\frac{\ell+1}{\omega_0}}
|\Psi|^2(s^*,\cdot)\, ds^*\\ \le& B\omega_0^{-1} \sup_{\tau'-\tau_{\rm step} \le \bar\tau\le \tau'+\tau_{\rm step}} 
\int_{\Sigma(\bar\tau)} \Psi^2 \\ &+ B \sup_{-\infty\le\bar\tau\le \tau'-\tau_{\rm step} \cup 
\tau'+\tau_{\rm step}\le\bar\tau\le\infty} \omega_0^{-7} \left (\tau_{\rm step}+|\bar\tau-\tau_{\rm step}|\right )^{-6}
\int_{\Sigma(\bar\tau)} \Psi^2
\end{align*}
for $q$ chosen sufficiently large.
\end{proof}

Note that
$$
\left ((\pa_v\Psi)^2+(\pa_u\Psi)^2+|\nabb\Psi|^2+ e\frac{(\pa_u\Psi)^2}{(1-\mu)^2} \right)\sim
{\bf q}_e(\Psi),
$$
and
as a consequence,
$$
(\pa_v\Psi)_\flat^2+(\pa_u\Psi)_\flat^2+(\partial_{z^A}\Psi)_\flat^2+
(\partial_{z^B}\Psi)_\flat^2 +e\left(\frac{\pa_u\Psi}{1-\mu}\right)_\flat^2 
\sim
{\bf q}_e(\Psi_\flat),
$$
where $z^A$ denote alternative coordinates
 $x^A$ or $\tilde{x}^A$  of our atlas $(\ref{theatlas})$.
Thus, we obtain from the above lemma applied to
$\Psi=\partial_v\psi^\tau_{\hbox{\Rightscissors}}$, $\Psi=\partial_u\psi^\tau_{\hbox{\Rightscissors}}$,
$\Psi= \partial_{z^A}\psi^\tau_{\hbox{\Rightscissors}}$,\footnote{Of course, one needs to multiply
this by a cutoff on the sphere to make it a well defined smooth function.} $\Psi=\sqrt{e}\frac{\partial_u \psi^\tau_{\hbox{\Rightscissors}}}
{1-\mu}$, 
the statement
\begin{align}
\label{thest}
\int_{\tau'-1}^{\tau'} dt^*& \int_{\Sigma(t^*)} {\bf q}_e(\psi_{\flat}^\tau)\le
B\omega_0^{-1}\sup_{\tau'-\tau_{\rm step} \le \bar\tau\le \tau'+\tau_{\rm step}} 
\int_{\Sigma(\bar\tau)} {\bf q}_e(\psi^\tau_{\hbox{\Rightscissors}})\\
\nonumber
&+B\sup_{-\infty\le\bar\tau\le \tau'-\tau_{\rm step} \cup 
\tau'+\tau_{\rm step}\le\bar\tau\le\infty} {\omega_0}^{-7}
\left(\tau_{\rm step}+|\bar\tau-\tau_{\rm step}|\right)^{-6} \int_{\Sigma(\bar\tau)} {\bf q}_e(\psi^\tau_{\hbox{\Rightscissors}}).
\end{align}

Note that it is sufficient to prove the inequality under the assumption $\psi_0=0$, and thus
we may assume $(\ref{olur})$.
Note the inequality
\begin{equation}
\label{NN}
{\bf q}_e(\psi^\tau_{\hbox{\Rightscissors}})(t,r,\cdot)
\le {\bf q}_e
(\psi)(t,r,\cdot)
+B \chi(t^++1-\tau) \chi(-t^-) \psi^2.
\end{equation}
Now, $B$ can be chosen such that
 in the support of
the first term on the right hand side of $(\ref{thest})$
either $r\le B$ or 
$\psi^\tau_{\hbox{\Rightscissors}}=\psi$.
In view of $(\ref{olur})$, it follows
thus that we may there replace
${\bf q}_e(\psi^\tau_{\hbox{\Rightscissors}})$ with 
${\bf q}_e(\psi)$.

Turning to the second supremum term of $(\ref{thest})$ and
applying 
\[
|t^*-\tau_{\rm step}|^{-4}({}^+\chi^\tau_{\hbox{\Rightscissors}}+
^{-}\chi^\tau_{\hbox{\Rightscissors}}) \le B r^{-2},
\] 
the statement of the proposition follows immediately 
in view of the restriction on $\tau_{\rm step}$ and $(\ref{olur})$.
\end{proof}

\subsection{Estimates for $\mathcal{E}$}
\label{errest}
In view of the cutoffs, $\psi^\tau_{\flat}$ and $\psi^\tau_{\sharp}$ no
longer satisfy $(\ref{waveee})$. 

Define
\begin{equation}
\label{Ftaudef}
F^{\tau}_{\hbox{\Rightscissors}}
 = \psi\,\Box_g\chi_{\hbox{\Rightscissors}}^\tau  + g^{\mu\nu}\partial_\mu(\chi_
{\hbox{\Rightscissors}}^\tau)\partial_\nu\psi.
\end{equation}
Note that
$F^\tau_{\hbox{\Rightscissors}}$ is supported in  $\mathcal{R}^{+}(\tau-1,\tau)\cup
\mathcal{R}^{-}(0,1)$.

We may write
\begin{equation}
\label{rhslow}
\Box_g\psi^\tau_{\flat}=F^{\tau}_{\flat},
\end{equation}
\begin{equation}
\label{rhshigh}
\Box_g\psi^\tau_{\sharp}=F^{\tau}_{\sharp},
\end{equation}
where $F^\tau_{\flat}$ and $F^\tau_{\sharp}$ are defined from $F^\tau_{\hbox{\Rightscissors}}$
as in Section~\ref{defflat}.

The right hand sides of $(\ref{rhslow})$ and $(\ref{rhshigh})$ generate
error terms in applying $(\ref{basicid})$ with our various currents. 
We have the following
\begin{proposition}
\label{exoume}
Let
$\omega_0\le 1\le \tau_{\rm step}\le \tau'\le \tau'' \le \tau-\tau_{\rm step}$
and consider $V={\bf X}$, 
${N_e}$, or $T$. Then 
the following holds
\begin{eqnarray}
\label{xamla9}
\int_{\mathcal{R}(\tau',\tau'')}  \mathcal{E}^V(\psi^\tau_{\flat}) &\le&  B\omega_0^{-8}
\tau^{-2}_{\rm step}e^{-1}
\sup_{0\le\bar\tau \le \tau}\int_{\Sigma(\bar\tau)}{\bf q}_e(\psi),
\end{eqnarray}
\begin{eqnarray}
\label{ynlla9}
\int_{\mathcal{R}(\tau',\tau'')}  \mathcal{E}^V(\psi^\tau_{\sharp}) &\le&
B\omega_0^{-8}\tau^{-2}_{\rm step}e^{-1}
\sup_{0\le \bar\tau \le \tau}\int_{\Sigma(\bar\tau)}{\bf q}_e(\psi).
\end{eqnarray}
\end{proposition}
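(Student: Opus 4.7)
The plan is to bound $|\mathcal{E}^V(\psi^\tau_{\flat/\sharp})|$ pointwise by the product of the source $F^\tau_{\flat/\sharp}$ with a first-order (and, when $V={\bf X}$, also a zeroth-order) expression in $\psi^\tau_{\flat/\sharp}$, and then to integrate by Cauchy--Schwarz slab by slab over unit $t^*$-intervals $\mathcal{R}(n,n+1)$, using Proposition~\ref{yenicomp} to control the $\psi^\tau_{\flat/\sharp}$ factor and the Schwartz Fourier decay of the kernels $P^<_k$, $Q^>_k$, $R^>_k$ to control the $F^\tau_{\flat/\sharp}$ factor. The structural observation is that $F^\tau_{\hbox{\Rightscissors}}$ is supported only in the two thin bands $\mathcal{R}^-(0,1)\cup\mathcal{R}^+(\tau-1,\tau)$, whereas the region of interest lies at $t^*$-distance $\ge\tau_{\rm step}/2$ from both bands; this separation is what ultimately produces the $\tau_{\rm step}^{-2}$ gain.

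From $(\ref{basikne})$ and $(\ref{basikne2})$,
\[
|\mathcal{E}^V(\psi^\tau_\flat)|\le |F^\tau_\flat|\,|V\psi^\tau_\flat|+\tfrac14|w|\,|\psi^\tau_\flat|\,|F^\tau_\flat|,
\]
with $w=0$ unless $V={\bf X}$, in which case $w=w_b$ has bounded coordinate representation. Since each of $T$, $N_e$, $X_a+X_b$ has bounded components in the atlas $(\ref{theatlas})$, $(\ref{xrnsimo})$ gives $|V\psi^\tau_\flat|^2\le B\,{\bf q}(\psi^\tau_\flat)\le Be^{-1}{\bf q}_e(\psi^\tau_\flat)$. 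For the zeroth-order piece we use $(\psi^\tau_\flat)_0=0$ from $(\ref{torecall})$ together with Poincar\'e on each $\phi$-circle to reduce $|\psi^\tau_\flat|^2$ to $|\partial_\phi\psi^\tau_\flat|^2\le Be^{-1}{\bf q}_e(\psi^\tau_\flat)$. A Cauchy--Schwarz on each unit slab then gives
\[
\int_{\mathcal{R}(n,n+1)}|\mathcal{E}^V(\psi^\tau_\flat)|\le \bigl(\textstyle\int_{\mathcal{R}(n,n+1)}|F^\tau_\flat|^2\bigr)^{1/2}\bigl(\textstyle\int_{\mathcal{R}(n,n+1)}(|V\psi^\tau_\flat|^2+|w\psi^\tau_\flat|^2)\bigr)^{1/2},
\]
and the second factor is controlled, uniformly in $n$, by Proposition~\ref{yenicomp} (applied at $\tau'=n+1$) by a constant times $e^{-1/2}\bigl(\omega_0^{-1/2}+\omega_0^{-7/2}e^{-1/2}\tau_{\rm step}^{-1}\bigr)\bigl(\sup_{0\le\bar\tau\le\tau}\int_{\Sigma(\bar\tau)}{\bf q}_e(\psi)\bigr)^{1/2}$.

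To control the first factor, note that the cutoff $\chi^\tau_{\hbox{\Rightscissors}}$ depends only on $(r,t^*)$, so multiplication by it preserves the azimuthal decomposition; combining $(\ref{convest})$ with Cauchy--Schwarz in $s^*$ (valid because $\mathrm{supp}\, F^\tau_{\hbox{\Rightscissors}}$ has $s^*$-measure $\le 2$) one obtains, for every $q$ and every $k\ne 0$ and $t^*\in[n,n+1]$ with $n\ge\tau_{\rm step}$,
\[
|(F^\tau_\flat)_k(t^*,\cdot)|^2\le B_q(\omega_0|k|)^2(1+\omega_0|k|(n-1))^{-2q}\int|(F^\tau_{\hbox{\Rightscissors}})_k(s^*,\cdot)|^2\,ds^*.
\]
Using $(F^\tau_\flat)_0=0$ and Parseval in $\phi$, then bounding $|F^\tau_{\hbox{\Rightscissors}}|\le B\,\chi_{\mathrm{supp}}(|\psi|+|\partial\psi|)$ and applying Poincar\'e in $\phi$ to the non-zero modes of $\psi$, one concludes
\[
\int_{\mathcal{R}(n,n+1)}|F^\tau_\flat|^2\le B_q\,\omega_0^{-2q+2}(n-1)^{-2q}\,e^{-1}\,\sup_{0\le\bar\tau\le\tau}\int_{\Sigma(\bar\tau)}{\bf q}_e(\psi).
\]
Summing over $n\ge\tau_{\rm step}$ with $\sum(n-1)^{-q}\le B_q\tau_{\rm step}^{-q+1}$ (for $q>1$) and combining with the uniform second-factor bound, one chooses $q$ of order $10$ so that, using $\omega_0\le 1\le\tau_{\rm step}$ and $e\le 1$, the resulting two terms are dominated by $B\,\omega_0^{-8}\tau_{\rm step}^{-2}e^{-1}\sup\int_{\Sigma(\bar\tau)}{\bf q}_e(\psi)$, giving $(\ref{xamla9})$.

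The inequality $(\ref{ynlla9})$ for $\psi^\tau_\sharp$ is proven identically, with $P^<_k$ replaced by the kernels $Q^>_k$ and (after one integration by parts in $s^*$) $R^>_k$ of Section~\ref{defflat}. These satisfy the same type of Schwartz decay in $t^*$ (up to a harmless $|\log|\omega_0kt^*||$ factor for $R^>_k$), so the slab-by-slab Cauchy--Schwarz scheme applies with no essential change. The main obstacle is the bookkeeping in $\omega_0$, $\tau_{\rm step}$, $e$: the Schwartz decay of the kernel must be strong enough both to override the linear-in-$(\tau''-\tau')$ growth that a global Cauchy--Schwarz would produce and to absorb the $\omega_0^{-7}e^{-1}\tau_{\rm step}^{-2}$ loss from the second term of Proposition~\ref{yenicomp}. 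The slab-by-slab scheme achieves this because $\sum_n(n-1)^{-q}$ converges independently of $\tau$, leaving a residual $\tau_{\rm step}^{-q+1}$ that can be made as small as needed by taking $q$ large.
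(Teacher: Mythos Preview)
Your scheme—separate $F^\tau_{\flat/\sharp}$ from $V\psi^\tau_{\flat/\sharp}$ by Cauchy--Schwarz, use the Schwartz tails of the kernels for the first factor, and Proposition~\ref{yenicomp} for the second—is the paper's, but one essential ingredient is missing and the displayed bound on $\int_{\mathcal{R}(n,n+1)}|F^\tau_\flat|^2$ is false as a uniform estimate.

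The gap is at the Poincar\'e step. After Parseval and the kernel bound you must control $\int_{\mathrm{supp}\,F^\tau_{\hbox{\Rightscissors}}}|F^\tau_{\hbox{\Rightscissors}}|^2$, hence in particular $\int_{\mathrm{supp}}\psi^2\le\int_{\mathrm{supp}}(\partial_\phi\psi)^2$. But $(\partial_\phi\psi)^2$ is \emph{not} controlled by ${\bf q}_e(\psi)$: since $|\Phi|_g\sim r$, one only has $r^{-2}(\partial_\phi\psi)^2\lesssim{\bf q}(\psi)$, cf.~$(\ref{ilerde})$. The support bands $\mathcal{R}^\pm$ extend to $r=\infty$, so $\int_{\mathrm{supp}}(\partial_\phi\psi)^2$ is \emph{not} bounded by $\sup_{\bar\tau}\int_{\Sigma(\bar\tau)}{\bf q}_e(\psi)$ with a constant depending only on $M$. (Your pointwise claim $|\partial_\phi\psi^\tau_\flat|^2\le Be^{-1}{\bf q}_e(\psi^\tau_\flat)$ for the $w$-term has the same defect; there it is rescued because in fact $|w_b|\lesssim r^{-1}$, which you did not invoke.)

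The missing $r^{-2}$ must come from the kernel decay, and this is precisely why the cutoff is taken along the tilted hypersurfaces $\Sigma^\pm$ rather than $\Sigma$. At fixed $r$, the $t^*$-distance from $\mathcal{R}(n,n+1)$ to the support of ${}^1F^\tau_{\hbox{\Rightscissors}}$ is not merely $n-1$ but $n-1+g(r)$ with $g(r)=\chi(-r+R)(1+r-R)^{1/2}\sim r^{1/2}$ for large $r$; similarly the distance to ${}^2F^\tau_{\hbox{\Rightscissors}}$ is $\tau-n-2+g(r)$ (not $n-1$: you must split $F={}^1F+{}^2F$). Feeding the improved distance through $(\ref{convest})$ with $q$ large and writing $(n-1+g(r))^{-2q}\le(n-1)^{-q}(1+g(r))^{-q}\lesssim(n-1)^{-q}r^{-2}$ produces exactly the weight needed for $(\ref{ilerde})$. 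The paper implements this by passing to the coordinates $t^\pm$ (the lemma in the proof and the derivation of $(\ref{ara})$), which is equivalent. With the $g(r)$ restored to your distance bound, your slab-by-slab argument closes.
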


\begin{proof}
Decompose
\[
F^\tau_{\hbox{\Rightscissors}}= {}^1F^\tau_{\hbox{\Rightscissors}}+ {}^2F^\tau_{\hbox{\Rightscissors}}
\]
where
\[
{}^1 F^\tau_{\hbox{\Rightscissors}} = {}^{-} \chi^\tau_{\hbox{\Rightscissors}} 
F^\tau_{\hbox{\Rightscissors}}, \qquad
{}^2 F^\tau_{\hbox{\Rightscissors}} = {}^{+} \chi^\tau_{\hbox{\Rightscissors}} 
F^\tau_{\hbox{\Rightscissors}},
\]
and consider ${}^jF^\tau_{\flat}$, ${}^jF^\tau_{\sharp}$, defined in Section~\ref{defflat},
for $j=1,2$.

Recall the definitions $(\ref{basikne})$, $(\ref{basikne2})$ of $\mathcal{E}^V$ and $\mathcal{E}^{V,w}$.
Since $(F^\tau_{\flat})_0=0$ and $(F^\tau_{\sharp})_0=
\int_0^{2\pi} F^\tau_{\hbox{\Rightscissors}}\,d\phi= 0$
in $\mathcal{R}(\tau',\tau'')$,
it follows that $\mathcal{E}^V((\psi^\tau_\flat)_0)=\mathcal{E}^V((\psi^\tau_{\sharp})_0)=0$
in $\mathcal{R}(\tau',\tau'')$ and 
thus equations $(\ref{xamla9})$ and
$(\ref{ynlla9})$ are trivially satisfied. By subtraction,
we may thus assume in what follows
that  
\[
\psi_0=\int_0^{2\pi} \psi \, d\phi  = 0,
\]
and thus 
\begin{equation}
\label{ilerde}
r^{-2} \int_0^{2\pi} \psi^2 d\phi \le r^{-2} \int_0^{2\pi} (\partial_\phi \psi)^2 \, d\phi
\le B e^{-1}\int_0^{2\pi} J^{N_e}_\mu (\psi) n_{\Sigma^{+}}^\mu\, d\phi
\end{equation}
and similarly with $n_{\Sigma^{-}}^\mu$.

\begin{lemma}
For any $q\ge 0$, $\tau_0\le \tau-1$,
 there exists a $B_q$ such that
\[
|({}^2F^\tau_{\flat})_k| (t^{+}=\tau_0, \cdot)
\le   B_q{\omega_0}^{1-q} (\tau- \tau_0)^{-q} k^{1-q} \int_{\tau-1}^\tau|({}^2F^\tau_{\hbox{\Rightscissors}})_k(t^{+},\cdot)|
 \,dt^{+},
\]
\[
|({}^2F^\tau_{\sharp})_k| (t^{+}=\tau_0, \cdot)
\le   B_q{\omega_0}^{1-q}(\tau- \tau_0)^{-q} k^{1-q} 
\int_{\tau-1}^\tau|({}^2F^\tau_{\hbox{\Rightscissors}})_k(t^{+},\cdot)|
 \,dt^{+}.
\]
\end{lemma}
\begin{proof}
This is standard.
\end{proof}

It follows from the above lemma 
applied to $q=6$,
the restriction on $\tau'$, $\tau''$,
and the relation between $t^*$ and $t^{+}$
 that
\begin{eqnarray*}
\int_{\tau'}^{\tau''}   (1+(\tau-t^*))^3 ({}^2F^\tau_{\flat})^2_k \,dt^*
&\le&   B\tau_{\rm step}^{-5}{\omega_0}^{-8} \int_{\tau-1}^{\tau} r^{-2}({}^2F^\tau_{\hbox{\Rightscissors}})^2_k
\, dt^{+}\\
&\le&    B\tau_{\rm step}^{-5} {\omega_0}^{-8} \int_{\tau-1}^{\tau}  r^{-2}(\psi^2_k +
e^{-1} \,J^{N_e}_\mu(\psi_k)n^\mu_{\Sigma^+})\, dt^{+}.
\end{eqnarray*}
We remark that the powers of $\tau_{\rm step}^{-1}$ and $r^{-1}$ can be chosen 
arbitrarily above, at the expense of the constant $B$ and powers of ${\omega_0}^{-1}$,
but this would 
give no advantage in what follows.
Thus,
\begin{align*}
\int_{\mathcal{R}(\tau',\tau'')} &   (1+(\tau-t^*))^3({}^2F^\tau_{\flat})^2 \\
&\le
\sum_k    B\tau_{\rm step}^{-5}{\omega_0}^{-8} \int_{\mathcal{R}^{+}(\tau-1,\tau)}
( r^{-2}\psi^2_k+ r^{-2}e^{-1}\, J^{N_e}_\mu(\psi_k)n^\mu_{\Sigma^+}   )\\
&\le   B\tau_{\rm step}^{-5} {\omega_0}^{-8}\sup_{\tau-1 \le \bar\tau\le \tau} 
\int_{\Sigma^{+}(\bar\tau)}r^{-2} (\partial_\phi\psi)^2+ r^{-2}e^{-1}\, 
J^{N_e}_\mu(\psi)n^\mu_{\Sigma^+} 
 \\
&\le   B\tau_{\rm step}^{-5}e^{-1} {\omega_0}^{-8}\sup_{\tau-1 \le \bar\tau\le \tau}
\int_{\Sigma^{+}(\bar\tau)}  J^{N_e}_\mu(\psi) n_{\Sigma^{+}}^\mu,
\end{align*}
where we have used $(\ref{ilerde})$.
On the other hand, by conservation of energy we have that
\[
\sup_{\tau-1 \le \bar\tau\le \tau}
\int_{\Sigma^{+}(\bar\tau)}  J^{N_e}_\mu(\psi) n_{\Sigma^{+}}^\mu
\le 2 \sup_{\tau -1\le \bar\tau \le \tau}
\int_{\Sigma(\bar\tau)}   J^{N_e}_\mu(\psi) n_{\Sigma}^\mu,
\]
and thus,
\begin{eqnarray}
\label{ara}
\nonumber
\int_{\mathcal{R}(\tau',\tau'')}    (1+(\tau-t^*))^3 ({}^2F^\tau_{\flat})^2 \,dt^*
&\le&  
   B\tau_{\rm step}^{-5}e^{-1}{\omega_0}^{-8}   \sup_{\tau-1 \le \bar\tau\le \tau}
\int_{\Sigma(\bar\tau)} J^{N_e}_\mu(\psi) n_{\Sigma}^\mu\\
\nonumber
&\le&     B\tau_{\rm step}^{-5}e^{-1}  {\omega_0}^{-8} \sup_{\tau-1 \le \bar\tau\le \tau}
\int_{\Sigma(\bar\tau)} {\bf q}_e(\psi)\\
&\le&      B\tau_{\rm step}^{-5}e^{-1} {\omega_0}^{-8} \sup_{0 \le \bar\tau\le \tau}
\int_{\Sigma(\bar\tau)} {\bf q}_e(\psi).
\end{eqnarray}
Clearly, an identical bound holds for
\[
\int_{\mathcal{R}(\tau',\tau'')}    (1+t^*)^3 ({}^1F^\tau_{\flat})^2 \,dt^*.
\]

Let us consider
first the cases where $V\ne{\bf X}$. For $V=T, N_e$ we have
\begin{eqnarray*}
\int_{\mathcal{R}(\tau',\tau'')}\mathcal{E}^V(\psi^\tau_{\flat})
&=& 
\int_{\mathcal{R}(\tau',\tau'')} {}^1F^\tau_{\flat} V^\nu(\psi^\tau_{\flat})_\nu+
{}^2F^\tau_{\flat} V^\nu(\psi^\tau_{\flat})_\nu\\
&\le& \int_{\mathcal{R}(\tau',\tau'')} (1+t^*)^3 \,({}^1F^\tau_{\flat})^2
+\int_{\mathcal{R}(\tau',\tau'')} (1+t^*)^{-3}
(V^\nu(\psi^\tau_{\flat})_\nu)^2\\
&&\hbox{}+\int_{\mathcal{R}(\tau',\tau'')}   (1+(\tau-t^*))^3 ({}^2F^\tau_{\flat})^2\\
&&\hbox{}
+\int_{\mathcal{R}(\tau',\tau'')} (1+(\tau-t^*))^{-3}(V^\nu(\psi^\tau_{\flat})_\nu)^2 \\
&\le&  B\tau_{\rm step}^{-5}e^{-1}{\omega_0}^{-8}  \sup_{0 \le \bar\tau\le \tau}
\int_{\Sigma(\bar\tau)}  {\bf q}_e(\psi)
+B \int_{\mathcal{R}(\tau',\tau'')} (1+t^*)^{-3}
\, {\bf q}_e(\psi^\tau_{\flat})\\
&&\hbox{}
+B \int_{\mathcal{R}(\tau',\tau'')}    (1+(\tau-t^*))^{-3}\,  {\bf q}_e(\psi^\tau_{\flat})\\
&\le&  B\tau_{\rm step}^{-5}e^{-1} {\omega_0}^{-8} \sup_{0 \le \bar\tau\le \tau}
\int_{\Sigma(\bar\tau)}  {\bf q}_e(\psi)\\
&&\hbox{}
+B\tau_{\rm step}^{-2}  \sup_{\tau_{\rm step} \le \bar\tau\le \tau-\tau_{\rm step}}\int_{\bar\tau-1}^{\bar\tau}
\left(\int_{\Sigma(\bar\tau)}  {\bf q}_e(\psi^\tau_{\flat})\right)d\tilde\tau \\
&\le& (B\tau_{\rm step}^{-5}e^{-1}{\omega_0}^{-8}+
\tau_{\rm step}^{-2}B(\omega_0^{-1}+\omega_0^{-7}e^{-1}\tau_{\rm step}^{-2})) 
 \sup_{0 \le \bar\tau\le \tau}
\int_{\Sigma(\bar\tau)}  {\bf q}_e(\psi)
\end{eqnarray*}
where for the last inequality we have used Proposition~\ref{yenicomp}.
We argue similarly for $\mathcal{E}^V(\psi^\tau_{\sharp})$.

For the case of $\mathcal{E}^V$ where $V={\bf X}$, we have
an additional error term
$$
\tilde {\mathcal E}^X(\psi^\tau_\flat)=-\frac14\left(2f'_b+4\frac{1-\mu}rf_b
-\frac{4M(1-\mu)f_b}{r^2}\right)\psi^\tau_\flat F^\tau_\flat.
$$
Recall that $|f_b|\le B\chi $, and $|f'_b|\le Br^{-2}\chi$, where $\chi$ is a cutoff function
such that $\chi=0$ in $r^*\le 0$.
Arguing as in the previous bound we obtain
\begin{eqnarray*}
\int_{\mathcal{R}(\tau',\tau'')}\tilde{\mathcal{E}}^X(\psi^\tau_{\flat})
&\le& \int_{\mathcal{R}(\tau',\tau'')} (1+t^*)^3 \,({}^1F^\tau_{\flat})^2
+B\int_{\mathcal{R}(\tau',\tau'')} (1+t^*)^{-3}
\chi^2 r^{-2} (\psi^\tau_{\flat})^2\\
&&\hbox{}+\int_{\mathcal{R}(\tau',\tau'')}   (1+(\tau-t^*))^3 ({}^2F^\tau_{\flat})^2\\
&&\hbox{}
+B\int_{\mathcal{R}(\tau',\tau'')} (1+(\tau-t^*))^{-3} \chi^2 r^{-2} (\psi^\tau_{\flat})^2 \\
&\le&  B\tau_{\rm step}^{-5}e^{-1} {\omega_0}^{-8} \sup_{0 \le \bar\tau\le \tau}
\int_{\Sigma(\bar\tau)}  {\bf q}_e(\psi)
+\int_{\mathcal{R}(\tau',\tau'')} (1+t^*)^{-3}
\, {\bf q}_e(\psi^\tau_{\flat})\\
&&\hbox{}
+\int_{\mathcal{R}(\tau',\tau'')}    (1+(\tau-t^*))^{-3}\,  {\bf q}_e(\psi^\tau_{\flat})\\
&\le&  B\tau_{\rm step}^{-5}e^{-1} {\omega_0}^{-8} \sup_{0 \le \bar\tau\le \tau}
\int_{\Sigma(\bar\tau)}  {\bf q}_e(\psi)\\
&&\hbox{}
+B\tau_{\rm step}^{-2}  \sup_{\tau_{\rm step} \le \bar\tau\le \tau-\tau_{\rm step}}\int_{\bar\tau-1}^{\bar\tau}
\left(\int_{\Sigma(\bar\tau)}  {\bf q}_e(\psi^\tau_{\flat})\right)d\tilde\tau\\
&\le&  (B\tau_{\rm step}^{-5}e^{-1}{\omega_0}^{-8}+\tau_{\rm step}^{-2}B(\omega_0^{-1}+
\omega_0^{-7}e^{-1}
\tau_{\rm step}^{-2})) 
 \sup_{0 \le \bar\tau\le \tau}
\int_{\Sigma(\bar\tau)}  {\bf q}_e(\psi).
\end{eqnarray*}
In the above, we have used again Proposition~\ref{yenicomp} as well as
the inequality
\[
r^{-2}\int_0^{2\pi} (\psi^\tau_{\flat})^2 d\phi \le
r^{-2}\int_0^{2\pi}(\partial_\phi \psi^\tau_{\flat})^2 d\phi \le {\bf q}_e(\psi^\tau_{\flat})
\]
in the support of $\chi$.
The other terms of $\mathcal{E}^{\bf X}$ can be handled as in the argument
for $\mathcal{E}^T$, $\mathcal{E}^{N_e}$. 
Again, the argument for $\psi^\tau_{\sharp}$
is
identical.
\end{proof}

\subsection{Revisiting the relation between ${\bf q}_e(\psi_{\flat}^\tau)$,  
${\bf q}_e(\psi_{\sharp}^\tau)$ and ${\bf q}_e(\psi)$}
\label{revisit}
With the Proposition of the previous section, we may now refine 
Proposition~\ref{yenicomp} to a pointwise-in-time bound:
\begin{proposition}
\label{yenicomp'}
Let $\omega_0\le1\le\tau_{\rm step}\le \tau' \le \tau-\tau_{\rm step}$.
 Then
\begin{eqnarray*}
 \int_{\Sigma(\tau')} {\bf q}_e(\psi_{\flat}^\tau)&\le& 
B\sup_{\tau'-\tau_{\rm step} \le \bar\tau\le \tau'+\tau_{\rm step}} 
\int_{\Sigma(\bar\tau)} {\bf q}_e(\psi)\\
&&\hbox{}+B{\omega_0}^{-8}e^{-1}
\tau_{\rm step}^{-2} \sup_{0\le\bar \tau\le \tau}\int_{\Sigma(\bar\tau)} {\bf q}_e(\psi),
\end{eqnarray*}
\begin{eqnarray*}
\int_{\Sigma(\tau')} {\bf q}_e(\psi_{\sharp}^\tau)&\le& 
B\sup_{\tau'-\tau_{\rm step} \le \bar\tau\le \tau'+\tau_{\rm step}} 
\int_{\Sigma(\bar\tau)}{\bf q}_e(\psi)\\
&&\hbox{}
+B{\omega_0}^{-8}
e^{-1}\tau_{\rm step}^{-2} \sup_{0\le\bar\tau\le \tau}\int_{\Sigma(\bar\tau)} {\bf q}_e(\psi).
\end{eqnarray*}
\end{proposition}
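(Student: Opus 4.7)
The plan is to upgrade the time-integrated bound of Proposition~\ref{yenicomp} to a pointwise-in-time statement by running the $N_e$ energy identity on a slab of width one and averaging over the initial time. Applying $(\ref{basicid})$ with $V = N_e$ to $\psi_\flat^\tau$ on the slab $\mathcal{R}(\tilde\tau,\tau')$, for any $\tilde\tau \in [\tau'-1,\tau']$, and rearranging gives
\begin{align*}
\int_{\Sigma(\tau')} J^{N_e}_\mu(\psi_\flat^\tau) n_\Sigma^\mu
&\le \int_{\Sigma(\tilde\tau)} J^{N_e}_\mu(\psi_\flat^\tau) n_\Sigma^\mu \\
&\quad - \int_{\mathcal{R}(\tilde\tau,\tau')} K^{N_e}(\psi_\flat^\tau) + \int_{\mathcal{R}(\tilde\tau,\tau')} \mathcal{E}^{N_e}(\psi_\flat^\tau),
\end{align*}
after dropping the horizon flux $\int_{\mathcal{H}(\tilde\tau,\tau')} J^{N_e}_\mu(\psi_\flat^\tau) n^\mu_{\mathcal{H}}$, which is non-negative in view of Assumption~\ref{lastassump} together with the fact that $N_e$ is causal up to and on $\mathcal{H}^+$.

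By $(\ref{muhimb})$ the boundary terms on the left and the first term on the right are both equivalent to $\int_\Sigma {\bf q}_e(\psi_\flat^\tau)$, while by $(\ref{muhim})$ together with $(\ref{prebest})$ the bulk term $K^{N_e} = e K^Y$---supported in $r \le r^+_Y$---satisfies $|K^{N_e}(\psi_\flat^\tau)| \le B\,{\bf q}_e(\psi_\flat^\tau)$. Averaging the rearranged inequality over $\tilde\tau \in [\tau'-1,\tau']$ therefore yields
\[
\int_{\Sigma(\tau')} {\bf q}_e(\psi_\flat^\tau) \le B \int_{\tau'-1}^{\tau'}\!\int_{\Sigma(\tilde\tau)} {\bf q}_e(\psi_\flat^\tau)\,d\tilde\tau + B\int_{\mathcal{R}(\tau'-1,\tau')} |\mathcal{E}^{N_e}(\psi_\flat^\tau)|.
\]

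The first term on the right-hand side is precisely the quantity bounded by Proposition~\ref{yenicomp}, contributing $B\omega_0^{-1}\sup_{\bar\tau}\int_{\Sigma(\bar\tau)}{\bf q}_e(\psi)$ together with $B\omega_0^{-7}e^{-1}\tau_{\rm step}^{-2}\sup_{\bar\tau}\int_{\Sigma(\bar\tau)}{\bf q}_e(\psi)$; the second term is controlled by Proposition~\ref{exoume} applied with $V = N_e$, contributing $B\omega_0^{-8}\tau_{\rm step}^{-2}e^{-1}\sup_{\bar\tau}\int_{\Sigma(\bar\tau)}{\bf q}_e(\psi)$, which absorbs the $\omega_0^{-7}$ term from Proposition~\ref{yenicomp}. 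This gives the first inequality of the proposition; the second inequality, for $\psi_\sharp^\tau$, follows by repeating the argument verbatim with the $\sharp$-versions of the two input propositions.

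The main technical subtlety is that Proposition~\ref{exoume} nominally requires $\tau_{\rm step} \le \tau'-1$ on the slab $\mathcal{R}(\tau'-1,\tau')$, which is one unit stronger than the assumption $\tau_{\rm step}\le\tau'$ of the present statement. This is handled either by enlarging $\tau_{\rm step}$ by an inessential additive constant or, in the corner case where $\tau'$ sits near the lower endpoint, by integrating forward over $[\tau',\tau'+1]$ instead (which is permitted by $\tau'\le\tau-\tau_{\rm step}$); both modifications are symmetric in the two input propositions. Beyond this bookkeeping, no estimate not already developed in Sections~\ref{sugkri} and~\ref{errest} is required.
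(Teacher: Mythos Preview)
Your proposal is correct and follows essentially the same approach as the paper: both run the $N_e$ energy identity $(\ref{basicid})$ for $\psi_\flat^\tau$ on a unit slab, drop the non-negative horizon flux, bound the bulk term via $(\ref{muhim})$--$(\ref{muhimb})$, average in the initial time, and then invoke Propositions~\ref{yenicomp} and~\ref{exoume}. Your observation about the off-by-one in the lower endpoint hypothesis of Proposition~\ref{exoume} is a genuine bookkeeping issue that the paper leaves implicit; your suggested fixes are adequate.
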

\begin{proof}
Once again it is sufficient to establish this for $\psi_\flat^\tau$.

We write the energy identity $(\ref{basicid})$
for the vector field $N_e$ to obtain
\begin{align*}
\int_{\mathcal{H}(\tau_0,\tau')} J^{N_e}_\mu(\psi^\tau_\flat) n^\mu_{\mathcal{H}}
+\int_{\Sigma(\tau')} J^{N_e}_\mu(\psi^\tau_\flat) n_{\Sigma}^\mu+
\int_{\mathcal{R}(\tau_0,\tau')}
 K^{N_e}(\psi^\tau_\flat)\\
 =\int_{\mathcal{R}(\tau_0,\tau')} {\mathcal E} ^{N_e}(\psi^\tau_\flat)
+\int_{\Sigma(\tau_0)} J^{N_e}_\mu(\psi^\tau_\flat) n_\Sigma^\mu.
\end{align*}
By $(\ref{muhim})$, $(\ref{muhimb})$, and the nonnegativity of the first term
on the left hand side above, we obtain
\[
\int_{\Sigma(\tau')}  {\bf q}_e(\psi^\tau_\flat) \le
e B\int_{\tau_0}^{\tau'}\int_{\Sigma(t^*)}  {\bf q}_e(\psi^\tau_\flat) dt^* +
B\left|\int_{\mathcal{R}(\tau_0,\tau')} {\mathcal E} ^{N_e}(\psi^\tau_\flat)\right| +
B\int_{\Sigma(\tau_0)}  {\bf q}_e(\psi^\tau_\flat).
\]
We integrate the above inequality with respect to $\tau_0$ between $\tau'-1$ and $\tau'$ 
and use Propositions \ref{yenicomp}, \ref{exoume} to obtain the desired estimate.
\end{proof}

\section{The main estimates}
\subsection{Estimates for $\psi^\tau_{\flat}$}
\label{flest}

Let us assume always
\begin{equation}
\label{upo9et}
\tau_{\rm step}\le \tau' \le \tau''\le \tau-\tau_{\rm step}.
\end{equation}

\begin{proposition}
\label{lemma1}
For $\psi_{\flat}^\tau$ we have
\begin{eqnarray*}
\int_{\tau'}^{\tau''}\left(\int_{\Sigma(\bar\tau)} 
 {\bf q}_e^\bigstar (\psi_{\flat}^\tau)\right)d\bar\tau 
&\le& B\int_{\mathcal{R}(\tau'.\tau'')} \left( K^{\bf X}+ K^{N_e}\right)(\psi^\tau_{\flat})\\
&&\hbox{}+ B
\sup_{0\le \bar\tau \le \tau} \int_{\Sigma(\bar\tau)}  {\bf q}_e (\psi).
\end{eqnarray*}
\end{proposition}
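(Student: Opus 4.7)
The plan is to combine the three region-by-region lower bounds for $(K^{\bf X}+K^{N_e})(\psi^\tau_\flat)$ already established in the paper (namely $(\ref{refto1})$, $(\ref{refto2})$, $(\ref{refto3})$) with the frequency-localized comparison estimate for $\partial_{t^*}\psi^\tau_\flat$ provided by Proposition~\ref{compderpro2}. The main obstacle is the ``wrong-signed'' $-B(\partial_{t^*}\Psi)^2$ contribution in the intermediate region $r^-_Y\le r\le R$, which I will absorb via the flatness of $\psi^\tau_\flat$ in frequency, i.e., by trading $\partial_{t^*}$ for $\omega_0\partial_\phi$.

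First I would apply $(\ref{refto1})$--$(\ref{refto3})$ pointwise in $(r,\theta)$ to $\Psi=\psi^\tau_\flat$, multiply by the measure density $\nu(\theta,r)$, and integrate over $\mathcal{R}(\tau',\tau'')$ using the volume-form formula from Section~\ref{genclass}. In the outer region $r\ge R$ and the inner region $r\le r^-_Y$ this directly yields
\[
\int_{\mathcal{R}(\tau',\tau'')\cap\{r\ge R\}\cup\{r\le r^-_Y\}} (K^{\bf X}+K^{N_e})(\psi^\tau_\flat)
\ge b\int_{\mathcal{R}(\tau',\tau'')\cap\{r\ge R\}\cup\{r\le r^-_Y\}}{\bf q}_e^\bigstar(\psi^\tau_\flat),
\]
with no error terms. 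In the middle region $r^-_Y\le r\le R$, $(\ref{refto2})$ gives the same type of estimate but with the additional contribution
\[
\int_{\mathcal{R}(\tau',\tau'')\cap\{r^-_Y\le r\le R\}}\bigl(b|\nabb\psi^\tau_\flat|^2-B(\partial_{t^*}\psi^\tau_\flat)^2\bigr),
\]
which I would like to make nonnegative.

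To handle the bad term, I would invoke Proposition~\ref{compderpro2}, which bounds $\int_{\mathcal{R}(\tau',\tau'')\cap\{r^-_Y\le r\le R\}}(\partial_{t^*}\psi^\tau_\flat)^2$ by $B\omega_0^2\int(\partial_\phi\psi^\tau_\flat)^2+B\sup_{0\le\bar\tau\le\tau}\int_{\Sigma(\bar\tau)}{\bf q}_e(\psi)$. Since the region $r^-_Y\le r\le R$ is compact in $r$, we have $(\partial_\phi\psi^\tau_\flat)^2\le BR^2|\nabb\psi^\tau_\flat|^2$ with a constant depending only on $M$; thus, by choosing $\omega_0$ sufficiently small (as a constant depending only on $M$, compatible with the rest of the scheme), the $B\omega_0^2(\partial_\phi\psi^\tau_\flat)^2$ term is absorbed into the good $b|\nabb\psi^\tau_\flat|^2$ term. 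The remaining piece is exactly the error term $B\sup_{0\le\bar\tau\le\tau}\int_{\Sigma(\bar\tau)}{\bf q}_e(\psi)$ appearing on the right-hand side of the proposition.

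Adding the three regional contributions then gives
\[
\int_{\mathcal{R}(\tau',\tau'')}{\bf q}_e^\bigstar(\psi^\tau_\flat)
\le B\int_{\mathcal{R}(\tau',\tau'')}\bigl(K^{\bf X}+K^{N_e}\bigr)(\psi^\tau_\flat)
+B\sup_{0\le\bar\tau\le\tau}\int_{\Sigma(\bar\tau)}{\bf q}_e(\psi),
\]
and rewriting the left-hand side using the identity $\int_{\mathcal{R}(\tau',\tau'')}f\sim\int_{\tau'}^{\tau''}\bigl(\int_{\Sigma(\bar\tau)}f\bigr)d\bar\tau$ from Section~\ref{genclass} (valid by $(\ref{xwroeides})$) yields the claimed inequality. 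The only delicate point is ensuring that the smallness of $\omega_0$ needed for the absorption is compatible with the other smallness constraints ($\omega_0\le 1$ as assumed in Proposition~\ref{compderpro2}, and the eventual constraints tying $\omega_0$ to $\epsilon_{\rm close}$ and $e$), but all of these are chosen as constants depending only on $M$, so no circularity arises.
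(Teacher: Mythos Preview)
Your proposal is correct and follows essentially the same route as the paper: split into the three $r$-regions using $(\ref{refto1})$--$(\ref{refto3})$, then in the intermediate region absorb the bad $(\partial_{t^*}\psi^\tau_\flat)^2$ term via Proposition~\ref{compderpro2} by choosing $\omega_0$ small, using $(\partial_\phi\psi^\tau_\flat)^2\le B|\nabb\psi^\tau_\flat|^2$ there. The only cosmetic difference is that the paper first passes from $|\nabb\psi^\tau_\flat|^2$ down to $(\partial_\phi\psi^\tau_\flat)^2$ before invoking Proposition~\ref{compderpro2}, whereas you go in the other order; you should also note explicitly that $(\psi^\tau_\flat)_0=0$ (from $(\ref{torecall})$) so that $(\ref{refto1})$--$(\ref{refto3})$ apply.
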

\begin{proof}
Recall that $\int_0^{2\pi} \psi^\tau_{\flat}d\phi=0$.

In the region
$r\le r^-_Y$, we have immediately from $(\ref{refto3})$ that
\[
\int_0^{2\pi}   {\bf q}_e^\bigstar(\psi^\tau_{\flat})\, d\phi
 \le B\int_0^{2\pi} (  K^{\bf X}+ K^{N_e})(\psi^\tau_{\flat}) d\phi.
\]
Similarly, in the region $r\ge R$, we have
from $(\ref{refto1})$ that
\[
\int_0^{2\pi}   {\bf q}_e^\bigstar(\psi^\tau_{\flat})
d\phi  \le B\int_0^{2\pi} (  K^{\bf X}+ K^{N_e})(\psi^\tau_{\flat}) \, d\phi.
\]

For $r^-_Y\le r\le R$, we have from $(\ref{refto2})$ that
\begin{eqnarray*}
\int_0^{2\pi}{\bf q}_e^\bigstar (\psi^\tau_{\flat})
\,d\phi &\le & B\int_0^{2\pi}(  K^{\bf X}+ K^{N_e})(\psi^\tau_{\flat}) \,d\phi\\
&&\hbox{}
- \int_0^{2\pi} \left(b|\nabb\psi_{\flat}^\tau|^2-
B(\partial_t\psi_{\flat}^\tau)^2\right)\,d\phi
\end{eqnarray*}
Note also that
\[
\int_0^{2\pi}|\nabb\psi^\tau_{\flat}|^2\,d\phi
\ge b \int_0^{2\pi}|\partial_\phi \psi^\tau_{\flat}|^2\,d\phi 
\]
for constant $(r,\theta, t)$ curves in the region $r^-_Y\le r\le R$.
We have thus
\begin{eqnarray*}
\int_0^{2\pi}{\bf q}_e^\bigstar (\psi^\tau_{\flat})
\,d\phi &\le & B\int_0^{2\pi}(  K^{\bf X}+ K^{N_e})(\psi^\tau_{\flat}) \,d\phi\\
&&\hbox{}
- \int_0^{2\pi} \left(b(\partial_\phi\psi_{\flat}^\tau)^2-
B(\partial_t\psi_{\flat}^\tau)^2\right)\,d\phi
\end{eqnarray*}
The Proposition follows now from Proposition~\ref{compderpro2}
for ${\omega_0}$ chosen appropriately, in view also 
of our remarks on the measure of integration.
\end{proof}

{\bf \emph{In what follows we shall consider $\omega_0$ to have been chosen
and absorb such factors into the constants $B$.}}

\begin{proposition}
\label{withx}
For $\psi^\tau_{\flat}$, we have
\begin{eqnarray*}
\int_{\tau'}^{\tau''}\left(\int_{\Sigma(\bar\tau)}  {\bf q}_e^\bigstar(\psi_{\flat}^\tau)\right)
d\bar\tau
&\le&  B\left( \int_{\Sigma(\tau')} 
 J^{N_e}_\mu(\psi^\tau_{\flat}) n^\mu_\Sigma
+\int_{\Sigma(\tau'')} J^{N_e}_\mu(\psi^\tau_{\flat}) n^\mu_\Sigma\right.\\
&&\hbox{}\left. +\int_{\mathcal{H}(\tau',\tau'')} 
J^{N_e}_\mu(\psi^\tau_{\flat}) n^\mu_{\mathcal{H}}
\right)
\\
&&\hbox{}+ B
\sup_{0\le \bar\tau \le \tau} \int_{\Sigma(\bar\tau)}  {\bf q}_e (\psi).
\end{eqnarray*}
\end{proposition}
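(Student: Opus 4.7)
The plan is to combine Proposition~\ref{lemma1} with the energy identity $(\ref{basicid})$ applied to the currents $J^{\bf X}$ and $J^{N_e}$ for the cut-off function $\psi^\tau_{\flat}$, converting the bulk integral of $K^{\bf X}+K^{N_e}$ into boundary contributions plus controllable error terms.

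First I would invoke Proposition~\ref{lemma1} to get
\[
\int_{\tau'}^{\tau''}\!\int_{\Sigma(\bar\tau)}{\bf q}_e^\bigstar(\psi^\tau_\flat)\,d\bar\tau
\le B\!\int_{\mathcal{R}(\tau',\tau'')}\!\bigl(K^{\bf X}+K^{N_e}\bigr)(\psi^\tau_\flat)
+B\sup_{0\le\bar\tau\le\tau}\int_{\Sigma(\bar\tau)}{\bf q}_e(\psi).
\]
Since $\omega_0$ has been fixed (per the boldfaced remark following Proposition~\ref{lemma1}), it is now harmless to absorb $\omega_0$-dependent factors into $B$.

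Next, since $\psi^\tau_\flat$ satisfies $(\ref{rhslow})$, the basic identity $(\ref{basicid})$ applied to $V={\bf X}$ and to $V=N_e$ gives
\[
\int_{\mathcal{R}(\tau',\tau'')}K^V(\psi^\tau_\flat)
=\int_{\Sigma(\tau')}J^V_\mu(\psi^\tau_\flat)n^\mu_\Sigma
-\int_{\Sigma(\tau'')}J^V_\mu(\psi^\tau_\flat)n^\mu_\Sigma
-\int_{\mathcal{H}(\tau',\tau'')}J^V_\mu(\psi^\tau_\flat)n^\mu_{\mathcal{H}}
+\int_{\mathcal{R}(\tau',\tau'')}\mathcal{E}^V(\psi^\tau_\flat).
\]
For $V=N_e$ the boundary terms on $\Sigma(\tau')$ and $\Sigma(\tau'')$ are directly the quantities appearing on the right-hand side of the Proposition, by $(\ref{muhimb})$. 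For $V={\bf X}$, the estimate $(\ref{yeniyIldIz})$ bounds the fluxes through $\Sigma(\tau')$, $\Sigma(\tau'')$, and $\mathcal{H}(\tau',\tau'')$ in absolute value by $B$ times the corresponding $J^{N_e}$-flux (after integration in $\phi$, which is available since the hypersurfaces are $\phi$-invariant). Note that one must keep the sign of the $\Sigma(\tau'')$ and $\mathcal{H}$ contributions on the right after moving them across the inequality, but these can only aid the estimate once we take absolute values, giving the three desired boundary terms with a universal constant $B$.

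For the error integrals $\int\mathcal{E}^V(\psi^\tau_\flat)$ with $V={\bf X}$ and $V=N_e$, I would invoke Proposition~\ref{exoume}, which produces a bound of the form $B\omega_0^{-8}\tau_{\rm step}^{-2}e^{-1}\sup_{0\le\bar\tau\le\tau}\int_{\Sigma(\bar\tau)}{\bf q}_e(\psi)$; since this is only being asserted for a particular (eventually fixed) choice of $\tau_{\rm step}$ and $e$, this contributes another copy of the sup term with coefficient absorbed into $B$, matching the second term on the right-hand side. Assembling these bounds yields the stated inequality.

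The only mildly subtle point is the book-keeping of signs and parameters: the bulk identity produces $K^V$ without a sign, but Proposition~\ref{lemma1} requires the bulk $K^{\bf X}+K^{N_e}$ integral as an upper bound for a nonnegative left-hand side, so we use only the four boundary terms (two on $\Sigma$, one on $\mathcal{H}$, plus $\mathcal{E}$) as an upper bound after absorbing signs by the triangle inequality with $(\ref{yeniyIldIz})$. No step is genuinely hard---the proposition is essentially just the energy identity plus Proposition~\ref{lemma1}, with the cut-off errors already dispatched by Proposition~\ref{exoume}.
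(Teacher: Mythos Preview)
Your proposal is correct and follows essentially the same approach as the paper: apply Proposition~\ref{lemma1}, then use the energy identity $(\ref{basicid})$ for $J^{\bf X}+J^{N_e}$, bound the boundary fluxes via $(\ref{yeniyIldIz})$, and control the $\mathcal{E}$-terms by Proposition~\ref{exoume}. The only cosmetic differences are that the paper applies $(\ref{basicid})$ directly to the sum rather than to ${\bf X}$ and $N_e$ separately, and that the paper records the error constant explicitly as $B\tau_{\rm step}^{-2}e^{-1}$ (noting ``It will be important for later that $eB\ll1$'') rather than immediately absorbing it into $B$; your invocation of $(\ref{muhimb})$ is also unnecessary since the $J^{N_e}$-fluxes appear verbatim on the right-hand side.
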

\begin{proof}
To prove Proposition~\ref{withx} from Proposition~\ref{lemma1},
note that from $(\ref{basicid})$ applied to the current $ J^{\bf X}+ 
 J^{N_e}$
we have
\begin{eqnarray*}
\int_{\mathcal{R}(\tau',\tau'')} ( K^{{\bf X}}+ K^{N_e})(\psi^\tau_{\flat}) &\le&
\left|\int_{\Sigma(\tau')}( J^{{\bf X}}_\mu+
 J^{N_e}_\mu) (\psi^\tau_{\flat}) n^\mu_\Sigma
 \right|\\
 &&\hbox{} + \left|\int_{\Sigma(\tau'')}( J^{{\bf X}}_\mu+
  J^{N_e}_\mu) (\psi^\tau_{\flat}) n^\mu_{\Sigma}
 \right.\\
 &&\hbox{}\left.+ 
\int_{\mathcal{H}(\tau',\tau'')}( J^{{\bf X}}_\mu+
 J^{N_e}_\mu) (\psi^\tau_{\flat}) n^\mu_{\mathcal{H}}
 \right| \\
&&\hbox{}
+\left| \int_{\mathcal{R}(\tau',\tau'')} \mathcal{E}^{{\bf X}+N_e}(\psi^\tau_{\flat})\right|\\
&\le&B\left( \int_{\Sigma(\tau')} 
 J^{N_e}_\mu(\psi^\tau_{\flat}) n^\mu_\Sigma
+\int_{\Sigma(\tau'')}  J^{N_e}_\mu(\psi^\tau_{\flat}) n^\mu_{\Sigma} \right.\\
&&\hbox{}\left.+\int_{\mathcal{H}(\tau',\tau'')} 
 J^{N_e}_\mu(\psi^\tau_{\flat}) n^\mu_{\mathcal{H}}
\right)\\
&&\hbox{}+B\tau_{\rm step}^{-2}e^{-1}
\sup_{0\le \bar\tau \le \tau} \int_{\Sigma(\bar\tau)}  {\bf q}_e (\psi).
\end{eqnarray*}
Above we have used $(\ref{yeniyIldIz})$ and Proposition~\ref{exoume}.
It will be important for later
that $eB\ll1$.
The Proposition now follows immediately. 
\end{proof}

\begin{proposition}
\label{notquiteconslow0}
\begin{align*}
&\int_{\mathcal{H}(\tau',\tau'')}  J^T_\mu (\psi^\tau_{\flat}) n^\mu_{\mathcal{H}}+
\int_{\Sigma(\tau'')}  J^T_\mu (\psi^\tau_{\flat}) n^\mu_\Sigma\\
&\le B  \int_{\Sigma(\tau')}  {\bf q}_e(\psi_{\flat})
+ B\tau_{\rm step}^{-2}e^{-1} \sup_{0\le \bar\tau\le \tau} \int_{\Sigma(\bar\tau)}
 {\bf q}_e(\psi).
\end{align*}
\end{proposition}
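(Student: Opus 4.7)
The plan is to simply apply the basic energy identity $(\ref{basicid})$ to the vector field $V=T$ and the function $\Psi=\psi^\tau_\flat$ on the region $\mathcal{R}(\tau',\tau'')$, and exploit the fact that $T$ is Killing together with the error estimates already established.

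First, since $T$ is Killing we have $K^T(\psi^\tau_\flat)=0$, so $(\ref{basicid})$ reduces to
\[
\int_{\mathcal{H}(\tau',\tau'')} J^T_\mu(\psi^\tau_\flat) n^\mu_{\mathcal{H}}
+\int_{\Sigma(\tau'')} J^T_\mu(\psi^\tau_\flat) n^\mu_{\Sigma}
=\int_{\Sigma(\tau')} J^T_\mu(\psi^\tau_\flat) n^\mu_{\Sigma}
+\int_{\mathcal{R}(\tau',\tau'')} \mathcal{E}^T(\psi^\tau_\flat).
\]
It suffices to bound the right hand side. For the $\Sigma(\tau')$ boundary term, the inequality $(\ref{inpartic})$ gives
\[
\int_{\Sigma(\tau')} J^T_\mu(\psi^\tau_\flat) n^\mu_{\Sigma} \le B\int_{\Sigma(\tau')} {\bf q}_e(\psi^\tau_\flat),
\]
which furnishes the first term on the right hand side of the desired inequality (modulo the mild abuse of notation $\psi_\flat$ versus $\psi^\tau_\flat$).

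For the spacetime error term, we apply Proposition~\ref{exoume} with $V=T$, which yields
\[
\int_{\mathcal{R}(\tau',\tau'')} \mathcal{E}^T(\psi^\tau_\flat)
\le B\omega_0^{-8}\tau_{\rm step}^{-2}e^{-1}\sup_{0\le\bar\tau\le\tau}\int_{\Sigma(\bar\tau)} {\bf q}_e(\psi).
\]
Since $\omega_0$ has already been chosen (per the boxed remark immediately following Proposition~\ref{lemma1}), the factor $\omega_0^{-8}$ is absorbed into $B$, giving the second term on the right hand side. Combining these two bounds proves the proposition.

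There is no real obstacle here, the statement is essentially a direct application of the $T$-energy identity, where the only two inputs--the pointwise bound of $J^T$ by ${\bf q}_e$ on the Cauchy slice $\Sigma(\tau')$, and the control of the cutoff-induced inhomogeneity $F^\tau_\flat$--have already been packaged in $(\ref{inpartic})$ and Proposition~\ref{exoume}, respectively. Note in particular that we are \emph{not} asserting positivity of the left hand side (which would indeed fail in the ergoregion); we only need the upper bound, so the superradiant character of $T$ is irrelevant at this step. The non-triviality of this inequality lies entirely in the prior analysis of the cutoff error, where the finite propagation behaviour of $\psi^\tau_\flat$ from the cutoff regions $\mathcal{R}^{\pm}$ had to be compensated by the Schwartz decay of the low-frequency projection.
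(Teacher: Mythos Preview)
Your proof is correct and follows essentially the same approach as the paper: apply the energy identity $(\ref{basicid})$ with $V=T$, use $K^T=0$, bound the $\Sigma(\tau')$ flux by ${\bf q}_e$ via $(\ref{inpartic})$, and control the inhomogeneous error via Proposition~\ref{exoume}. Your remark on absorbing $\omega_0^{-8}$ into $B$ matches the convention fixed just before this section.
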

\begin{proof}
This follows from
the divergence
identity $(\ref{basicid})$ for the current $J^T(\psi^\tau_{\flat})$ 
and the fact that $K^T=0$ and the inequality
\[
\int_{\mathcal{R}(\tau',\tau'')} 
\mathcal{E}^T(\psi^\tau_{\flat}) \le 
B\tau_{\rm step}^{-2}e^{-1}
\sup_{0\le \bar\tau \le \tau} \int_{\Sigma(\bar\tau)}  {\bf q}_e (\psi)
\]
of Proposition~\ref{exoume}.
\end{proof}

\begin{proposition}
\label{prong}
\begin{eqnarray*}
\int_{\mathcal{H}(\tau',\tau'')}  J_{\mu}^{N_e} (\psi_{\flat}^\tau)n^\mu_{\mathcal{H}}  +
\int_{\Sigma(\tau'')}  J_{\mu}^{N_e} (\psi_{\flat}^\tau) n^\mu_\Sigma &\le&
\int_{\Sigma(\tau')}  J_{\mu}^{N_e} (\psi_{\flat}^\tau) n^\mu_\Sigma \\
&&\hbox{}
+ Be \int_{\tau'}^{\tau''}\left( \int_{\Sigma(\bar\tau)}
  {\bf q}_e^\bigstar(\psi_{\flat}^\tau)\right) d\bar\tau \\
&&\hbox{}+B\tau_{\rm step}^{-2}e^{-1} \sup_{0\le \bar\tau \le \tau} \int_{\Sigma(\bar\tau)} 
 {\bf q}_e (\psi).
\end{eqnarray*}
\end{proposition}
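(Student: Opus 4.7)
The plan is to apply the basic divergence identity $(\ref{basicid})$ to the current $J^{N_e}_\mu(\psi^\tau_\flat)$ on the spacetime region $\mathcal{R}(\tau',\tau'')$, and then estimate the two resulting bulk integrals (the $K$-term and the $\mathcal{E}$-term) using the tools already developed in Sections~\ref{basicidsec}--\ref{errest}. Concretely, $(\ref{basicid})$ applied to $V = N_e$ and $\Psi=\psi^\tau_\flat$ gives
\begin{align*}
\int_{\mathcal{H}(\tau',\tau'')} J^{N_e}_\mu(\psi^\tau_\flat) n^\mu_{\mathcal{H}}
+ \int_{\Sigma(\tau'')} J^{N_e}_\mu(\psi^\tau_\flat) n^\mu_\Sigma
&= \int_{\Sigma(\tau')} J^{N_e}_\mu(\psi^\tau_\flat) n^\mu_\Sigma \\
&\quad - \int_{\mathcal{R}(\tau',\tau'')} K^{N_e}(\psi^\tau_\flat)
+ \int_{\mathcal{R}(\tau',\tau'')} \mathcal{E}^{N_e}(\psi^\tau_\flat),
\end{align*}
so the task reduces to bounding the two bulk terms on the right.

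For the $K$-term, I would invoke the pointwise bound $(\ref{muhim})$, namely $-K^{N_e}(\psi^\tau_\flat) \le e B \, {\bf q}_e^\bigstar(\psi^\tau_\flat)$, which holds uniformly throughout $\mathcal{R}$. Converting the spacetime integral into an iterated integral over $\Sigma(\bar\tau)$-slices using the coarea-type identity recorded at the end of Section~\ref{genclass} (together with $(\ref{xwroeides})$, so that $-g(\nabla t^*,\nabla t^*) \sim 1$) yields
\[
- \int_{\mathcal{R}(\tau',\tau'')} K^{N_e}(\psi^\tau_\flat)
\le B e \int_{\tau'}^{\tau''} \left(\int_{\Sigma(\bar\tau)} {\bf q}_e^\bigstar(\psi^\tau_\flat)\right) d\bar\tau,
\]
which is precisely the second term in the conclusion. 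Crucially the factor of $e$ comes from the $e$ in $N_e = T + eY$ through the identity $K^{N_e} = e K^Y$; this is what makes the error small when we later feed this estimate into a bootstrap.

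For the $\mathcal{E}$-term, I would apply Proposition~\ref{exoume} (with $V = N_e$), whose estimate $(\ref{xamla9})$ gives, under the assumption $(\ref{upo9et})$ on $\tau', \tau''$, a bound
\[
\int_{\mathcal{R}(\tau',\tau'')} \mathcal{E}^{N_e}(\psi^\tau_\flat)
\le B \omega_0^{-8} \tau_{\rm step}^{-2} e^{-1} \sup_{0\le \bar\tau \le \tau} \int_{\Sigma(\bar\tau)} {\bf q}_e(\psi).
\]
Since $\omega_0$ has now been fixed in terms of $M$ alone (per the parenthetical immediately following Proposition~\ref{lemma1}), the factor $\omega_0^{-8}$ is absorbed into $B$, and we obtain exactly the third term in the proposition.

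Assembling the three contributions yields the stated inequality. The calculation is essentially bookkeeping; the only genuine subtlety is ensuring that the factor of $e$ (not merely a constant $B$) is preserved in front of the bulk ${\bf q}_e^\bigstar$-term, as this smallness is what will eventually permit the term to be absorbed when combined with Proposition~\ref{withx} in a bootstrap for $\sup_{\bar\tau}\int_{\Sigma(\bar\tau)} {\bf q}_e(\psi)$. All ingredients being already in place, I do not anticipate any additional obstacle.
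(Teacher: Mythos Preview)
Your proposal is correct and follows exactly the same approach as the paper, which states simply that the proposition follows from the divergence identity $(\ref{basicid})$ for $J^{N_e}$ together with the bounds $(\ref{muhim})$ and $(\ref{xamla9})$. Your write-up merely unpacks these references in detail, including the absorption of $\omega_0^{-8}$ into $B$ after $\omega_0$ has been fixed.
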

\begin{proof}
This follows just from the divergence identity $(\ref{basicid})$ for $J^{N_e}$ together
with the bounds $(\ref{muhim})$ and  $(\ref{xamla9})$.
\end{proof}

\begin{proposition}
\label{forlow}
\begin{align*}
& \int_{\Sigma(\tau'')}  {\bf q}_e(\psi_{\flat}^\tau)
 +
 \int_{\tau'}^{\tau''} \left(\int_{\Sigma(\bar\tau)}  {\bf q}_e^\bigstar(\psi_{\flat}^\tau)\right) d\bar\tau\\
&\le B \int_{\Sigma(\tau')}  {\bf q}_e (\psi^\tau_{\flat})+
 B  \sup_{0\le \bar\tau \le \tau} \int_{\Sigma(\bar\tau)}  {\bf q}_e (\psi).
\end{align*}
\end{proposition}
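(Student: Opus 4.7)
The plan is to combine Propositions~\ref{withx} and~\ref{prong} via a standard absorption argument, exploiting the explicit factor of $e$ multiplying the bulk term in Proposition~\ref{prong}. The crucial point, made possible by the identity $K^{N_e}=e\,K^Y$, is that the terminal flux of $J^{N_e}(\psi_\flat^\tau)$ is controlled by the bulk integral of ${\bf q}_e^\bigstar(\psi_\flat^\tau)$ \emph{with a factor of $e$ available}, whereas the bulk integral is controlled (via Proposition~\ref{withx}) by the terminal flux without such a factor; this asymmetry is what closes the loop.

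First, I would insert the bound from Proposition~\ref{prong} into the right-hand side of Proposition~\ref{withx}, treating the terminal flux $\int_{\Sigma(\tau'')} J^{N_e}_\mu(\psi_\flat^\tau) n^\mu_\Sigma$ and the horizon flux $\int_{\mathcal{H}(\tau',\tau'')} J^{N_e}_\mu(\psi_\flat^\tau) n^\mu_\mathcal{H}$ simultaneously. Since $\tau_{\rm step}$ and $e$ only depend on $M$ (and $\omega_0$ has already been absorbed into $B$), the explicit factor $\tau_{\rm step}^{-2}e^{-1}$ appearing in the universal error term of Proposition~\ref{prong} and Proposition~\ref{withx} is just another constant $B$. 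This yields
\[
\int_{\tau'}^{\tau''}\!\!\int_{\Sigma(\bar\tau)}{\bf q}_e^\bigstar(\psi_\flat^\tau)\,d\bar\tau
\le B\!\int_{\Sigma(\tau')}\!\! J^{N_e}_\mu(\psi_\flat^\tau) n^\mu_\Sigma
+Be\!\int_{\tau'}^{\tau''}\!\!\int_{\Sigma(\bar\tau)}{\bf q}_e^\bigstar(\psi_\flat^\tau)\,d\bar\tau
+B\sup_{0\le \bar\tau\le \tau}\int_{\Sigma(\bar\tau)}{\bf q}_e(\psi).
\]

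Second, by further restricting $e$ (and with it $\epsilon_{\rm close}\ll e$) so that $Be\le \tfrac12$, the bulk term on the right may be absorbed into the left. The result is the desired estimate for $\int_{\tau'}^{\tau''}\int_{\Sigma(\bar\tau)}{\bf q}_e^\bigstar(\psi_\flat^\tau)\,d\bar\tau$ in terms of $\int_{\Sigma(\tau')}J^{N_e}_\mu(\psi_\flat^\tau)n^\mu_\Sigma$ and the universal error. Substituting this bulk bound back into Proposition~\ref{prong} gives
\[
\int_{\Sigma(\tau'')}J^{N_e}_\mu(\psi_\flat^\tau)n^\mu_\Sigma
\le B\int_{\Sigma(\tau')}J^{N_e}_\mu(\psi_\flat^\tau)n^\mu_\Sigma
+B\sup_{0\le\bar\tau\le\tau}\int_{\Sigma(\bar\tau)}{\bf q}_e(\psi),
\]
and, adding this to the previously obtained bulk estimate, one gets both pieces of the conclusion, provided one converts between $J^{N_e}_\mu(\psi_\flat^\tau)n^\mu_\Sigma$ and ${\bf q}_e(\psi_\flat^\tau)$ using the equivalence $(\ref{muhimb})$.

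The only step that requires any care is the absorption. The constant $B$ in Proposition~\ref{prong} depends on $M$ (and on the already-fixed parameters $\omega_0$, $\tau_{\rm step}$) but is manifestly independent of $e$; the $e$ in front of the bulk term comes from $K^{N_e}=eK^Y$ together with the bound $(\ref{muhim})$, so shrinking $e$ genuinely shrinks that coefficient. One must also verify that the universal error $B\sup_{0\le\bar\tau\le\tau}\int_{\Sigma(\bar\tau)}{\bf q}_e(\psi)$ is \emph{not} attempted to be absorbed — it involves the full solution $\psi$ (and the supremum over the entire interval $[0,\tau]$, not just $[\tau',\tau'']$), so it legitimately stays on the right-hand side and becomes the second term in the final inequality. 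With these two points in hand, the proposition follows by two direct applications of Propositions~\ref{withx} and~\ref{prong} and the equivalence $(\ref{muhimb})$.
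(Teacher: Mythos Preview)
Your proposal is correct and follows essentially the same route as the paper: combine Propositions~\ref{withx} and~\ref{prong}, absorb the bulk term using $Be\ll 1$, and convert fluxes of $J^{N_e}$ to ${\bf q}_e$ via $(\ref{muhimb})$. One small remark: $\tau_{\rm step}$ is not in fact fixed at this stage of the paper (it is chosen later in the bootstrap), but this does not affect your argument since the $B$ multiplying the $e$-factor in Proposition~\ref{prong} comes from $(\ref{muhim})$ and depends only on $M$, so the absorption goes through regardless.
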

\begin{proof}
This follows immediately from Propositions~\ref{withx} and~\ref{prong} in view of
$(\ref{muhimb})$ and the fact that
for $e$ small we have $Be\ll1$.
\end{proof}

\begin{proposition}
\label{notquiteconslow}
\begin{align*}
&
\int_{\Sigma(\tau'')}  J^T_\mu (\psi^\tau_{\flat}) n^\mu_\Sigma\\
&\le B  \int_{\Sigma(\tau')} {\bf q}_e(\psi^\tau_{\flat})
+ (B\tau_{\rm step}^{-2}e^{-1}+B \epsilon_{\rm close}e^{-1})
 \sup_{0\le \bar\tau\le \tau} \int_{\Sigma(\bar\tau)} {\bf q}_e(\psi).
\end{align*}
\end{proposition}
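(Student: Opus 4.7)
The strategy is to derive the proposition by subtracting the (not necessarily nonnegative) horizon flux term from Proposition~\ref{notquiteconslow0} and controlling its negative part by exploiting the near-null character of $T$ on $\mathcal{H}^+$. Concretely, applying~$(\ref{basicid})$ to $J^T(\psi^\tau_\flat)$ with $K^T=0$ gives
\[
\int_{\Sigma(\tau'')}J^T_\mu(\psi^\tau_\flat)n^\mu_\Sigma
= -\int_{\mathcal{H}(\tau',\tau'')}J^T_\mu(\psi^\tau_\flat)n^\mu_{\mathcal{H}}
+\int_{\Sigma(\tau')}J^T_\mu(\psi^\tau_\flat)n^\mu_\Sigma
+\int_{\mathcal{R}(\tau',\tau'')}\mathcal E^T(\psi^\tau_\flat).
\]
The last two terms are already bounded by $B\int_{\Sigma(\tau')}{\bf q}_e(\psi^\tau_\flat)$ via~$(\ref{inpartic})$ and by $B\tau_{\rm step}^{-2}e^{-1}\sup_{\bar\tau}\int_{\Sigma(\bar\tau)}{\bf q}_e(\psi)$ via Proposition~\ref{exoume}, exactly as in the derivation of Proposition~\ref{notquiteconslow0}. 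The new task is to bound $-\int_{\mathcal{H}}J^T_\mu n^\mu_{\mathcal{H}}$.

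For this I would use Assumption~\ref{lastassump}: on $\mathcal{H}^+$ the null generator is $T+\gamma\Phi$ with $|\gamma|<\epsilon_{\rm close}$, and $n_{\mathcal{H}}$ is a positive multiple of it, say $n_{\mathcal{H}}=c(T+\gamma\Phi)$. Decomposing $T=(T+\gamma\Phi)-\gamma\Phi$ in the vector-field slot of $J^T$ yields
\[
J^T_\mu n^\mu_{\mathcal{H}} = c^{-1}J^{n_{\mathcal{H}}}_\mu n^\mu_{\mathcal{H}} - \gamma J^\Phi_\mu n^\mu_{\mathcal{H}}.
\]
Since $n_{\mathcal{H}}$ is null, the first term equals $c^{-1}(n_{\mathcal{H}}\psi^\tau_\flat)^2\ge 0$. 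The second term is estimated using that $\Phi$ and $n_{\mathcal{H}}$ are both tangent to $\mathcal{H}^+$, so $J^\Phi_\mu n^\mu_{\mathcal{H}}$ is a quadratic form in tangential derivatives; the estimates~$(\ref{tangcont})$--$(\ref{tangcont2})$ together with $(\ref{muhimb})$ then give $|J^\Phi_\mu n^\mu_{\mathcal{H}}|\le Be^{-1}J^{N_e}_\mu n^\mu_{\mathcal{H}}$. Combining with $|\gamma|<\epsilon_{\rm close}$:
\[
-\int_{\mathcal{H}(\tau',\tau'')}J^T_\mu n^\mu_{\mathcal{H}} \le B\epsilon_{\rm close}e^{-1}\int_{\mathcal{H}(\tau',\tau'')}J^{N_e}_\mu n^\mu_{\mathcal{H}}.
\]

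The remaining step is to control the $J^{N_e}$ horizon flux. This is exactly the content of Proposition~\ref{prong}, which bounds it by $\int_{\Sigma(\tau')}J^{N_e}_\mu n^\mu_\Sigma$, a factor of $Be$ times the spacetime integral $\int_{\tau'}^{\tau''}\int_{\Sigma(\bar\tau)}{\bf q}_e^\bigstar(\psi^\tau_\flat)\,d\bar\tau$, and the now-standard $B\tau_{\rm step}^{-2}e^{-1}\sup\int{\bf q}_e(\psi)$ error. The spacetime $Be\int{\bf q}_e^\bigstar$ term is absorbed using Proposition~\ref{forlow}, giving
\[
\int_{\mathcal{H}(\tau',\tau'')}J^{N_e}_\mu n^\mu_{\mathcal{H}} \le B\int_{\Sigma(\tau')}{\bf q}_e(\psi^\tau_\flat)+B\sup_{0\le\bar\tau\le\tau}\int_{\Sigma(\bar\tau)}{\bf q}_e(\psi),
\]
where the $Be$ from Proposition~\ref{prong} has been absorbed by the constant. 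Multiplying this through by $B\epsilon_{\rm close}e^{-1}$ and invoking the standing smallness $\epsilon_{\rm close}\ll e$ (so that the coefficient $B\epsilon_{\rm close}e^{-1}\cdot B\le B$ of $\int_{\Sigma(\tau')}{\bf q}_e(\psi^\tau_\flat)$ is harmless) produces the required estimate.

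The main obstacle is the horizon computation: in stark contrast to Schwarzschild, where $T$ itself is null on $\mathcal{H}^+$ so $J^T$ has a favourable sign on the horizon, in the rotating setting $T$ fails to be null on $\mathcal{H}^+$ and $J^T$ is genuinely sign-indefinite there. The key observation that rescues the argument is that Assumption~\ref{lastassump} forces the deviation of $T$ from the null generator to lie in the span of $\Phi$, and to be of order $\epsilon_{\rm close}$; the stable $N_e$ energy flux at the horizon then affords control of the resulting error at the cost of an explicit factor $\epsilon_{\rm close}e^{-1}$.
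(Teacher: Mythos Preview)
Your proposal is correct and follows essentially the same route as the paper: the energy identity for $J^T$ (equivalently Proposition~\ref{notquiteconslow0}), the one-sided horizon bound $-\int_{\mathcal{H}}J^T_\mu n^\mu_{\mathcal{H}}\le B\epsilon_{\rm close}e^{-1}\int_{\mathcal{H}}J^{N_e}_\mu n^\mu_{\mathcal{H}}$, and then Propositions~\ref{prong} and~\ref{forlow} to estimate the $J^{N_e}$ horizon flux. Your derivation of the one-sided bound via the decomposition $T=c^{-1}n_{\mathcal{H}}-\gamma\Phi$ is exactly the mechanism underlying the paper's stated inequality (which the paper records without re-deriving, having already carried out the analogous computation in the proof of Proposition~\ref{notquitecons}).
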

\begin{proof}
This follows from Propositions~\ref{notquiteconslow0},~\ref{prong}
and~\ref{forlow} together with the one-sided bound
\[
-\int_{\mathcal{H}(\tau',\tau'')}  J^T_\mu (\psi^\tau_{\flat}) n^\mu_{\mathcal{H}}
\le B\epsilon_{\rm close}e^{-1}
\int_{\mathcal{H}(\tau',\tau'')}  J^{N_e}_\mu (\psi^\tau_{\flat})n^\mu_{\mathcal{H}}.
\]
\end{proof}

\subsection{Estimates for $\psi^\tau_{\sharp}$}
\label{estest}
We assume always $(\ref{upo9et})$.

\begin{proposition}
\label{notquitecons}
For $\psi^\tau_{\sharp}$,
\begin{align*}
\frac12\int_{\mathcal{H}(\tau',\tau'')} \left|J^T_\mu(\psi^\tau_{\sharp})n^\mu_{\mathcal{H}}\right|+
\int_{\Sigma(\tau'')}  J^T_\mu (\psi^\tau_{\sharp}) n^\mu
\le \int_{\Sigma(\tau')}  J^T_\mu (\psi^\tau_{\sharp}) n^\mu\\
+ B\tau_{\rm step}^{-2}\epsilon^{-1}
\sup_{0\le \bar\tau\le \tau} \int_{\Sigma(\bar\tau)} {\bf q}_e(\psi)
+ B\epsilon_{\rm close} e^{-1}\sup_{0\le \bar\tau\le  \tau}
\int _{\mathcal{H}(\bar\tau,\bar\tau+1) } J_\mu^{N_e}(\psi)n^\mu_{\mathcal{H}}.
\end{align*}
\end{proposition}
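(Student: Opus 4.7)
The plan is to apply the basic divergence identity $(\ref{basicid})$ to $V=T$ and $\Psi = \psi^\tau_\sharp$, noting that $T$ is Killing for $g$ by Assumption~2 of Section~\ref{genclass}, so $K^T\equiv 0$. This yields
\[
\int_{\mathcal{H}(\tau',\tau'')} J^T_\mu(\psi^\tau_\sharp)n^\mu_\mathcal{H}+\int_{\Sigma(\tau'')} J^T_\mu(\psi^\tau_\sharp)n^\mu_\Sigma=\int_{\Sigma(\tau')} J^T_\mu(\psi^\tau_\sharp)n^\mu_\Sigma+\int_{\mathcal{R}(\tau',\tau'')}\mathcal{E}^T(\psi^\tau_\sharp),
\]
and Proposition~\ref{exoume} (with $\omega_0$ now fixed and absorbed into $B$) bounds the bulk error by $B\tau_{\rm step}^{-2}e^{-1}\sup_{\bar\tau}\int_{\Sigma(\bar\tau)}{\bf q}_e(\psi)$. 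The entire remaining task is to replace $\int_\mathcal{H} J^T_\mu(\psi^\tau_\sharp)n^\mu_\mathcal{H}$ by $\tfrac12\int_\mathcal{H}|J^T_\mu(\psi^\tau_\sharp)n^\mu_\mathcal{H}|$ at the cost of a small extra error, i.e.~to show the negative part of the horizon flux is controlled by $B\epsilon_{\rm close}e^{-1}\sup_{\bar\tau}\int_{\mathcal{H}(\bar\tau,\bar\tau+1)}J^{N_e}_\mu(\psi)n^\mu_\mathcal{H}$.

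To attack this, I would expand the horizon integrand using that $n_\mathcal{H}$ is (a $\rho_t$-invariant, bounded multiple of) $T+\gamma\Phi$:
\[
J^T_\mu(\psi^\tau_\sharp)n^\mu_\mathcal{H}\sim (T\psi^\tau_\sharp)^2+\gamma(T\psi^\tau_\sharp)(\Phi\psi^\tau_\sharp)-\tfrac12 g(T,n_\mathcal{H})\,g^{\alpha\beta}\partial_\alpha\psi^\tau_\sharp\partial_\beta\psi^\tau_\sharp.
\]
The null condition $g(n_\mathcal{H},n_\mathcal{H})=0$ together with $|\gamma|<\epsilon_{\rm close}$ (Assumption~3) forces $|g(T,n_\mathcal{H})|\le B\epsilon_{\rm close}$, so the last term is $O(\epsilon_{\rm close})$ times $|\nabla\psi^\tau_\sharp|_g^2$, which is in turn bounded along $\mathcal{H}^+$ by $Be^{-1}J^{N_e}_\mu(\psi^\tau_\sharp)n^\mu_\mathcal{H}$ by $(\ref{tangcont})$--$(\ref{tangcont2})$ and the transverse-timelike character of $N_e$. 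Applying Cauchy--Schwarz to the cross term and integrating gives the two-sided bounds
\[
\int_\mathcal{H} J^T_\mu(\psi^\tau_\sharp)n^\mu_\mathcal{H}\gtrless (1\mp O(\epsilon_{\rm close}))\!\!\int_\mathcal{H}\!(T\psi^\tau_\sharp)^2\mp B\epsilon_{\rm close}\!\!\int_\mathcal{H}\!(\Phi\psi^\tau_\sharp)^2\mp B\epsilon_{\rm close}e^{-1}\!\!\sup_{\bar\tau}\!\!\int_{\mathcal{H}(\bar\tau,\bar\tau+1)}\!\!J^{N_e}_\mu(\psi^\tau_\sharp)n^\mu_\mathcal{H}.
\]
The last $J^{N_e}(\psi^\tau_\sharp)$ can be replaced by $J^{N_e}(\psi)$ using the fact that on $\mathcal{H}(\tau',\tau'')$ the cutoff $\chi^\tau_{\hbox{\Rightscissors}}\equiv 1$ by $(\ref{upo9et})$, so that $\psi^\tau_\sharp=\psi-\psi^\tau_\flat$ there, plus a Proposition~\ref{compderpro}-type estimate on $\psi^\tau_\flat$ along $\mathcal{H}^+$.

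Finally, Proposition~\ref{compderpro} applied to $\psi^\tau_\sharp$ trades
\[
\int_\mathcal{H}(\Phi\psi^\tau_\sharp)^2\le B\omega_0^{-2}\!\!\int_\mathcal{H}(T\psi^\tau_\sharp)^2+B\omega_0^{-4}e^{-1}\!\!\sup_{\bar\tau}\!\!\int_{\mathcal{H}(\bar\tau,\bar\tau+1)}\!\!J^{N_e}_\mu(\psi)n^\mu_\mathcal{H}.
\]
Because $\omega_0$ is now fixed and we may arrange $\epsilon_{\rm close}\ll\omega_0$, the term $\epsilon_{\rm close}\omega_0^{-2}\int(T\psi^\tau_\sharp)^2$ is absorbed into the main $(T\psi^\tau_\sharp)^2$ term in the lower bound. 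Combining the two inequalities yields
\[
\int_\mathcal{H} J^T_\mu(\psi^\tau_\sharp)n^\mu_\mathcal{H}\ge \tfrac12\!\!\int_\mathcal{H}|J^T_\mu(\psi^\tau_\sharp)n^\mu_\mathcal{H}|-B\epsilon_{\rm close}e^{-1}\!\!\sup_{\bar\tau}\!\!\int_{\mathcal{H}(\bar\tau,\bar\tau+1)}\!\!J^{N_e}_\mu(\psi)n^\mu_\mathcal{H},
\]
which substituted into the rearranged basic identity gives the proposition. The main obstacle is the careful orchestration of the smallness hierarchy $\epsilon_{\rm close}\ll\omega_0\ll 1$ so that the cross term $\gamma(T\psi^\tau_\sharp)(\Phi\psi^\tau_\sharp)$ is genuinely negligible after one trades $\Phi$-derivatives for $T$-derivatives via the non-superradiant frequency support of $\psi^\tau_\sharp$; this is precisely where Proposition~\ref{compderpro} is essential, and where the $\epsilon_{\rm close}e^{-1}$ error on the horizon of $\psi$ first enters.
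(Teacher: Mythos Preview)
Your approach is the same as the paper's: apply the energy identity $(\ref{basicid})$ with $V=T$, bound $\mathcal{E}^T$ via Proposition~\ref{exoume}, and show the horizon flux is almost nonnegative using Proposition~\ref{compderpro} to trade $\Phi$-derivatives for $T$-derivatives. This is correct.

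There is, however, one unnecessary detour containing a flawed step. You expand
\[
J^T_\mu(\psi^\tau_\sharp)n^\mu_\mathcal{H}=(T\psi^\tau_\sharp)(n_\mathcal{H}\psi^\tau_\sharp)-\tfrac12 g(T,n_\mathcal{H})\,g^{\alpha\beta}\partial_\alpha\psi^\tau_\sharp\partial_\beta\psi^\tau_\sharp
\]
and then claim the trace term is $O(\epsilon_{\rm close})$ times a quantity controlled by $J^{N_e}_\mu n^\mu_\mathcal{H}$. This last claim is false as stated: on the null hypersurface $\mathcal{H}^+$, the flux $J^{N_e}_\mu n^\mu_\mathcal{H}$ controls only the \emph{tangential} derivatives $(n_\mathcal{H}\psi)^2$ and $|\nabb\psi|^2$ (this is exactly what $(\ref{tangcont})$--$(\ref{tangcont2})$ give), whereas $g^{\alpha\beta}\partial_\alpha\psi\partial_\beta\psi$ contains the transverse null derivative $\underline{L}\psi$, which is \emph{not} so controlled.

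The reason your argument still works is that the trace term in fact \emph{vanishes identically}: since $\Phi$ is tangent to $\mathcal{H}^+$ and $n_\mathcal{H}$ is the null generator, $g(\Phi,n_\mathcal{H})=0$; combined with $g(n_\mathcal{H},n_\mathcal{H})=0$ this gives $g(T,n_\mathcal{H})=g(n_\mathcal{H}-\gamma\Phi,n_\mathcal{H})=0$ exactly. Hence $J^T_\mu n^\mu_\mathcal{H}=c[(T\psi)^2+\gamma(T\psi)(\Phi\psi)]$ on $\mathcal{H}^+$, and the paper proceeds directly from this expression to the one-sided bound without ever meeting a transverse derivative. Once you observe this, your subsequent step of converting $J^{N_e}(\psi^\tau_\sharp)$ back to $J^{N_e}(\psi)$ on the horizon also becomes unnecessary, and the remaining argument coincides with the paper's.
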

\begin{proof}
From $(\ref{basicid})$ 
applied to $\psi_{\sharp}^\tau$ with $V=T$ we have
\begin{align*}
&\int_{\Sigma(\tau'')} J^T_\mu (\psi^\tau_{\sharp}) n^\mu_\Sigma\\
=&\int_{\Sigma(\tau')}  J^T_\mu (\psi^\tau_{\sharp}) n^\mu_\Sigma
-\int_{\mathcal{H}(\tau',\tau'')}  J^T_\mu (\psi^\tau_{\sharp}) n^\mu_{\mathcal{H}}\\
&+ \int_{\mathcal{R}(\tau',\tau'')}\mathcal{E}^T(\psi^\tau_{\sharp}).
\end{align*}
On the other hand, by $(\ref{planeeq})$, we have
the one-sided bound
\begin{eqnarray*}
-\int_{\mathcal{H}(\tau',\tau'')}  J^T_\mu(\psi^\tau_{\sharp}) n^\mu_{\mathcal{H}}
&\le&	B\epsilon_{\rm close}
\int_{\mathcal{H}(\tau',\tau'')} \partial_t\psi^\tau_{\sharp}\partial_\phi\psi^\tau_{\sharp}
\\
&&\hbox{}
-b\int_{\mathcal{H}(\tau',\tau'')}(\partial_t \psi^\tau_{\sharp})^2 \\
&\le&B \epsilon_{\rm close} \int_{\mathcal{H}(\tau',\tau'')}(\partial_\phi \psi^\tau_{\sharp})^2
\\
&&\hbox{}
-b\int_{\mathcal{H}(\tau',\tau'')}(\partial_t \psi^\tau_{\sharp})^2 
\end{eqnarray*}
and thus by Proposition~\ref{compderpro} 
we have
\begin{eqnarray*}
-\int_{\mathcal{H}(\tau',\tau'')}  J^T_\mu(\psi^\tau_{\sharp}) n^\mu_{\mathcal{H}}
\le B\epsilon_{\rm close} e^{-1} \sup_{0\le \bar\tau\le  \tau}
\int _{\mathcal{H}(\bar\tau,\bar\tau+1)}  J_\mu^{N_e}(\psi)n^\mu_{\mathcal{H}}.
\end{eqnarray*}
The desired result now follows from Proposition~\ref{exoume}.
\end{proof}

\begin{proposition}
\label{thisisthe}
\begin{align*}
&\int_{\Sigma(\tau'')}  J^{N_e}_\mu(\psi^\tau_{\sharp})n^\mu_{\Sigma}
+\int_{\mathcal{R}(\tau',\tau'')} K^{N_e}(\psi^\tau_{\sharp})\\
&\le \int_{\Sigma(\tau')}  J^{N_e}_\mu(\psi^\tau_{\sharp}) n^\mu_{\Sigma}
+B\tau^{-2}_{\rm step} e^{-1}\sup_{0\le \bar\tau\le \tau} \int_{\Sigma(\bar\tau)}
 {\bf q}_e(\psi).
\end{align*}
\end{proposition}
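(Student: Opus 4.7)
\medskip

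\textbf{Proof proposal.} The plan is to apply the fundamental divergence identity $(\ref{basicid})$ with $V = N_e$ to the cut-off and high-frequency projected solution $\psi^\tau_{\sharp}$. Using the fact that $\psi^\tau_{\sharp}$ solves $(\ref{rhshigh})$ rather than the homogeneous wave equation, the identity reads
\begin{align*}
\int_{\mathcal{H}(\tau',\tau'')} J_\mu^{N_e}(\psi^\tau_{\sharp}) n_\mathcal{H}^\mu
+ \int_{\Sigma(\tau'')} J^{N_e}_\mu (\psi^\tau_{\sharp}) n_\Sigma^\mu
+\int_{\mathcal{R}(\tau',\tau'')} K^{N_e}(\psi^\tau_{\sharp})\\
= \int_{\Sigma(\tau')} J^{N_e}_\mu (\psi^\tau_{\sharp}) n^\mu_\Sigma+
\int_{\mathcal{R}(\tau',\tau'')} \mathcal{E}^{N_e}(\psi^\tau_{\sharp}).
\end{align*}

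The first step is to dispose of the horizon flux by showing it is non-negative. For sufficiently small $\epsilon_{\rm close}\ll e$, the vector field $N_e = T + eY$ is timelike up to and including $\mathcal{H}^+$ by $(\ref{muhimb})$, and $n_{\mathcal{H}}$ is a future-directed null generator of the horizon by construction (Assumption~\ref{lastassump}). By the dominant energy condition for $T_{\mu\nu}$, this forces
\[
J^{N_e}_\mu(\psi^\tau_{\sharp}) n_{\mathcal{H}}^\mu \geq 0
\]
pointwise on $\mathcal{H}^+$, so the horizon term can be dropped from the left-hand side, yielding the desired inequality modulo the error $\mathcal{E}^{N_e}(\psi^\tau_\sharp)$.

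Next, I would bound the error term directly by the estimate $(\ref{ynlla9})$ of Proposition~\ref{exoume} applied with $V = N_e$:
\[
\int_{\mathcal{R}(\tau',\tau'')} \mathcal{E}^{N_e}(\psi^\tau_{\sharp})
\leq B\, \omega_0^{-8}\, \tau_{\rm step}^{-2}\, e^{-1}\, \sup_{0\le\bar\tau\le\tau}\int_{\Sigma(\bar\tau)} {\bf q}_e(\psi).
\]
Since $\omega_0$ has been fixed in the preceding section (per the boldfaced convention following Proposition~\ref{lemma1}), the factor $\omega_0^{-8}$ is absorbed into $B$, giving exactly the required right-hand-side error $B \tau_{\rm step}^{-2} e^{-1}\sup_{0\le\bar\tau\le\tau}\int_{\Sigma(\bar\tau)} {\bf q}_e(\psi)$. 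Combining this with the divergence identity yields the claimed inequality.

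There is essentially no genuine obstacle here: the proposition is a clean consequence of the basic identity together with the sign of the horizon flux and the error bound already established. The only point that must be verified carefully is the sign of $J^{N_e}_\mu n_{\mathcal{H}}^\mu$, which rests on the full force of Assumption~\ref{lastassump} guaranteeing that the horizon remains null with null generator in the span of $T$ and $\Phi$, together with the smallness of $\epsilon_{\rm close}$ relative to $e$ ensuring that $N_e$ remains timelike at $\mathcal{H}^+$. Contrast this with Proposition~\ref{notquitecons}, where the analogous step failed for the vector field $T$ (which becomes null, and indeed spacelike in the ergoregion, forcing the delicate detour through the frequency decomposition); here the red-shift vector field $Y$ has been added precisely so that no such delicacy arises.
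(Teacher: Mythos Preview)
Your proof is correct and follows exactly the same approach as the paper: apply the basic identity $(\ref{basicid})$ for $N_e$, drop the horizon flux by its nonnegativity, and invoke $(\ref{ynlla9})$ with $\omega_0$ absorbed into $B$. The paper's own proof is a single sentence stating precisely these three ingredients.
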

\begin{proof}
This is the energy identity $(\ref{basicid})$ for $N_e$ in view
of the nonnegativity of the flux on the horizon and
the estimate $(\ref{ynlla9})$. 
\end{proof}

\begin{proposition} 
\label{forhigh}
\begin{align*}
& b\int_{\Sigma(\tau'')}  {\bf q}_e(\psi_{\sharp}^\tau)
 +
b \int_{\tau'}^{\tau''}\left( \int_{\Sigma(\bar\tau)} 
 {\bf q}_{e}(\psi_{\sharp}^\tau) \right)d\bar\tau\\
&\le (\tau''-\tau') \int_{\Sigma(\tau_{\rm step})} 
 J^T_\mu (\psi^\tau_{\sharp}) n^\mu_{\Sigma}
+B\int_{\Sigma(\tau')}
 {\bf q}_e(\psi^\tau_{\sharp})\\
&+
(\tau''-\tau'+1) \left( B \tau_{\rm step}^{-2}e^{-1}
\sup_{0\le \bar\tau \le \tau} \int_{\Sigma(\bar\tau)}  {\bf q}_e (\psi)
+B\epsilon_{\rm close} e^{-1}\sup_{0\le \bar\tau\le  \tau}
\int _{\mathcal{H}(\bar\tau,\bar\tau+1) }
 J_\mu^{N_e}(\psi)n^\mu_{\mathcal{H}}\right).
\end{align*}
\end{proposition}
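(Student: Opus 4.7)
\emph{Plan.} The strategy is to combine the three estimates already derived for $\psi^\tau_{\sharp}$: the $N_e$-energy identity (Proposition~\ref{thisisthe}), the approximate conservation of the $T$-flux (Proposition~\ref{notquitecons}), and the pointwise coercivity $(\ref{muh2})$, namely ${\bf q}_e \le B(K^{N_e} + J^T_\mu n^\mu_\Sigma)$. The first controls the $N_e$-flux plus a spacetime $K^{N_e}$ term; the second controls the $T$-flux uniformly in time; the third expresses that ${\bf q}_e$ is the sum of exactly these two sign-good pieces, so the proposition amounts to assembling them.

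To bound the spacetime piece $b\int_{\tau'}^{\tau''}\int_{\Sigma(\bar\tau)}{\bf q}_e(\psi^\tau_\sharp)\,d\bar\tau$, I would integrate $(\ref{muh2})$ pointwise over $\mathcal{R}(\tau',\tau'')$, using the volume equivalence of Section~\ref{genclass}. The resulting $\int_{\mathcal R}K^{N_e}$ term is controlled via Proposition~\ref{thisisthe} (discarding the nonnegative $\int_{\Sigma(\tau'')}J^{N_e}_\mu n^\mu$ on the left), which converts by $(\ref{muhimb})$ into a bound in terms of $B\int_{\Sigma(\tau')}{\bf q}_e(\psi^\tau_\sharp)$ plus the error term of that proposition. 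For the $\int_{\mathcal R}J^T_\mu n^\mu_\Sigma\,dV$ term, I would iterate Proposition~\ref{notquitecons}: since its error terms are global suprema that do not depend on the choice of endpoints, successive applications with $\tau'\!\to\!\tau_{\rm step}$ and $\tau''\!\to\!\bar\tau$ give
\[
\int_{\Sigma(\bar\tau)} J^T_\mu(\psi^\tau_\sharp)n^\mu \;\le\; \int_{\Sigma(\tau_{\rm step})} J^T_\mu(\psi^\tau_\sharp)n^\mu + {\rm err}
\]
uniformly for $\bar\tau\in[\tau_{\rm step},\tau-\tau_{\rm step}]$. Integrating in $\bar\tau\in[\tau',\tau'']$ together with the bounded measure factor $h=(-g(\nabla t^*,\nabla t^*))^{-1/2}\in[b,B]$ produces the principal $(\tau''-\tau')\int_{\Sigma(\tau_{\rm step})}J^T_\mu n^\mu$ together with a $(\tau''-\tau')$-linear error term of the form appearing in the target inequality.

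For the pointwise contribution $b\int_{\Sigma(\tau'')}{\bf q}_e(\psi^\tau_\sharp)$, I would apply Proposition~\ref{thisisthe} once more, this time keeping the $\tau''$-flux on the left, and bound the potentially-negative $-\int_{\mathcal R}K^{N_e}$ using $(\ref{muhim})$. Because $-K^{N_e}$ is supported in the bounded region $\{r^-_Y\le r\le r^+_Y\}$ where $(\ref{prebest})$ gives ${\bf q}_e^\bigstar\sim {\bf q}_e$, this yields
\[
-\int_{\mathcal R(\tau',\tau'')}K^{N_e}(\psi^\tau_\sharp) \;\le\; eB\int_{\tau'}^{\tau''}\!\int_{\Sigma(\bar\tau)}{\bf q}_e(\psi^\tau_\sharp)\,d\bar\tau.
\]
Inserting the spacetime bound already obtained above, smallness of $e$ allows the resulting $O(eB^2)$-multiple of the time-integrated ${\bf q}_e$ to be absorbed on the left. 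Converting $N_e$-fluxes to ${\bf q}_e$ via $(\ref{muhimb})$ one last time, and consolidating errors (both those linear in $\tau''-\tau'$ and those purely from the initial data) into the $(\tau''-\tau'+1)$ prefactor of the statement, completes the argument.

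The main obstacle is the sign-indefiniteness of $J^T_\mu n^\mu_\Sigma$ in the ergoregion, which makes the passage $\int_{\mathcal R(\tau',\tau'')}J^T\,dV \leadsto (\tau''-\tau')\int_{\Sigma(\tau_{\rm step})} J^T$ delicate: the 4-volume integral is not monotone in the time-slice integral, and one must be careful that the negative part $[J^T_\mu n^\mu]_-$ does not generate a circular $\int\!\!\int {\bf q}_e$-term with an uncontrolled constant. This is handled using Assumption~\ref{lastassump} together with $(\ref{planeeq})$: writing $T=(T+\gamma\Phi)-\gamma\Phi$ near $\mathcal H^+$ shows that the ergoregion contribution to $[J^T_\mu n^\mu]_-$ comes with a prefactor of order $\epsilon_{\rm close}$, which after the usual conversion between ${\bf q}$ and ${\bf q}_e$ produces precisely the $B\epsilon_{\rm close}e^{-1}$-type error in the statement, absorbable into the LHS under the standing assumption $\epsilon_{\rm close}\ll e$.
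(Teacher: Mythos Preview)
Your approach is correct and is essentially the paper's own: combine the pointwise coercivity $(\ref{muh2})$ with Proposition~\ref{thisisthe} (for the $K^{N_e}$ part) and Proposition~\ref{notquitecons} applied with initial time $\tau_{\rm step}$ and final time $\bar\tau$ (for the $J^T$ part), converting fluxes via $(\ref{muhimb})$.

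Two remarks on execution. First, there is no need to discard the term $\int_{\Sigma(\tau'')}J^{N_e}_\mu n^\mu_\Sigma$ from Proposition~\ref{thisisthe} and then recover $b\int_{\Sigma(\tau'')}{\bf q}_e$ by a second pass: simply retain it throughout, and $(\ref{muhimb})$ converts it directly into the desired left-hand contribution. Second, the ``main obstacle'' you flag in your final paragraph is self-inflicted by the order in which you integrate. If you first integrate $(\ref{muh2})$ over $\Sigma(\bar\tau)$ (not over $\mathcal{R}$), apply Proposition~\ref{notquitecons} to bound $\int_{\Sigma(\bar\tau)}J^T_\mu n^\mu_\Sigma$ by $\int_{\Sigma(\tau_{\rm step})}J^T_\mu n^\mu_\Sigma+\text{err}$, and only then integrate in $\bar\tau$, the principal term $(\tau''-\tau')\int_{\Sigma(\tau_{\rm step})}J^T_\mu n^\mu_\Sigma$ appears with no measure factor at all. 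The measure-factor ambiguity then falls only on the $K^{N_e}$ integral, where $(\ref{muhim})$ shows the sign-indefinite part is $O(e)$ and hence absorbable --- no separate ergoregion argument is needed. (Also note: your decomposition $T=(T+\gamma\Phi)-\gamma\Phi$ is not literally available off $\mathcal{H}^+$, since $\gamma$ is defined only on the horizon; the correct version of that argument is simply $(\ref{cws00})$, which gives $[J^T_\mu n^\mu_\Sigma]_-\le B\epsilon_{\rm close}\,{\bf q}$ directly.)
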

\begin{proof}
The proof follows from Propositions~\ref{notquitecons} (applied
with $\tau'=\tau_{\rm step}$ and $\tau''=\bar\tau$),
Proposition~\ref{thisisthe} applied to the given $\tau'$ and $\tau''$,  $(\ref{muh2})$ 
and $(\ref{muhimb})$.
\end{proof}

\section{The bootstrap}
Let $C$ be given, and consider the set $\mathcal{T}\subset[0,\infty)$ of
all $\tau$ such that for $0\le\bar\tau\le \tau$, we have
\begin{equation}
\label{bstrap}
\int_{\Sigma(\bar\tau)} {\bf q}_e(\psi) \le C \int_{\Sigma(0)}{\bf q}_e(\psi).
\end{equation}
Theorem~\ref{kevtriko} would follow from 
\begin{proposition}
\label{mainprop}
For suitable choice of $C$, then $\mathcal{T}=[0,\infty)$,
i.e.~$(\ref{bstrap})$ holds for all $\tau\ge 0$.
\end{proposition}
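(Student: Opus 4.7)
The plan is a continuity/bootstrap argument. Observe first that $\mathcal{T}$ contains $0$ and is closed in $[0,\infty)$ by continuity of $\tau\mapsto\int_{\Sigma(\tau)}{\bf q}_e(\psi)$, so it is enough to show openness: given $\bar\tau\in\mathcal{T}$ the task reduces to upgrading the bootstrap to $\int_{\Sigma(\bar\tau)}{\bf q}_e(\psi)\le \tfrac{C}{2}\int_{\Sigma(0)}{\bf q}_e(\psi)$, after which $\mathcal{T}$ extends past $\bar\tau$ by local energy continuity.

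First I would handle the short-time regime $\tau\in[0,2\tau_{\rm step}]$ by applying the basic identity $(\ref{basicid})$ to $\psi$ itself with $V=N_e$. Since $\Box_g\psi=0$, the bulk error $\mathcal{E}^{N_e}$ vanishes, so using $(\ref{kiautoetsi})$ and $(\ref{muhim})$ for the sign of $K^{N_e}$, non-negativity of the horizon flux of $J^{N_e}$, $(\ref{best})$, $(\ref{muhimb})$, and Gronwall's inequality, one obtains a short-time bound $\int_{\Sigma(\tau)}{\bf q}_e(\psi)\le B^\circ\int_{\Sigma(0)}{\bf q}_e(\psi)$ on $[0,2\tau_{\rm step}]$, with $B^\circ$ depending only on $M$ and the eventually-fixed parameters.

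For $\bar\tau>2\tau_{\rm step}$ I would apply the frequency decomposition with $\tau=\bar\tau$ and bound $\int_{\Sigma(\bar\tau-\tau_{\rm step})}{\bf q}_e(\psi)$ via $(\ref{fusika2})$ in terms of $\int_{\Sigma(\bar\tau-\tau_{\rm step})}{\bf q}_e(\psi^{\bar\tau}_\flat)$ and $\int_{\Sigma(\bar\tau-\tau_{\rm step})}{\bf q}_e(\psi^{\bar\tau}_\sharp)$. The low-frequency piece is controlled directly by Proposition~\ref{forlow} with $\tau'=\tau_{\rm step}$. The high-frequency piece is more delicate: Proposition~\ref{forhigh} alone yields only a spacetime-integrated bound whose principal term carries the prefactor $(\bar\tau-2\tau_{\rm step})$. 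I would resolve this by pigeonholing to extract $\tau^\star\in[\tau_{\rm step},\bar\tau-\tau_{\rm step}]$ at which $\int_{\Sigma(\tau^\star)}{\bf q}_e(\psi^{\bar\tau}_\sharp)$ is bounded by the time-average, then invoking the near-monotonicity in Proposition~\ref{thisisthe} to transfer that bound to the right endpoint. The resulting contributions at $\tau_{\rm step}$, both for $\psi^{\bar\tau}_\flat$ and for $J^T_\mu(\psi^{\bar\tau}_\sharp)$, are then converted via Proposition~\ref{yenicomp'} and the short-time bound from Step~1 into $B\int_{\Sigma(0)}{\bf q}_e(\psi)$, plus errors of the schematic form $(\tau_{\rm step}^{-2}+\epsilon_{\rm close})e^{-1}\omega_0^{-\alpha}\sup_{[0,\bar\tau]}\int_{\Sigma}{\bf q}_e(\psi)$; by the bootstrap, these error contributions are bounded by $\epsilon_\star C\int_{\Sigma(0)}{\bf q}_e(\psi)$ for $\epsilon_\star$ small.

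Summing and extending from $\bar\tau-\tau_{\rm step}$ back to $\bar\tau$ via the short-time estimate yields an estimate of the form
\[
\int_{\Sigma(\bar\tau)}{\bf q}_e(\psi)\le B^\star\int_{\Sigma(0)}{\bf q}_e(\psi)+\epsilon_\star C\int_{\Sigma(0)}{\bf q}_e(\psi),
\]
where $B^\star$ depends only on the (eventually fixed) parameters. Fixing the parameters in the order dictated by the preceding sections---$\omega_0$, then $e$, then $\tau_{\rm step}$, finally $\epsilon_{\rm close}$---one can arrange $\epsilon_\star<\tfrac{1}{4}$. Setting $C=4B^\star$ then improves the bootstrap to $\int_{\Sigma(\bar\tau)}{\bf q}_e(\psi)\le\tfrac{C}{2}\int_{\Sigma(0)}{\bf q}_e(\psi)$, closing the argument. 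The \emph{main obstacle} is the pointwise-in-time control for $\psi^{\bar\tau}_\sharp$: Proposition~\ref{forhigh} is only an integrated estimate with a bad $(\tau''-\tau')$ prefactor, and it is precisely the pairing with the red-shift-based near-monotonicity of Proposition~\ref{thisisthe} (whose validity rests on $K^{N_e}\ge 0$ near the horizon together with non-negativity of the horizon $J^{N_e}$-flux) that rescues the argument.
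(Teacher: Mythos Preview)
Your overall bootstrap strategy and the short-time estimate (your Step~1, which is the paper's Proposition~\ref{willneed}) are correct and match the paper. Your treatment of $\psi^{\bar\tau}_\flat$ is also essentially right, though note that Proposition~\ref{forlow} as \emph{stated} carries an error $B\sup_{[0,\bar\tau]}\int_\Sigma{\bf q}_e(\psi)$ with a constant that is not small, so citing it literally would not close the bootstrap; one must go back to its proof (Propositions~\ref{withx} and~\ref{prong}) to see that in the bound for $\int_{\Sigma(\tau'')}{\bf q}_e(\psi^{\bar\tau}_\flat)$ alone the $\sup$-term actually comes with a factor of~$e$.

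The genuine gap is in $\psi^{\bar\tau}_\sharp$. You propose to pigeonhole on the full interval $[\tau_{\rm step},\bar\tau-\tau_{\rm step}]$ to locate a good time $\tau^\star$, and then use Proposition~\ref{thisisthe} to transport the bound from $\tau^\star$ to the right endpoint. But Proposition~\ref{thisisthe} is not a true monotonicity statement: since $K^{N_e}$ fails to be nonnegative in the annulus $r^-_Y\le r\le r^+_Y$ (cf.~$(\ref{muhim})$), it only yields
\[
\int_{\Sigma(\tau'')}{\bf q}_e(\psi^{\bar\tau}_\sharp)\ \le\ B\int_{\Sigma(\tau^\star)}{\bf q}_e(\psi^{\bar\tau}_\sharp)\ +\ eB\int_{\tau^\star}^{\tau''}\!\!\int_{\Sigma}{\bf q}_e(\psi^{\bar\tau}_\sharp)\ +\ \text{(small)}\cdot\sup.
\]
The only available control on the spacetime integral on the right is again Proposition~\ref{forhigh}, which carries the prefactor $(\tau''-\tau')\sim\bar\tau$. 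Thus the propagation step produces a term of order $e\,\bar\tau$, unbounded in $\bar\tau$, and the bootstrap does not close.

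The paper avoids this by never propagating $\psi^{\bar\tau}_\sharp$ over a long interval. It applies Proposition~\ref{forhigh} separately on each \emph{short} subinterval $[n\tau_{\rm step},(n+1)\tau_{\rm step}]$, so that $(\tau''-\tau')=\tau_{\rm step}$ is a fixed constant. Pigeonholing on each such subinterval produces a $\tau_n$ with good control; the boundary term $\int_{\Sigma(n\tau_{\rm step})}{\bf q}_e(\psi^{\bar\tau}_\sharp)$ is bounded via Proposition~\ref{yenicomp'} and the bootstrap, and after time-averaging it acquires the small coefficient $\tau_{\rm step}^{-1}$. The same interval-by-interval pigeonhole is run for $\psi^{\bar\tau}_\flat$ (there it extracts a time where ${\bf q}_e^\bigstar$ is small, to be paired with Proposition~\ref{notquiteconslow}); one then combines via~$(\ref{fusika2})$ at $\tau_n$ and extends to all of $[\tau_n,\tau_{n+1}]$ by Proposition~\ref{willneed}.
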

For this it suffices to show that $\mathcal{T}$ is non-empty, open
and closed. The non-emptyness is clear for sufficiently large $C$.
It thus suffices to show that $C$ can be chosen such that for all
$\tau\in \mathcal{T}$, then
\begin{equation}
\label{impbstrap}
\int_{\Sigma(\bar\tau)} {\bf q}_e(\psi) \le \frac{C}2 \int_{\Sigma(0)} {\bf q}_e(\psi)
\end{equation}
for $0\le\bar\tau\le \tau$.

\subsection{Evolution for time $\tau_{\rm step}$}
We will need the following proposition
\begin{proposition}
\label{willneed}
Let $\tau_{\rm step}$ be given. 
For small enough $e$ depending on $\tau_{\rm step}$, $\epsilon_{\rm close}\ll e$,
it follows that for all $\tau_0$ and $\bar\tau\in[\tau_0,\tau_0+\tau_{\rm step}]$,
\begin{equation}
\label{Cstab}
\int_{\Sigma(\bar\tau)}
{\bf q}_e(\psi) \le 2 \int_{\Sigma(\tau_{0})} {\bf q}_e(\psi).
\end{equation}
\end{proposition}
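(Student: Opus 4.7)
The plan is to apply the basic divergence identity $(\ref{basicid})$ to $\psi$ itself with the timelike vector field $V = N_e = T + eY$. Since $\Box_g\psi = 0$, the bulk error $\mathcal{E}^{N_e}$ vanishes, so the identity reads
\[
\int_{\Sigma(\bar\tau)} J^{N_e}_\mu(\psi)\, n^\mu_\Sigma + \int_{\mathcal{H}(\tau_0,\bar\tau)} J^{N_e}_\mu(\psi)\, n^\mu_{\mathcal{H}} + \int_{\mathcal{R}(\tau_0,\bar\tau)} K^{N_e}(\psi) = \int_{\Sigma(\tau_0)} J^{N_e}_\mu(\psi)\, n^\mu_\Sigma.
\]
For $\epsilon_{\rm close} \ll e$, the field $N_e$ is everywhere timelike up to the boundary, and the null generator of $\mathcal{H}^+$ lies in the span of $T$ and $\Phi$ by Assumption~\ref{lastassump}. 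Hence the horizon flux is pointwise nonnegative and can be discarded.

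It remains to control the bulk integral of $K^{N_e}(\psi) = eK^Y(\psi)$. By $(\ref{kiautoetsi})$, this quantity is pointwise $\ge b\,{\bf q}_e^\bigstar(\psi) \ge 0$ in the region $r \le r^-_Y$, and it vanishes identically for $r \ge r^+_Y$; so the only sign-indefinite contribution comes from the compact slab $r^-_Y \le r \le r^+_Y$, where $(\ref{becbec2})$ gives $|K^{N_e}| \le eB\sum_i(\partial_i\psi)^2$. The key observation is that in this slab $T$ is timelike bounded away from null: the Schwarzschild bound $1 - 2M/r \ge 1 - 2M/r^-_Y > 0$ is a constant depending only on $M$ and survives an $\epsilon_{\rm close}$-perturbation, placing the slab strictly outside the ergoregion. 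Consequently
\[
{\bf q}_e(\psi) \sim {\bf q}(\psi) \sim \sum_i (\partial_i\psi)^2 \quad \text{in } r^-_Y \le r \le r^+_Y,
\]
with constants depending only on $M$ and independent of $e$. Combining with $(\ref{muhimb})$, and setting $E(\tilde\tau) \doteq \int_{\Sigma(\tilde\tau)} J^{N_e}_\mu(\psi)\, n^\mu_\Sigma$, we obtain
\[
E(\bar\tau) \le E(\tau_0) + eB \int_{\tau_0}^{\bar\tau} E(\tilde\tau)\, d\tilde\tau.
\]

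A Gronwall argument then yields $E(\bar\tau) \le E(\tau_0)\,(1 - eB\tau_{\rm step})^{-1}$ for $\bar\tau \in [\tau_0, \tau_0 + \tau_{\rm step}]$, and translating back via the equivalence $J^{N_e}_\mu n^\mu_\Sigma \sim {\bf q}_e(\psi)$ yields the claimed bound $(\ref{Cstab})$ provided $e$ is taken sufficiently small in terms of $\tau_{\rm step}$ (and of the fixed $M$-dependent equivalence constants).

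The main obstacle is exactly the sign-indefinite bulk term in the slab $r^-_Y \le r \le r^+_Y$. What rescues the argument, and motivates the smallness of $e$, is that this slab sits strictly outside the ergoregion, so all of the naturally appearing energy densities $(J^T_\mu n^\mu_\Sigma,\ J^{N_e}_\mu n^\mu_\Sigma,\ {\bf q},\ {\bf q}_e)$ are mutually equivalent there with constants depending only on $M$. Thus the prefactor $e$ in $K^{N_e} = eK^Y$ genuinely quantifies the smallness of the error and is not cancelled by any hidden $e^{-1}$ coming from the degeneracy of ${\bf q}_e$ relative to ${\bf q}$ near $\mathcal{H}^+$; this is what makes the short-time bound closeable uniformly.
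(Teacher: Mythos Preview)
Your proof is correct and follows essentially the same approach as the paper: apply the energy identity $(\ref{basicid})$ for $N_e$, drop the nonnegative horizon flux, bound $-K^{N_e}$ by $eB$ times the energy density (the paper invokes $(\ref{muhim})$ and $(\ref{muhimb})$ directly, while you make explicit the underlying reason, namely that the sign-indefinite slab $r^-_Y\le r\le r^+_Y$ lies strictly outside the ergoregion so that ${\bf q}_e\sim{\bf q}$ there with $e$-independent constants), and close by a Gronwall-type argument. The only cosmetic difference is that the paper uses the exponential form $\exp(eB\tau_{\rm step})$ whereas you state the equivalent bound $(1-eB\tau_{\rm step})^{-1}$.
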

\begin{proof}
We write the energy identity $(\ref{basicid})$
for the vector field $N_e$ to obtain
\begin{align*}
\int_{\mathcal{H}(\tau_0,\bar\tau)} J^{N_e}_\mu(\psi) n^\mu_{\mathcal{H}}
+\int_{\Sigma(\bar\tau)} J^{N_e}_\mu(\psi) n_{\Sigma}^\mu+
\int_{\mathcal{R}(\tau_0,\bar\tau)}
 K^{N_e}(\psi) \\
=\int_{\Sigma(\tau_0)} J^{N_e}_\mu(\psi) n_\Sigma^\mu.
\end{align*}
By $(\ref{muhim})$, $(\ref{muhimb})$, and the nonnegativity of the first term
on the left hand side above, we obtain
\[
\int_{\Sigma(\bar\tau)}  {\bf q}_e(\psi) \le
e B\int_{\tau_0}^{\bar\tau}\int_{\Sigma(\hat\tau)}  {\bf q}_e(\psi) d\hat\tau +
\int_{\Sigma(\tau_0)}  {\bf q}_e(\psi)
\]
and thus
\[
\int_{\Sigma(\bar\tau)}  {\bf q}_e(\psi) \le
\exp(eB(\bar\tau-\tau_0)) \int_{\Sigma(\tau_0)} {\bf q}_e(\psi).
\]
The result follows thus if $e$ is chosen
so that
\[
\exp(eB\tau_{\rm step}) \le 2.
\] 
\end{proof}

\subsection{Estimate for the local horizon flux of $ J^{N_e}_\mu(\psi)$}
A corollary of the proof of the previous Proposition is the following
\begin{proposition}
\label{horprop0}
\[
\int_{\mathcal{H}(\bar\tau,\bar\tau+1)}  J_\mu^{N_e}(\psi) n^\mu \le
B\int_{\Sigma(\bar\tau)} {\bf q}_e(\psi).
\]
\end{proposition}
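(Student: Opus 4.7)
The plan is to run exactly the same energy identity that was used in the proof of Proposition~\ref{willneed}, but this time retain the horizon flux on the left-hand side (rather than throw it away by nonnegativity) and throw away instead the nonnegative term $\int_{\Sigma(\bar\tau+1)}J^{N_e}_\mu(\psi)n^\mu_\Sigma$. Specifically, applying the basic identity $(\ref{basicid})$ to $\psi$ with $V=N_e$ between $\Sigma(\bar\tau)$ and $\Sigma(\bar\tau+1)$ gives
\[
\int_{\mathcal{H}(\bar\tau,\bar\tau+1)} J^{N_e}_\mu(\psi)n^\mu_{\mathcal{H}}
+\int_{\Sigma(\bar\tau+1)} J^{N_e}_\mu(\psi)n^\mu_{\Sigma}
+\int_{\mathcal{R}(\bar\tau,\bar\tau+1)} K^{N_e}(\psi)
=\int_{\Sigma(\bar\tau)}J^{N_e}_\mu(\psi)n^\mu_\Sigma.
\]
By $(\ref{muhimb})$ the flux on $\Sigma(\bar\tau+1)$ is nonnegative, and (in view of Assumption~\ref{lastassump}) so is the horizon flux, since $N_e=T+eY=T$ on $\mathcal{H}^+$ (where $Y$ vanishes for $r\ge r_Y^+$--actually $Y=0$ in a neighborhood of $\mathcal{H}^+$ is not quite right; what matters is that $T$ lies in the same half-plane as the null generator, giving $J^T_\mu n^\mu_{\mathcal{H}}\ge 0$ up to lower order, combined with the definite sign for $eY$).

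Dropping the positive term on $\Sigma(\bar\tau+1)$, moving the bulk term to the right, and using $|K^{N_e}|\le eB\,{\bf q}_e(\psi)$ (which follows from $(\ref{muhim})$ together with ${\bf q}_e^\bigstar\le B{\bf q}_e$ on the support of $K^{N_e}$, since that support lies in $r\le r^+_Y$), one obtains
\[
\int_{\mathcal{H}(\bar\tau,\bar\tau+1)} J^{N_e}_\mu(\psi)n^\mu_{\mathcal{H}}
\le \int_{\Sigma(\bar\tau)} J^{N_e}_\mu(\psi)n^\mu_\Sigma
+ eB\int_{\bar\tau}^{\bar\tau+1}\!\!\int_{\Sigma(\hat\tau)}{\bf q}_e(\psi)\,d\hat\tau,
\]
where I have used the coarea-type identity from Section~\ref{genclass} to turn the spacetime integral into an iterated one.

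Now the first term is bounded by $B\int_{\Sigma(\bar\tau)}{\bf q}_e(\psi)$ via $(\ref{muhimb})$, and the bulk term is controlled by invoking Proposition~\ref{willneed} with $\tau_0=\bar\tau$ and $\tau_{\rm step}=1$: this gives $\int_{\Sigma(\hat\tau)}{\bf q}_e(\psi)\le 2\int_{\Sigma(\bar\tau)}{\bf q}_e(\psi)$ for all $\hat\tau\in[\bar\tau,\bar\tau+1]$, so that the second term is $\le 2eB\int_{\Sigma(\bar\tau)}{\bf q}_e(\psi)$. Combining gives the desired estimate, with the overall constant absorbed into $B$.

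The only non-routine point in the plan is the sign of the horizon flux of $J^{N_e}$: one needs $J^{N_e}_\mu(\psi)n^\mu_{\mathcal{H}}\ge 0$ (or at least to pick up only a small bad contribution). This is exactly what $(\ref{tangcont})$ and $(\ref{tangcont2})$ together with Assumption~\ref{lastassump} deliver, since those inequalities imply that $J^{N_e}_\mu n^\mu_{\mathcal{H}}$ dominates all tangential squared derivatives on the horizon. Everything else in the argument is a direct re-reading of the proof of Proposition~\ref{willneed}.
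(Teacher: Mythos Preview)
Your proof is correct and follows essentially the same approach as the paper, which simply says the statement is ``a corollary of the proof of the previous Proposition'': re-run the energy identity $(\ref{basicid})$ for $N_e$ on $\mathcal{R}(\bar\tau,\bar\tau+1)$, this time retain the horizon term, drop the nonnegative flux through $\Sigma(\bar\tau+1)$, bound $-K^{N_e}$ via $(\ref{muhim})$, and control the resulting bulk integral by the Gronwall estimate from the proof of Proposition~\ref{willneed}.

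One small clean-up: your digression about the sign of the horizon flux is both confused and unnecessary. It is confused because $Y$ does \emph{not} vanish on $\mathcal{H}^+$ (indeed $Y$ is transverse there; this is the whole point of the red-shift vector field). It is unnecessary because for the upper bound you are proving you never need $J^{N_e}_\mu n^\mu_{\mathcal{H}}\ge 0$; you only need the nonnegativity of the $\Sigma(\bar\tau+1)$ flux, which is $(\ref{muhimb})$. (The horizon flux \emph{is} nonnegative, simply because $N_e$ is future timelike and $n_{\mathcal{H}}$ is future null, but that fact plays no role here.)
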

Of course, if we choose $e$ to be sufficiently small as in the previous
Proposition, we may replace $B$ with $2$.

\subsection{Bounds for $\psi^\tau_{\flat}$}
From Proposition~\ref{forlow} applied for $\tau'=n\tau_{\rm step}$,
$\tau''=(n+1)\tau_{\rm step}$, $n=1,2,\ldots, n_f$ where $n_f$ is the
largest integer such that $(n_f+1)\tau_{\rm step}\le \tau-\tau_{\rm step}$,
Proposition~\ref{yenicomp'} and
the bootstrap assumption $(\ref{bstrap})$, 
it follows
that in each interval $[n\tau_{\rm step},(n+1)\tau_{\rm step}]$,
we can find a $\tau_n$ such that
\begin{eqnarray*}
\int_{\Sigma(\tau_n)}  {\bf q}_e^\bigstar (\psi^\tau_{\flat}) 
&\le& \frac{B}{\tau_{\rm step}}\sup_{0\le \bar\tau\le \tau}\int_{\Sigma(\bar\tau)} {\bf q}_e(\psi)
+\frac{B}{\tau_{\rm step}}\int_{\Sigma(n\tau_{\rm step})} {\bf q}_e(\psi^\tau_{\flat})\\
&\le& B\tau_{\rm step}^{-1} C \int_{\Sigma(0)} {\bf q}_e(\psi)
\end{eqnarray*}
for appropriate choice of $\tau_{\rm step}$.
On the other hand, by Proposition~\ref{notquiteconslow} applied with
$\tau'=\tau_{\rm step}$, $\tau''=\tau_n$,
Proposition~\ref{yenicomp'}, 
$(\ref{Cstab})$ applied to $\tau_0=0$ and again to $\tau_0=\tau_{\rm step}$,
and the bootstrap assumption $(\ref{bstrap})$,
we have
\begin{eqnarray}
\label{lowteliko}
\nonumber
\int_{\Sigma(\tau_n)} J_\mu^T(\psi^\tau_{\flat}) n^\mu_\Sigma
&\le& B  \int_{\Sigma(\tau_{\rm step})} {\bf q}_e(\psi^\tau_{\flat})\\
\nonumber
&&\hbox{}
+(B\tau^{-2}_{\rm step}e^{-1}+B\epsilon_{\rm close}e^{-1})
\sup_{0\le \bar\tau\le \tau}\int_{\Sigma(\bar\tau)} {\bf q}_e(\psi)\\
\nonumber
&\le&B\sup_{0\le\bar\tau\le2\tau_{\rm step}}
\int_{\Sigma(\bar\tau)} {\bf q}_e(\psi)\\
\nonumber
&&\hbox{}
+(B\tau^{-2}_{\rm step}e^{-1}+B\epsilon_{\rm close}e^{-1})
\sup_{0\le \bar\tau\le \tau}\int_{\Sigma(\bar\tau)} {\bf q}_e(\psi)\\
\nonumber
&\le&(B+B\tau_{\rm step}^{-2}e^{-1}C+ B\epsilon_{\rm close}e^{-1}C)\\
&&\hbox{}
\cdot\int_{\Sigma(0)} {\bf q}_e(\psi).
\end{eqnarray}
It follows that
\begin{eqnarray}
\label{edwavafora}
\nonumber
\int_{\Sigma(\tau_n)} {\bf q}_e(\psi^\tau_{\flat}) 
&\le &
B\int_{\Sigma(\tau_n)}  {\bf q}_e^\bigstar (\psi^\tau_{\flat}) +
\int_{\Sigma(\tau_n)} J_\mu^T(\psi^\tau_{\flat}) n^\mu_{\Sigma}\\
\nonumber
&\le&
(B+B\epsilon_{\rm close}e^{-1}C +
B\tau_{\rm step}^{-2}e^{-1}C+B\tau_{\rm step}^{-1}C)\\
&&\hbox{}\cdot \int_{\Sigma(0)} {\bf q}_e(\psi).
\end{eqnarray}

\subsection{Bounds for $\psi^\tau_{\sharp}$}
\label{boubou}
Since 
\[
\int_{\Sigma(\bar\tau)}  J^T_\mu (\psi^\tau_{\sharp}) \le
B\int_{\Sigma(\bar\tau)} {\bf q}_e(\psi^\tau_{\sharp}), 
\]
it follows from Proposition~\ref{forhigh}
applied to $\tau'= n\tau_{\rm step}$, $\tau''=(n+1)\tau_{\rm step}$, $n=1,2,\ldots, n_f$, 
Proposition~\ref{horprop0} and $(\ref{Cstab})$ applied (twice) with $\tau_0=0$ and
$\tau_0=\tau_{\rm step}$, and Proposition~\ref{yenicomp'}
that in each interval  $[n\tau_{\rm step},(n+1)\tau_{\rm step}]$, 
we can find an $\tau_n$ such that
\begin{eqnarray}
\label{veohighteliko}
\nonumber
b \int_{\Sigma(\tau_n)} {\bf q}_{e}(\psi^\tau_{\sharp})&\le&  \int_{\Sigma(\tau_{\rm
step})} J_\mu^T(\psi^\tau_{\sharp})n^\mu_{\Sigma}
+\tau_{\rm step}^{-1}B\int_{\Sigma(n\tau_{\rm step})} {\bf q}_e(\psi^{\tau}_{\sharp})\\
\nonumber
&&\hbox{}
+B\tau^{-2}_{\rm step}e^{-1}C\int_{\Sigma(0)} {\bf q}_e(\psi) \\
\nonumber
&&\hbox{}
+B\epsilon_{\rm close} \sup_{0\le \bar\tau\le \tau-1}
\int_{\mathcal{H}(\bar\tau,\bar\tau+1)} J_\mu^{N_e}(\psi)n^\mu_{\mathcal{H}}\\
\nonumber
&\le& B  \int_{\Sigma(\tau_{\rm
step})} {\bf q}_e(\psi^\tau_{\sharp})+
\tau_{\rm step}^{-1}B\int_{\Sigma(n\tau_{\rm step})} {\bf q}_e(\psi^{\tau}_{\sharp})\\
\nonumber
&&\hbox{}+ (B\tau_{\rm step}^{-2}e^{-1}C+B\epsilon_{\rm close}C)\int_{\Sigma(0)} 
{\bf q}_e(\psi)\\
\nonumber
&\le& B\sup_{0\le \bar\tau\le 2\tau_{\rm step}}
 \int_{\Sigma(\bar\tau)} {\bf q}_e(\psi)\\
 \nonumber
 &&\hbox{} +(\tau_{\rm step}^{-1}BC+\tau_{\rm step}^{-2}e^{-1}BC
 +B\epsilon_{\rm close}C)\int_{\Sigma(0)} {\bf q}_e(\psi)
\\
\nonumber
&\le& (B+ B \tau_{\rm step}^{-1}C+B\tau_{\rm step}^{-2}e^{-1}C +B\epsilon_{\rm close}
C )\\
&&\hbox{}
\cdot\int_{\Sigma(0)} {\bf q}_e(\psi).
\end{eqnarray}

\subsection{Bounds for $\psi$}
Choosing $C$ sufficiently large, $\tau_{\rm step}$ sufficiently large, 
$e$ sufficiently small so that Proposition~\ref{willneed} holds,
and $\epsilon_{\rm close}\ll e$ sufficiently small,
from $(\ref{fusika2})$, $(\ref{edwavafora})$, and $(\ref{veohighteliko})$
it follows that
\[
\int_{\Sigma(\tau_n)}  {\bf q}_{e} (\psi)
\le
\frac{C}8  \int_{\Sigma(0)}  {\bf q}_e(\psi).
\]

From Proposition~\ref{willneed} it follows that 
for $\bar\tau\in [\tau_n,\tau_{n+1}]$
\begin{eqnarray*}
\int_{\Sigma(\bar\tau)} {\bf q}_{e}(\psi) &\le&
2 \int_{\Sigma(\tau_n)} {\bf q}_{e}(\psi)\\
&\le& \frac{C}4  \int_{\Sigma(0)} {\bf q}_e(\psi).
\end{eqnarray*}
For $\bar\tau\in[0,\tau_1]$ we 
apply twice Proposition~\ref{willneed}
\begin{eqnarray*}
\int_{\Sigma(\bar\tau)} {\bf q}_{e}(\psi) &\le&
2 \int_{\Sigma(\tau_{\rm step})} {\bf q}_{e}(\psi)\\
&\le&4\int_{\Sigma(0)} {\bf q}_e(\psi)\\
&\le&\frac{C}2\int_{\Sigma(0)} {\bf q}_e(\psi),
\end{eqnarray*}
as long as we assume $C\ge 8$.
Similarly, for $\bar\tau\in[\tau_{n_f},\tau]$, 
we apply Proposition~\ref{willneed} twice to obtain
\begin{eqnarray*}
\int_{\Sigma(\bar\tau)} {\bf q}_{e}(\psi) &\le&
2\int_{\Sigma(\tau-\tau_{\rm step})} {\bf q}_{e}(\psi)\\
&\le&
4\int_{\Sigma(\tau_{n_f})} {\bf q}_{e}(\psi)\\
&\le& \frac{C}2  \int_{\Sigma(0)} {\bf q}_e(\psi).
\end{eqnarray*}
We have shown $(\ref{impbstrap})$.

\section{Estimate for the total horizon and null-infinity flux of $J^T_\mu(\psi)$}
For the rest of this paper, all small quantities can be considered fixed. 
We will use in the following $C$ as a general constant depending on 
the constant of Proposition~\ref{mainprop}.

\begin{proposition}
\label{horprop}
For all $\tau\ge 0$ we have
\begin{equation}
\label{firstavis}
\int_{\mathcal{H}(0,\tau)} \left|J_\mu^T(\psi)n^\mu_{\mathcal{H}}\right|  \le
C\int_{\Sigma(0)} {\bf q}_e(\psi)
\end{equation}
\end{proposition}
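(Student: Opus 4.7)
Proof plan. The strategy is to use the $T$-energy identity, exact since $K^T=0$, together with the positivity of $J^T_\mu n^\mu_{\mathcal{I}}$ at null infinity to control one side of the estimate, and then to absorb the ``superradiant'' negative part $(J^T_\mu n^\mu_{\mathcal{H}})^{-}$ using the smallness of $\epsilon_{\rm close}$. The two main inputs available are Theorem~\ref{kevtriko} and the pointwise tangential bounds $(\ref{tangcont})$, $(\ref{tangcont2})$ on the horizon.

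First, apply $(\ref{basicid})$ to $\psi$ with $V=T$ on $\mathcal{R}(0,\tau)$:
\[
\int_{\mathcal{H}(0,\tau)} J^T_\mu n^\mu_{\mathcal{H}} + \int_{\Sigma(\tau)} J^T_\mu n^\mu_\Sigma + \int_{\mathcal{I}^+(0,\tau)} J^T_\mu n^\mu_{\mathcal{I}} = \int_{\Sigma(0)} J^T_\mu n^\mu_\Sigma.
\]
By $(\ref{inpartic})$ one has $|J^T_\mu n^\mu_\Sigma|\le B{\bf q}_e(\psi)$, and Theorem~\ref{kevtriko} then bounds both $\Sigma$-integrals by $C\int_{\Sigma(0)}{\bf q}_e(\psi)$. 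Since $T$ is timelike at infinity, $J^T_\mu n^\mu_{\mathcal{I}} \ge 0$. Hence the one-sided bound $\int_{\mathcal{H}(0,\tau)} J^T_\mu n^\mu_{\mathcal{H}} \le C\int_{\Sigma(0)}{\bf q}_e(\psi)$ follows.

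Second, decompose $|J^T_\mu n^\mu_{\mathcal{H}}| = J^T_\mu n^\mu_{\mathcal{H}} + 2(J^T_\mu n^\mu_{\mathcal{H}})^{-}$; it remains to bound the negative part. Since the null generator of $\mathcal{H}^+$ is $T+\gamma\Phi$ with $|\gamma|<\epsilon_{\rm close}$, the expansion of $J^T_\mu n^\mu_{\mathcal{H}}$ used in the proof of Proposition~\ref{notquitecons} gives the pointwise bound
\[
-J^T_\mu n^\mu_{\mathcal{H}} \le B\epsilon_{\rm close}|\partial_\phi\psi||\partial_t\psi| - b(\partial_t\psi)^2,
\]
so that by $(\ref{tangcont})$--$(\ref{tangcont2})$, $(J^T_\mu n^\mu_{\mathcal{H}})^{-} \le B\epsilon_{\rm close} e^{-1} J^{N_e}_\mu n^\mu_{\mathcal{H}}$ pointwise on $\mathcal{H}^+$. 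Integrating and combining with the first step,
\[
\int_{\mathcal{H}(0,\tau)}|J^T_\mu n^\mu_{\mathcal{H}}| \le C\int_{\Sigma(0)}{\bf q}_e(\psi) + B\epsilon_{\rm close} e^{-1}\int_{\mathcal{H}(0,\tau)} J^{N_e}_\mu n^\mu_{\mathcal{H}}.
\]

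The remaining task, and the hardest step, is to control $\int_{\mathcal{H}(0,\tau)} J^{N_e}_\mu n^\mu_{\mathcal{H}}$ uniformly in $\tau$. I would apply $(\ref{basicid})$ to $\psi$ with $V=N_e$: the $\Sigma(0)$ term is bounded via Theorem~\ref{kevtriko}, the $\Sigma(\tau)$ and $\mathcal{I}^+$ contributions are non-negative, and the bulk term $\int_{\mathcal{R}(0,\tau)} K^{N_e} = e\int K^Y$ has the favorable sign of $(\ref{kiautoetsi})$ in $\{r\le r^-_Y\}$, while the bad contribution from the compact transitional region $\{r^-_Y\le r\le r^+_Y\}$ is of order $e$ and can be absorbed using Theorem~\ref{kevtriko} together with the integrated local energy decay machinery developed in Sections~\ref{flest}--\ref{estest}. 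Once this uniform bound is established, Proposition~\ref{horprop} follows by choosing $\epsilon_{\rm close}/e$ small enough for the last displayed inequality to close. The crucial point is that the $\epsilon_{\rm close}e^{-1}$ factor coming from the superradiant part is precisely what kills a potentially divergent dependence of $\int_\mathcal{H} J^{N_e}_\mu n^\mu_\mathcal{H}$ on $\tau$; this mechanism is what forces the assumption $\epsilon_{\rm close}\ll e$ in the statement.
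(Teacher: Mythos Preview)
Your first two steps are fine and indeed give
\[
\int_{\mathcal{H}(0,\tau)}\bigl|J^T_\mu n^\mu_{\mathcal{H}}\bigr|
\le C\int_{\Sigma(0)}{\bf q}_e(\psi)
+ B\epsilon_{\rm close}e^{-1}\int_{\mathcal{H}(0,\tau)} J^{N_e}_\mu n^\mu_{\mathcal{H}}.
\]
The gap is in your ``remaining task''. A uniform-in-$\tau$ bound on $\int_{\mathcal{H}(0,\tau)} J^{N_e}_\mu n^\mu_{\mathcal{H}}$ for $\psi$ itself is \emph{not} available from the machinery of Sections~\ref{flest}--\ref{estest}. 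The $N_e$-identity gives
\[
\int_{\mathcal{H}(0,\tau)} J^{N_e}_\mu n^\mu_{\mathcal{H}}
\le \int_{\Sigma(0)}{\bf q}_e(\psi)
+ eB\int_{\mathcal{R}(0,\tau)\cap\{r^-_Y\le r\le r^+_Y\}}{\bf q}_e^\bigstar(\psi),
\]
and the last term is a spacetime integral over a slab of length $\tau$. Controlling it uniformly would require an integrated local energy decay estimate for $\psi$, but the paper establishes such an estimate \emph{only} for $\psi^\tau_\flat$ (Proposition~\ref{forlow}); for $\psi^\tau_\sharp$ there is none (Proposition~\ref{forhigh} carries a factor $\tau''-\tau'$), precisely because trapping is not understood for the full solution. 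With only Theorem~\ref{kevtriko} you get a bound growing like $\tau$, and multiplying by the \emph{fixed} constant $\epsilon_{\rm close}e^{-1}$ cannot kill this: $\epsilon_{\rm close}$ is small but $\tau$ is arbitrary.

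This is exactly why the paper's proof goes through the frequency decomposition. It bounds $\int_{\mathcal{H}}\bigl|J^T_\mu(\psi^\tau_\flat)n^\mu_{\mathcal{H}}\bigr|$ via $(\ref{immed})$ and the uniform $N_e$-horizon flux for $\psi^\tau_\flat$ coming from Propositions~\ref{prong} and~\ref{forlow} (dispersion for the superradiant part). For $\psi^\tau_\sharp$ it does \emph{not} pass through $J^{N_e}$ at all; instead it invokes Proposition~\ref{notquitecons}, where the non-superradiance of $\psi^\tau_\sharp$ makes $\int_{\mathcal{H}}\bigl|J^T_\mu(\psi^\tau_\sharp)n^\mu_{\mathcal{H}}\bigr|$ directly bounded by $\Sigma$-data plus small error terms. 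The two pieces are treated by genuinely different mechanisms, and your attempt to treat $\psi$ in one shot collapses both difficulties onto the single quantity $\int_{\mathcal{H}} J^{N_e}_\mu(\psi)n^\mu_{\mathcal{H}}$, for which no uniform control exists.
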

\begin{proof}
For $(\ref{firstavis})$,
in view of the relations
\[
J^T_\mu(\psi)\le 
B\left(\left| J^T_\mu(\psi^\tau_{\sharp})\right|+\left| J^T_\mu(\psi^\tau_{\flat})\right|
\right)
\]
valid in $\mathcal{R}(1,\tau-1)$,
and
\begin{equation}
\label{immed}
\left| J^T_\mu(\Psi) n^\mu_{\mathcal{H}}\right| 
\le C\, J^{N_e}_\mu(\Psi) n^\mu_{\mathcal{H}}
\end{equation}
on $\mathcal{H}^+$, it follows from
Proposition~\ref{horprop0} and Proposition~\ref{mainprop},  
that it suffices to show
\begin{equation}
\label{suffices1}
\int_{\mathcal{H}(\tau_{\rm step},\tau-\tau_{\rm step})} 
\left| J_\mu^T(\psi^\tau_{\flat})n^\mu \right| 
\le
C\int_{\Sigma(0)} {\bf q}_e(\psi),
\end{equation}
\begin{equation}
\label{suffices2}
\int_{\mathcal{H}(\tau_{\rm step},\tau- \tau_{\rm step})} 
\left| J_\mu^T(\psi^\tau_{\sharp}) n^\mu \right| \le
C\int_{\Sigma(0)} {\bf q}_e(\psi).
\end{equation}

Inequality $(\ref{suffices1})$
follows immediately from $(\ref{immed})$ applied to $\psi^\tau_{\flat}$,
and 
Propositions~\ref{prong},~\ref{forlow},~\ref{yenicomp'}
and~\ref{mainprop}.

For $(\ref{suffices2})$, 
in view of the bound
\[
\int_{\Sigma(\tau')} J_\mu^T(\psi^\tau_{\sharp}) n^\mu_{\Sigma}
\le C\int_{\Sigma(\tau')} {\bf q}_e(\psi_{\sharp}^\tau)
\]
we need only apply Propositions~\ref{yenicomp'},~\ref{notquitecons},~\ref{horprop0},
and~\ref{mainprop}.
\end{proof}

\begin{proposition}
\begin{equation}
\label{secondavis}
\int_{\mathcal{I}^+}  J_\mu^T(\psi) n^\mu \le
C\int_{\Sigma(0)} {\bf q}_e(\psi).
\end{equation}
\end{proposition}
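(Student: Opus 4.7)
The plan is to mimic the strategy of Proposition~\ref{horprop}, exchanging the horizon flux with the flux at null infinity, which is obtained as a monotone limit along the outgoing-tilted hypersurfaces $\Sigma^+(\tau)$. The key observation is that for $r\ge r^-_Y$ the Killing field $T$ is timelike, so in this region $J^T_\mu(\psi)n^\mu_{\Sigma^+}\ge 0$ on the asymptotic portion of $\Sigma^+(\tau)$; combined with the fact that $\Sigma^+(\tau)$ extends to $t^*\sim \tau+\sqrt{r-R}$ at large $r$, letting $\tau\to\infty$ one recovers $\int_{\mathcal{I}^+}J^T_\mu(\psi)n^\mu$ as a limit of these boundary fluxes, which is exactly the limiting procedure alluded to in Theorem~\ref{prwto}.

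First I would decompose $\psi=\psi^\tau_{\flat}+\psi^\tau_{\sharp}$ and apply the energy identity $(\ref{basicid})$ with $V=T$ to each piece in the region $\mathcal{R}^+(\tau_{\rm step},\tau-\tau_{\rm step})$. Since $K^T=0$, both pieces yield
\begin{equation*}
\int_{\Sigma^+(\tau-\tau_{\rm step})} J^T_\mu(\psi^\tau_{*})n^\mu
= \int_{\Sigma(\tau_{\rm step})} J^T_\mu(\psi^\tau_{*})n^\mu
- \int_{\mathcal{H}(\tau_{\rm step},\tau-\tau_{\rm step})} J^T_\mu(\psi^\tau_{*})n^\mu
+\int_{\mathcal{R}^+}\mathcal{E}^T(\psi^\tau_{*}).
\end{equation*}
For $\psi^\tau_{\flat}$ the initial flux is bounded by $C\int_{\Sigma(0)}{\bf q}_e(\psi)$ via Propositions~\ref{yenicomp'} and~\ref{mainprop}; the horizon contribution is controlled through $|J^T\cdot n_\mathcal{H}|\le B\,J^{N_e}\cdot n_\mathcal{H}$ combined with Propositions~\ref{prong},~\ref{forlow},~\ref{horprop0}; and $\mathcal{E}^T(\psi^\tau_{\flat})$ is handled by Proposition~\ref{exoume}. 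For $\psi^\tau_{\sharp}$ the same identity, together with Proposition~\ref{notquitecons} and Proposition~\ref{horprop0}, yields an analogous bound with the same right-hand side $C\int_{\Sigma(0)}{\bf q}_e(\psi)$. The cutoff remnants on $\mathcal{R}^\pm(0,1)\cup\mathcal{R}^\pm(\tau-1,\tau)$ contribute the usual harmless $\tau_{\rm step}^{-2}e^{-1}+\epsilon_{\rm close}e^{-1}$ factors, which are already absorbed into fixed constants.

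Summing the $\flat$ and $\sharp$ contributions and using the pointwise relation $(\ref{fusika2})$, we obtain
\begin{equation*}
\int_{\Sigma^+(\tau-\tau_{\rm step})\cap\{r\ge r^-_Y\}} J^T_\mu(\psi)n^\mu
\le C\int_{\Sigma(0)}{\bf q}_e(\psi),
\end{equation*}
where in the region $r\ge r^-_Y$ the integrand is non-negative. Finally I would exhaust $\mathcal{I}^+$ by the sequence $\{\Sigma^+(\tau-\tau_{\rm step})\cap\{r\ge R'\}\}$ and take $\tau\to\infty$ followed by $R'\to\infty$, invoking monotone convergence. The main subtlety will be verifying this limiting step: one needs that the flux through any fixed truncation $\{r\le R'\}\cap\Sigma^+(\tau)$ at large $\tau$ does not ``leak'' to the outer boundary, and that the asymptotic portion of $\Sigma^+(\tau)$ accurately approximates $\mathcal{I}^+$ with respect to the appropriate induced measure. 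Both points follow from the uniform bound above together with the standard identification of $\mathcal{I}^+$ as the $u\to\infty$ limit of outgoing null geodesics, but this is the delicate book-keeping step of the argument.
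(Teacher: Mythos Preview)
Your approach is workable but takes an unnecessarily circuitous route compared to the paper. The paper's proof is essentially one line: once Proposition~\ref{horprop} has established
\[
\left|\int_{\mathcal{H}(0,\tau)} J^T_\mu(\psi)n^\mu_{\mathcal{H}}\right|\le C\int_{\Sigma(0)}{\bf q}_e(\psi),
\]
one applies the energy identity $(\ref{basicid})$ with $V=T$ \emph{directly to $\psi$ itself}, not to the decomposed pieces. Since $\psi$ is an honest solution of $\Box_g\psi=0$, one has $\mathcal{E}^T(\psi)=0$; since $T$ is Killing, $K^T=0$. The identity in the region bounded by $\Sigma(0)$, $\mathcal{H}(0,\tau)$, and an outer boundary approaching $\mathcal{I}^+$ then reads
\[
\int_{\mathcal{I}^+} J^T_\mu(\psi)n^\mu = \int_{\Sigma(0)} J^T_\mu(\psi)n^\mu_{\Sigma} - \int_{\mathcal{H}} J^T_\mu(\psi)n^\mu_{\mathcal{H}},
\]
and both terms on the right are already controlled, the first by $(\ref{inpartic})$ and the second by Proposition~\ref{horprop}.

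Your detour through the frequency decomposition re-proves the content of Proposition~\ref{horprop} rather than using it, and introduces several bookkeeping hazards: the region ``$\mathcal{R}^+(\tau_{\rm step},\tau-\tau_{\rm step})$'' bounded below by $\Sigma(\tau_{\rm step})$ and above by $\Sigma^+(\tau-\tau_{\rm step})$ is not one of the regions for which the $\mathcal{E}$-estimates of Proposition~\ref{exoume} were stated; and recombining the $\flat$ and $\sharp$ fluxes into a bound for $J^T_\mu(\psi)n^\mu$ via $(\ref{fusika2})$ requires care, since $(\ref{fusika2})$ concerns ${\bf q}_e$, not $J^T$, and the quadratic form $J^T_\mu(\cdot)n^\mu$ is not positive on the portion $r<r^-_Y$ of $\Sigma^+$. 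None of this is fatal, but the paper's route avoids all of it by recognising that the hard work (controlling the horizon flux of $J^T(\psi)$) was already completed in the previous proposition.
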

\begin{proof}
This follows now by the previous and
the statement $K^T=0$. We omit the details concerning
the definition of the left hand side of the inequality.
\end{proof}
The propositions of this section prove in particular $(\ref{stovorizovta})$ and $(\ref{stoapeiro})$.
The complete statement of Theorem~\ref{prwto} is thus proven.

\section{Higher order energies and pointwise bounds}
\label{hes}
A deficiency of previous understanding of boundedness, even in the Schwarzschild
case, is that it relied on commuting the equation with a full basis
of angular momentum operators $\Omega_i$, $i=1\ldots 3$.
In view of the loss of symmetry when passing from Schwarzschild to Kerr, this 
approach is no longer tenable. A much more robust approach to
boundedness is via commutation with $n_{\Sigma}$, or equivalently,
the vector field $\hat{Y}$ to be discussed below. It turns out that the 
dangerous extra terms arising have a good sign. This can be viewed of
as yet another manifestation
of the redshift effect.

In Section~\ref{highord} below, we will first derive 
$L^2$ estimates for higher order energies.
These will rely on certain elliptic estimates derived in Section~\ref{elliptprops}.
Pointwise estimates will then follow in Section~\ref{pwise}
from standard Sobolev inequalities.

\subsection{Higher order energies}
\label{highord}
Let us consider now the quantity
\[
{\bf q}^j(\Psi)\doteq \sum_{i=0}^j J_\mu^{n_\Sigma}(n^i_{\Sigma}(\Psi)) n^\mu_{\Sigma}
\]
where $n^j\Psi$ denotes $n(n(n\ldots \psi))$ where $j$ $n$'s appear.
Under our smoothness assumptions, coupled with our assumptions about the support, we
have that
\[
\int_{\Sigma(\tau)}{\bf q}^j(\psi) <\infty.
\]
We have the following 
\begin{theorem}
For all $j\ge0$, there exist constants $C_j$ depending only on $j$ and $M$ such that
under the assumptions of Theorem~\ref{kevtriko}, then for all $\tau\ge 0$,
\[
\int_{\Sigma(\tau)} {\bf q}^j(\psi) \le C_j \int_{\Sigma(0)} {\bf q}^j(\psi).
\]
\end{theorem}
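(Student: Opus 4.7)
The plan is induction on $j$, with base case $j=0$ supplied by Theorem~\ref{kevtriko}. For the inductive step, the strategy is to promote the $L^2$ flux estimate from $\psi$ to a commuted quantity $\hat Y^{j+1}\psi$ and then convert the resulting bound into one for $n_\Sigma^{j+1}\psi$ (and hence for ${\bf q}^{j+1}$) by elliptic estimates on the Riemannian hypersurfaces $\Sigma(\tau)$.

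I would first fix a smooth, regular vector field $\hat Y$ which, near $\mathcal H^+$, agrees with the transversal generator of the red-shift construction of Section~\ref{Ysec}, and which is cut off to zero outside a neighbourhood of the horizon. Commuting the wave equation gives
\[
\Box_g(\hat Y\psi) = [\Box_g,\hat Y]\psi,
\]
and $[\Box_g,\hat Y]$ is a second-order operator in $\psi$. The crucial structural input --- the essential manifestation of the red-shift at the level of higher derivatives, already exploited in~\cite{dr3} --- is that the principal part of this commutator produces, when contracted against the $N_e$-momentum of $\hat Y\psi$, a \emph{positive} multiple of $(\hat Y^2\psi)^2$ in a collar of $\mathcal H^+$, while the remaining contributions are controlled by tangential derivatives and by derivatives of $\psi$ of order at most two.

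Once this is set up, the entire machinery of the proof of Theorem~\ref{kevtriko} applies verbatim to $\Psi = \hat Y\psi$ in place of $\psi$: the high/low frequency cutoffs, the currents $J^{\bf X}$, $J^{N_e}$, $J^T$, the basic identity $(\ref{basicid})$, and the bootstrap of Proposition~\ref{mainprop} never required that $\Psi$ solve $\Box_g\Psi=0$ exactly, only that one be able to estimate the inhomogeneity $F = \Box_g\Psi$. Here $F = [\Box_g,\hat Y]\psi$ contributes, after integration, two kinds of error terms: a horizon-localised piece with the favourable sign, which is absorbed into the positive quantity $\int K^{N_e}(\hat Y\psi)$, and the remaining pieces, which are quadratic in derivatives of $\psi$ of order $\le 2$ and are therefore controlled by the inductive hypothesis at level $j$. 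Iterating this scheme $j+1$ times yields a uniform bound for $\int_{\Sigma(\tau)} J_\mu^{n_\Sigma}(\hat Y^i\psi) n^\mu_\Sigma$ for all $i \le j+1$.

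To translate these $\hat Y$-bounds into bounds for $n_\Sigma^i\psi$, and hence for the quantity ${\bf q}^{j+1}$ actually appearing in the statement, one exploits that on each $\Sigma(\tau)$ the operator $\Box_g$ is uniformly elliptic in the directions tangential to $\Sigma(\tau)$ modulo one normal derivative, so that the wave equation expresses the induced Laplacian of $\psi$ in terms of $n_\Sigma^2\psi$ plus bounded first-order terms. Standard interior elliptic regularity, applied slice by slice and then summed using the translation invariance of the foliation, then trades tangential for normal derivatives in a manner uniform in $\tau$; since $\hat Y$ is transverse to $\mathcal H^+$, bounds on $\hat Y^i\psi$ and on $n_\Sigma^i\psi$ are equivalent modulo lower-order terms. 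The main obstacle will be verifying that, when commuting $k \le j+1$ times, the red-shift positive contribution to $K^{N_e}(\hat Y^k\psi)$ still dominates the worst commutator term inside the ergoregion uniformly in the perturbation parameter; this requires the stability of the construction of Section~\ref{Ysec} under the closeness assumptions of Section~\ref{genclass}, and forces the smallness threshold on $\epsilon_{\rm close}$ to depend on $j$, yielding the $j$-dependent constant $C_j$.
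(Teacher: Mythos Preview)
Your broad strategy---commute with a transversal vector field $\hat Y$, exploit the red-shift positivity of the principal commutator term near $\mathcal H^+$, and close with elliptic estimates on $\Sigma(\tau)$---is indeed the correct one, and matches the paper's. But the specific route you propose, namely re-running \emph{the entire machinery of Theorem~\ref{kevtriko}} (frequency decomposition into $\Psi_\flat$, $\Psi_\sharp$, the $J^{\bf X}$ current, the bootstrap of Proposition~\ref{mainprop}) on $\Psi=\hat Y\psi$, is both unnecessary and genuinely problematic. The error estimates of Proposition~\ref{exoume} rely essentially on the fact that $F^\tau_{\hbox{\Rightscissors}}=\Box_g\psi^\tau_{\hbox{\Rightscissors}}$ is supported only in the thin cutoff slabs $\mathcal R^\pm(\tau-1,\tau)\cup\mathcal R^\pm(0,1)$; this is what produces the crucial $\tau_{\rm step}^{-2}$ smallness that closes the bootstrap. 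For $\Psi=\hat Y\psi$ the inhomogeneity $F=[\Box_g,\hat Y]\psi$ is supported throughout the horizon collar for all times, and its spacetime integral over $\mathcal R(\tau',\tau'')$ is of size $(\tau''-\tau')$ times data, with no small prefactor. Moreover, the ``favourable sign'' you invoke is a pointwise quadratic statement about $(\hat Y^2\psi)(Y\hat Y\psi)$, and does not survive the frequency projections $(\cdot)_\flat$, $(\cdot)_\sharp$ in any obvious way after the time cutoff $\chi^\tau_{\hbox{\Rightscissors}}$ (which does not commute with $\hat Y$).

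The paper avoids all of this by a much simpler decoupling. First commute with the Killing field $T$: since $T\psi$ is again a \emph{homogeneous} solution, Theorem~\ref{kevtriko} applies to it directly, yielding $(\ref{notethat0})$ with no extra work. Together with $(\ref{newstatement})$ this gives the linear-in-time spacetime bounds $(\ref{notethat1})$--$(\ref{notethat2})$ on ${\bf q}(\psi)+{\bf q}(T\psi)$. Now commute with $\hat Y$ and apply \emph{only} the energy identity $(\ref{basicid})$ for the single current $J^Y$---no frequency decomposition, no $X$-current, no bootstrap. In $\mathcal E^Y(\hat Y\psi)$ the top-order commutator term $\tfrac{2}{r}\hat Y^2\psi$ pairs with $Y(\hat Y\psi)$; writing $\hat Y=Y+(\hat Y-Y)$ one drops the nonpositive $-\tfrac{2}{r}(Y\hat Y\psi)^2$ and is left with a term small in $\delta_{\hat Y}$ near the horizon. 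All remaining error terms are controlled by ${\bf q}(\psi)+{\bf q}(T\psi)$ in spacetime, hence by the already-established $(\ref{notethat1})$--$(\ref{notethat2})$. No bootstrap is needed: one simply extracts a good dyadic sequence $\tau_i$ with $\int_{\Sigma(\tau_i)\cap\{r\le r_Y^-\}}{\bf q}(\hat Y\psi)$ bounded by data, then invokes the elliptic Propositions~\ref{elliptprop1}--\ref{elliptprop2} and local-in-time propagation to reach every $\tau$. The upshot is that the superradiance machinery is used exactly once, at level $j=0$, and never again; all higher orders follow from commutation with the Killing $T$ plus a single red-shift identity for $\hat Y$.
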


\begin{proof} We shall give the proof only for the case $j=1$, as this will be sufficient for
deriving pointwise bounds for $\psi$.

Commute $(\ref{waveee})$ with $T$. 
One obtains from $(\ref{newstatement})$ that for $\tau\ge 0$
\begin{equation}
\label{notethat0}
\int_{\Sigma(\tau)} {\bf q}(T\psi)  \le C 
 \int_{\Sigma(0)} {\bf q}(T\psi).
\end{equation}
Note that from $(\ref{newstatement})$ we have for $\tau''>\tau'\ge0$,
\begin{equation}
\label{notethat1}
\int_{\mathcal{R}(\tau',\tau'')}  {\bf q}(\psi)  \le C 
(\tau''-\tau') \int_{\Sigma(0)} {\bf q}(\psi),
\end{equation}
and from $(\ref{notethat0})$,
\begin{equation}
\label{notethat2}
\int_{\mathcal{R}(\tau',\tau'')}  {\bf q}(T\psi)  \le C 
(\tau''-\tau') \int_{\Sigma(0)} {\bf q}(T\psi).
\end{equation}

Now commute $(\ref{waveee})$ with the vector field 
\[
\hat{Y}\doteq \frac{1}{1-\mu}\partial_u
\]
where $\partial_u$ refers to the coordinate system described in Section~\ref{Ysec}.
We obtain
\begin{lemma}
Let $\psi$ satisfy $(\ref{waveee})$. Then we may write
\[
\Box_{g}(\hat{Y}\psi) 
 = \frac{2}r \hat{Y}(\hat{Y}(\psi))-\frac{4}{r}(\hat Y(T\psi))+  P_1\psi
 -\frac{2}{r}P_2\psi +[\hat{Y}, P_2]\psi
\]
where
$P_1$ is the first order operator $P_1\doteq \frac{2}{r^2}(T\psi -\hat{Y}\psi)$,
and
$P_2$ is the second order operator $P_2=\Box_{g_M}-\Box_g$.
\end{lemma}

Now apply the basic identity $(\ref{basicid})$ to $\hat{Y}$ with $Y$. We have that
for $r\le r^-_Y$,
\[
K^Y(\hat Y\psi) \ge b\, {\bf q} (\hat Y\psi)
\]
while for $r\ge r^-_Y$,
\[
K^Y(\hat Y\psi) \le B\, {\bf q} (\hat Y\psi).
\]
On the other hand,
\begin{eqnarray*}
\mathcal{E}^Y(\hat Y\psi) &=& -\left( \frac{2}r  \hat Y(\hat Y(\psi)) -\frac{4}{r}
 (\hat Y(T\psi))+ P_1\psi -\frac{2}rP_2\psi+ [\hat{Y},P_2]\psi \right) 
 Y(\hat Y(\psi))\\
 &\le&
-\left(\frac2r(\hat{Y}-Y) \hat{Y} \psi -\frac{4}{r}
 (\hat Y(T\psi)) +
 P_1\psi-\frac{2}rP_2\psi + [\hat Y,P_2]\psi \right) Y(\hat Y(\psi)) .
\end{eqnarray*}

The following lemmas are immediate:
\begin{lemma}
\[
\int_{\mathcal{R}(\tau',\tau'')} \frac{16}{r^2}(Y(T\psi))^2 \le B \int_{\mathcal{R}(\tau',\tau'')}
 {\bf q}(T\psi),
\]
\[
\int_{\mathcal{R}(\tau',\tau'')} (P_1\psi)^2  \le B \int_{\mathcal{R}(\tau',\tau'')}
 {\bf q}(\psi).
\]
\end{lemma}

\begin{lemma}
\[
\int_{\mathcal{R}(\tau',\tau'')}
4\frac{(P_2\psi)^2}{r^2}+
([P_2,\hat Y]\psi )^2\le B \epsilon_{\rm close}^2
\int_{\mathcal{R}(\tau',\tau'')}  {\bf q}^1(\psi).
\]
\end{lemma}

\begin{lemma}
Given $r_{\hat{Y}}>2M$, we may choose a $\delta_{\hat{Y}}$ (with $\delta_{\hat{Y}}\to0$
as $r_{\hat{Y}}\to 2M$) such that
\[
\int_{\mathcal{R}(\tau',\tau'')}
\frac{4}{r^2}((\hat{Y}-Y)\hat{Y}\psi)^2
\le B \delta_{\hat{Y}}
\int_{\mathcal{R}(\tau',\tau'')\cap \{r\le r_{\hat{Y}}\}}
{\bf q}^1(\psi)+
B \int_{\mathcal{R}(\tau',\tau'')\cap \{r\ge r_{\hat{Y}}\}}
{\bf q}^1(\psi).
\]
\end{lemma}
For convenience, let us require in what follows that $\delta_{\hat{Y}}\le \delta^-_{Y}$.

It follows from $(\ref{basicid})$, the above lemmas
and Cauchy-Schwarz (applied with a small parameter $\lambda$)
that
\begin{eqnarray*}
\int_{\mathcal{R}(\tau',\tau'')\cap \{r\le r^-_Y\}}
{\bf q}(\hat{Y}\psi) 
&\le&B\int_{\mathcal{R}(\tau',\tau'')\cap \{r\ge r^-_Y\}}
{\bf q}(\hat Y\psi)\\
&&\hbox{}+
B\lambda^{-1}\int_{\mathcal{R}(\tau',\tau'') } {\bf q}(T\psi)+{\bf q}(\psi)\\
&&\hbox{}
+B\lambda^{-1} \epsilon_{\rm close}^2 \int_{\mathcal{R}(\tau',\tau'')} {\bf q}^1(\psi)\\
&&\hbox{}+B\lambda^{-1}  \delta_{\hat{Y}}
\int_{\mathcal{R}(\tau',\tau'')\cap \{r\le r_{\hat{Y}}\}}
{\bf q}^1(\psi)\\
&&\hbox{}+
B \lambda^{-1}\int_{\mathcal{R}(\tau',\tau'')\cap \{r\ge r_{\hat{Y}}\}}
{\bf q}^1(\psi)\\
&&\hbox{}+ B\lambda
\int_{\mathcal{R}(\tau',\tau'') } (Y(\hat{Y}\psi))^2\\
&&\hbox{}+ B\int_{\Sigma(\tau')} {\bf q}(\hat Y\psi).
\end{eqnarray*}
Since $(Y(\hat{Y}\psi))^2\le B\, {\bf q}(\hat Y\psi)$, it follows that $\lambda$ can be chosen so 
that
\begin{eqnarray*}
\int_{\mathcal{R}(\tau',\tau'')\cap \{r\le r^-_Y\}}
{\bf q}(\hat{Y}\psi) 
&\le&B\int_{\mathcal{R}(\tau',\tau'')\cap \{r\ge r^-_Y\}}
{\bf q}(\hat Y\psi)\\
&&\hbox{}+
B\int_{\mathcal{R}(\tau',\tau'') } {\bf q}(T\psi)+{\bf q}(\psi)\\
&&\hbox{}
+B \epsilon_{\rm close}^2 \int_{\mathcal{R}(\tau',\tau'')} {\bf q}^1(\psi)\\
&&\hbox{}+B \delta_{\hat{Y}}
\int_{\mathcal{R}(\tau',\tau'')\cap \{r\le r_{\hat{Y}}\}}
{\bf q}^1(\psi)\\
&&\hbox{}+
B \int_{\mathcal{R}(\tau',\tau'')\cap \{r\ge r_{\hat{Y}}\}}
{\bf q}^1(\psi)\\
&&+\hbox{}B\int_{\Sigma(\tau')} {\bf q}(\hat Y\psi)\\
&\le&
B\int_{\mathcal{R}(\tau',\tau'')\cap \{r\ge r^-_Y\}}
{\bf q}(\hat Y\psi)\\
&&\hbox{}+
B\int_{\mathcal{R}(\tau',\tau'') } {\bf q}(T\psi)+{\bf q}(\psi)\\
&&\hbox{}
+B \epsilon_{\rm close}^2 \int_{\mathcal{R}(\tau',\tau'')\cap \{r\le r^-_Y\}} {\bf q}^1(\psi)\\
&&\hbox{}
+B \delta_{\hat{Y}}
\int_{\mathcal{R}(\tau',\tau'')\cap \{r\le r_{\hat{Y}}\}}
{\bf q}^1(\psi)\\
&&\hbox{}
+B \epsilon_{\rm close}^2 \int_{\mathcal{R}(\tau',\tau'')\cap \{r\ge r^-_Y\}} {\bf q}^1(\psi)\\
&&\hbox{}+
B \int_{\mathcal{R}(\tau',\tau'')\cap \{r\ge r_{\hat{Y}}\}}
{\bf q}^1(\psi)\\
&&+\hbox{}B\int_{\Sigma(\tau')} {\bf q}(\hat Y\psi).
\end{eqnarray*}
From the Propositions of Section~\ref{elliptprops} below, we obtain
\begin{eqnarray*}
\int_{\mathcal{R}(\tau',\tau'')\cap \{r\le r^-_Y\}}
{\bf q}(\hat{Y}\psi) 
&\le&
B\int_{\mathcal{R}(\tau',\tau'')}
{\bf q}(T\psi)+{\bf q}(\psi)\\
&&\hbox{}
+B (\epsilon_{\rm close}^2+\delta_{\hat{Y}}) \int_{\mathcal{R}(\tau',\tau'')\cap \{r\le r^-_Y\}} {\bf q}(\hat{Y}\psi)
+{\bf q}(T\psi)+{\bf q}(\psi)\\
&&\hbox{}
+B(r_{\hat{Y}}) \int_{\mathcal{R}(\tau',\tau'')} {\bf q}(T\psi)
+{\bf q}(\psi)\\
&&+\hbox{}B\int_{\Sigma(\tau')} {\bf q}(\hat Y\psi),
\end{eqnarray*}
and thus, for small enough $\epsilon_{\rm close}$, and choosing
$r_{\hat{Y}}$ close enough to $2M$ (and
thus small enough $\delta_{\hat{Y}}$), we
obtain
\begin{eqnarray*}
\int_{\mathcal{R}(\tau',\tau'')\cap \{r\le r^-_Y\}}
{\bf q}(\hat{Y}\psi) 
&\le&
B\int_{\mathcal{R}(\tau',\tau'') } {\bf q}(T\psi)+{\bf q}(\psi)\\
&&+\hbox{}B\int_{\Sigma(\tau')} {\bf q}(\hat Y\psi).
\end{eqnarray*}
(The choice of $r_{\hat{Y}}$ having been made, we have written
above $B(r_{\hat{Y}})$ as $B$
following our convention.)
From $(\ref{notethat1})$ and $(\ref{notethat2})$, it now follows
that
\begin{eqnarray*}
\int_{\mathcal{R}(\tau',\tau'')\cap \{r\le r^-_Y\}}
{\bf q}(\hat{Y}\psi) 
&\le& 
B(\tau''-\tau') \int_{\Sigma(0)} {\bf q}(T\psi)+{\bf q}(\psi)\\
&&+\hbox{}B\int_{\Sigma(\tau')} {\bf q}(\hat Y\psi).
\end{eqnarray*}
It follows immediately that 
there exists a sequence $0=\tau_0<\tau_i<\tau_{i+1}$ 
such that 
\begin{equation}
\label{seqsuchthat}
|\tau_i-\tau_j|\le B,\qquad \tau_i\to \infty
\end{equation}
with
\begin{eqnarray*}
\int_{\Sigma(\tau_i)\cap \{r\le r^-_Y\}}
{\bf q}(\hat{Y}\psi) 
&\le& 
B \int_{\Sigma(0)} {\bf q}(T\psi)+{\bf q}(\psi)\\
&&+\hbox{}B\int_{\Sigma(0)} {\bf q}(\hat Y\psi).
\end{eqnarray*}
From $(\ref{notethat0})$, we have on the other hand
\begin{eqnarray*}
\int_{\Sigma(\tau_i)}
{\bf q}(T\psi) 
\le 
B \int_{\Sigma(0)} {\bf q}(T\psi)+{\bf q}(\psi),
\end{eqnarray*}
and from $(\ref{newstatement})$
\begin{eqnarray*}
\int_{\Sigma(\tau_i)}
{\bf q}(\psi) 
\le 
B \int_{\Sigma(0)} {\bf q}(\psi).
\end{eqnarray*}
From Proposition~\ref{elliptprop1} it follows that
\begin{eqnarray*}
\int_{\Sigma(\tau_i)\cap \{r\le r^-_Y\}}
{\bf q}^1(\psi) 
\le 
B \int_{\Sigma(0)} {\bf q}(\hat Y\psi)+{\bf q}(T\psi)+{\bf q}(\psi),
\end{eqnarray*}
while from Proposition~\ref{elliptprop2}, it follows
that
\begin{eqnarray*}
\int_{\Sigma(\tau_i)\cap \{r\ge r^-_Y\}}
{\bf q}^1(\psi) 
\le 
B \int_{\Sigma(0)} {\bf q}(T\psi)+{\bf q}(\psi).
\end{eqnarray*}
Thus in fact,
\begin{eqnarray*}
\int_{\Sigma(\tau_i)}
{\bf q}^1(\psi) 
\le 
B \int_{\Sigma(0)}{\bf q}(\hat Y\psi)+ {\bf q}(T\psi)+{\bf q}(\psi).
\end{eqnarray*}
In view of $(\ref{seqsuchthat})$, we obtain now easily
\begin{eqnarray*}
\int_{\Sigma(\tau)}
{\bf q}^1(\psi) 
&\le& 
B \int_{\Sigma(0)}{\bf q}(\hat Y\psi)+ {\bf q}(T\psi)+{\bf q}(\psi)\\
&\le&
B\int_{\Sigma(0)}{\bf q}^1(\psi).
\end{eqnarray*}

\end{proof}
\subsection{Elliptic estimates}
\label{elliptprops}
We have the following elliptic estimate on spheres:
\begin{proposition}
\label{elliptprop1}
Let $S_r$ denote a set of constant $r$ in a $t^*$, $r$, $x^A$, $x^B$ coordinate system.
For $\psi$ a solution of $\Box_g\psi=0$, we have
\[
\int_{S_r} {\bf q}^1(\psi)
\le B 
\int _{S_r} {\bf q}(T\psi)+{\bf q}(\hat{Y}\psi)+{\bf q}(\psi).
\]
\end{proposition}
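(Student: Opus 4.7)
The plan is as follows. Since ${\bf q}^1(\psi)={\bf q}(\psi)+{\bf q}(n_\Sigma\psi)$ and the first summand already appears on the right-hand side, the real task is to control the mixed second derivatives packaged in ${\bf q}(n_\Sigma\psi)$. The starting point is an algebraic observation in the coordinates $(t^*,r,x^A,x^B)$ of $(\ref{theatlas})$: a direct computation from Section~\ref{Ysec} gives $\partial_u = \Lambda(r)\,\partial_{t^*} - (1-\mu)\,\partial_r$ with $\Lambda$ smooth and vanishing at the horizon at the same rate as $1-\mu$, so that
\[
\hat Y = (1-\mu)^{-1}\partial_u = \lambda\,T - \partial_r
\]
with $\lambda$ smooth and bounded on $\mathcal{R}$. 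In particular $\partial_r\psi$ lies in the span of $T\psi$ and $\hat Y\psi$ with bounded coefficients. Since $n_\Sigma^\mu\propto g^{\mu t^*}$ and $g_M^{t^*x^A}=0$, the closeness bound $(\ref{locsmal})$ forces $n_\Sigma = \alpha\,T + \beta\,\partial_r + \gamma^A\,\partial_{x^A}$ with $|\alpha|+|\beta|\le B$ and $|\gamma^A|\le B\epsilon_{\rm close}$.

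Applying the product rule to $\partial_i(n_\Sigma\psi)$ yields, modulo first-order terms in $\psi$ bounded by ${\bf q}(\psi)$, three kinds of second-order contributions: (i) $\partial_iT\psi$, controlled by ${\bf q}(T\psi)$; (ii) $\partial_i\partial_r\psi = \lambda\,\partial_iT\psi - \partial_i\hat Y\psi + (\partial_i\lambda)\,T\psi$, controlled by ${\bf q}(T\psi)+{\bf q}(\hat Y\psi)+{\bf q}(\psi)$; and (iii) terms of the form $\gamma^A\,\partial_i\partial_{x^A}\psi$. For $i\in\{t^*,r\}$ the terms of type (iii) reduce to types (i) or (ii) after commuting $\partial_{x^A}$ past $T$ and $\partial_r$ (trivial commutators, since these are coordinate vector fields). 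Only the purely angular--angular Hessian terms $\gamma^A\,\partial_{x^B}\partial_{x^A}\psi$ survive, and they come weighted by a factor of size $\epsilon_{\rm close}$.

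To control these remaining terms we use the wave equation $\Box_g\psi=0$ as an elliptic equation on the coordinate spheres of $S_r$. The principal part of $\Box_{g_M}$ is $r^{-2}\Delta_{\mathbb S^2}$ plus terms involving only $\partial_{t^*}$ and $\partial_r$, and by $(\ref{locsmal})$--$(\ref{locsmal2})$ we have $(\Box_g-\Box_{g_M})\psi = O(\epsilon_{\rm close})\sum_{i,j}|\partial_i\partial_j\psi|$. Combining, the wave equation yields an identity
\[
r^{-2}\,\Delta_{\mathbb S^2}\psi = \mathcal F + O(\epsilon_{\rm close})\sum_{i,j}|\partial_i\partial_j\psi|,
\]
where $\mathcal F$ is built from second derivatives of types (i), (ii) and from first-order terms in $\psi$. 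Standard elliptic regularity on $\mathbb S^2$ then gives $\int_{\mathbb S^2}|\nabb\nabb\psi|^2 \le B\int_{\mathbb S^2}(\Delta_{\mathbb S^2}\psi)^2 + |\nabb\psi|^2$; integrating in $t^*$ over $S_r$ and using the uniform chart bounds $(\ref{nearone})$, the $O(\epsilon_{\rm close})$-weighted angular Hessian remainder on the right is absorbed into the left-hand side for $\epsilon_{\rm close}$ chosen sufficiently small.

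The principal obstacle is this absorption step, which works only because $\gamma^A$ is itself of order $\epsilon_{\rm close}$: this ensures the $\epsilon_{\rm close}$-small angular Hessian remainder produced by the sphere-elliptic estimate is indeed small relative to the $\gamma^A$-weighted angular Hessian contribution to ${\bf q}(n_\Sigma\psi)$ on the left. The representation $\hat Y = \lambda T - \partial_r$ is uniform up to and including $\mathcal H^+$ thanks to the choice of ingoing $t^*$ coordinate, so no delicate treatment of the horizon degeneration $g^{rr}\to 0$ is required along this route.
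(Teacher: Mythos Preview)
Your argument is correct and is essentially the paper's proof recast in the coordinate frame $(t^*,r,x^A,x^B)$ instead of the null frame $(T,\hat Y,\nabb)$. The paper proceeds by the direct pointwise bound
\[
{\bf q}^1(\psi)\le B\bigl(|\nabb^2\psi|^2+{\bf q}(T\psi)+{\bf q}(\hat Y\psi)+{\bf q}(\psi)\bigr),
\]
then rewrites the wave equation as $r^{-2}\triangle_{\mathbb S^2}\psi=\partial_v(\hat Y\psi)-\tfrac{2}{r}(T\psi-\hat Y\psi)-P_2\psi$, applies the standard $L^2$ elliptic estimate on the sphere, and absorbs $(P_2\psi)^2\le B\epsilon_{\rm close}\,{\bf q}^1(\psi)$ back into the left. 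Your decomposition $n_\Sigma=\alpha T+\beta\partial_r+\gamma^A\partial_{x^A}$ together with $\partial_r=\lambda T-\hat Y$ is just another way of arriving at the same first display; the rest is identical.

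One point of confusion in your final paragraph: the absorption does \emph{not} hinge on $\gamma^A=O(\epsilon_{\rm close})$. What makes the absorption work is that the error $P_2=\Box_{g_M}-\Box_g$ is $O(\epsilon_{\rm close})$, so the angular Hessian remainder on the right of the elliptic estimate carries a small coefficient relative to the unit-coefficient $\int|\nabb^2\psi|^2$ on its \emph{own} left-hand side (exactly as you wrote in the penultimate paragraph). If instead you tried to absorb into the $\gamma^A$-weighted angular Hessian piece of ${\bf q}(n_\Sigma\psi)$, as your last paragraph suggests, both sides would carry the same factor of $\epsilon_{\rm close}$ and nothing would close. The smallness of $\gamma^A$ is a pleasant but inessential observation; the paper's version never invokes it.

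A minor remark: in the paper $S_r$ is the $2$-sphere of constant $(t^*,r)$, so there is no ``integration in $t^*$'' -- the elliptic estimate is applied pointwise in $t^*$. This does not affect the validity of your argument.
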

\begin{proof}
Note first that 
\begin{eqnarray}
\label{SHMEIWSE}
\nonumber
{\bf q}^1(\psi) &\le& B\left( |\nabb^2 \psi|^2+
|\nabb (T\psi)|^2+ |\nabb (\hat{Y}\psi)|^2 +
|TT\psi|^2\right.\\
\nonumber
&&\hbox{}\left.+|\hat{Y}\hat{Y}\psi|^2+|T\hat{Y}\psi|^2
+ {\bf q}(\psi)\right)\\
&\le&B\left( |\nabb^2 \psi|^2+
{\bf q}(T\psi)+{\bf q}(\hat{Y}\psi)+{\bf q}(\psi)\right).
\end{eqnarray}
Let $\triangle_{\mathbb S^2}$ denote the standard Laplacian on the unit sphere.
In the coordinates of the first paragraph of Section~\ref{Ysec}, we may write
\[
\frac1{r^2}\triangle_{\mathbb S^2}\psi
=\partial_v(\hat{Y}\psi)-\frac2r(T\psi -\hat{Y}\psi) -P_2\psi.
\]
Integrating over $S_r$ endowed with metric of the standard unit sphere, 
we obtain the elliptic estimate
\[
\frac{1}{r^2}\int_{S_r} |\nabla^2_{\mathbb S^2}\psi|^2 \, dA_{\mathbb S^2}
\le 
B\int _{S_r} {\bf q}(T\psi)+{\bf q}(\hat{Y}\psi)+{\bf q}(\psi)+(P_2\psi)^2 \, dA_{\mathbb S^2},
\]
i.e., in view of the assumptions $(\ref{locsmal})$ on the metric,
\begin{equation}
\label{i.e.}
\int_{S_r} |\nabb^2\psi| ^2
\le 
B\int _{S_r} {\bf q}(T\psi)+{\bf q}(\hat{Y}\psi)+{\bf q}(\psi)+(P_2\psi)^2.
\end{equation}
On the other hand, from $(\ref{locsmal})$, $(\ref{locsmal2})$ we have
\begin{eqnarray*}
(P_2\psi)^2&\le &B\epsilon_{\rm close}{\bf q}^1(\psi).
\end{eqnarray*}
The above, $(\ref{i.e.})$ and $(\ref{SHMEIWSE})$ yield the proposition,
for $\epsilon_{\rm close}$ sufficiently small.
\end{proof}

In addition, we have the following elliptic estimates away from the event horizon.
\begin{proposition}
\label{elliptprop2}
For $\psi$ a solution of $\Box_g\psi=0$, and $r_0$ a parameter with 
$r_0>2M$, then, for $\epsilon_{\rm close}$ sufficiently small,
we have
\[
\int_{\Sigma(\tau')\cap\{r\ge r_0\} }{\bf q}^1(\psi)
\le B(r_0) \int_{\Sigma(\tau') }{\bf q}(T\psi)+{\bf q}(\psi),
\]
\[
\int_{\mathcal{R}(\tau',\tau'')\cap\{r\ge r_0\} }{\bf q}^1(\psi)
\le B(r_0) \int_{\mathcal{R}(\tau',\tau'') }{\bf q}(T\psi)+{\bf q}(\psi).
\]
\end{proposition}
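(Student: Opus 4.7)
The plan is to reduce both inequalities to standard interior elliptic regularity on the spacelike slice $\Sigma(\tau')$, using the wave equation to express spatial second derivatives in terms of $T$-derivatives and first order quantities. The central geometric observation is that for $\epsilon_{\rm close}$ sufficiently small depending on $r_0$, the Killing field $T$ is uniformly timelike on $\{r\ge r_0\}$: in Schwarzschild $-g_M(T,T)=1-\mu\ge 1-2M/r_0>0$, and $(\ref{locsmal})$ preserves this, so $g^{t^*t^*}\le -b(r_0)<0$ throughout that region.

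I would first write the wave equation in the atlas $(\ref{theatlas})$, where $T=\pa_{t^*}$:
\[
g^{t^*t^*}\pa_{t^*}^2\psi+2g^{t^*i}\pa_{t^*}\pa_i\psi+g^{ij}\pa_i\pa_j\psi+(\textrm{first order in }\psi)=0,
\]
with $i,j$ ranging over spatial indices. Since the purely spatial block of the Schwarzschild inverse metric $(g_M)^{ij}$ is diagonal with entries comparable to $(1-\mu,r^{-2},r^{-2})$, the perturbation estimate $(\ref{locsmal})$ shows that $g^{ij}$ (as a matrix of spatial indices) is uniformly positive-definite on $\{r\ge r_0\}$, with constants $b(r_0)$, $B(r_0)$. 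Rearranging gives
\[
L\psi\doteq g^{ij}\pa_i\pa_j\psi=-g^{t^*t^*}\pa_{t^*}^2\psi-2g^{t^*i}\pa_{t^*}\pa_i\psi-(\textrm{first order}),
\]
and every term on the right-hand side carries at least one $T$-derivative or is of first order. Translating to the pointwise bound,
\[
|L\psi|^2\le B(r_0)\bigl({\bf q}(T\psi)+{\bf q}(\psi)\bigr).
\]

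Next I would apply the Caccioppoli-type second-order interior estimate for the uniformly elliptic operator $L$ on $\Sigma(\tau')$. Taking a smooth cutoff $\chi$ equal to $1$ on $\{r\ge r_0\}$ and supported in $\{r\ge (r_0+2M)/2\}$, the standard integration by parts (in which one expands $\int\chi^2|L\psi|^2$, integrates by parts twice to recover $\int\chi^2|\pa^2\psi|^2$, and absorbs cross terms via Cauchy--Schwarz with a small parameter) yields
\[
\int_{\Sigma(\tau')}\chi^2|\pa^2\psi|^2\le B(r_0)\int_{\Sigma(\tau')}\bigl(\chi^2|L\psi|^2+|\nabla\chi|^2|\pa\psi|^2\bigr);
\]
crucially, no $|\psi|^2$ term appears, since the $|D^2 u|^2$ Caccioppoli identity integrates by parts only in derivatives of $u$. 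The induced Riemannian metric on $\Sigma(\tau')$ has components with uniform upper and lower bounds by $(\ref{nearone})$ and $(\ref{locsmal})$, so the model computation in $\mathbb R^3$ carries over with coefficients depending only on $r_0$ and $M$. Combining with the pointwise bound on $|L\psi|^2$ and noting that ${\bf q}^1(\psi)\le B\bigl(|\pa^2\psi|^2+{\bf q}(T\psi)+{\bf q}(\psi)\bigr)$ in the atlas coordinates (as in the proof of Proposition~\ref{elliptprop1}) delivers the first displayed inequality.

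For the spacetime inequality, I would simply apply the slice estimate at each $\bar\tau\in[\tau',\tau'']$ and integrate; the factor $(-g(\nabla t^*,\nabla t^*))^{-1/2}$ converting $\int_{\mathcal R}$ into $\int d\bar\tau\int_{\Sigma(\bar\tau)}$ is uniformly bounded by $(\ref{xwroeides})$. The only real obstacle is ensuring that the uniform ellipticity of the spatial block $g^{ij}$ survives the perturbation from $g_M$ and that the Caccioppoli computation does not generate an irretrievable $|\psi|^2$ contribution; the first is handled by choosing $\epsilon_{\rm close}$ small depending on $r_0$, and the second is handled by the standard second-order Caccioppoli identity as indicated.
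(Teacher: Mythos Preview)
The paper does not give a proof of this proposition; it explicitly writes ``The proof of this straightforward elliptic estimate is left to the reader.'' Your argument is the standard way to supply it and is correct: on $\{r\ge r_0\}$ with $\epsilon_{\rm close}$ small, $T$ is uniformly timelike, and since a covector annihilating a timelike vector is necessarily spacelike with respect to $g^{-1}$, the spatial block $g^{ij}$ of the inverse metric is uniformly elliptic with constants depending on $r_0$; the wave equation then expresses $g^{ij}\pa_i\pa_j\psi$ as a combination of $T$-derivatives and first-order terms, and the second-order Caccioppoli identity bounds the spatial Hessian by $|L\psi|^2$ and $|\nabla\psi|^2$ without any zeroth-order $|\psi|^2$ contribution. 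Integrating the slice estimate in $\bar\tau$ via $(\ref{xwroeides})$ gives the spacetime version. Your remark that the smallness of $\epsilon_{\rm close}$ may depend on $r_0$ (since $g_M(T,T)=-(1-\mu)$ degenerates at the horizon) is consistent with the statement and with the applications in Section~\ref{highord}, where $r_0$ takes only the fixed values $r^-_Y$ and $r_{\hat Y}$.
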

\begin{proof}
The proof of this straightforward elliptic estimate is left to the reader.
\end{proof}

\subsection{Pointwise bounds}
\label{pwise}
We have the following Sobolev-type estimate on Schwarzschild
\begin{proposition}
Let $\Psi$ be a smooth function on $\Sigma_{\tau}$ 
of compact support.
Then
\[
\sup_{\Sigma_{\tau}}\Psi^2 \le B \int_{\Sigma_{\tau}}  
|\nabla^2_{\Sigma_{\tau}}\Psi|^2_{(g_M)_{\Sigma_\tau}} +|\nabla_{\Sigma_{\tau}}\Psi|^2_{
(g_M)_{\Sigma_\tau}}.
\]
\end{proposition}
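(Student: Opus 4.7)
The plan is to reduce the claim to the classical Sobolev embedding $H^2\hookrightarrow L^\infty$ on $\mathbb{R}^3$, exploiting both the fixed Schwarzschild geometry on $\Sigma_\tau$ and its asymptotic flatness.

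First I would observe that $\rho_\tau$ is an isometry of $(\mathcal{R},g_M)$ carrying $\Sigma_0$ onto $\Sigma_\tau$, so the induced Riemannian 3-manifolds $(\Sigma_\tau,(g_M)_{\Sigma_\tau})$ are mutually isometric. Hence it suffices to prove the estimate on the fixed manifold $(\Sigma_0,(g_M)_{\Sigma_0})$; I would moreover smoothly extend this beyond $\mathcal{H}^+$ into maximal Schwarzschild to obtain a complete Riemannian 3-manifold $\tilde\Sigma_0$ without boundary, and extend $\Psi$ by zero to a compactly supported function there. A bound on $\tilde\Sigma_0$ with constant depending only on $M$ then yields the claimed bound on $\Sigma_\tau$.

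Next I would cover $\tilde\Sigma_0$ by a finite fixed atlas with a subordinate partition of unity: finitely many coordinate balls in the compact region $\{r\le R\}$ for some $R$ depending only on $M$, plus a single asymptotic chart on $\{r\ge R/2\}$ using Cartesian coordinates $(x^1,x^2,x^3)=r(\sin\theta\cos\phi,\sin\theta\sin\phi,\cos\theta)$. In each chart the Schwarzschild metric and its inverse are uniformly bounded, with uniformly bounded Christoffel symbols, so intrinsic Sobolev norms built from $(g_M)_{\Sigma_0}$ are uniformly equivalent (with constants depending only on $M$) to the coordinate Sobolev norms. In each chart I would then apply the Gagliardo--Nirenberg inequality $\lVert u\rVert_{L^6(\mathbb{R}^3)}\le C\lVert\nabla u\rVert_{L^2(\mathbb{R}^3)}$ to both $\Psi$ and $\nabla\Psi$ (viewed as compactly supported functions on $\mathbb{R}^3$ in the chart coordinates), yielding $\lVert\Psi\rVert_{W^{1,6}}\le C(\lVert\nabla\Psi\rVert_{L^2}+\lVert\nabla^2\Psi\rVert_{L^2})$; Morrey's embedding $W^{1,6}(\mathbb{R}^3)\hookrightarrow L^\infty(\mathbb{R}^3)$ then produces the pointwise bound. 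Summing the chart contributions gives the global estimate in the form stated.

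The only real obstacle is that the right-hand side of the claimed inequality contains no $\Psi^2$ term, so one cannot invoke the most naive $H^2\hookrightarrow L^\infty$ embedding directly. This is precisely what the Gagliardo--Nirenberg step above circumvents: compact support of $\Psi$ in $\mathbb{R}^3$ (equivalently, vanishing at infinity) lets one trade $\lVert\Psi\rVert_{L^2}$ for $\lVert\nabla\Psi\rVert_{L^2}$ up to a dimensional constant. All remaining ingredients are routine Riemannian Sobolev theory on a manifold of bounded geometry, so I expect the complete proof to be essentially a reference to standard results.
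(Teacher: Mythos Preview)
Your overall strategy of reducing to a Euclidean Sobolev inequality matches the paper, and the reduction to $\Sigma_0$ via $\rho_\tau$ is fine. There is, however, a genuine gap at the horizon boundary.

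Compact support of $\Psi$ on $\Sigma_\tau$ only means $\Psi$ vanishes for large $r$; it does \emph{not} force $\Psi$ to vanish on the horizon sphere $\mathcal{H}^+\cap\Sigma_\tau$. Your extension of $\Psi$ by zero across $\mathcal{H}^+$ therefore has a jump and is not even in $H^1(\tilde\Sigma_0)$, so Gagliardo--Nirenberg on $\mathbb{R}^3$ cannot be applied to it. The same issue recurs in the chart-by-chart step: in a coordinate ball meeting the horizon, $\Psi$ is not compactly supported in the chart, and if you instead localize via the partition of unity to $\chi_i\Psi$, then $\|\nabla(\chi_i\Psi)\|_{L^2}$ acquires a term $\|\Psi\,\nabla\chi_i\|_{L^2}$, i.e.\ an $L^2$ norm of $\Psi$ itself---exactly what you were trying to avoid.

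The paper sidesteps this by keeping the boundary: it observes that $(\Sigma_\tau,(g_M)_{\Sigma_\tau})$ is globally uniformly comparable to the Euclidean exterior $\{|x|\ge 1\}$, and invokes the corresponding estimate there for $\Psi$ vanishing near infinity but \emph{not} on $\{|x|=1\}$. Proving that Euclidean statement is where the real content lies; one route is to first establish the Hardy-type bound $\int_{\{r\ge 1\}}r^{-2}\Psi^2\le C\int_{\{r\ge 1\}}|\nabla\Psi|^2$ by integrating along radial rays from infinity (this uses only vanishing at infinity, not at $r=1$), which supplies the missing $L^2$ control of $\Psi$ on any bounded annulus and then allows your Gagliardo--Nirenberg/Morrey chain to go through.
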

This in turn follows from the following Euclidean space estimate:
\begin{proposition}
There exists a constant $K$ such that the following holds.
Let $\Psi$ be a smooth function on $\mathbb R^3\cap \{r \ge 1\}$ 
of compact support.
Then
\[
\sup_{r\ge 1} \Psi^2 \le K \int_{\{r\ge 1\}}( |\nabla^2\Psi|^2+ 
|\nabla\Psi|^2) \, dx^1 dx^2 dx^3.
\]
\end{proposition}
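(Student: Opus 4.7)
The statement is a standard $H^2\hookrightarrow L^\infty$ Sobolev embedding in dimension three, sharpened so that the zeroth-order $L^2$ norm of $\Psi$ does not appear on the right-hand side. The plan is to reduce, by extension across the boundary sphere $\{r=1\}$, to the analogous inequality on all of $\mathbb R^3$, and then to invoke the classical Sobolev and Morrey estimates.

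For the extension, I would construct a compactly supported $\tilde\Psi\in C^\infty_c(\mathbb R^3)$ with $\tilde\Psi=\Psi$ on $\Omega:=\{r\ge 1\}$ by a Stein-type higher-order reflection across the smooth boundary. Concretely, inside the unit ball one sets $\tilde\Psi(x)=\sum_i c_i\,\Psi(T_i(x))$, where the $T_i$ are smooth diffeomorphisms carrying $\{r\le 1\}$ into $\Omega$ and the coefficients $c_i$ are chosen so that $\tilde\Psi\in C^{1,1}$ across $\{r=1\}$. Because the construction is pure composition with diffeomorphisms, the chain rule gives pointwise bounds $|\nabla\tilde\Psi|\le C\sum_i|\nabla\Psi|\circ T_i$ and $|\nabla^2\tilde\Psi|\le C\sum_i(|\nabla^2\Psi|\circ T_i+|\nabla\Psi|\circ T_i)$ on $\{r<1\}$, with no zeroth-order contribution from $\Psi$ itself. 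A change of variables then yields
\[
\int_{\mathbb R^3}\bigl(|\nabla\tilde\Psi|^2+|\nabla^2\tilde\Psi|^2\bigr)\,dx\le C\int_\Omega \bigl(|\nabla\Psi|^2+|\nabla^2\Psi|^2\bigr)\,dx.
\]

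For the Sobolev step, I would apply on $\mathbb R^3$ the Gagliardo--Nirenberg--Sobolev inequality $\|f\|_{L^6(\mathbb R^3)}\le C\|\nabla f\|_{L^2(\mathbb R^3)}$, valid for all $f\in C^\infty_c(\mathbb R^3)$, to $\tilde\Psi$ itself and to each first derivative $\partial_i\tilde\Psi$, giving $\|\tilde\Psi\|_{L^6}+\|\nabla\tilde\Psi\|_{L^6}\le C(\|\nabla\tilde\Psi\|_{L^2}+\|\nabla^2\tilde\Psi\|_{L^2})$. Since $6>3$, Morrey's embedding $W^{1,6}(\mathbb R^3)\hookrightarrow L^\infty(\mathbb R^3)$ then delivers $\|\tilde\Psi\|_{L^\infty}\le C(\|\tilde\Psi\|_{L^6}+\|\nabla\tilde\Psi\|_{L^6})$; squaring and combining with the extension estimate produces the claim with a universal constant $K$. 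The main obstacle---really the only one---is ensuring that the extension step does not introduce a $\|\Psi\|_{L^2(\Omega)}$ term on the right, which would not be present in the claim; the reflection-through-diffeomorphism construction is designed precisely for this, and the Morrey step likewise avoids the $L^2$ norm of $\tilde\Psi$ altogether by routing through $L^6$ rather than through $L^2$. All remaining ingredients are classical.
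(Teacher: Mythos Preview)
The paper omits the proof of this proposition entirely, so there is nothing to compare against. Your strategy---reflect across $\{r=1\}$, then chain the Gagliardo--Nirenberg embedding $\dot H^1(\mathbb R^3)\hookrightarrow L^6$ with Morrey's embedding $W^{1,6}(\mathbb R^3)\hookrightarrow L^\infty$---is a correct and standard route, and you have correctly identified that the only real issue is to avoid any $\|\Psi\|_{L^2}$ term on the right.

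One technical point deserves care. Smooth maps $T_i$ from the \emph{closed} ball $\{r\le 1\}$ into $\Omega=\{r\ge 1\}$ that restrict to the identity on $\{r=1\}$ do not exist: the natural radial reflections $r\mapsto 1+\lambda_i(1-r)$ are singular at the origin when written in Cartesian coordinates, and in fact the angular part of $\nabla^2\tilde\Psi$ fails to be square-integrable near $r=0$. So the extension as literally described does not land in $H^2(\mathbb R^3)$. The simplest repair is to reflect only into a collar $\{\tfrac12<r<1\}$ (where the reflection maps are genuinely smooth) and then multiply by a cutoff $\chi$ with $\chi\equiv 1$ for $r\ge\tfrac34$ and $\chi\equiv 0$ for $r\le\tfrac12$. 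The cutoff does introduce zeroth-order contributions $\int_{\{1/2<r<3/4\}}|\tilde\Psi|^2$, but after a change of variables these are bounded by $\int_{\{1\le r\le C\}}|\Psi|^2$, which is in turn controlled by $\|\nabla\Psi\|_{L^2(\Omega)}^2$ via the elementary radial estimate
\[
|\Psi(r,\omega)|^2=\Bigl|\int_r^\infty\partial_s\Psi(s,\omega)\,ds\Bigr|^2\le r^{-1}\int_r^\infty s^2|\partial_s\Psi(s,\omega)|^2\,ds,
\]
obtained from the compact support of $\Psi$ at infinity. With this adjustment your argument goes through.
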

\begin{proof}
Omitted.
\end{proof}

We obtain
\begin{theorem}
Let $n\ge 0$. There exists a positive constant $\epsilon_{\rm close}$,
depending only on $M$, and a positive constant $C_n$, depending on $M$ and $n$,
such that the following holds.Let $g$, $\Sigma(\tau)$ be as in Section~\ref{genclass}
and let $\uppsi$, $\uppsi'$, $\psi$ be as in Section~\ref{sols} where
$\psi$ satisfies $(\ref{waveee})$.
Then, for $\tau\ge0$,
\[
|\nabla^{(n)}\psi|^2_{g_{\Sigma_\tau}} \le  \lim_{r\to\infty}\uppsi^2
+C_{n} \int_{\Sigma(0)}{\bf q}^{n+1}(\psi)
\]
where $\nabla^{(n)}\psi$ denotes the $n$'th order spacetime covariant derivative
tensor and $|\cdot|_{g_{\Sigma_\tau}}$ denotes the induced Riemannian norm. 
\end{theorem}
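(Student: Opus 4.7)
\medskip

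\noindent\textbf{Proof proposal.} The plan is to reduce the pointwise bound to an $L^2$ bound on two more derivatives via the Sobolev estimate immediately preceding the theorem, and then to produce this $L^2$ bound by combining the higher-order energy boundedness of Section~\ref{highord} with elliptic estimates in the spirit of Propositions~\ref{elliptprop1}--\ref{elliptprop2}.

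First, I would reduce to the compactly supported case. Set $c \doteq \lim_{r\to\infty}\uppsi$ (which exists since $\nabla\uppsi$ is supported away from $i_0$), and let $\phi \doteq \psi - c$. Since constants solve $(\ref{waveee})$, $\phi$ again satisfies the wave equation, and since ${\bf q}^j$ only involves derivatives, ${\bf q}^j(\phi) = {\bf q}^j(\psi)$ for all $j$. Moreover, the initial data for $\phi$ has compact support on $\Sigma(0)$, and by domain of dependence, $\phi|_{\Sigma(\tau)}$ has compact support on $\Sigma(\tau)$ for every fixed $\tau$. For $n\ge 1$ we have $\nabla^{(n)}\psi = \nabla^{(n)}\phi$, while for $n=0$ we split $\psi^2 \le 2\phi^2 + 2c^2$, which yields the $\lim_{r\to\infty}\uppsi^2$ contribution in the statement. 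It therefore suffices to prove
\[
|\nabla^{(n)}\phi|^2_{g_{\Sigma_\tau}} \le C_n \int_{\Sigma(0)} {\bf q}^{n+1}(\psi).
\]

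Next, I would invoke the generalisation to all $j\ge 0$ of the theorem of Section~\ref{highord}. The case $j=1$ was treated in detail; for arbitrary $j$, one commutes $(\ref{waveee})$ successively with $T$ and with the red-shift vector field $\hat Y$, producing commutator terms whose leading contributions come with the good sign (exactly as in the $j=1$ analysis), with lower-order and metric-perturbation errors handled by the $\epsilon_{\rm close}$ smallness and Cauchy--Schwarz. This yields
\begin{equation}
\label{prop-highenergy}
\int_{\Sigma(\tau)} {\bf q}^{n+1}(\phi) \le C_{n+1}\int_{\Sigma(0)} {\bf q}^{n+1}(\phi)
= C_{n+1}\int_{\Sigma(0)} {\bf q}^{n+1}(\psi).
\end{equation}
This is, in effect, $L^2$ control of all spacetime derivatives of $\phi$ of order up to $n+2$ in which at least one factor is tangential to $\Sigma_\tau$, plus $L^2$ control of the pure normal derivatives $n_\Sigma^i\phi$ for $0\le i\le n+1$.

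I would then upgrade $(\ref{prop-highenergy})$ to $L^2$ control of the full $(n+2)$-th order covariant derivative of $\phi$ on $\Sigma_\tau$. The mechanism is the iterative use of the wave equation as an elliptic relation after one has already paid for the $n_\Sigma$-derivatives: writing $\Box_g \phi = 0$ in the form solved for the purely spatial second-order operator, one expresses $n_\Sigma^2 \phi$ in terms of tangential second derivatives of $\phi$ plus lower-order terms, exactly as in the proof of Proposition~\ref{elliptprop1}. Applied to $n_\Sigma^i \phi$ and iterated, this gives pointwise-on-spheres (and then on $\Sigma_\tau$) estimates of the form
\[
\int_{\Sigma(\tau)} |\nabla^{(n+2)}\phi|^2_{g_{\Sigma_\tau}} \le C_n' \int_{\Sigma(\tau)} {\bf q}^{n+1}(\phi),
\]
provided $\epsilon_{\rm close}$ is small enough to absorb the perturbative terms $P_2\phi$ and $[\hat Y,P_2]\phi$, and provided we combine the near-horizon elliptic estimate (Proposition~\ref{elliptprop1}) with the away-from-horizon version (Proposition~\ref{elliptprop2}). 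Combining with $(\ref{prop-highenergy})$ gives an $L^2$ bound on $\nabla^{(n+2)}\phi$ in terms of initial data.

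Finally, since $\phi|_{\Sigma(\tau)}$ has compact support, the Schwarzschild Sobolev estimate of Section~\ref{pwise} (applied coordinate-chart by coordinate-chart using $(\ref{nearone})$ and $(\ref{locsmal})$, so that the $g$- and $g_M$-induced Riemannian norms are comparable) applied to each component of $\nabla^{(n)}\phi$ yields
\[
\sup_{\Sigma_\tau} |\nabla^{(n)}\phi|^2_{g_{\Sigma_\tau}}
\le B\int_{\Sigma_\tau}\bigl(|\nabla^{(n+1)}\phi|^2+|\nabla^{(n+2)}\phi|^2\bigr)
\le C_n \int_{\Sigma(0)} {\bf q}^{n+1}(\psi),
\]
which is the desired bound. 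The main obstacle will be the higher-order elliptic step: extending Propositions~\ref{elliptprop1}--\ref{elliptprop2} to order $n+2$ requires iteratively commuting $\Box_g$ with $T$ and $\hat Y$ and tracking the perturbation operators $P_2$ and the commutators $[\hat Y,P_2]$ so that they remain absorbable by $\epsilon_{\rm close}$ uniformly in $n$ (with constants allowed to depend on $n$); the induction must be organised so that at each step one only ever uses $L^2$ control already established at a strictly lower order.
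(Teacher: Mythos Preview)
Your proposal is correct and follows essentially the same route as the paper, which derives the theorem by combining the higher-order energy boundedness of Section~\ref{highord}, the elliptic estimates of Section~\ref{elliptprops}, and the Sobolev-type inequality stated just before the theorem. Your explicit reduction to the compactly supported $\phi = \psi - \lim_{r\to\infty}\uppsi$ and your remark that the higher-order elliptic step must be organised inductively are exactly the details the paper leaves implicit in the words ``We obtain''.
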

Theorem~\ref{deutero} in particular follows from the above.

\section{Further notes}
\subsection{The Schwarzschild case}
\label{further1}

In the Schwarzschild case, we may apply the estimates proven here for $\psi_{\sharp}$ 
in Sections~\ref{estest} and~\ref{boubou} directly 
to the whole $\psi$. Since no frequency decomposition need be made, 
no associated error terms arise and the whole argument
can be reduced to a few pages.
The resulting energy estimate, coupled with the higher order and pointwise
estimates of Section~\ref{hes},
yield a new proof
for uniform boundedness of solutions
to the wave equation on Schwarzschild which is in some sense the simplest one yet--using 
neither the discrete isometry exploited by Kay-Wald~\cite{kw:lss}, nor
the vector field $X$ of our~\cite{dr3} or~\cite{dr4}, nor commuting
with angular momentum operators. Moreover, one shows the uniform boundedness
of all derivatives on the event horizon up to all order, whereas previous results
could control only tangential derivatives.

In fact, one can obtain a much more general
statement applying to all static spherically symmetric
non-extremal black holes.
We have
\begin{theorem}
\label{stat}
Let $(\mathcal{D},g)$ be a static spherically symmetric asymptotically
flat exterior black hole spacetime bounded by a non-extremal event horizon
$\mathcal{H}^+$.
Then the estimates of Theorems~\ref{prwto} and~\ref{deutero} hold.
\end{theorem}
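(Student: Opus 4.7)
The plan is to directly execute the argument sketched in Section~\ref{further1}, leveraging two decisive simplifications in the static spherically symmetric non-extremal setting: (i) the stationary Killing field $T$ is timelike throughout the domain of outer communications (there is no ergoregion), so $J^T_\mu(\psi) n^\mu_\Sigma \ge 0$ on every spacelike slice and the $T$-energy is a positive conserved quantity; and (ii) in the absence of superradiance, the frequency decomposition $\psi = \psi^\tau_\flat + \psi^\tau_\sharp$ together with all the associated cutoff machinery of Sections~\ref{sugkri}--\ref{revisit} becomes unnecessary, and the estimates for $\psi^\tau_\sharp$ from Sections~\ref{estest}, \ref{boubou} apply directly to $\psi$ itself. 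First I would set up the analogue of Section~\ref{ta3n}: write the metric in static coordinates as $g = -f(r)\,dt^2 + f(r)^{-1}\,dr^2 + r^2\,d\sigma_{\mathbb S^2}^2$ with $f(r_+) = 0$, $\kappa = f'(r_+)/2 > 0$, and $f > 0$ for $r > r_+$; define $t^*$ by a suitable analogue of $(\ref{Notby})$ (with $2M\log(r-2M)$ replaced by $(2\kappa)^{-1}\log f(r)$, say) and foliate $\mathcal R$ by $\Sigma(\tau) = \{t^* = \tau\}$ extending regularly across $\mathcal H^+$.

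The one step where the static spherically symmetric structure needs genuine input is the construction of the red-shift vector field $Y$. In Eddington--Finkelstein-type null coordinates $(u,v)$ adapted to $\mathcal H^+$, the calculation of Section~\ref{Ysec} generalizes directly: setting $Y = y_1(r^*)(1-f)^{-1}\partial_u + y_2(r^*)\partial_v$ with $y_1 = \xi(1+(1-f))$, $y_2 = \xi\delta^{-1}(1-f)$, and $\xi$ a cutoff vanishing for $r \ge r^+_Y$, the bulk term $K^Y$ inherits a positive contribution of size $\kappa$ at the horizon, so parameters $r_+ < r^-_Y < r^+_Y$ and $\delta > 0$ may be chosen (depending only on $f$) so that the analogues of $(\ref{because})$--$(\ref{because2})$ hold, giving $K^Y(\psi) \sim {\bf q}(\psi)$ for $r \le r^-_Y$ and $|K^Y(\psi)| \le B\,{\bf q}(\psi)$ for $r^-_Y \le r \le r^+_Y$. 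Then $N_e = T + eY$ is timelike up to $\mathcal H^+$ for $e$ small enough and the properties $(\ref{muhim})$--$(\ref{tangcont2})$ hold verbatim (now with the extra simplification that $\gamma \equiv 0$ in Assumption~\ref{lastassump}).

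With these ingredients, the proof of boundedness collapses substantially. The $T$-identity gives the exact conservation
\[
\int_{\Sigma(\tau)} J^T_\mu(\psi) n^\mu_\Sigma + \int_{\mathcal H(0,\tau)} J^T_\mu(\psi) n^\mu_{\mathcal H} = \int_{\Sigma(0)} J^T_\mu(\psi) n^\mu_\Sigma,
\]
with both left-hand terms nonnegative; in particular the horizon-flux bound $(\ref{stovorizovta})$ and the infinity-flux bound $(\ref{stoapeiro})$ follow immediately. The analogues of Propositions~\ref{notquitecons}, \ref{thisisthe} and~\ref{forhigh} applied to $\psi$ itself (with no cutoff error terms) then yield an estimate of the form
\[
b\int_{\Sigma(\tau'')} {\bf q}_e(\psi) + b\int_{\tau'}^{\tau''}\int_{\Sigma(\bar\tau)} {\bf q}_e(\psi)\, d\bar\tau \le C(\tau''-\tau')\int_{\Sigma(0)} J^T_\mu(\psi) n^\mu_\Sigma + B\int_{\Sigma(\tau')} {\bf q}_e(\psi).
\]
A mean-value argument on intervals of length $\tau_{\rm step}$ produces a sequence of times $\tau_n$ at which $\int_{\Sigma(\tau_n)} {\bf q}_e(\psi) \le C\int_{\Sigma(0)} {\bf q}_e(\psi)$, and the short-time stability Proposition~\ref{willneed} fills in the gaps, yielding uniform boundedness. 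The higher-order and pointwise statements of Theorem~\ref{deutero} then follow verbatim from Section~\ref{hes}, commuting with $T$ and $\hat Y$ and using the elliptic estimates of Section~\ref{elliptprops}---which are cleaner here since $P_2 \equiv 0$ by exact spherical symmetry. The main (essentially only) obstacle is thus the explicit verification that the red-shift construction of $Y$ goes through using only $\kappa > 0$, which reduces to a one-dimensional radial calculation and presents no difficulty.
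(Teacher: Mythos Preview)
Your proposal is correct and follows precisely the route the paper itself sketches in Section~\ref{further1}: drop the frequency decomposition, apply the $\psi^\tau_\sharp$ estimates of Sections~\ref{estest} and~\ref{boubou} directly to $\psi$ using the positivity of $J^T$ (no ergoregion), construct the red-shift vector $Y$ from the non-extremality $\kappa>0$, and then run Section~\ref{hes} with $P_2\equiv 0$. One minor notational slip: since $1-\mu$ plays the role of $f$, your $Y$ should read $y_1 f^{-1}\partial_u$ with $y_1=\xi(1+f)$, $y_2=\xi\delta^{-1}f$, not the expressions with $(1-f)$.
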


\subsection{Kerr-de Sitter}
Our argument is easily adapted to spacetimes which are small
perturbations of non-extremal
Schwarzschild-de Sitter, in particular to slowly rotating non-extremal
Kerr-de Sitter, or Kerr-Newman-de Sitter. 
See~\cite{dr5} for the setting. One fixes the manifold
structure on a subregion $\mathcal{D}\cap J^+(\Sigma_0)$ where 
$\mathcal{D}$ is here the region between
a set of black/white hole and cosmological horizons and
$\Sigma_0$ is a Cauchy surface crossing both horizons
to the future of the bifurcate spheres.
\[
\begin{picture}(0,0)%
\includegraphics{desitforkerr.pstex}%
\end{picture}%
\setlength{\unitlength}{2368sp}%
\begingroup\makeatletter\ifx\SetFigFont\undefined%
\gdef\SetFigFont#1#2#3#4#5{%
  \reset@font\fontsize{#1}{#2pt}%
  \fontfamily{#3}\fontseries{#4}\fontshape{#5}%
  \selectfont}%
\fi\endgroup%
\begin{picture}(3549,2164)(2914,-4694)
\put(5626,-2686){\makebox(0,0)[lb]{\smash{{\SetFigFont{7}{8.4}{\rmdefault}{\mddefault}{\updefault}{\color[rgb]{0,0,0}$r=\infty$}%
}}}}
\put(5701,-4636){\makebox(0,0)[lb]{\smash{{\SetFigFont{7}{8.4}{\rmdefault}{\mddefault}{\updefault}{\color[rgb]{0,0,0}$r=\infty$}%
}}}}
\put(3991,-2821){\makebox(0,0)[lb]{\smash{{\SetFigFont{7}{8.4}{\rmdefault}{\mddefault}{\updefault}{\color[rgb]{0,0,0}$r=0$}%
}}}}
\put(3991,-4486){\makebox(0,0)[lb]{\smash{{\SetFigFont{7}{8.4}{\rmdefault}{\mddefault}{\updefault}{\color[rgb]{0,0,0}$r=0$}%
}}}}
\put(4276,-3961){\rotatebox{315.0}{\makebox(0,0)[lb]{\smash{{\SetFigFont{7}{8.4}{\rmdefault}{\mddefault}{\updefault}{\color[rgb]{0,0,0}$\mathcal{H}^-$}%
}}}}}
\put(5476,-3061){\rotatebox{315.0}{\makebox(0,0)[lb]{\smash{{\SetFigFont{7}{8.4}{\rmdefault}{\mddefault}{\updefault}{\color[rgb]{0,0,0}$\overline{\mathcal{H}}^+$}%
}}}}}
\put(4951,-3586){\makebox(0,0)[lb]{\smash{{\SetFigFont{7}{8.4}{\rmdefault}{\mddefault}{\updefault}{\color[rgb]{0,0,0}$\mathcal{D}$}%
}}}}
\put(4413,-3281){\rotatebox{45.0}{\makebox(0,0)[lb]{\smash{{\SetFigFont{7}{8.4}{\rmdefault}{\mddefault}{\updefault}{\color[rgb]{0,0,0}$\mathcal{H}^+$}%
}}}}}
\put(4920,-3266){\makebox(0,0)[lb]{\smash{{\SetFigFont{7}{8.4}{\rmdefault}{\mddefault}{\updefault}{\color[rgb]{0,0,0}$\Sigma$}%
}}}}
\put(5626,-4186){\rotatebox{45.0}{\makebox(0,0)[lb]{\smash{{\SetFigFont{7}{8.4}{\rmdefault}{\mddefault}{\updefault}{\color[rgb]{0,0,0}$\overline{\mathcal{H}}^-$}%
}}}}}
\end{picture}%

\]
One continues as in the Schwarzschild case. The argument is in fact easier
at several points. Because $r$ is bounded
in $\mathcal{D}$, the zero-order terms pose no difficulty. In particular,
one need not introduce the $\Sigma^{+}$ and $\Sigma^{-}$ surfaces, nor
must one modify $J^{X_a}$ by the addition of $J^{X_b,w_b}$.
We leave the details for a subsequent paper.

\subsection{Non-quantitative decay}
\label{nqd}
As a final application, we note that uniform boundedness is sufficient to 
translate non-quantitative results for fixed angular frequency into 
non-quantitative results for $\psi$ itself. For instance
\begin{corollary}
\label{nqc}
Suppose for each $k$ we have $\psi_k(\cdot, t)\to 0$ where $\psi_k$ denotes
the projection to the $k$'th azimuthal mode. 
Then $\psi(\cdot, t)\to 0$.
\end{corollary}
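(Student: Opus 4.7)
The plan is to use the higher-order estimates of Section~\ref{hes}, applied via commutation with the axisymmetric Killing field $\Phi=\partial_\phi$, to control the tail of the Fourier series $\psi=\sum_k \psi_k e^{-ik\phi}$ uniformly in $t$. Once that tail is controlled, the hypothesis handles the finite remaining sum.

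First I would note that since $\Phi$ is Killing, the functions $\Phi^j\psi$ are themselves solutions of $(\ref{waveee})$, and their initial data on $\Sigma(0)$ inherit all the regularity and decay of $\uppsi,\uppsi'$ (at least if one assumes, as is standard by density, $j$ extra angular derivatives of the data). By Theorem~\ref{kevtriko} applied to $\Phi^j\psi$ together with the higher-order energy bound of Section~\ref{highord}, we obtain
\[
\int_{\Sigma(\tau)}{\bf q}^{n+1}(\Phi^j\psi)\le C_{n,j}\int_{\Sigma(0)}{\bf q}^{n+1}(\Phi^j\psi)<\infty
\]
uniformly in $\tau\ge 0$. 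Fixing $n\ge 1$ and combining with the Sobolev-type estimate of Section~\ref{pwise} then yields, for each $j$, a bound
\[
\sup_{\mathcal{R}}|\Phi^j\psi|\le D_j,
\]
where $D_j$ depends only on $M$ and on the norm of finitely many angular derivatives of the initial data.

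Next, integration by parts in $\phi$ gives, at any point $(t^*,r,\theta)$,
\[
k^{2j}|\psi_k(t^*,r,\theta)|^2=\frac{1}{(2\pi)^2}\left|\int_0^{2\pi}(\Phi^j\psi)(t^*,r,\theta,\phi)\, e^{ik\phi}\,d\phi\right|^2\le D_j^2,
\]
so $|\psi_k|\le D_j|k|^{-j}$ uniformly on $\mathcal{R}$ for every $j\ge 0$. Given $\epsilon>0$, I would fix $j=2$ and choose $K=K(\epsilon)$ large enough so that
\[
\sum_{|k|>K}|\psi_k e^{-ik\phi}|\le\sum_{|k|>K}D_2 k^{-2}<\epsilon/2
\]
uniformly on $\mathcal{R}$. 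This disposes of the tail for all $t^*$ simultaneously.

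Finally, for the finite remainder $\sum_{|k|\le K}\psi_k(\cdot,t^*)e^{-ik\phi}$, the hypothesis that $\psi_k(\cdot,t)\to 0$ for each $k$ (at each fixed spatial point, or uniformly on the relevant compact set) gives a $T(\epsilon)$ so that for $t^*\ge T(\epsilon)$ this finite sum is also less than $\epsilon/2$. Adding the two contributions proves $\psi(\cdot,t)\to 0$. The only genuine step to check carefully is that the initial data furnished by the Corollary's hypothesis are regular enough to apply the higher-order energy and Sobolev estimates to $\Phi^j\psi$; this is either imposed as an additional regularity assumption on $\uppsi,\uppsi'$, or handled by a density argument, since the conclusion is qualitative. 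No other obstacle arises: the argument is essentially \emph{a priori} bounds plus dominated-convergence-style truncation.
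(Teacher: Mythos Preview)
The paper does not give an explicit proof of this corollary, merely remarking that ``uniform boundedness is sufficient to translate non-quantitative results for fixed angular frequency into non-quantitative results for $\psi$ itself.'' Your argument is correct and is precisely the standard way to make this remark rigorous: since $\Phi$ is Killing, $\Phi^j\psi$ is again a solution of $(\ref{waveee})$, so the pointwise bound of Section~\ref{pwise} applied to $\Phi^j\psi$ yields $\sup_{\mathcal{R}}|\Phi^j\psi|\le D_j$, whence $|\psi_k|\le D_j|k|^{-j}$ uniformly in $t^*$; the tail $\sum_{|k|>K}$ is then small uniformly, and the hypothesis handles the finite remainder. The only caveat you flag---that one needs enough angular regularity on the data to apply the higher-order estimates to $\Phi^j\psi$---is genuine but mild, and is implicit in the paper's standing smoothness assumptions of Section~\ref{sols}.
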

The assumption of the above corollary is obtained
in~\cite{fksy} away from the event horizon for Kerr solutions for the very special case where
the initial data is supported away from the horizon.


\begin{thebibliography}{99}
\bibitem{al} S. Alinhac \emph{Energy multipliers for perturbations of
Schwarzschild metrics}, preprint, 2008

\bibitem{bachelot} A. Bachelot \emph{Asymptotic completeness for the Klein-Gordon
equation on the Schwarzschild metric}, Ann. Inst. H. Poincar\'e Phys. Th\'eor. {\bf 16}
(1994), no. 4, 411--441

\bibitem{horstbeyer} H. Beyer \emph{On the Stability of the Kerr Metric}
Comm. Math. Phys. {\bf 221} (2001), 659--676

\bibitem{bs1} P. Blue and A. Soffer \emph{Semilinear wave
equations on the Schwarzschild manifold. I. Local decay
estimates}, Adv. Differential Equations {\bf 8} (2003), no. 5, 595--614

\bibitem{bs2} P. Blue and A. Soffer \emph{Errata for ``Global existence and scattering for the nonlinear Schrodinger equation on Schwarzschild manifolds'', ``Semilinear wave equations on the Schwarzschild manifold I: Local Decay Estimates'', and ``The wave equation on the Schwarzschild metric II: Local Decay for the spin 2 Regge Wheeler equation''}, gr-qc/0608073, 6 pages

\bibitem{bs:le} P. Blue and J. Sterbenz \emph{Uniform decay
of local energy and the semi-linear wave equation
on Schwarzschild space} Comm. Math. Phys. {\bf 268} (2006), no. 2,
481--504

\bibitem{brandon} B. Carter \emph{Black hole equilibrium sates} In: Black holes/Les 
astres occlus (\'Ecole d'\'Et\'e Phys. Th\'eor., Les Houches, 1972), pp. 57--214.
 Gordon and Breach, New York, 1973.

\bibitem{chrthe} D. Christodoulou \emph{Reversible and irreversible 
transformations in black-hole physics} Phys. Rev. Lett. {\bf 25} (1970) 1596--1597

\bibitem{book2} D. Christodoulou \emph{The action principle and partial
differential equations}, Ann. Math. Studies No. 146, 1999 

\bibitem{book} D. Christodoulou and S. Klainerman \emph{The global nonlinear
stability of the Minkowski space} Princeton University Press, 1993

\bibitem{dr1} M. Dafermos and I. Rodnianski
\emph{A proof of Price's law for the collapse of a 
self-gravitating scalar field}, Invent. Math. {\bf 162}
(2005), 381--457

\bibitem{dr3} M. Dafermos and I. Rodnianski
\emph{The redshift effect and radiation decay on black hole
spacetimes}, gr-qc/0512119

\bibitem{dr4} M. Dafermos and I. Rodnianski
\emph{A note on energy currents and decay for the wave equation
on a Schwarzschild background}, arXiv:0710.0171

\bibitem{dr5} M. Dafermos and I. Rodnianski
\emph{The wave equation on Schwarzschild-de Sitter spacetimes}
arXiv:0709.2766


\bibitem{fksy0} F. Finster, N. Kamran, J. Smoller, S.-T. Yau \emph{The
long-time dynamics of Dirac particles in the Kerr-Newman black hole
geometry} Adv. Theor. Math. Phys. {\bf 7} (2003), 25--52

\bibitem{fksy} F. Finster, N. Kamran, J. Smoller, 
S. T. Yau \emph{Decay of solutions of the wave equation in Kerr geometry}
Comm. Math. Phys. {\bf 264} (2006), 465--503

\bibitem{fksy2} F. Finster, N. Kamran, J. Smoller, S.-T. Yau
\emph{Erratum: Decay of solutions of the wave equation in Kerr geometry}
Comm. Math. Phys., online first

\bibitem{tdee} F. Finster and J. Smoller
\emph{A Time Independent Energy Estimate for Outgoing Scalar
Waves in the Kerr Geometry}, preprint

\bibitem{haf} D. H\"afner \emph{Sur la th\'eorie de la diffusion pour
l'\'equation de Klein-Gordon dans la m\'etrique de Kerr}
Dissertationes Mathematicae {\bf 421} (2003)

\bibitem{hn} D. H\"afner and J.-P. Nicolas \emph{Scattering of
massless Dirac fields by a Kerr black hole}
Rev. Math. Phys. {\bf 16}(1) (2004), 29--123


\bibitem{he:lssst} S. W. Hawking and G. F. R. Ellis
\emph{The large scale structure of space-time} Cambridge
Monographs on Mathematical Physics, No. 1. Cambridge
University Press, London-New York, 1973



\bibitem{kw:lss} B. Kay and R. Wald
\emph{Linear stability of Schwarzschild under perturbations
which are nonvanishing on the bifurcation $2$-sphere}
Classical Quantum Gravity {\bf 4} (1987), no. 4, 893--898

\bibitem{kerr} R. Kerr \emph{Gravitational field of a spinning mass as an 
example of algebraically special metrics} Phys. Rev. Lett. {\bf 11} (1963)
237--238

\bibitem{Kokkotas} K. Kokkotas and B. Schmidt \emph{Quasi-normal modes
of stars and black holes} Living Rev. Relativity {\bf 2} (1999)

\bibitem{jpn} J.-P. Nicolas \emph{A non linear Klein-Gordon equation
on Kerr metrics} J. Math. Pures. Appl. {\bf 81} (2002), 885--914

\bibitem{ft:td} F. Twainy \emph{The Time Decay of Solutions
to the Scalar Wave Equation in Schwarzschild Background},
Thesis, University of California, San Diego, 1989

\bibitem{drimos} R. M. Wald \emph{Note on the stability of the 
Schwarzschild metric} J. Math. Phys. {\bf 20} (1979), 1056--1058

\end{thebibliography}
\end{document}